\providecommand{\U}[1]{\protect\rule{.1in}{.1in}}
\newtheorem{theorem}{Theorem}
\newtheorem{corollary}[theorem]{Corollary}
\newtheorem{definition}[theorem]{Definition}
\newtheorem{lemma}[theorem]{Lemma}
\newtheorem{proposition}[theorem]{Proposition}
\newtheorem{remark}[theorem]{Remark}
\newenvironment{proof}[1][Proof]{\vspace{2ex}\noindent\textbf{#1.} }{\ \rule{0.5em}{0.5em}}
\newcommand*{\eins}{\ensuremath{\mathbbm 1}}
\newcommand*{\bbR}{\mathbb{R}}
\newcommand*{\bbZ}{\mathbb{Z}}
\newcommand*{\tr}{\textnormal{tr}}
\newcommand*{\rank}{\mathsf{rank}}
\newcommand*{\supp}{\mathrm{supp}}
\newcommand*{\fr}[2]{\frac{#1}{#2}}
\newcommand*{\assign}{\ensuremath{\kern.5ex\raisebox{.1ex}{\mbox{\textnormal:}}\kern -.3em =}}
\newcommand*{\sgn}{\textnormal{Sgn}}
\DeclareMathOperator*{\conv}{\longmapsto}
\newcommand*{\cconv}{\conv\limits^{\textnormal{noisy}}}
\newcommand*{\epsiloncconv}{\conv\limits^{\textnormal{$\varepsilon$-noisy}}}
\newcommand*{\deltacconv}{\conv\limits^{\textnormal{$\delta$-noisy}}}
\newcommand*{\epsplusdeltacconv}{\conv\limits^{\textnormal{$(\varepsilon + \delta)$-noisy}}}
\newcommand*{\rR}{\mathscr R}
\begin{document}
\preprint{ }
\title{The resource theory of informational nonequilibrium in thermodynamics}
\author{Gilad Gour}
\email{gour@ucalgary.ca}
\affiliation{Institute for Quantum Science and Technology, University of Calgary, 2500 University Drive NW, Calgary, Alberta, Canada T2N 1N4}
\affiliation{Department of Mathematics and Statistics, University of Calgary, 2500 University Drive NW, Calgary, Alberta, Canada T2N 1N4}
\author{Markus P. M\"uller}
\email{markus@mpmueller.net}
\affiliation{Perimeter Institute for Theoretical Physics, 31 Caroline Street North, Waterloo, Ontario Canada N2L 2Y5}
\affiliation{Institut f\"ur Theoretische Physik, Universit\"at Heidelberg, Philosophenweg 19, D-69120 Heidelberg, Germany}
\author{Varun Narasimhachar }
\email{vnarasim@ucalgary.ca}
\affiliation{Institute for Quantum Science and Technology, University of Calgary, 2500 University Drive NW, Calgary, Alberta, Canada T2N 1N4}
\affiliation{Department of Physics and Astronomy, University of Calgary, 2500 University Drive NW, Calgary, Alberta, Canada T2N 1N4}
\author{Robert W. Spekkens}
\email{rspekkens@perimeterinstitute.ca}
\affiliation{Perimeter Institute for Theoretical Physics, 31 Caroline Street North, Waterloo, Ontario Canada N2L 2Y5}
\author{Nicole Yunger Halpern}
\email{NicoleYH@caltech.edu}
\affiliation{Perimeter Institute for Theoretical Physics, 31 Caroline Street North, Waterloo, Ontario Canada N2L 2Y5}
\affiliation{Institute for Quantum Information and Matter, California Institute of Technology, Pasadena, CA 91125, USA.}
\date{Sept.\ 24, 2013; revised April 13, 2015}
\begin{abstract}
We review recent work on the foundations of thermodynamics in the light of quantum information theory.  We adopt a resource-theoretic perspective, wherein thermodynamics is formulated as a theory of what agents can achieve under a particular restriction, namely, that the only state preparations and transformations that they can implement for free are those that are thermal at some fixed temperature.  States that are out of thermal equilibrium are the resources.
We consider the special case of this theory wherein all systems have trivial Hamiltonians
(that is, all of their energy levels are degenerate).  In this case, the only free operations are those that add noise to the system (or implement a reversible evolution) and the only nonequilibrium states are states of \emph{informational nonequilibrium}, that is, states that deviate from the maximally mixed state.  The degree of this deviation we call the state's \emph{nonuniformity}; it is the resource of interest here, the fuel that is consumed, for instance, in an erasure operation.
We consider the different types of state conversion: exact and approximate, single-shot and asymptotic, catalytic and noncatalytic. In each case, we present the necessary and sufficient conditions for the conversion to be possible for any pair of states, emphasizing a geometrical representation of the conditions in terms of Lorenz curves.
We also review the problem of quantifying the nonuniformity of a state, in particular through the use of generalized entropies, and 
that of quantifying the gap between the nonuniformity one must expend to achieve a single-shot state preparation or state conversion and the nonuniformity one can extract in the reverse operation.
Quantum state conversion problems in this resource theory can be shown to be always reducible to their classical counterparts, so that there are no inherently quantum-mechanical features arising in such problems.  This body of work also demonstrates that the standard formulation of the second law of thermodynamics is inadequate as a criterion for deciding whether or not a given state transition is possible.

\end{abstract}
\maketitle
\tableofcontents

\section{Introduction}

A resource theory is defined by identifying a set of experimental operations that are considered free, that is, a set which can be used without limit, while the rest are considered expensive and are thus treated as resources~\cite{Coecke2014}.  A {\em quantum} resource theory is defined by identifying a set of free operations within the set of all quantum operations.
For instance, the restriction to local quantum operations and classical communication defines the resource theory of entanglement~\cite{horodecki2009quantum}, the restriction to symmetric quantum operations defines the resource theory of asymmetry~\cite{Asymy1,Asymy2,Asymy3}, and the restriction to stabilizer operations defines a theory of quantum computational resources~\cite{veitch2013resource}.

The resource theory that is the subject of this article is defined by the following set of free quantum operations:
\begin{itemize}
\item one can implement any unitary on a system,
\item one can prepare any system in the uniform quantum state (i.e., the completely mixed state),
\item one can take a partial trace over any subsystem of a system.
\end{itemize}
In the most general such operation, an ancilla prepared in the uniform state is adjoined to the system of interest, they are coupled unitarily, and finally one takes a partial trace over some subsystem.  These have been called \emph{noisy operations} in \cite{HHOShort,HHOLong}, where this resource theory was first studied\footnote{They have also been called \emph{exactly factorizable maps} by Shor \cite{Shortalk}; see also~\cite{HaagerupMusat}}.
If one comes to possess a system that is \emph{not} in the uniform state, then because such a state cannot be prepared for free, it is a resource. We call such states \emph{nonuniform}, and we call the aspects of a state that are relevant to the resource theory the state's \emph{nonuniformity properties}.   

Nonuniform states are also resources insofar as they can be useful for simulating operations that are outside the noisy class: if a certain process taking a given initial state to a desired final state is not achievable via noisy operations, the process can always be rendered possible by adding an appropriate nonuniform state to the given initial state and suffering the consumption  (partial or complete) of this added resource. An example of a process that would normally be disallowed under noisy operations, but becomes possible by supplying (and using up) an added nonuniformity resource, is an erasure operation. Erasure is ``informational work'', and therefore one can understand the nonuniformity of a state as the potential for doing informational work.

The resource theory of nonuniformity is interesting primarily because of the light that it sheds on
another, more intricate and more relevant, resource theory: the one where the free operations are those that are thermal relative to some temperature $T$, and the resources are states that are not thermal at this temperature, or \emph{athermal states}
\cite{janzing2000thermodynamic,FundLimitsNature,1ShotAtherm2,AthermalityTheory,SSP1,SSP2}.\footnote{The resource-theoretic approach to thermodynamics is cognate with the earlier work of Lieb and Yngvason \cite{Lieb19991}, where classical thermodynamics is formulated in terms of interconvertibility relations between equilibrium states.}

The thermal operations relative to a background temperature $T$ are defined by the following capabilities: 
One can prepare any subsystem in its thermal state at temperature $T$, one can implement any unitary that commutes with the total Hamiltonian\footnote{It is presumed that the couplings between subsystems can be  
entirely controlled by the experimenter, so that in the absence of any such intervention, the overall Hamiltonian is just a sum of the free Hamiltonians for each subsystem.}, and one can trace over the subsystems.  
It follows that the resource theory of nonuniformity is simply a special case of the resource theory of athermality (for any temperature) where all systems of interest have trivial Hamiltonians  (i.e., all their energy levels are degenerate).  

To achieve a proper conceptual understanding of thermodynamics, it is critical to disentangle those aspects of the theory that are due to considerations of energetics and those that are due to considerations of information theory.
Indeed, there is now a large literature on the role of information theory in thermodynamics, centered around such topics as Maxwell's demon \cite{MaxDemonRev}, Szilard's engine \cite{Szilard}, the thermodynamic reversibility of computation and Landauer's principle \cite{BennettLogical,BennettComp,Landauer}, and the use of maximum entropy principles in statistical mechanics \cite{JaynesI,JaynesII}.  (See also \cite{LeffRex}.) By focussing on the resource theory of athermality for the special case of systems that are energy-degenerate, we can begin to understand what aspects of thermodynamics are purely informational.   In essence, we are here studying the particular type of thermal nonequilibrium corresponding to purely \emph{informational nonequilibrium}.  The term `nonuniformity' can be understood as a shorthand for `informational nonequilibrium'.

In addition, many results in athermality theory can be inferred from results in nonuniformity theory.
As such, developing the latter can help to answer questions in the former.  Another motivation for focussing on the case of trivial Hamiltonians is that in building up one's understanding of a field, it is always useful to start with the simplest case.  

In this article, we focus on determining necessary and sufficient conditions for it to be possible to convert one state to another under noisy operations---exactly and approximately,
single-shot and asymptotically, with and without a catalyst---and on finding measures of nonuniformity.  

We demonstrate that 
the answers to these sorts of questions in the quantum case can be inferred from their answers in the classical resource theory of nonuniformity, which is defined in precise analogy with the quantum theory, as follows.  A classical system is represented by a physical state space, or equivalently by a random variable, the valuations of which correspond to the physical states (i.e. the points in that space). The states of such a system that are of interest in the resource theory (and which are the analogues of the quantum states) are \emph{statistical states}, that is, probability distributions over the physical state space. 
The free operations are:
\begin{itemize}
\item one can implement any permutation on a system's physical state space,
\item one can prepare any system in the uniform statistical state (i.e. the uniform probability distribution),
\item one can marginalize over any subsystem of a system.
\end{itemize}

The fact that state conversion problems in the quantum resource theory of nonuniformity reduce to their classical counterparts shows that there is nothing inherently quantum in any problem of state conversion in this theory.

This article seeks to present what is currently known about the resource theory of nonuniformity in a systematic and pedagogical fashion.  In particular, we characterize the equivalence classes of states under noisy operations by \emph{Lorenz curves}, thereby providing a geometric perspective on the problem of defining measures of nonuniformity and state conversion problems, a perspective that makes the proofs of certain results more straightforward and intuitive. 

We draw on various sources.  

The resource-theoretic approach to thermodynamics has seen a great deal of activity in the past few years, primarily by researchers in the field of quantum information \cite{janzing2000thermodynamic,AthermalityTheory,FundLimitsNature,1ShotAtherm2,QuantLandauer,WorkValOfInfo,LawsOfThermo,SanduRefrig,SSP1,SSP2,ThermoMeaningNegEntropy,AbergWork,ABN04}.
Many of the results we present are rederivations of results from these works, or special cases thereof.  In this sense, our article provides a review of some of this literature.

Insofar as such work (and this article) does not confine its attention to macroscopic thermodynamic variables, but rather focusses on the quantum state of the microscopic degrees of freedom, it is perhaps best described as a resource-theoretic approach to \emph{statistical mechanics}.   Because it also does not confine its attention to the family of states that are thermal for some temperature, it concerns \emph{nonequilibrium} statistical mechanics.\footnote{It is important to distinguish between two notions of nonequilibrium here.  In the resource theory of athermality, there is always a background temperature $T$ defining which states are free, and any state that is not \emph{thermal at temperature $T$} is considered a resource. In this context, if a state is thermal for a temperature $T'$ differing from $T$, it is natural to describe it as being \emph{out of equilbrium}.  Conventionally, however, we use the term \emph{equilibrium statistical mechanics} to refer to situations wherein every system is in a thermal state, but where different systems may be at different temperatures. In this context, \emph{nonequilibrium} means not thermal relative to \emph{any} temperature.}    
Finally, because it seeks to understand not just what occurs \emph{on average} but what happens in a single shot, in particular, in the worst case, it is an application to statistical mechanics of the ideas of single-shot information theory.\footnote{And insofar as the \emph{thermodynamic limit} typically involves such averaging procedures, it is concerned with thermodynamic questions outside of this limit.}

Because of the reduction of quantum questions about state conversion to their classical counterparts, there are a number of works on \emph{classical} statistical mechanics that can be brought to bear on our problem.  In particular, our analysis is informed by the work of Ernst Ruch and his collaborators, who adopted a kind of resource-theoretic perspective on state conversion problems 
already in the 1970s~\cite{ruch1975,ruch1978information,RuchDiagrams,RuchSchrannerSeligman,RuchMead}. 

We also make heavy 
use of the mathematical literature on majorization theory \cite{hardy1952inequalities, Bhatia, horn2012matrix,joe1990majorization}, in particular, the canonical text on the subject by Marshall, Olkin and Arnold \cite{MarshallOlkin}, which we shall refer to throughout the article as simply MOA.

Finally, in various places (in particular in our discussion of measures of nonuniformity) 
we make use of work concerning the zoo of generalized entropies~\cite{gorban2010entropy,vEH2010,RennerThesis,SmoothRenyiHAndApplns,datta2009min}.

We now review the structure of the article.  

Section~\ref{Sec:Prelims} covers preliminary material. 
In particular, we clarify the definition of noisy quantum operations, nonuniformity monotones and conversion witnesses, and 
we explain why state conversion problems in the quantum resource theory of nonuniformity can be reduced to their classical counterparts.  

In Section~\ref{Sec:quasiorder}, we characterize the quasi-order of states under noisy operations.  We begin by determining the equivalence classes of states under noisy operations.   We show that these can be associated with a mathematical object known as the Lorenz curve of the state.  We then present the necessary and sufficient conditions for one state to be deterministically converted to another by noisy operations in terms of the Lorenz curves of the states.  

In Section~\ref{sec:NUmonotones}, we discuss various techniques for constructing nonuniformity monotones.  We show that various generalized entropies and relative entropies lead to useful monotones, some of which have operational interpretations.  We also show how certain monotones can be inferred from the geometry of the Lorenz curve. 

 We consider various kinds of exact state conversion in Section~\ref{Sec:Exactstateconversion}.  We begin by defining a standard form of the resource of nonuniformity in terms of a one-parameter family of states which we term the \emph{sharp states}.   
In terms of these sharp states, we can determine the amount of nonuniformity one requires to form a given state, as well as the amount of nonuniformity that one can distill from that state.  These are shown to be related to nonuniformity monotones based on the R\'enyi entropies of order 0 and order $\infty$. 
We also consider the amount of nonuniformity that must be paid to achieve a given state conversion or which can be recovered in addition to achieving the state conversion.  We apply these results to determine some simple sufficient conditions for state conversion.  
Next, we consider exact state conversion in the presence of a catalyst, that is, a nonuniform state which can be used in the conversion process but which must be returned intact at the end of the protocol.  In particular, we note the presence of nontrivial catalysis in our resource theory, but also the uselessness of sharp states as catalysts.  We review the necessary and sufficient conditions for a state conversion to be possible by catalysis. 
Finally, we discuss the inadequacy of the standard formulation of the second law of thermodynamics (nondecrease of entropy) as a criterion for deciding whether a state conversion is possible, with or without a catalyst, and we discuss the criteria that actually do the job.

In Section~\ref{Sec:Approximatestateconversion}, we shift our attention to approximate state conversion.  We begin by formalizing the notion of approximate state conversion in terms of a state being mapped to one that is $\varepsilon$-close to another relative to some contractive metric on the state space, and discuss subtleties of the quantum to classical reduction.  We then review the notion of smoothed entropies.  We reconsider the problem of determining the nonuniformity of formation and the distillable nonuniformity when the state conversion is allowed to be approximate, and we show that the answers are provided by the smoothed versions of the entropies that characterize exact conversion.  We then use these results to provide a simple proof of the rate of asymptotic conversion between states, that is, the rate at which one state can be converted approximately to another in the limit of arbitrarily many copies.
The asymptotic results are in turn applied to determine the nonuniformity cost and yield of approximate state conversion when one has arbitrarily many copies to convert.  Finally, we consider various notions of approximate state conversion in the presence of a catalyst.

In Section~\ref{Sec:Conclusions}, we discuss a few open problems and highlight some of the overarching conclusions of the article.

This article provides a review and synthesis of many known results, but along the way we also present a number of novel results, which we now summarize.

We demonstrate that all exact state conversion problems in the quantum resource theory of nonuniformity reduce to the corresponding problems in the classical resource theory, and that all {\em approximate} state conversion problems can also be reduced to their classical counterparts as long as one makes a judicious choice of the metric over states by which one judges the degree of approximation. We generalize many known results on state conversion and on majorization to the case where the input system and the output system are of different dimensions.  In particular, we describe a general scheme for building nonuniformity monotones from Schur-convex functions in this case.
We introduce the notion of a {\em state conversion witness}: a function of a pair of states in terms of which one can specify a necessary condition for the possibility of conversion from one state to the other, or a necessary condition for the impossibility of such a conversion, or both. We demonstrate how such witnesses are the appropriate tool for summarizing some known results, and are as such more versatile than monotones.   We introduce a class of states, which we call the {\em sharp states}, that serve as a gold standard form of nonuniformity and that naturally generalize the notion of `$n$ pure bits' to the case that $n$ is not an integer.
We use sharp states to quantify the nonuniformity of formation (or of distilation) of an arbitrary state and the nonuniformity cost (or yield) of an arbitrary state conversion.   In the case of approximate conversion, we prove exact expressions for the nonuniformity of formation and of distillation for a class of judiciously-chosen metrics over the states.
For exact conversion using a catalyst, we note the uselessness of sharp states as catalysts and its implication that having access to an ideal measurement cannot catalyze any state conversion.
Finally, we present a new and simplified proof of the rate of asymptotic conversion between nonuniform states.

\section{Preliminaries}\label{Sec:Prelims}
 
\subsection{Free operations, monotones, and witnesses} \label{Freeopsmonotoneswitnesses}

We begin by formalizing the definition of a noisy quantum operation.
A quantum operation is a completely positive trace-preserving linear map $\mathcal{E}:\mathcal{L}(\mathcal{H}_{\textnormal{in}}) \to \mathcal{L} (\mathcal{H}_{\textnormal{out}})$,
where $\mathcal{L}(\mathcal{H})$ is the set of linear bounded operators acting on $\mathcal{H}$. In this paper, all Hilbert spaces are assumed to be finite-dimensional.

\begin{definition}[Noisy quantum operations]
\label{DefNoisyQOperations}
A noisy quantum operation $\mathcal{E}$ is one that admits of the following decomposition.\footnote{For a fixed pair of Hilbert spaces $\mathcal{H}_{\textnormal{in}},\mathcal{H}_{\textnormal{out}}$, the set of maps from $\mathcal{H}_{\textnormal{in}}$
to $\mathcal{H}_{\textnormal{out}}$ of the form~(\ref{eqNoisyQuantum}) is not in general topologically closed~\cite{Shortalk}.
Thus, strictly speaking, we define noisy quantum operations as those linear maps that can be \emph{arbitrarily well approximated}
in operator norm by maps of the form~(\ref{eqNoisyQuantum}). This reflects the intuition that there is
no physical difference between exact and arbitrarily accurate implementation of any map. 
Consequently, the set of noisy quantum operations becomes topologically
closed. This allows us, in the following, to prove noisiness of quantum operations by showing how they can be approximated by maps of the
form~(\ref{eqNoisyQuantum}). Our noisy quantum operations are what~\cite{Shortalk} calls \emph{strongly factorizable maps}.}
There must exist a finite-dimensional ancilla space $\mathcal{H}_a$ and a unitary $U$ on $\mathcal{H}_{\textnormal{in}} \otimes \mathcal{H}_a$ such that, for all input states $\rho_{\textnormal{in}}$,
\begin{equation}
\mathcal{E}(\rho_{\textnormal{in}}) = \textnormal{Tr}_{a'}\left(U \left(\rho_{\textnormal{in}}\otimes \tfrac{1}{d_{a}} I_{a}\right) U^{\dag}\right),
\label{eqNoisyQuantum}
\end{equation}
where $\mathcal{H}_{a'}$ is the space complementary to $\mathcal{H}_{\textnormal{out}}$ in the total Hilbert space, that is,
$\mathcal{H}_{\textnormal{out}} \otimes \mathcal{H}_{a'} = \mathcal{H}_{\textnormal{in}} \otimes \mathcal{H}_a$, and where $d_a=\dim(\mathcal{H}_a)$.
\end{definition}

Although at first glance, ``degree of purity'' might appear to be a good name for what we are calling \textquotedblleft
nonuniformity\textquotedblright, at second glance one recognizes that it is not because standard usage of the term ``purity'' takes a pure state of a 2-level system and a pure state of a 3-level to have \emph{equal} degrees of purity, while the latter is a stronger resource than the former in the resource theory we are considering.  
The fact that these two pure states are not equivalent under noisy operations is counterintuitive for many quantum information theorists because it contrasts with the situation in entanglement theory, 
the resource theory with which they are most familiar.  The difference arises because the free operations in entanglement theory (local operations and classical communication) allow one to prepare pure product states for free, thereby allowing a given state to be embedded into an arbitrarily large space, whereas noisy operations do not allow one to prepare pure states for free.  More generally, two states with identical structure on their support have different amounts of resourcefulness depending on the dimension of the space in which the state is embedded.
\footnote{This inequivalence can also be made slightly more intuitive by noting that it holds also in the resource theory of athermality and
 accounts for the fact that a Szilard engine at temperature $T$ extracts different amounts of work from the two states: if it operates on a 2-level system that is prepared in a pure initial state, it can extract an expected work of $k_B T \log2$, whereas if it operates on a 3-level system that is prepared in a pure initial state, it can extract $k_BT \log3$.}
The resourcefulness of a state under noisy operations is therefore not determined by its proximity to a pure state, but rather its distance from the uniform state, and therefore the term ``nonuniformity'' is more appropriate than the term ``purity'' as a description of the resource.

An important question in any resource theory is how to quantify the resource. A \emph{resource monotone} is a function $f$ from states to the real numbers such that if $\rho \mapsto \sigma$ by the free operations, then $f(\rho)\ge f(\sigma)$. For the set of operations given by noisy operations, we refer to the resource monotones as nonuniformity monotones:

\begin{definition}[Quantum nonuniformity monotone]\label{def:QNUMon}
A function $M$, mapping density operators to real numbers, is a nonuniformity monotone if for any two states $\rho$ and $\sigma$ (possibly of different dimensions), $\rho\mapsto\sigma$ by noisy operations implies that $M(\rho)\ge M(\sigma)$.
\end{definition}

Any measure of nonuniformity must at least be a nonuniformity monotone.
When quantifying the relative resourcefulness of different states, there is a strong compulsion to think that 
the researcher's task is to find the ``one true measure to rule them all''.  This tendency stems from an implicit assumption that any property worth quantifying must necessarily be totally ordered.  In fact, it is more common to find that resource states only form a quasi-order, and no individual scalar measure can capture a quasi-order. So the search for the one true measure is in vain.  The ``total-order fallacy'' must be resisted.

Nonetheless, resource monotones still have an important role to play.  A \emph{set} of such monotones {\em can} capture the quasi-order, in which case they are called a \emph{complete set} of monotones.
Also, if we define an operational task that requires the resource, it is then a well-defined question which resource monotone accurately quantifies the degree of success achievable in the task (according to some figure of merit) for a given state.  The states become totally ordered relative to the task.

In this article, we describe some methods for generating a large number of nontrivial nonuniformity monotones, and we describe operational interpretations for some of these.  We also identify some complete sets of nonuniformity monotones.  

Besides quantifying nonuniformity, we will be primarily interested in problems of state conversion.  A useful tool for characterizing such problems is the concept of a \emph{state conversion witness}.  
\begin{definition}[Quantum state conversion witness]\label{QSCWitness}
Let $W$ be a real-valued function on pairs of quantum states, $\rho$ and $\sigma$ (possibly of different dimensions).
$W$ is said to be a \emph{go witness} if $W(\rho,\sigma) \ge 0$ implies that $\rho \mapsto \sigma$ under the free quantum operations. 
$W$ is said to be a \emph{no-go witness} if $W(\rho, \sigma) < 0$ implies that $\rho \not\mapsto \sigma$ under the free quantum operations.
Finally, $W$ is said to be a \emph{complete witness} if it is both a go witness and a no-go witness.  
\end{definition}
No-go witnesses were previously introduced under the name of \emph{relative monotones} \cite{SandersGour}.

Any resource monotone $M$ defines a no-go witness by $W(\rho, \sigma)\assign M(\rho)-M(\sigma)$ because $W(\rho, \sigma)<0$ implies $M(\sigma) > M(\rho)$, which implies that $\rho \not \mapsto \sigma$.  It follows that $M$ also defines a \emph{resource witness} -- that is, a function $w$ such that $w(\sigma) < 0$ implies that $\sigma$ is a resource -- by $w(\sigma) \assign W(\rho_{\textnormal{free}} , \sigma) = M(\rho_{\textnormal free})-M(\sigma)$, where $\rho_{\textnormal{free}}$ is any free state.  A complete set $\{ M_k \}$ of monotones  defines a complete witness by $W(\rho , \sigma)\assign \min_{k} \left( M_k(\rho)-M_k(\sigma) \right)$.

\subsection{Reducing quantum state conversion problems to their classical counterparts} \label{section:QuantToClassl}

To define the \emph{classical} resource theory of nonuniformity, we must characterize the set of noisy \emph{classical} operations.  Just as we have confined our attention to finite-dimensional Hilbert spaces in the quantum case, we confine our attention to discrete variables in the classical case (that is, finite information-carrying capacity in both cases).  
Let $\Omega_{\textnormal{in}}$ be the discrete physical state space of the input system and $\Omega_{\textnormal{out}}$ be the discrete physical state space of the output system. 
Let $S(\Omega)$ be the simplex of probability distributions over $\Omega$, and let $V(\Omega)$ be the smallest linear vector space in which $S(\Omega)$ can be embedded.  Clearly, every probability distribution on a sample space $\Omega$ can be represented by a vector in $V(\Omega)$. We call each such probability distribution a {\em state}. When we say that states are {\em normalized}, we refer to the fact that their components sum to total probability one.

The classical analogue of a quantum operation that is completely positive and trace-preserving is a classical operation which preserves positivity and normalization, hence taking probability distributions to probability distributions.  For finite dimension, this is represented by a stochastic matrix, that is, a matrix $D$ whose entries are real, are nonnegative, and satisfy $\sum_j D_{jk} =1$.

We can now provide a definition of the free operations.
\begin{definition}[Noisy classical operations]
A noisy classical operation is a positivity-preserving and normalization-preserving map $D:V_{\textnormal{in}}\to V_{\textnormal{out}}$ that admits of the following decomposition.\footnote{Analogously to the definition of noisy quantum operations, the set of classical operations of the form~(\ref{eq:NCO}) between fixed vector spaces $V_{\textnormal in}$ and $V_{\textnormal out}$ is not topologically closed. This follows from the simple observation that input vectors $x_{\textnormal in}$ with all rational entries are
mapped to output vectors $D x_{\textnormal in}$ with all rational entries. Thus, as in the quantum case, we define noisy classical operations from $V_{\textnormal{in}}$ to $V_{\textnormal{out}}$
as those linear maps that can be \emph{arbitrarily well approximated} in operator norm by maps of the form~(\ref{eq:NCO}).
}  There must exist an ancilla system with a discrete physical state space $\Omega_a$ and a permutation on $\Omega_{\textnormal{in}} \times \Omega_{a}$ with an induced representation $\pi$ on $V(\Omega_{\textnormal{in}}) \otimes V(\Omega_{a})$  such that, for all input states $x_{\textnormal{in}} $,
\begin{equation}\label{eq:NCO}
D x_{\textnormal{in}} = \sum_{\Omega_{a'}} \pi \left(x_{\textnormal{in}} \otimes m_a\right),
\end{equation}
where $m_a$ is the normalized uniform distribution on $\Omega_{a}$ and $\Omega_{a'}$ is the physical state space complementary to $\Omega_{\textnormal{out}}$, that is, $\Omega_{\textnormal{out}} \times \Omega_{a'} = \Omega_{\textnormal{in}} \times \Omega_{a}$.
\end{definition}

We note an important difference between the structure of the set of noisy quantum operations and the set of noisy classical operations.

Recall that a \emph{unital} operation $\mathcal{E}:\mathcal{L}(\mathcal{H}_{\textnormal{in}}) \to \mathcal{L}(\mathcal{H}_{\textnormal{out}})$ is one for which
$\mathcal{E}(\eins_{\textnormal{in}}/d_{\textnormal{in}})=\eins_{\textnormal{out}}/d_{\textnormal{out}}$, where $\eins_{\textnormal{in}}$ and $\eins_{\textnormal{out}}$ are the identity operators on $\mathcal{H}_{\textnormal{in}}$ and $\mathcal{H}_{\textnormal{out}}$, respectively~\cite{MendlWolf}.  
\begin{lemma}\label{lemma:ClassifyingNO}
Noisy quantum operations form a strict subset of the unital operations, and, in the case of equal dimension of input and output space, a strict superset of the mixtures of unitaries. 
\end{lemma}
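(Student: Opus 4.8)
The plan is to prove this in four pieces: the inclusion noisy $\subseteq$ unital, the inclusion (at equal dimension) mixtures-of-unitaries $\subseteq$ noisy, and the two corresponding strictness statements. The two inclusions are elementary and I would establish them by direct computation and explicit construction; the real work is in the two strictness claims, which I would import from the literature on factorizable maps rather than reprove.

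First I would show every noisy operation is unital. Since unitality is a closed condition and the noisy operations are by definition the closure of maps of the form~(\ref{eqNoisyQuantum}), it suffices to check it on such maps. Substituting $\rho_{\textnormal{in}} = I_{\textnormal{in}}/d_{\textnormal{in}}$ into~(\ref{eqNoisyQuantum}), and writing $I$ for the identity on the full space $\mathcal{H}_{\textnormal{in}}\otimes\mathcal{H}_a = \mathcal{H}_{\textnormal{out}}\otimes\mathcal{H}_{a'}$ (so that $U I U^{\dag}=I$),
\begin{equation}
\mathcal{E}\left(\tfrac{1}{d_{\textnormal{in}}}I_{\textnormal{in}}\right) = \tfrac{1}{d_{\textnormal{in}}d_a}\,\textnormal{Tr}_{a'}(U I U^{\dag}) = \tfrac{1}{d_{\textnormal{in}}d_a}\,\textnormal{Tr}_{a'}(I) = \tfrac{d_{a'}}{d_{\textnormal{in}}d_a}\,I_{\textnormal{out}} = \tfrac{1}{d_{\textnormal{out}}}\,I_{\textnormal{out}},
\end{equation}
where the partial trace of the identity over $\mathcal{H}_{a'}$ contributes the factor $d_{a'}$ and the last equality uses $d_{\textnormal{in}}d_a = d_{\textnormal{out}}d_{a'}$. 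This is exactly the unitality condition.

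Next I would show that, when $d_{\textnormal{in}}=d_{\textnormal{out}}$, every mixture of unitaries $\mathcal{E}(\rho)=\sum_i p_i U_i \rho U_i^{\dag}$ is noisy. For rational weights, write $p_i = n_i/N$ with $\sum_i n_i = N$, take an ancilla $\mathcal{H}_a$ of dimension $N$ in the uniform state $I_a/N$, partition an orthonormal basis $\{|j\rangle\}_{j=1}^{N}$ of $\mathcal{H}_a$ into blocks $B_i$ of sizes $n_i$, and define the controlled unitary $U \assign \sum_j V_j \otimes |j\rangle\langle j|$ with $V_j = U_i$ whenever $j\in B_i$. Tracing out the ancilla (so $\mathcal{H}_{a'}=\mathcal{H}_a$ and $\mathcal{H}_{\textnormal{out}}=\mathcal{H}_{\textnormal{in}}$) gives $\sum_j \tfrac{1}{N} V_j \rho V_j^{\dag} = \sum_i \tfrac{n_i}{N} U_i \rho U_i^{\dag} = \mathcal{E}(\rho)$, which is precisely the form~(\ref{eqNoisyQuantum}). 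Arbitrary real weights are then obtained by approximating the $p_i$ by rationals and invoking the fact that the noisy operations are topologically closed.

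Finally, the two strictness assertions are the crux, and I expect them to be the main obstacle, since each requires a nontrivial separating example rather than a construction. For noisy $\subsetneq$ unital I would invoke the existence of unital channels that are not factorizable at all, and hence not noisy; such channels are produced by the analysis of (strongly) factorizable maps due to Shor~\cite{Shortalk} and Haagerup and Musat~\cite{HaagerupMusat}, and their existence is what obstructs the quantum analogue of Birkhoff's theorem even asymptotically. For mixtures-of-unitaries $\subsetneq$ noisy (at equal dimension) I would exhibit a noisy channel whose Choi operator lies outside the convex hull of the maximally entangled states, i.e.\ a factorizable channel that is not random-unitary; explicit examples in dimension $d\ge 3$ are supplied by the same literature together with Mendl and Wolf~\cite{MendlWolf}. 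For the purposes of this review I would cite these constructions and verify only that the cited channels do (respectively do not) admit a representation of the form~(\ref{eqNoisyQuantum}), leaving their deeper properties to the references.
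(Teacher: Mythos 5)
Your proposal is correct and follows essentially the same route as the paper's own proof in Appendix~\ref{app:proofs}: direct verification that noisy operations are unital, an explicit controlled-unitary construction showing mixtures of unitaries are noisy, and an appeal to~\cite{HaagerupMusat} and~\cite{Shortalk} (with~\cite{MendlWolf}) for the two strictness claims. Your version is in fact slightly more careful than the paper's on two minor points --- checking that unitality survives the topological closure in Definition~\ref{DefNoisyQOperations}, and handling irrational mixture weights by rational approximation --- but these are refinements of the same argument, not a different one.
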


The proof is described in Appendix~\ref{app:proofs}.

To compare with the classical case, we must specify the classical analogues of each of the classes of operations appearing in Lemma~\ref{lemma:ClassifyingNO}.

The classical analogue of a unital quantum operation is a stochastic matrix $D$ that takes the uniform distribution on the input system to the uniform distribution on the output system, hence $\sum_k D_{jk} =d_{\textnormal{in}}/d_{\textnormal{out}}$ where $d_{\textnormal{in}}$ ($d_{\textnormal{out}}$) is the dimension of the input (output) vector space.  We will call such a stochastic matrix \emph{uniform-preserving}.  
Note that, if the input and output spaces are of equal dimension, a uniform-preserving stochastic matrix $D$ satisfies $\sum_j D_{jk}=\sum_k D_{jk}=1$, in which case it is said to be doubly-stochastic.  Finally, the classical analogue of a mixture of unitaries is a mixture of permutations of the physical state space.

The structure of the set of noisy classical operations is much more straightforward than the quantum one.

\begin{lemma} \label{lemma:NCO}
The set of noisy classical operations coincides with the set of uniform-preserving stochastic matrices and, in the case of equal dimension of input and output spaces (where the uniform-preserving stochastic matrices are the doubly-stochastic matrices), it coincides with the set of mixtures of permutations.
\end{lemma}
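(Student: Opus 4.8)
The plan is to prove the two claimed coincidences separately, establishing the general (rectangular) statement first and then specializing to equal dimensions, where the Birkhoff--von Neumann theorem supplies the extra characterization in terms of mixtures of permutations.

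For the easy inclusion---every noisy classical operation is a uniform-preserving stochastic matrix---I would argue as follows. Any map of the form~(\ref{eq:NCO}) sends probability distributions to probability distributions, since tensoring with $m_a$, permuting, and marginalizing each preserve positivity and normalization; hence it is a stochastic matrix and in particular has column sums equal to $1$. To obtain the uniform-preserving property I would feed in the uniform input $m_{\textnormal{in}}$: because $m_{\textnormal{in}}\otimes m_a$ is uniform on $\Omega_{\textnormal{in}}\times\Omega_a=\Omega_{\textnormal{out}}\times\Omega_{a'}$ and a permutation fixes the uniform distribution, $\pi(m_{\textnormal{in}}\otimes m_a)$ is uniform, and marginalizing over $\Omega_{a'}$ yields $m_{\textnormal{out}}$. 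Comparing entries of $Dm_{\textnormal{in}}=m_{\textnormal{out}}$ gives $\sum_k D_{jk}=d_{\textnormal{in}}/d_{\textnormal{out}}$. Finally, the uniform-preserving stochastic matrices form a closed set, being cut out by the (rational) linear equalities and inequalities above, so the closure invoked in the definition of noisy operations does not escape this set.

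The substantive direction is the reverse inclusion. Since the uniform-preserving stochastic matrices form a rational polytope their rational points are dense, so by the approximation clause in the definition it suffices to realize \emph{exactly} every such matrix $D$ with rational entries. I would write $D_{jk}=N_{jk}/M$ over a common denominator $M$ divisible by $d_{\textnormal{out}}$, so that the column constraint gives $\sum_j N_{jk}=M$ and the row constraint gives $\sum_k N_{jk}=Md_{\textnormal{in}}/d_{\textnormal{out}}$, both nonnegative integers. Then I would choose ancilla dimensions $d_a=M$ and $d_{a'}=Md_{\textnormal{in}}/d_{\textnormal{out}}$, which satisfy $d_{\textnormal{in}}d_a=d_{\textnormal{out}}d_{a'}$, and seek a permutation $\pi$ of $\Omega_{\textnormal{in}}\times\Omega_a$ sending exactly $d_a D_{jk}=N_{jk}$ of the points with input-label $k$ to points with output-label $j$. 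A direct evaluation of $\mathrm{Tr}_{a'}[\pi(x\otimes m_a)]$ shows this count condition is precisely what reproduces $Dx$. To build $\pi$, observe that the $N_{jk}$ satisfy $\sum_j N_{jk}=d_a$ (the number of points with input-label $k$) and $\sum_k N_{jk}=d_{a'}$ (the number with output-label $j$); hence I can partition the $d_a$ input points of each label $k$ into blocks of sizes $N_{jk}$, partition the $d_{a'}$ output points of each label $j$ into blocks of sizes $N_{jk}$, and match the two blocks indexed by the same pair $(j,k)$ by an arbitrary bijection, these having equal cardinality. Assembling these matchings over all $(j,k)$ produces the required permutation.

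This last step is the one I expect to be the \emph{main obstacle}: selecting the ancilla so that every marginal count $d_a D_{jk}$ becomes a consistent nonnegative integer, and checking that the elementary partition-and-match construction reproduces $D$ after marginalization. Once it is in hand, the equal-dimension statement is immediate: when $d_{\textnormal{in}}=d_{\textnormal{out}}$ the uniform-preserving condition reads $\sum_k D_{jk}=1=\sum_j D_{jk}$, i.e.\ double stochasticity, and by the Birkhoff--von Neumann theorem the doubly-stochastic matrices are exactly the convex hull of the permutation matrices, that is, the mixtures of permutations. Combined with the two inclusions above, this shows that all three sets coincide in the equal-dimension case.
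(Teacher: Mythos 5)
Your proof is correct, but it realizes the hard inclusion by a genuinely different construction than the paper. The paper first treats the equal-dimension case by combining Birkhoff's theorem with a controlled-permutation gadget (a large uniform ancilla partitioned into blocks of relative size $p_i$, conditioning which permutation $P_i$ is applied), and then handles unequal dimensions by squaring things up: it tensors $D$ with the ``uniformize'' map $D_0$ of all entries $1/d_2$ so that $D\otimes D_0$ is doubly stochastic, applies the equal-dimension result, and recovers $D$ by adjoining a uniform state and marginalizing. You instead build, for any \emph{rational} uniform-preserving $D$, a single explicit permutation of $\Omega_{\textnormal{in}}\times\Omega_a$ whose fiber counts $N_{jk}=d_aD_{jk}$ reproduce $D$ after marginalization, and you pass to general $D$ by density of rational points in the polytope together with the closure clause in the definition; Birkhoff enters only at the very end, as the characterization of doubly stochastic matrices needed for the equal-dimension part of the statement. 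Your route buys two things: it avoids the detour through the square case entirely, and it is more scrupulous about the fact that maps exactly of the form~(\ref{eq:NCO}) necessarily have rational entries --- a point the paper's controlled-permutation step (partitioning a finite ancilla into blocks of relative size $p_i$ for possibly irrational $p_i$) quietly sweeps into the closure as well. The paper's route is shorter once Birkhoff is taken as given and makes the ``mixture of permutations $\subseteq$ noisy operations'' inclusion explicit as a standalone physical statement, which is reused elsewhere (e.g.\ in the proof of Lemma~\ref{lemma:SchCon}). Both arguments are sound; just make sure you state that $M$ is chosen so that $d_{\textnormal{out}}\mid M$ (as you do), since otherwise $d_{a'}$ need not be an integer.
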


Here also, we relegate the proof to Appendix~\ref{app:proofs}.  Note that the fact that every doubly-stochastic matrix can be written as a convex combination of permutations is known as Birkhoff's theorem~\cite{Birkhoff}.

Because all unitaries are free operations in the quantum resource theory of nonuniformity, the only feature of a quantum state that is relevant for its nonuniformity properties is the vector of its
eigenvalues (where an eigenvalue appears multiple times in the vector if it is
degenerate and we explicitly include any zero eigenvalues).
For any state $\rho$, we denote the vector of eigenvalues, listed in non-increasing order,
by $\lambda\left(  \rho\right)$. Therefore, the condition under which we can
deterministically convert $\rho$ to $\sigma$ by noisy operations must be
expressible in terms of $\lambda\left(  \rho\right)  $ and $\lambda\left(\sigma\right)$. 

We can now state the reduction of the quantum state conversion problem to the corresponding classical problem, leaving the proof to Appendix~\ref{app:proofs}.

\begin{lemma}\label{lemma:quantumNOtoclassicalNO}
There exists a noisy quantum operation that achieves the quantum state conversion $\rho \mapsto \sigma$ if and only if there is a noisy classical operation that achieves the classical state conversion $\lambda(\rho) \mapsto \lambda(\sigma)$.
\end{lemma}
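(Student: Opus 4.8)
The plan is to prove both directions by an explicit translation between the two classes of operations at the level of eigenvalue and probability vectors, using the structural characterizations already established in Lemmas~\ref{lemma:ClassifyingNO} and~\ref{lemma:NCO}. Throughout, let $\{\ket{e_k}\}$ and $\{\ket{f_j}\}$ be eigenbases of $\rho$ and $\sigma$, ordered so that the associated eigenvalues are the non-increasing vectors $\lambda(\rho)$ and $\lambda(\sigma)$.

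For the forward implication, suppose a noisy quantum operation $\mathcal{E}$ satisfies $\mathcal{E}(\rho)=\sigma$. I would define the transition matrix $D_{jk}\assign\bra{f_j}\mathcal{E}(\proj{e_k})\ket{f_j}$ and verify three properties. First, $D$ is stochastic: complete positivity gives $D_{jk}\ge 0$, and trace preservation gives $\sum_j D_{jk}=\Tr\mathcal{E}(\proj{e_k})=1$. Second, $D$ is uniform-preserving: since $\mathcal{E}$ is unital by Lemma~\ref{lemma:ClassifyingNO}, we have $\sum_k D_{jk}=\bra{f_j}\mathcal{E}(I_{\textnormal{in}})\ket{f_j}=d_{\textnormal{in}}/d_{\textnormal{out}}$. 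Third, $D$ carries $\lambda(\rho)$ to $\lambda(\sigma)$: by linearity, $\sum_k D_{jk}\lambda_k(\rho)=\bra{f_j}\mathcal{E}(\rho)\ket{f_j}=\bra{f_j}\sigma\ket{f_j}=\lambda_j(\sigma)$. By Lemma~\ref{lemma:NCO}, a uniform-preserving stochastic matrix is precisely a noisy classical operation, so $D$ is the desired classical conversion. The only nontrivial ingredient beyond routine algebra is the unitality of $\mathcal{E}$, which Lemma~\ref{lemma:ClassifyingNO} supplies for free.

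For the reverse implication, suppose a noisy classical operation $D$ satisfies $D\lambda(\rho)=\lambda(\sigma)$. Write $\rho=V_\rho\,\Lambda_\rho\,V_\rho^\dagger$ and $\sigma=V_\sigma\,\Lambda_\sigma\,V_\sigma^\dagger$, where $\Lambda_\rho,\Lambda_\sigma$ are diagonal in a fixed computational basis with entries $\lambda(\rho),\lambda(\sigma)$. The strategy is to rotate $\rho$ to $\Lambda_\rho$ by the free unitary $V_\rho^\dagger$, apply a noisy quantum operation that reproduces $D$ on diagonal states, and rotate the output to $\sigma$ by $V_\sigma$. To build the middle step I would lift the classical dilation to the quantum level: for a classical operation of the exact form~(\ref{eq:NCO}), with ancilla $\Omega_a$ and permutation $\pi$ on $\Omega_{\textnormal{in}}\times\Omega_a$, let $U_\pi$ be the permutation unitary acting on the computational basis of $\mathcal{H}_{\textnormal{in}}\otimes\mathcal{H}_a$ according to $\pi$, and set $\mathcal{E}_\pi(\cdot)=\Tr_{a'}(U_\pi(\cdot\otimes\tfrac{1}{d_a}I_a)U_\pi^\dagger)$. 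This is a noisy quantum operation of the form~(\ref{eqNoisyQuantum}), and because permutation unitaries map basis states to basis states, maximally mixed ancillas correspond to uniform distributions, and partial traces correspond to marginalizations, $\mathcal{E}_\pi$ acts on a diagonal state $\mathrm{diag}(x)$ exactly as $D$ acts on $x$. Composing the free rotation $X\mapsto V_\rho^\dagger X V_\rho$, the map $\mathcal{E}_\pi$, and the free rotation $X\mapsto V_\sigma X V_\sigma^\dagger$ then yields a noisy quantum operation (unitaries are free and noisy operations compose) that sends $\rho$ to $V_\sigma\,\mathrm{diag}(D\lambda(\rho))\,V_\sigma^\dagger=\sigma$.

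The main obstacle I anticipate lies in this reverse direction. Lemma~\ref{lemma:ClassifyingNO} warns that noisy quantum operations form a \emph{strict} subset of the unital operations, so one cannot shortcut the argument by choosing a convenient unital CPTP extension of $D$ (for instance the channel with Kraus operators $\sqrt{D_{jk}}\ket{f_j}\bra{e_k}$): such an extension reproduces $D$ on eigenvalues yet need not be noisy. One is therefore forced to respect the explicit dilation structure, which is what the permutation lift above accomplishes. A secondary wrinkle is that, by Lemma~\ref{lemma:NCO}, a general uniform-preserving stochastic matrix is only guaranteed to be \emph{approximable} by maps of the form~(\ref{eq:NCO}) rather than exactly equal to one; I would handle this by performing the lift for each approximant and invoking the closure of the set of noisy quantum operations (footnote to Definition~\ref{DefNoisyQOperations}) to conclude that the exact conversion $\rho\mapsto\sigma$ is realized in the limit.
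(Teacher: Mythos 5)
Your proof is correct and follows essentially the same route as the paper's: the forward direction extracts the same uniform-preserving transition matrix $D_{jk}=\bra{f_j}\mathcal{E}(\proj{e_k})\ket{f_j}$ and uses unitality plus Lemma~\ref{lemma:NCO}, while the reverse direction lifts the classical permutation dilation to a permutation unitary sandwiched between free eigenbasis rotations, which is the same construction the paper packages into a single dilating unitary $V$ acting on $\rho\otimes\tfrac{1}{d_a}I_a$. Your explicit treatment of the topological-closure subtlety (approximating a general uniform-preserving stochastic matrix by maps of the exact form~(\ref{eq:NCO}) and passing to the limit) is a point the paper's proof leaves implicit.
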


Lemma~\ref{lemma:quantumNOtoclassicalNO} implies that all questions about exact state interconversion in the quantum resource theory of nonuniformity are answered by the classical theory, so studying the latter is sufficient.\footnote{Note, however, that one \emph{does} find a separation between the quantum and the classical theories if one assumes an additional restriction on the free operations, namely, that the systems to which one has access are correlated with others to which one has no access. 
The reason is that a mixture of entangled states cannot be transformed by a local unitary to a mixture of product states.  
} Later on (in Lemma~\ref{LemApproximateQuantumClassical}), we will prove an analogous statement for approximate conversion.

In particular, Lemma~\ref{lemma:quantumNOtoclassicalNO} implies a reduction of quantum nonuniformity monotones and quantum state conversion witnesses to their classical counterparts, which are defined analogously to the quantum notions, Definitions~\ref{def:QNUMon} and \ref{QSCWitness}, but with quantum states replaced by probability distributions, and noisy quantum operations replaced by noisy classical operations. 

\begin{corollary}\label{corollary:classicalmonotonetoquantummonotone}
Consider $M$, a real-valued function over probability distributions, and $M'$, a real-valued function over quantum states, such that $M'(\rho) \assign M(\lambda(\rho))$, where $\lambda(\rho)$ is the vector of eigenvalues of $\rho$.  Then $M'$ is a quantum nonuniformity monotone  if and only if $M$ is a classical nonuniformity monotone.
\end{corollary}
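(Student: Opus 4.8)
The plan is to obtain both implications directly from Lemma~\ref{lemma:quantumNOtoclassicalNO}, which states that $\rho\mapsto\sigma$ by noisy quantum operations if and only if $\lambda(\rho)\mapsto\lambda(\sigma)$ by noisy classical operations. The whole content of the corollary is that this equivalence of conversion relations lifts to an equivalence of the two monotonicity conditions once one passes between a quantum state and its eigenvalue spectrum via $M'(\rho)\assign M(\lambda(\rho))$. Since the conversion relations match on the two sides, the monotone conditions should match as well, and the proof is essentially a matter of unpacking Definition~\ref{def:QNUMon} and its classical analogue.

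For the forward direction, I would assume that $M$ is a classical nonuniformity monotone and verify the quantum condition. Given any two quantum states with $\rho\mapsto\sigma$, Lemma~\ref{lemma:quantumNOtoclassicalNO} yields $\lambda(\rho)\mapsto\lambda(\sigma)$ by noisy classical operations, whence $M(\lambda(\rho))\ge M(\lambda(\sigma))$, i.e.\ $M'(\rho)\ge M'(\sigma)$. This direction is a one-line substitution and presents no obstacle.

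For the reverse direction, I would assume that $M'$ is a quantum monotone and deduce that $M$ is a classical monotone. Given probability distributions with $p\mapsto q$ by noisy classical operations, I would embed them as diagonal density operators $\rho=\sum_i p_i\proj{i}$ and $\sigma=\sum_i q_i\proj{i}$, so that $\lambda(\rho)=p^\downarrow$ and $\lambda(\sigma)=q^\downarrow$ are the non-increasing rearrangements. Because permutations are free classical operations, composing the given operation with the permutation that sorts its input beforehand and the one that sorts its output afterward shows that $p^\downarrow\mapsto q^\downarrow$ as well, so $\lambda(\rho)\mapsto\lambda(\sigma)$. Lemma~\ref{lemma:quantumNOtoclassicalNO} then gives $\rho\mapsto\sigma$ quantumly, and monotonicity of $M'$ yields $M(p^\downarrow)\ge M(q^\downarrow)$.

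The one genuine subtlety, and the step I would be most careful about, is that $M'$ only ever evaluates $M$ on sorted vectors (since $\lambda(\cdot)$ is non-increasing by definition), so the argument above directly controls $M$ only on the rearrangements $p^\downarrow,q^\downarrow$ rather than on $p,q$ themselves. To close this gap I would invoke the fact that every classical monotone is automatically permutation-invariant: a permutation and its inverse are both free, so $p\mapsto\pi p$ and $\pi p\mapsto p$ together force $M(p)=M(\pi p)$. Using $M(p)=M(p^\downarrow)$ and $M(q)=M(q^\downarrow)$ then upgrades $M(p^\downarrow)\ge M(q^\downarrow)$ to the desired $M(p)\ge M(q)$. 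I would flag explicitly that this permutation-invariance is exactly what makes the classical-side condition equivalent to its restriction to sorted vectors (equivalently, to Lorenz curves), which is all that $M'$ is able to detect; without it the literal ``if and only if'' would be sensitive to how $M$ acts on non-sorted arguments.
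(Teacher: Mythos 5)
Your overall strategy is exactly the paper's: the paper offers no separate proof of this corollary, treating it as an immediate translation of Lemma~\ref{lemma:quantumNOtoclassicalNO} through the definition $M'(\rho)=M(\lambda(\rho))$, and your forward direction (classical monotone $\Rightarrow$ quantum monotone) is a correct one-line substitution. Your reduction of the reverse direction to diagonal embeddings, and the observation that $p\mapsto q$ implies $p^\downarrow\mapsto q^\downarrow$ because sorting permutations are free, are also fine.

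The problem is the step you yourself flag as the ``one genuine subtlety.'' In the reverse direction you need $M(p)=M(p^\downarrow)$, and you justify it by the fact that ``every classical monotone is automatically permutation-invariant.'' But at that point in the argument you have not yet established that $M$ \emph{is} a classical monotone --- that is precisely the conclusion you are trying to reach --- so the appeal is circular. And the circularity cannot be patched, because the hypothesis that $M'$ is a quantum monotone genuinely says nothing about $M$ off the sorted vectors: take $M(p)\assign I_\infty(p)$ for sorted $p$ and $M(p)\assign C$ for some huge constant $C$ on unsorted $p$; then $M'=I_\infty\circ\lambda$ is a perfectly good quantum monotone, yet $M$ violates classical monotonicity under the free permutation $(0.6,0.4)\mapsto(0.4,0.6)$. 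So the literal ``only if'' fails for such $M$. The honest repair --- which is surely what the paper intends, since every $M$ it considers is a symmetric function --- is to add permutation-invariance of $M$ as a standing hypothesis (or, equivalently, to assert only that the restriction of $M$ to sorted vectors, extended by symmetry, is a classical monotone). With that hypothesis in place, $M(p)=M(p^\downarrow)$ holds by assumption rather than by circular appeal, and the rest of your argument closes the reverse direction correctly.
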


\begin{corollary}\label{corollary:classicalwitnesstoquantumwitness}
Consider $W$, a real-valued function over pairs of probability distributions, and $W'$, a real-valued function over pairs of quantum states, such that $W'(\rho,\sigma) \assign W(\lambda(\rho), \lambda(\sigma))$.  Then $W'$ is a witness for state conversion under noisy quantum operations if and only if $W$ is a witness for state conversion under classsical noisy operations.
\end{corollary}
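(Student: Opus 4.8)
The plan is to obtain this corollary directly from Lemma~\ref{lemma:quantumNOtoclassicalNO}, which states that $\rho\mapsto\sigma$ under noisy quantum operations if and only if $\lambda(\rho)\mapsto\lambda(\sigma)$ under noisy classical operations. Since a witness in the sense of Definition~\ref{QSCWitness} is characterized by the go property, the no-go property, or both, and since a complete witness is exactly one that is simultaneously a go and a no-go witness, it suffices to establish the stated equivalence separately for ``$W$ is a go witness'' and for ``$W$ is a no-go witness''; the complete case then follows by conjunction.

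First I would dispatch the implication ``$W$ a classical witness $\Rightarrow$ $W'$ a quantum witness,'' which is immediate. For the go property, suppose $W$ is a classical go witness and that $W'(\rho,\sigma)\geq 0$; by the defining relation $W'(\rho,\sigma)=W(\lambda(\rho),\lambda(\sigma))$ this means $W(\lambda(\rho),\lambda(\sigma))\geq 0$, so $\lambda(\rho)\mapsto\lambda(\sigma)$ classically, and Lemma~\ref{lemma:quantumNOtoclassicalNO} gives $\rho\mapsto\sigma$. The no-go property is the analogous contrapositive chain: $W'(\rho,\sigma)<0$ forces $W(\lambda(\rho),\lambda(\sigma))<0$, hence $\lambda(\rho)\not\mapsto\lambda(\sigma)$ classically, hence $\rho\not\mapsto\sigma$. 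No realizability question arises in this direction, because every quantum state $\rho$ furnishes a bona fide sorted distribution $\lambda(\rho)$ at which $W$ is evaluated.

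The converse implication, ``$W'$ a quantum witness $\Rightarrow$ $W$ a classical witness,'' is where I expect the only real difficulty, and it turns on the ordering convention for $\lambda$. Because $\lambda(\rho)$ is by definition a non-increasingly ordered vector, the relation $W'(\rho,\sigma)=W(\lambda(\rho),\lambda(\sigma))$ only ever probes $W$ at sorted pairs, whereas the classical witness property is a priori a statement about all probability distributions. The resolution rests on two facts. First, $\rho\mapsto\lambda(\rho)$ is surjective onto the sorted distributions: for any non-increasing distribution $p$ the diagonal state $\rho=\mathrm{diag}(p)$ satisfies $\lambda(\rho)=p$, so for sorted $p,q$ one selects diagonal $\rho,\sigma$ with $\lambda(\rho)=p$ and $\lambda(\sigma)=q$ and transports the witness property through Lemma~\ref{lemma:quantumNOtoclassicalNO}. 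Second, by Lemma~\ref{lemma:NCO} the noisy classical operations contain every permutation of the input and of the output physical state space, so classical convertibility is unchanged under reordering either argument; thus $p\mapsto q$ if and only if $p^{\downarrow}\mapsto q^{\downarrow}$, where $\downarrow$ denotes the non-increasing rearrangement. Consequently, provided one reads the classical theory---as the Lorenz-curve formalism of the rest of the article does---as concerning distributions up to permutation (equivalently, their sorted representatives), these two facts identify the domain on which $W$ is tested with that seen by $W'$, and the converse follows by running the same two-line argument in reverse. I would flag explicitly that it is precisely the permutation-invariance of noisy classical operations that makes the restriction to sorted representatives harmless: without this identification the values of $W$ on unsorted pairs are invisible to $W'$, so it is this invariance, rather than any deeper structure, that underwrites the equivalence.
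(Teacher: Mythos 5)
Your proof is correct and follows the route the paper intends: the corollary is stated there without proof as an immediate consequence of Lemma~\ref{lemma:quantumNOtoclassicalNO}, and your forward direction is exactly that two-line argument. Your treatment of the converse is a genuine addition rather than a deviation: you correctly observe that $W'$ only probes $W$ on pairs of non-increasingly ordered vectors, so the literal statement needs either the convention that the classical witness property is tested on sorted representatives or the (harmless, by Lemma~\ref{lemma:NCO}) permutation-invariance of classical convertibility to close the gap. The paper elides this point --- it implicitly works with permutation-invariant functions throughout, as its Lorenz-curve formalism makes clear --- so flagging it explicitly, together with the surjectivity of $\rho\mapsto\lambda(\rho)$ onto sorted distributions via diagonal states, is the right thing to do and costs nothing.
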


It may seem surprising that the quantum problem of state conversion reduces to its classical counterpart even though the strict inclusions of the mixtures of unitaries within the noisy operations, and the noisy operations within the unital operations (Lemma~\ref{lemma:ClassifyingNO}), have no classical counterparts (Lemma~\ref{lemma:NCO}).  The solution to this puzzle is that, for the purposes of state conversion, the three classes of quantum operations have the same power.  

\begin{lemma}\label{lemma:NOconviffUntlconv}
Let $\rho\in\mathcal{L}(\mathcal{H}_{\textnormal{in}})$ and $\sigma\in\mathcal{L}(\mathcal{H}_{\textnormal{out}})$. Then, the following propositions are equivalent:
\begin{itemize}
\item[(i)]  $\rho \mapsto \sigma$ by a noisy operation
\item[(ii)] $\rho \mapsto \sigma$ by a unital operation.
 \end{itemize}
If $\rho$ and $\sigma$ are of equal dimension, then (i) and (ii) are also equivalent to
\begin{itemize}
\item[(iii)] $\rho \mapsto \sigma$ by a mixture of unitaries.
 \end{itemize}
\end{lemma}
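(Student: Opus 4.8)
The plan is to lean entirely on the classical reduction already in hand, so that no new quantum machinery is required. Lemma~\ref{lemma:ClassifyingNO} gives the trivial inclusions for free: every mixture of unitaries is a noisy operation and every noisy operation is unital, so we immediately obtain (iii)$\Rightarrow$(i)$\Rightarrow$(ii), where (iii)$\Rightarrow$(i) is invoked only in the equal-dimension case. What remains is to prove the reverse implications, namely (ii)$\Rightarrow$(i) in general and (i)$\Rightarrow$(iii) when $\dim\mathcal{H}_{\textnormal{in}}=\dim\mathcal{H}_{\textnormal{out}}$, and I would handle both by passing to the eigenvalue vectors $\lambda(\rho),\lambda(\sigma)$ and applying Lemmas~\ref{lemma:NCO} and~\ref{lemma:quantumNOtoclassicalNO}.

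For the key implication (ii)$\Rightarrow$(i), the idea is to pinch a unital operation down to a classical stochastic matrix in the two eigenbases. Write $\rho=\sum_k \lambda_k(\rho)\proj{e_k}$ and $\sigma=\sum_j \lambda_j(\sigma)\proj{f_j}$, suppose $\mathcal{E}$ is unital with $\mathcal{E}(\rho)=\sigma$, and define $D_{jk}\assign\bra{f_j}\mathcal{E}(\proj{e_k})\ket{f_j}$. I would then verify three things. First, $D$ is column-stochastic with nonnegative entries, since $\sum_j D_{jk}=\Tr\mathcal{E}(\proj{e_k})=1$ by trace preservation and positivity of $\mathcal{E}$. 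Second, $D$ is uniform-preserving, since $\sum_k D_{jk}=\bra{f_j}\mathcal{E}(I_{\textnormal{in}})\ket{f_j}=(d_{\textnormal{in}}/d_{\textnormal{out}})\bra{f_j}I_{\textnormal{out}}\ket{f_j}=d_{\textnormal{in}}/d_{\textnormal{out}}$, using $\sum_k\proj{e_k}=I_{\textnormal{in}}$ and the unital property $\mathcal{E}(I_{\textnormal{in}})=(d_{\textnormal{in}}/d_{\textnormal{out}})I_{\textnormal{out}}$. Third, $(D\lambda(\rho))_j=\sum_k D_{jk}\lambda_k(\rho)=\bra{f_j}\mathcal{E}(\rho)\ket{f_j}=\bra{f_j}\sigma\ket{f_j}=\lambda_j(\sigma)$, so $D\lambda(\rho)=\lambda(\sigma)$. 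By Lemma~\ref{lemma:NCO}, $D$ is a noisy classical operation, and by Lemma~\ref{lemma:quantumNOtoclassicalNO} its existence yields a noisy quantum operation taking $\rho$ to $\sigma$, which is (i).

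For the equal-dimension closure (i)$\Rightarrow$(iii), I would run the reduction in reverse. By Lemma~\ref{lemma:quantumNOtoclassicalNO}, (i) is equivalent to a noisy classical operation carrying $\lambda(\rho)$ to $\lambda(\sigma)$; by Lemma~\ref{lemma:NCO} in equal dimension this is a doubly-stochastic matrix, and by Birkhoff's theorem it can be written as a convex combination $\sum_i p_i \Pi_i$ of permutation matrices. Writing $\rho=U\,\mathrm{diag}(\lambda(\rho))\,U^\dagger$ and $\sigma=V\,\mathrm{diag}(\lambda(\sigma))\,V^\dagger$, and letting $P_i$ be the permutation unitary realizing $\Pi_i$, I would set $W_i\assign V P_i U^\dagger$ and check by direct computation that $X\mapsto\sum_i p_i W_i X W_i^\dagger$ is a mixture of unitaries sending $\rho$ to $\sigma$, establishing (iii) and completing the cycle.

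The main obstacle is conceptual rather than computational: recognizing that the pinched matrix $D_{jk}=\bra{f_j}\mathcal{E}(\proj{e_k})\ket{f_j}$ is precisely the right bridge, and in particular that the unital condition is exactly what forces $D$ to be uniform-preserving rather than merely stochastic. Once this is seen, the strict inclusions of Lemma~\ref{lemma:ClassifyingNO} become irrelevant for the purposes of conversion, because all three operation classes collapse onto the same classical condition relating $\lambda(\rho)$ and $\lambda(\sigma)$; notably, no appeal to the Lorenz-curve characterization of that condition is required.
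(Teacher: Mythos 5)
Your proposal is correct and follows essentially the same route as the paper: the paper likewise obtains (ii)$\Rightarrow$(i) by observing that the forward direction of the proof of Lemma~\ref{lemma:quantumNOtoclassicalNO} — the pinched matrix $D_{jk}=\bra{f_j}\mathcal{E}(\proj{e_k})\ket{f_j}$, shown to be uniform-preserving stochastic using only unitality and trace preservation — combined with that lemma's reverse direction, and obtains (ii)$\Rightarrow$(iii) in equal dimension via Birkhoff's theorem exactly as you do. Your write-up is merely more explicit about the intermediate verifications and the unitaries $W_i=VP_iU^\dagger$.
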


See Appendix~\ref{app:proofs} for the proof.
For the case of $\rho$ and $\sigma$ of equal dimension, Uhlmann's Theorem~\cite{uhlmann1970} implies that $\rho \mapsto \sigma$ by a mixture of unitaries if and only if  the spectrum of $\rho$ majorizes that of $\sigma$; see \cite{nielsen2001majorization} for a discussion. \footnote{The implication from an operation being unital to majorization of the final state's spectrum by the initial state's has also been noted in \cite{chefles2002quantum}.}

There are questions in the resource theory that do not concern state conversion.  For instance, one may ask about the possibility of simulating an operation that is outside the free set, given access to some resource state.  For such questions, the quantum solution does not necessarily reduce to the classical one.  We do not consider such problems in this article.  From this point onwards, therefore, we can restrict our attention to the classical resource theory.

\section{Quasi-order of states under noisy operations}\label{Sec:quasiorder}

\subsection{Equivalence classes under noisy operations}

The first step in understanding state conversion within the classical resource theory of nonuniformity is to determine
when two states are equivalent relative to noisy operations, by which it is meant that they can be \emph{reversibly interconverted}, one to the other, deterministically, by noisy operations.  In this case, we say that the states have precisely the same \emph{nonuniformity properties}.   

\begin{definition}[Exact state conversion]
We write $x\cconv y$ if there exists a noisy classical operation $D$ such that $y=Dx$.
\end{definition}

\begin{definition}[Noisy-equivalence of states]
Two states, $x$ and $y$, are said to be noisy-equivalent if they are reversibly interconvertible, that is, if $x\cconv y$ and $y\cconv x$.
\end{definition}

Since noisy classical operations  can have an input of dimension $d_{\rm in}$ and an output of dimension $d_{\rm out}$, state conversion is a map from a vector on  $\bbR_+^{d_{\rm in}}$ to a vector on $\bbR_+^{d_{\rm out}}$. 
As such, it is useful to introduce the set
$$\rR\assign\bigcup\limits_{d=1}^\infty\bbR_+^d.$$
Any finite-dimensional probability distribution is a vector in $\bbR_+^d$ for some $d$, and therefore a member of the set $\rR$. For a distribution $x\in\rR$, we will denote by $d_x$ the integer for which $x\in\bbR_+^{d_x}$.

The simplest case to consider is when $x$ and $y$ are of equal dimension.
In this case, if $x$ and $y$ are reversibly interconvertible by noisy operations, there is a permutation that takes one to the other.   The proof is as follows. Suppose the doubly stochastic matrix taking $x$ to $y$ is denoted $D$ and the one taking $y$ to $x$ is denoted $D'$.  We can write $D$ (respectively $D'$) as a mixture of permutations $\pi_i$ (respectively $\pi'_j$):
\[
   D=\sum_i \lambda_i\pi_i, \quad D'=\sum_j \lambda'_j \pi'_j,
\]
where $\sum_i\lambda_i=\sum_j\lambda'_j=1$ and all $\lambda_i,\lambda'_j>0$. Thus
\[
   x=D'Dx=\sum_{i,j} \lambda'_j \lambda_i \pi'_j \pi_i x.
\]
Consider the convex set
consisting of all convex combinations of permutations of $x$. The state $x$ itself is an extremal point of this convex set. Thus, for all $i, j$, we must have
$\pi'_j \pi_i x=x$, and so $\pi_i x=\left(\pi'_j\right)^{-1} x$ is the same state (call it $z$) for all $i$. But
$y=Dx=\sum_i \lambda_i \pi_i x = z$, so $y=\pi_i x$ is a permutation of $x$.

If we let $x^{\downarrow}$ denote the vector having the same components as $x$ but permuted such that they are in descending order, the condition for noisy-equivalence can be expressed as $x^{\downarrow}=y^{\downarrow}$.

Note that if $x$ and $y$ are of equal dimension and noisy-equivalent, and if $x$ is the marginal of a correlated state on a larger system, and similarly for $y$, then because we can get from $x$ to $y$ by a permutation, the conversion can be achieved while preserving all correlations with other systems.

The more interesting case is where $x$ and $y$ have unequal dimension:  $d_x \neq d_y.$ 
First note that one can reversibly interconvert $x$ and $x\otimes m$ for any uniform state $m$ on an arbitrary ancilla.  We get from $x$ to $x\otimes m$ simply by injecting an ancilla in the uniform state, which is allowed under noisy operations, and we get from $x\otimes m$ to $x$ simply by discarding the ancilla, which is also allowed under noisy operations.

It follows that one can reversibly interconvert $x$ and $y$ by noisy operations if and only if one can reversibly interconvert $x \otimes m$ and $y \otimes m'$, where $m$ and $m'$ are uniform states for any arbitrary pair of ancillas.  In particular, this is true if and only if one can reversibly interconvert $x \otimes m$ and $y \otimes m'$ for ancillas having dimensions $d_{m}$ and $d_{m'}$ such that $d_x d_{m} = d_y d_{m'}$, so that $x \otimes m$ and $y \otimes m'$ are of equal dimension.  As shown above, in this case, the interconversion is possible if and only if there is a permutation that takes $x \otimes m$ to $y \otimes m'$.  Given that uniform states are invariant under permutations, this condition is equivalent to 
\[
\label{eq:Noisyequivalence}
x^{\downarrow} \otimes m = y^{\downarrow} \otimes m'.
\]
This condition, therefore, is necessary and sufficient for reversible interconvertibility of $x$ and $y$ under noisy operations.

In the resource theory of nonuniformity, the only properties of a state that are relevant to determining its value as a resource are its nonuniformity properties, that is, the features of the state that determine its noisy-equivalence class. 
It is therefore useful to replace the state $x$ by a mathematical object that represents only the nonuniformity properties of $x$.  

\begin{figure*}
\centering
\includegraphics[width=.95\textwidth, clip=true]{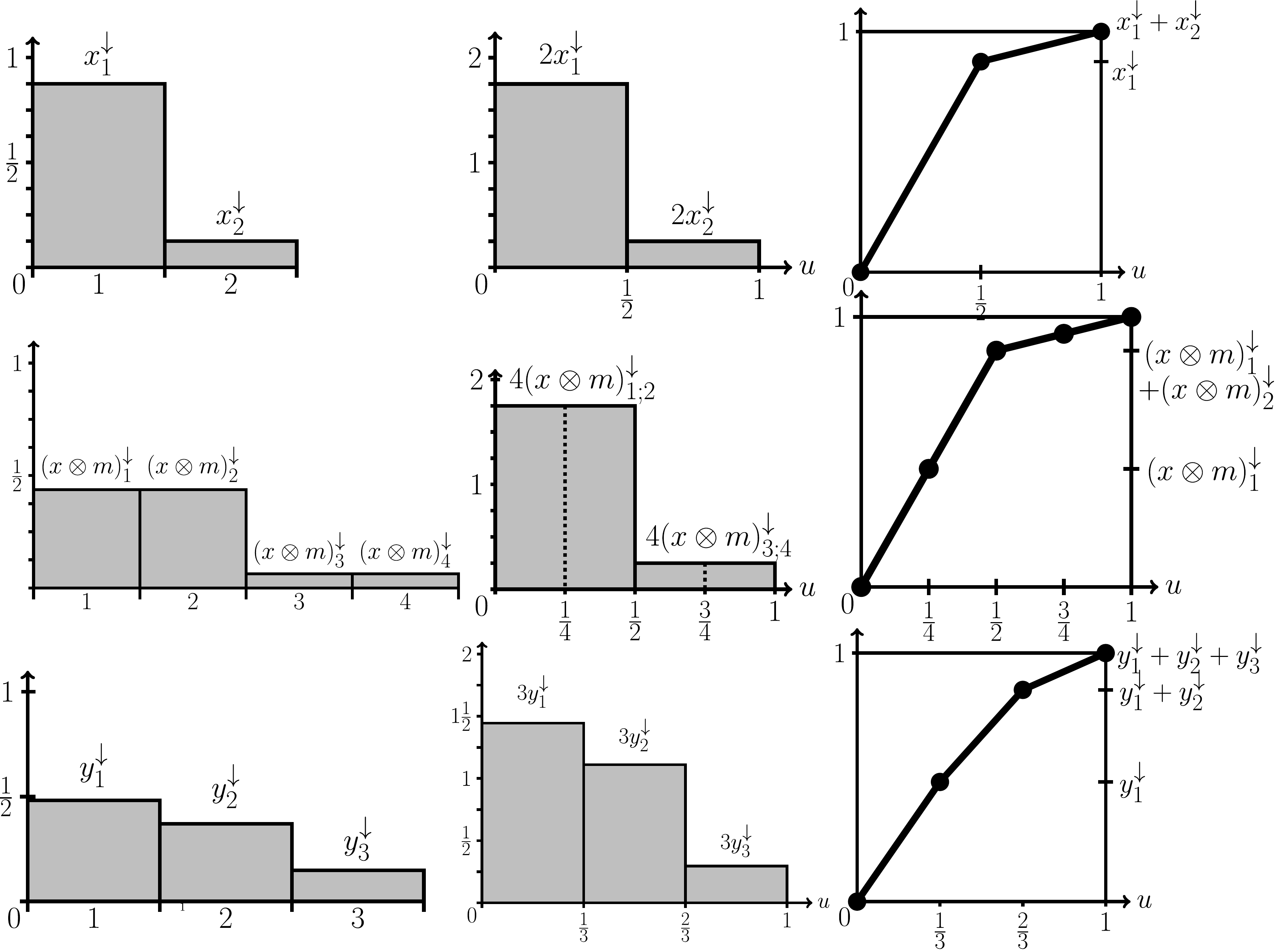}
\caption{An illustration of uniform-rescaled histograms and Lorenz curves.  The first column depicts the histogram of a state's components, in descending order, the second column depicts the corresponding uniform-rescaled histogram, and the third column depicts the Lorenz curve.  The states depicted are:  (first row)  $x\equiv (0.9,0.1)$, a state of a 2-level system; (second row) $x \otimes m \equiv (0.45, 0.45, 0.05,0.05)$, the state of a pair of 2-level systems, where the first is in state $x$ and the second is in the uniform state $m \equiv (0.5,0.5)$; (third row) $y$, a generic state of a three-level system.}
\label{fig:IntroToHists}
\end{figure*}

Given that $x$ is reversibly interconvertible with $x \otimes m$, this mathematical object must be invariant under adding and removing ancillas in the uniform state.  
If we plot histograms of $x^{\downarrow}$ and $(x\otimes m)^{\downarrow}$, we see that the envelope of each is a step function and that the steps' relative heights are equal. An example is given in Fig.~\ref{fig:IntroToHists}. We can make these step functions strictly equal by rescaling them in an appropriate way.  The rescaling is chosen such that the area of each bar of the histogram remains the same, while the range of the function becomes $[0,1]$. Specifically, we define the function $h_x(v)$, with range $v \in [0,1]$, such that the $k$th step extends over the range  $v \in [(k-1)/d_x,k/d_x]$ and has height $d_x x^{\downarrow}_k$.  Equivalently,
\begin{equation}
h_x(v) \equiv d_x x^{\downarrow}_{\lfloor d_x v \rfloor},
\end{equation}
where $\lfloor a \rfloor$ denotes the integer floor of $a$. Again, an example is provided in Fig.~\ref{fig:IntroToHists}.  We call $h_x(v)$ the \emph{uniform-rescaled histogram} of $x$.

Clearly, adding or removing a uniform state $m$ (of arbitrary dimension) leaves the uniform-rescaled histogram invariant, $h_x(v) = h_{x \otimes m}(v), \forall v \in [0,1]$, as illustrated in Fig. \ref{fig:IntroToHists}.   Furthermore, if two states of the same dimension are equal up to a permutation, their uniform-rescaled histograms are equal.  In particular, it follows that $x^{\downarrow} \otimes m = y^{\downarrow} \otimes m'$ if and only if $h_{x \otimes m}(v)=h_{y \otimes m'}(v)\;\;\forall v \in [0,1]$.  But $h_x(v) = h_{x \otimes m}(v)$ so we can conclude that $x^{\downarrow} \otimes m = y^{\downarrow} \otimes m'$ holds if and only if $h_{x}(v)=h_{y}(v)$.  We have shown, therefore, that the uniform-rescaled histogram of a state $x$ is a mathematical object that characterizes the noisy-equivalence class of that state.

Now consider the integral of $h_x(v)$ from $v=0$ to $v=u$ as a function of $u \in [0,1]$, 
\begin{equation}\label{eq:Lorenzcurvedefn}
L_x(u) \assign \int_0^u h_x(v) \textnormal{d}v.
\end{equation}
Clearly, the curve traced by $L_x(u)$ extends between (0,0) and (1,1) regardless of the state $x$.  It contains all the information contained in the rescaled histogram $h_x(v)$.  Indeed, one can recover the latter by taking the derivative of $L_x(u)$.  As such, it is another mathematical object that characterizes the noisy-equivalence class.
Examples of this curve for various states are provided in Fig.~\ref{fig:IntroToHists}. For the uniform state $m$, this curve is simply the diagonal line extending between (0,0) and (1,1). 

There is another way of defining this curve which is worth noting.  First, define $S_k(x)$ for $k=1,...,d_x$ as the sum of the $k$ largest components of $x$,
\begin{equation}
S_{k}\left(  x\right): =\sum_{i=1}^{k}x_{i}^{\downarrow},
\end{equation}
and define $S_{0}(x):=0$ (Note that $S_{d_x}(x)=1$ for $x$ a normalized probability distribution).  $S_k(x)$ is sometimes called the Ky Fan $k$-norm of $x$ \cite{horn2012matrix,Bhatia}.
Then we can characterize $L_x(u)$ as the linear interpolation of the points
\begin{equation}
\left(  \frac{k}{d_x},\frac{S_{k}(x)}{S_{d_x}(x)}\right)\;\; \forall k=0, \ldots,d_x.
\end{equation}

$L_x(u)$ is called the \emph{Lorenz curve} of $x$ \cite{lorenz1905methods}, (MOA, Sec.~1.A).\footnote{Actually, the Lorenz curve of $x$ is conventionally taken to be the linear interpolation of $\left(  k/d_x,T_{k}/T_{d_x}\right) \; \; \forall k=0, \ldots, d_x$, where $T_{k}\left(  x\right): =\sum_{i=1}^{k}x_{i}^{\uparrow}$ is the sum of the $k$ \emph{smallest} components of $x$ \cite{lorenz1905methods}, or equivalently, it is taken to be the integral $L'_x(u) \assign \int_{1-u}^1 h_x(v) \textnormal{d}v$.  But the conventional definition is just the inversion about the line extending from (0,0) to (1,1) of our definition, and so the two curves have precisely the same information content.  We here choose to adopt the opposite of the usual convention because in this way the majorization relation between states, $x \succ y$, coincides with the inequality relating the height of the Lorenz curves of those states, $L_x(u) \ge L_y(u), \forall u\in[0.1]$ (see Eq.~\eqref{eq:Lorenzorder} below).  Also, our convention coincides with the one adopted in~\cite{FundLimitsNature}.}
It will be seen to be one of the primary tools for characterizing the resource theory of nonuniformity.

It is worth emphasizing that we can also infer that noisy equivalence implies equality of Lorenz curves \emph{directly} from the condition that $x^{\downarrow} \otimes m = y^{\downarrow} \otimes m'$.  It suffices to note that, via the definition of Lorenz curves in terms of Ky-Fan $k$-norms, this condition is equivalent to $L_{x \otimes m}(u)=L_{y \otimes m'}(u),\;\; \forall u\in [0,1]$ and then to note that the Lorenz curve is invariant under adding and removing ancillas in a uniform state, $L_{x \otimes m}(u)=L_{x}(u),\;\; \forall u\in [0,1]$.

To summarize, we have shown that
\begin{proposition} [Conditions for noisy equivalence]\label{prop:NoisyEquivalence}
A pair of states $x$ and $y$ are noisy-equivalent if and only if the following equivalent conditions holds
\begin{itemize}
\item[(i)] their uniform-rescaled histograms are equal,
\begin{equation}
h_{x}(v)=h_{y}(v),\;\;\; \forall v\in [0,1],
\end{equation}
\item[(ii)] their Lorenz curves are equal, 
\begin{equation}\label{eq:EqualityLorenzCurves}
L_{x}(u)=L_{y}(u)\;\;\; \forall u\in [0,1].
\end{equation}
\end{itemize}
\end{proposition}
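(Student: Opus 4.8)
The plan is to assemble into a single chain of equivalences the facts already established in the preceding discussion, proving separately the two implications ``noisy-equivalence $\Leftrightarrow$ (i)'' and ``(i) $\Leftrightarrow$ (ii)''.

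For the first, I would start from the characterization of reversible interconvertibility derived above: $x$ and $y$ are noisy-equivalent if and only if $x\otimes m$ and $y\otimes m'$ are, where $m,m'$ are uniform states on ancillas chosen so that $d_x d_m = d_y d_{m'}$ (one may always take $d_m = d_y$ and $d_{m'} = d_x$, so both products equal $d_x d_y$). Since $x\otimes m$ and $y\otimes m'$ then have equal dimension, and since equal-dimension states are reversibly interconvertible if and only if they agree up to a permutation, noisy-equivalence is equivalent to $x^{\downarrow}\otimes m = y^{\downarrow}\otimes m'$ (uniform factors being permutation-invariant). I would then translate this into the language of rescaled histograms using the two facts already noted in the text: the equivalence $x^{\downarrow}\otimes m = y^{\downarrow}\otimes m'$ iff $h_{x\otimes m}=h_{y\otimes m'}$, and the invariance of $h$ under adjoining a uniform ancilla, $h_{x\otimes m}=h_x$. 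Chaining these gives noisy-equivalence iff $h_x = h_y$, which is condition (i).

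For the second implication, I would exploit that the Lorenz curve is exactly the running integral of the histogram, $L_x(u) = \int_0^u h_x(v)\,\textnormal{d}v$. This immediately yields $h_x = h_y \Rightarrow L_x = L_y$. For the converse I would recover $h_x$ from $L_x$ by differentiation: $L_x$ is piecewise linear with corners only at the grid points $u=k/d_x$, and on each open subinterval its slope equals the constant value of the step function $h_x$ there. Since $h_x$ is constant on each $[(k-1)/d_x, k/d_x)$ and is determined by that constant, $L_x = L_y$ forces $h_x = h_y$ at all but finitely many points, hence everywhere once one fixes the standard right-continuous representative. This closes the equivalence (i) $\Leftrightarrow$ (ii).

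There is no serious obstacle here, as the substantive work---the reduction to equal dimension and the permutation characterization---was completed before the statement; the proof is essentially a synthesis. The only points demanding mild care are bookkeeping ones: verifying that dimension-matching ancillas always exist, and making the $h\leftrightarrow L$ correspondence genuinely bijective by noting that the histogram, being a step function with jumps only at the rescaled grid points $k/d_x$, is uniquely pinned down by the slopes of the piecewise-linear Lorenz curve. I would state these explicitly so that the ``if and only if'' is airtight in both directions.
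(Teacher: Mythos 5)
Your proposal is correct and follows essentially the same route as the paper: reduce to equal dimensions by tensoring with uniform ancillas, invoke the permutation characterization to get $x^{\downarrow}\otimes m = y^{\downarrow}\otimes m'$, translate this to equality of uniform-rescaled histograms via their invariance under adjoining uniform states, and pass between histograms and Lorenz curves by integration/differentiation. The paper treats the recovery of $h_x$ from $L_x$ more briefly ("one can recover the latter by taking the derivative"), so your explicit remark about the slopes on the subintervals $[(k-1)/d_x, k/d_x)$ pinning down the step function is a welcome but inessential refinement.
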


The uniform-rescaled histogram of $x$ and  the Lorenz curve of $x$ both capture all and only the nonuniformity properties of $x$.  It follows that for any notion of state conversion we wish to study (exact or approximate; single-shot, multi-copy or asymptotic; catalytic or noncatalytic), the necessary and sufficient conditions under which one state can be converted to another can always be expressed in terms of either of these objects.  Furthermore, any nonuniformity monotone or state conversion witness
must be expressible entirely in terms of them as well.

The application of these mathematical objects to thermodynamics was first recognized in~\cite{ruch1975,ruch1978information,RuchDiagrams,RuchSchrannerSeligman,RuchMead}.  What we have called the ``uniform-rescaled histogram'' was discussed in \cite{RuchSchrannerSeligman} under the title of the ``density diagram''.  More recently, in \cite{LawsOfThermo}, these old tools have begun to be used again in the context of an information-theoretic approach to thermodynamics. In this article, an operation of ``Gibbs-rescaling'' is introduced which is akin to our use of the uniform-rescaled histogram in place of the distribution itself (although with a different scaling convention for the horizontal axis). 
The analogue of the Lorenz curve for an  athermal state, that is, the generalization of Lorenz curves to a system with a nontrivial Hamiltonian, has recently been studied in~\cite{FundLimitsNature} and applied in \cite{1ShotAtherm2}. 

With this characterization of the noisy equivalence class in hand, we can clarify a point that was made in the introduction, namely, that the nonuniformity properties of a state depend on the dimension of the space in which the state is embedded.  
As already noted, this is because embedding a state in a higher-dimensional space---that is, padding the state with extra zeros---is \emph{not} a noisy operation.
In terms of the uniform-rescaled histogram, padding a state with extra zeros corresponds to squeezing the entire histogram of the state to the left and leaving only zeros on the right side, which obviously results in a different histogram.  In the Lorenz curve picture, it corresponds to squeezing the Lorenz curve to the left and adding a horizontal segment at value 1 on the right end, again resulting in something that is obviously distinct from the original Lorenz curve.

\subsection{Deterministic interconversion of nonuniform states} \label{section:Interconvert}

We now turn to a consideration of the necessary and sufficient conditions on a pair of states $x$ and $y$ such that there exists a deterministic noisy operation taking $x$ to $y$.  We are here asking about one-way state conversion, i.e., there need not be any deterministic noisy operation taking $y$ to $x$.  We begin by presenting the general result.

\subsubsection{The result}

\begin{proposition}\textnormal{(\textbf{Conditions for deterministic conversion})} \label{prop:detconv}
$x \cconv y$ if and only if
\begin{itemize}
\item[(i)] the uniform-rescaled histogram of $x$, $h_x(v)$, can be taken to that of $y$, $h_y(v)$, by moving probability density only towards the right (i.e. from lower to higher values of $v$).
\item[(ii)] the Lorenz curve of $x$ is everywhere greater than or equal to the Lorenz curve of $y$:
\begin{equation}\label{eq:Lorenzorder}
L_{x}(u)\geq L_{y}(u)\;\;\; \forall u\in [0,1].
\end{equation}
\end{itemize}
When these conditions hold, we say that $x$ \emph{noisy-majorizes} $y$.
\end{proposition}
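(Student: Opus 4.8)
The plan is to establish the three-way equivalence between $x\cconv y$, condition (i), and condition (ii) by routing the operational content through the classical majorization theorem and treating the histogram-versus-Lorenz-curve equivalence as a separate, purely geometric matter. Concretely, I would first prove that $x\cconv y$ is equivalent to (ii), and then prove that (i) is equivalent to (ii); together these make all three statements equivalent.

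For the equivalence of $x\cconv y$ with (ii), the first move is to reduce to the equal-dimension case. Choosing uniform ancillas $m,m'$ with $d_x d_m = d_y d_{m'} = N$, I set $\tilde{x} \assign x\otimes m$ and $\tilde{y} \assign y\otimes m'$, both living in $\bbR_+^N$. Because adjoining and discarding a uniform ancilla are reversible noisy operations (as established in the discussion of noisy-equivalence above), $x\cconv y$ holds if and only if $\tilde{x}\cconv \tilde{y}$; and because the Lorenz curve is invariant under adjoining uniform states, $L_{\tilde{x}}=L_x$ and $L_{\tilde{y}}=L_y$, so condition (ii) is untouched by this padding. For equal dimension, Lemma~\ref{lemma:NCO} identifies noisy operations with doubly-stochastic matrices, so $\tilde{x}\cconv\tilde{y}$ is equivalent to the existence of a doubly-stochastic $D$ with $\tilde{y} = D\tilde{x}$. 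The Hardy--Littlewood--P\'olya theorem (MOA) then equates this with the majorization relation $\tilde{x}\succ\tilde{y}$, i.e.\ $S_k(\tilde{x})\ge S_k(\tilde{y})$ for all $k$. Finally, since $L_{\tilde{x}}$ and $L_{\tilde{y}}$ are piecewise linear with common breakpoints at $u=k/N$ and take the values $S_k$ there, pointwise dominance $L_{\tilde{x}}(u)\ge L_{\tilde{y}}(u)$ on all of $[0,1]$ is equivalent to dominance at the breakpoints, hence to $\tilde{x}\succ\tilde{y}$. Chaining these equivalences yields $x\cconv y \Leftrightarrow L_x\ge L_y$, which is (ii).

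For the equivalence of (i) and (ii), I would interpret the uniform-rescaled histogram $h_x$ as the density of a probability measure $\mu_x = h_x(v)\,\d v$ on $[0,1]$, whose cumulative distribution function is exactly $L_x(u)=\int_0^u h_x(v)\,\d v$. Condition (i) says precisely that there is a transport plan carrying $\mu_x$ to $\mu_y$ that never moves mass leftward, i.e.\ a coupling supported on $\{v'\ge v\}$. The existence of such a monotone coupling between two probability measures on the line is the standard characterization of first-order stochastic dominance: it holds if and only if the cumulative functions satisfy $L_y(u)\le L_x(u)$ for every $u$. The forward direction (i)$\Rightarrow$(ii) is immediate, since moving mass only to the right can only decrease the amount contained in any initial segment $[0,u]$; the converse is furnished by the explicit quantile (inverse-CDF) coupling, which is automatically monotone and matches the two cumulative profiles whenever $L_x\ge L_y$.

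The main obstacle is the unequal-dimension case, since the textbook majorization theorem is stated for vectors of equal length: the substantive bookkeeping lies in verifying that the uniform-ancilla padding preserves both the operational relation $\cconv$ and the Lorenz condition, and that pointwise Lorenz dominance collapses to the finite set of Ky-Fan-norm inequalities at the breakpoints. Once this reduction is in place, the operational half of the statement is just Hardy--Littlewood--P\'olya, and the geometric half is just stochastic dominance.
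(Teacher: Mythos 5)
Your proposal is correct and follows essentially the same route as the paper: reduction to the equal-dimension case by padding with uniform ancillas of dimensions $d/d_x$ and $d/d_y$, the identification of equal-dimension noisy operations with doubly-stochastic matrices (Lemma~\ref{lemma:NCO}) combined with the Hardy--Littlewood--P\'olya theorem, and the observation that pointwise Lorenz dominance reduces to the Ky Fan inequalities at the breakpoints. Your handling of condition (i) via the monotone-coupling/quantile characterization of stochastic dominance is in fact somewhat more complete than the paper's own one-sentence remark that rightward motion of density can only lower the Lorenz curve (supplemented there by an appeal to the T-transform figure), but the underlying content is the same.
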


The rest of this section provides the proof of Proposition~\ref{prop:detconv}.  We start with the case in which $x$ and $y$ have equal dimension.

Recall the definition
of the majorization relation (Definition A.1 of MOA).
\begin{definition} [Majorization] Letting $x$ and $y$ be normalized probability vectors with equal dimension $d$, we say that $x$ \emph{majorizes} $y$ and write $x \succ y$ if
\begin{align}
&\sum_{i=1}^{k}x_{i}^{\downarrow} \ge \sum_{i=1}^{k}y_{i}^{\downarrow}\;\;\; \forall k=1,...,d-1. \label{eq:majorization1}
\end{align}
\end{definition}
Because $x$ and $y$ are normalized probability distributions, $\sum_{i=1}^{d}x_{i}^{\downarrow} = \sum_{i=1}^{d}y_{i}^{\downarrow}=1$.

The connection with noisy classical operations is made through the following famous result
\cite{hardy1952inequalities}:
\begin{lemma}[Hardy, Littlewood, Polya] \label{lemma:HardyLittlewoodPolya}
$x \succ y$ if and only if 
there is a doubly stochastic matrix $D$ such that $y=Dx$.
\end{lemma}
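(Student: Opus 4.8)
The plan is to prove the two implications separately, disposing of the ``if'' direction (a doubly stochastic $D$ forces majorization) first, since it is elementary, and then spending the effort on the ``only if'' direction (building $D$ out of the majorization relation), which is the substantive part. Throughout I may freely pre- and post-compose with sorting permutations, which are themselves doubly stochastic, so I assume $x=x^{\downarrow}$ and $y=y^{\downarrow}$.

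For the direction $y=Dx \Rightarrow x\succ y$, I would write $y_i=\sum_j D_{ij}x_j$, fix $k$, and let $S$ index the $k$ largest components of $y$, so that $S_k(y)=\sum_{i\in S}y_i=\sum_j c_j x_j$ with $c_j:=\sum_{i\in S}D_{ij}$. Double stochasticity gives column sums $1$, hence $0\le c_j\le 1$, while $\sum_j c_j=\sum_{i\in S}\sum_j D_{ij}=|S|=k$. The linear functional $\sum_j c_j x_j$, maximized over weights $c_j\in[0,1]$ with $\sum_j c_j=k$, is largest when the unit weights sit on the $k$ biggest entries of $x$, so $\sum_j c_j x_j\le S_k(x)$. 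Thus $S_k(y)\le S_k(x)$ for every $k$, with equality at $k=d_x$ because $D$ preserves the total sum; this is exactly $x\succ y$.

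For the converse $x\succ y \Rightarrow y=Dx$, the strategy is to reach $y$ from $x$ by a finite chain of \emph{T-transforms} (Robin-Hood transfers), each of the form $T=(1-t)I+tQ_{jk}$, where $Q_{jk}$ transposes coordinates $j$ and $k$ and $t\in[0,\tfrac12]$. Every such $T$ is doubly stochastic, and products of doubly stochastic matrices are doubly stochastic, so it suffices to realize the passage to $y$ by finitely many of them. Concretely, while $x\neq y$ I would select indices $j<k$ with $x_j>y_j$, $x_k<y_k$, and $x_i=y_i$ for all $i$ strictly between them (such a pair exists because $x$ and $y$ have equal total sum but differ), set $\delta:=\min(x_j-y_j,\,y_k-x_k)>0$, and apply the $T$ that moves mass $\delta$ from coordinate $j$ to coordinate $k$, i.e.\ $t=\delta/(x_j-x_k)$. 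The sorted order forces $x_j>y_j\ge y_k>x_k$, so $x_j-x_k\ge 2\delta$ and $t\in(0,\tfrac12]$ as required, and one checks that $x':=Tx$ remains in descending order and agrees with $y$ in at least one more coordinate.

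The main obstacle is verifying that each transfer keeps us inside the majorization cone, $x'\succ y$, so that the iteration may continue; this is where the bookkeeping on partial sums lives. The transfer leaves $S_\ell$ unchanged for $\ell<j$ and for $\ell\ge k$ (at $\ell=d_x$ the total is preserved), while for $j\le\ell<k$ it lowers $S_\ell$ by exactly $\delta$. Using that $x$ and $y$ coincide strictly between $j$ and $k$, one gets $S_\ell(x)-S_\ell(y)=S_j(x)-S_j(y)\ge x_j-y_j\ge\delta$ on that range, so $S_\ell(x')=S_\ell(x)-\delta\ge S_\ell(y)$ and majorization is preserved. Since the number of matched coordinates strictly increases at each step, the process halts after at most $d_x-1$ transfers at $x'=y$, and composing the corresponding doubly stochastic matrices yields the desired $D$ with $Dx=y$. (Alternatively, one could invoke Birkhoff's theorem, reducing the claim to the statement that $x\succ y$ iff $y$ lies in the convex hull of the permutations of $x$; but the T-transform construction is more explicit and self-contained, and reuses only the extreme-point idea already employed for noisy equivalence.)
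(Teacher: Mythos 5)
Your proof is correct. Note that the paper does not actually prove this lemma---it is stated as a cited classical result---so the only in-paper material to compare against is the proof of Lemma~\ref{lemma:HardyLittlewoodPolyaT}, and there your substantive direction ($x\succ y\Rightarrow y=Dx$) coincides with it: the same greedy Robin-Hood/T-transform induction from (MOA, Lemma B.1), with the same $\delta=\min\{\text{excess},\text{deficiency}\}$ choice and the same $d-1$ step count. What you add beyond the paper is (a) the explicit verification that each transfer preserves $x'\succ y$ via the partial-sum bookkeeping $S_\ell(x)-S_\ell(y)\ge x_j-y_j\ge\delta$ on $j\le\ell<k$, which the paper glosses over, and (b) a self-contained proof of the ``if'' direction by bounding the linear functional $\sum_j c_j x_j$ over weights $c_j=\sum_{i\in S}D_{ij}\in[0,1]$ with $\sum_j c_j=k$; the paper instead obtains this implication only indirectly (e.g.\ via Birkhoff plus convexity in Lemma~\ref{lemma:SchCon}). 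One small under-justification: the existence of a pair $j<k$ with $x_j>y_j$, $x_k<y_k$ and agreement strictly between them does \emph{not} follow from equal total sums alone---for sorted vectors the excess index could in principle lie to the right of the deficient one (e.g.\ $x=(0.5,0.5)$, $y=(0.6,0.4)$, where $y\succ x$). You need the majorization hypothesis here: take $k$ minimal with $x_k<y_k$; if every $i<k$ had $x_i\le y_i$ then $S_k(x)<S_k(y)$, a contradiction, so some $j<k$ has $x_j>y_j$, and taking the largest such $j$ gives agreement in between. Since your subsequent partial-sum argument already invokes $x\succ y$, this is a missing line rather than a gap.
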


Given that the set of noisy classical operations with equal input and output dimensions are represented by the set of doubly-stochastic matrices (Lemma~\ref{lemma:NCO}), we immediately obtain that, for $x$ and $y$ of equal dimensions, $x \cconv y$ if and only if $x \succ y$.

An equivalent means of expressing the condition of majorization is in terms of Lorenz curves.  Noting that the expressions in the inequalities of Eq.~\eqref{eq:majorization1} are just the Ky Fan norms $S_k(x)$ and recalling that the Lorenz curve is the linear interpolation of points $(k/d_x,S_k(x)/S_{d_x}(x))$, we see that in the case of $x$ and $y$ of equal dimension, we have that
$x \succ y$ if and only if the Lorenz curve of $x$ is nowhere less than the Lorenz curve of $y$:
\begin{equation}
L_{x}(u)\geq L_{y}(u)\;\;\; \forall u\in [0,1].
\end{equation}
This proves (ii) of Proposition~\ref{prop:detconv} for the case of states of equal dimension.

When $x$ and $y$ are states of unequal dimension, the condition for deterministic conversion is not simply majorization;
this is why we use the term {\em noisy-majorization} to describe the condition in the general case.  It is determined using the same trick that was deployed in the characterization of the noisy equivalence classes. It suffices to note that $x \cconv y$ if and only if there exist uniform states $m$ and $m'$ such that $x \otimes m \cconv y \otimes m'$ (because adding and removing uniform states are noisy operations), and that if we choose $m$ and $m'$ such that $d_x d_{m} =d_y d_{m'}$, then $x \otimes m$ and $y \otimes m'$ are states of equal dimension.  If we define $d$ as the least common multiple of $d_x$ and $d_y$,  $d \assign\textnormal{LCM}(d_x,d_y)$, then it suffices to choose $d_{m} \assign d/d_x$ and $d_{m'} \assign d/d_y$, in which case $x \otimes m$ and $y \otimes m'
$ have dimension $d$.

Recalling that we have established Condition (ii) of Proposition~\ref{prop:detconv} for states of equal dimension, and the fact that the Lorenz curve of $x \otimes m$ for a uniform state $m$ is equal to the Lorenz curve of $x$,  it follows that $x \cconv y$ if and only if the Lorenz curve of $x$ is nowhere below the Lorenz curve of $y$.  This proves Condition (ii) of Proposition~\ref{prop:detconv} for all states.

Finally, recalling that the Lorenz curve is the cumulative integral of the uniform-rescaled histogram, any motion of density rightward in the uniform-rescaled histogram corresponds to a decrease of the height of the Lorenz curve over some subset of its domain, while motion of density leftward corresponds to an increase of height.  Condition (ii) of Proposition~\ref{prop:detconv}, therefore, implies (i).

This concludes the proof of Proposition~\ref{prop:detconv}.

The application of majorization theory to state conversion in thermodynamics was studied extensively in~\cite{ruch1975,ruch1978information,RuchDiagrams,RuchSchrannerSeligman,RuchMead}.  The quantum information community became familiar with majorization due to its role in the resource theory of entanglement~\cite{nielsen1999conditions,nielsen2001majorization}.  The problem of state conversion in thermodynamics was first considered from a quantum information perspective in ~\cite{janzing2000thermodynamic}, where some necessary conditions on state conversion were derived. The necessary and sufficient conditions for state conversion under thermal operations were first determined in~\cite{FundLimitsNature}, where the relation was called \emph{thermo-majorization}.  The results described in this section are the specialization of the thermo-majorization relation to the case of a trivial Hamiltonian.

\subsubsection{Some consequences}
The order over states induced by deterministic conversion is not a total order but a quasi-order.  We call it the \emph{noisy quasi-order}.  One easily generates pairs of states that are not noisy-ordered relative to one another by simply drawing a pair of valid Lorenz curves where one is not everywhere above the other.\footnote{The order is a \emph{quasi-order} (also known as a pre-order) rather than a partial order, because we can have $x \succ y$ and $y \succ x$ for $x \ne y$. This occurs whenever $x$ is a nontrivial permutation of $y$ or requires addition or removal of a uniform state. While the states form a quasi-order, the noisy-equivalence classes of states form a partial order.}  

\begin{figure}[h!]
\centering
\includegraphics[width=.3\textwidth,clip=true]{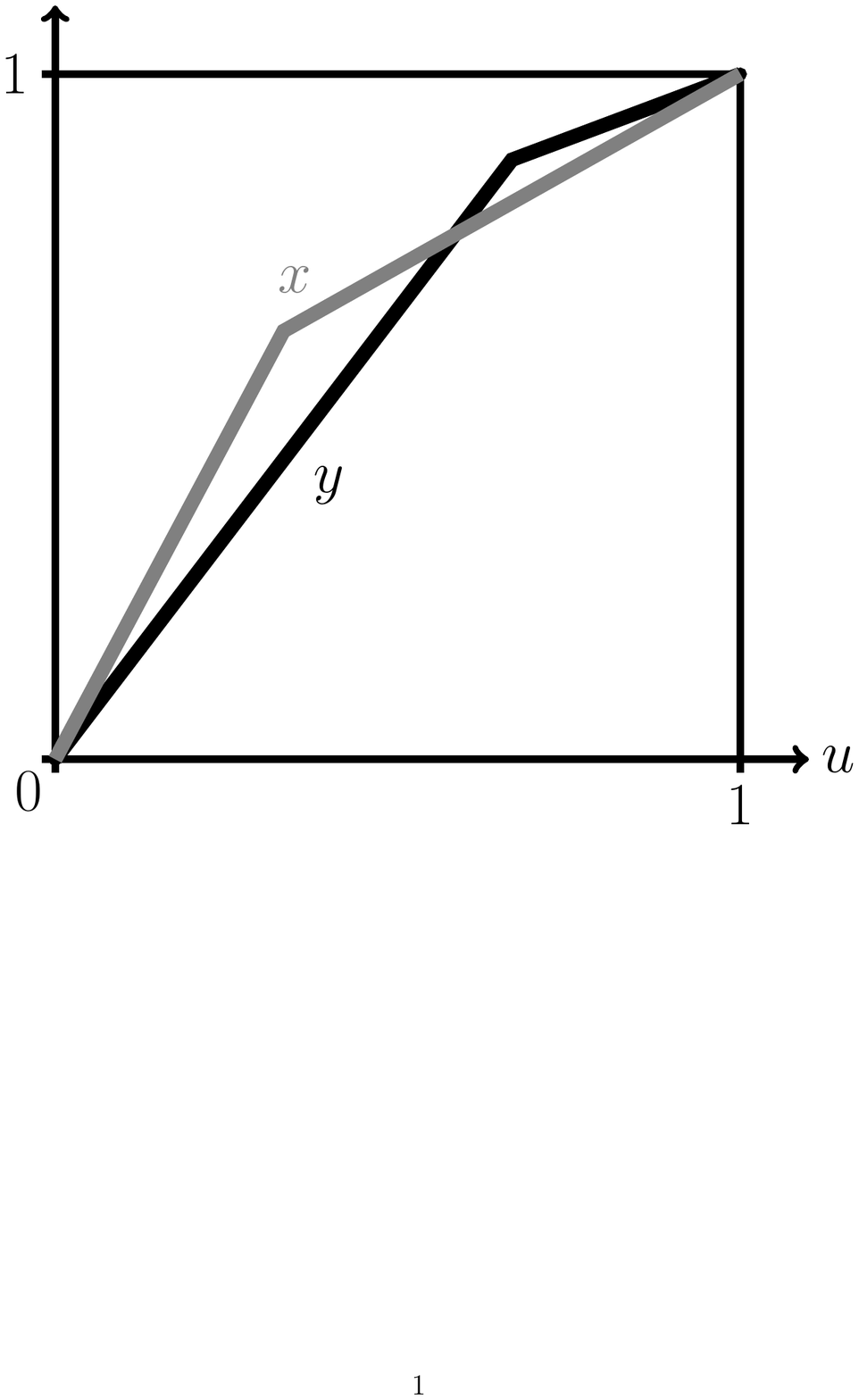}
\captionsetup{singlelinecheck=off,justification=raggedright}
\caption{The Lorenz curves of a pair of states, $x$ and $y$,  such that it is neither the case that $x\cconv y$ nor that $y \cconv x$. 
}
\label{fig:UnorderedLorenzCurves.jpg}
\end{figure}

Note that one can easily recover the condition for noisy-equivalence (Proposition~\ref{prop:NoisyEquivalence}) from the condition for noisy-majorization (Proposition~\ref{prop:detconv}) by recognizing that reversible interconvertibility requires noisy-majorization in both directions.

Another simple corollary of the deterministic-conversion result concerns the relation between a state of a composite and its marginal state.
Suppose $x^{AB}$ is a state of a composite system $\Omega_A \times \Omega_B$ and $x^A$ is the marginal state on $\Omega_A$, that is, $x^A_i =\sum_j x^{AB}_{ij}$, where $i\in \Omega_A, j \in \Omega_B$. 
$x^A$ is noisy-majorized by $x^{AB}$, that is, $L_{x^{AB}}(u) \ge L_{x^{A}}(u)$ for all $u \in [0,1]$.

For noisy-equivalence of $x^{AB}$ and $x^{A}$, we require that $L_{x^{AB}}(u) = L_{x^{A}}(u)$ which implies that $x^{AB\downarrow} = x^{A\downarrow} \otimes m^{B}$.  It follows that marginalization is reversible only if the marginalized system is uncorrelated with the rest and is in a uniform state. 

Proposition~\ref{prop:detconv} also implies that 
the height of the Lorenz curve at a given value of $u$ in the region $[0,1]$ is a nonuniformity monotone, $M_{\textnormal{Lorenz},u}(x)\equiv L_x(u)$, and the set of such heights, $\{ M_{\textnormal{Lorenz},u} :u\in[0,1] \}$, form a complete set of nonuniformity monotones.   Although this is an infinite set, if $x$ and $y$ are both of finite dimension (the only case we consider in this article), one can decide the convertability question by looking at a finite number of monotones.
The following is the pertinent result.

\begin{lemma}
For $x$ and $y$ of finite dimension, $x$ noisy-majorizes $y$ if and only if $L_x(u) \ge L_y(u)$ at the points $u=k/d_y$ for all $k=1,\dots, d_y-1$.  In other words, it suffices to consider the $d_y-1$ monotones
\begin{equation}\label{eq:completesetmonotones}
M_{\textnormal{Lorenz},k/d_y}(x):= L_x(k/d_y),\;\; k=1,\dots, d_y-1.
\end{equation}
\end{lemma}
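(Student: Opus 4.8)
The plan is to exploit the fact that the Lorenz curve $L_x(u)$ is the integral of the uniform-rescaled histogram $h_x(v)$, so that it is a piecewise-linear, concave, nondecreasing function on $[0,1]$. By Proposition~\ref{prop:detconv}, noisy-majorization is equivalent to $L_x(u) \ge L_y(u)$ for \emph{all} $u \in [0,1]$; the claim is that it suffices to check this inequality at the finitely many abscissae $u = k/d_y$, $k=1,\dots,d_y-1$. The key structural observation I would emphasize is that $L_y$ is \emph{piecewise linear with breakpoints (vertices) exactly at the points $u=k/d_y$}, because the $k$th linear segment of $L_y$ runs from $((k-1)/d_y, S_{k-1}(y))$ to $(k/d_y, S_k(y))$ by the Ky-Fan-norm characterization of the Lorenz curve given in Eq.~\eqref{eq:Lorenzcurvedefn} and the surrounding discussion.

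First I would establish the easy direction: if $x$ noisy-majorizes $y$, then $L_x(u) \ge L_y(u)$ holds for all $u$, hence in particular at the points $u=k/d_y$. So only the converse needs work. For the converse, assume $L_x(k/d_y) \ge L_y(k/d_y)$ for every $k=1,\dots,d_y-1$, and I would also note that the endpoints are automatic: $L_x(0)=L_y(0)=0$ and $L_x(1)=L_y(1)=1$ since both are normalized, so in fact $L_x(k/d_y)\ge L_y(k/d_y)$ holds for $k=0,\dots,d_y$. The goal is to promote this finite family of inequalities to an inequality holding on the whole interval.

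The crux is a convexity/concavity argument on each subinterval $[(k-1)/d_y, \, k/d_y]$. On such a subinterval, $L_y$ is \emph{linear} (it is one of its own segments), whereas $L_x$ is concave (being the integral of the nonincreasing step function $h_x$). For any $u$ in this subinterval, write $u$ as a convex combination $u = (1-t)\tfrac{k-1}{d_y} + t\tfrac{k}{d_y}$ with $t\in[0,1]$. By concavity of $L_x$,
\begin{equation}
L_x(u) \ge (1-t)\,L_x\!\left(\tfrac{k-1}{d_y}\right) + t\,L_x\!\left(\tfrac{k}{d_y}\right),
\end{equation}
while by linearity of $L_y$ on the segment,
\begin{equation}
L_y(u) = (1-t)\,L_y\!\left(\tfrac{k-1}{d_y}\right) + t\,L_y\!\left(\tfrac{k}{d_y}\right).
\end{equation}
Subtracting and invoking the hypothesis $L_x(k'/d_y)\ge L_y(k'/d_y)$ at the two endpoints $k'=k-1,k$ gives $L_x(u)\ge L_y(u)$. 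Since $k$ was arbitrary and the subintervals cover $[0,1]$, the inequality holds everywhere, and by Proposition~\ref{prop:detconv} this is exactly noisy-majorization.

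I expect the main obstacle to be purely expository rather than mathematical: one must be careful to state precisely why $L_y$ has vertices at $k/d_y$ and why $L_x$ is concave, since the whole argument rests on the asymmetry that we are checking the inequality at the breakpoints of the \emph{lower} curve $y$ (not of $x$). The essential point is that a concave function lying above a linear function at both endpoints of an interval lies above it throughout, and that $L_y$ is genuinely linear between consecutive points $k/d_y$ so that its value there controls it everywhere. No delicate estimates are needed; the only subtlety to flag is that we use concavity of $L_x$ together with linearity of $L_y$ on each piece, and that the endpoint cases $k=0$ and $k=d_y$ come for free from normalization.
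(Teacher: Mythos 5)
Your proof is correct and follows essentially the same route as the paper's: the paper's one-line argument is precisely that $L_y$ is linear between the points $k/d_y$ while $L_x$ is concave, so checking the inequality at those points suffices. You have simply spelled out the convex-combination step and the endpoint normalization explicitly, which the paper leaves implicit.
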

The proof is simply that $L_y(u)$ is linear between the distinguished values of $u$.  Given that $L_x(u)$ is concave, if $L_x(u)\ge L_y(u)$ at these points, then $L_x(u)\ge L_y(u)$ at all $u\in [0,1]$.

This result can be rephrased in terms of state-conversion witnesses as follows.
 \begin{corollary}
Each of the functions 
\begin{equation}\label{eq:nogowitness}
\Delta_k(x\|y):=L_x(k/d_y)-L_y(k/d_y)
\end{equation}
for $k\in \{ 1,\dots, d_y \}$ is a no-go witness for $x \cconv y$.  That is, if $\Delta_k(x\|y) <0$ then it is not the case that $x \cconv y$.  

The function
\begin{equation}\label{eq:dualwitness}
\Delta(x\|y):=\min_{k \in \{ 1,\dots,d_y\}} \Delta_k(x\|y)
\end{equation}
is a complete witness for the state conversion.  That is, $x \cconv y$ if and only if $\Delta(x\|y) \ge 0$.
\end{corollary}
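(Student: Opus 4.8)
The plan is to derive both halves of the corollary directly from Proposition~\ref{prop:detconv} together with the geometric facts supplied for the lemma immediately preceding it. For the no-go witness claim I would argue by contraposition. Proposition~\ref{prop:detconv}(ii) says that $x\cconv y$ holds exactly when $L_x(u)\ge L_y(u)$ for every $u\in[0,1]$. Hence if $\Delta_k(x\|y)<0$ for some $k$ --- that is, $L_x(k/d_y)<L_y(k/d_y)$ --- then the Lorenz-curve inequality fails at the single point $u=k/d_y$, so $x\not\cconv y$. This is precisely the defining property of a no-go witness. Equivalently, one may observe that each $M_{\textnormal{Lorenz},k/d_y}(x)=L_x(k/d_y)$ is a nonuniformity monotone, and that the difference of any monotone evaluated on $x$ and on $y$ is automatically a no-go witness, as noted after Definition~\ref{QSCWitness}.

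For the complete-witness claim I would prove the two implications separately. The forward implication is immediate: if $x\cconv y$, then by the no-go property just established every $\Delta_k(x\|y)\ge 0$, whence $\Delta(x\|y)=\min_k\Delta_k(x\|y)\ge 0$. The reverse implication is the one with content. Here the hypothesis $\Delta(x\|y)\ge 0$ unpacks to $L_x(k/d_y)\ge L_y(k/d_y)$ for all $k\in\{1,\dots,d_y\}$; the endpoint case $k=d_y$ is trivial since both Lorenz curves equal $1$ at $u=1$, leaving the interior inequalities at $k=1,\dots,d_y-1$. These are exactly the hypotheses of the preceding lemma, whose conclusion is $L_x(u)\ge L_y(u)$ for all $u\in[0,1]$, and Proposition~\ref{prop:detconv} then yields $x\cconv y$.

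The crux --- and the only step with real substance --- is this reverse implication, which rests on the concavity argument already given for the preceding lemma: $L_y$ is piecewise linear with breakpoints only at the nodes $k/d_y$, while $L_x$ is concave, being the integral of the non-increasing step function $h_x$. On each interval $[(k-1)/d_y,\,k/d_y]$, $L_y$ coincides with the chord joining its endpoint values, whereas concavity forces $L_x$ to lie above its own chord on that interval; since $L_x$ dominates $L_y$ at both endpoints, the chord of $L_x$ dominates the chord of $L_y$, and therefore $L_x\ge L_y$ throughout. No work beyond citing the lemma is needed, so the essential subtlety is conceptual rather than computational: recognizing that the finitely many nodes of the \emph{target} state's Lorenz curve suffice to certify the everywhere-inequality. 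I would close by remarking that it is the choice of denominator $d_y$ (rather than $d_x$) that makes the witness set finite and ties its cardinality to the resolution of $y$, the state being formed.
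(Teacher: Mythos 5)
Your proof is correct and follows essentially the same route as the paper: the corollary is a restatement of the preceding lemma, whose proof is exactly the concavity argument you give ($L_y$ is piecewise linear with breakpoints at the nodes $k/d_y$, while $L_x$ is concave and dominates $L_y$ at those nodes, hence everywhere), combined with Proposition~\ref{prop:detconv}(ii). Your handling of the trivial endpoint $k=d_y$ and the remark on why the denominator $d_y$ makes the witness set finite are both consistent with the paper's treatment.
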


In Section~\ref{sec:secondlaw}, we will discuss how this result implies the inadequacy of the standard formulation of the second law of thermodynamics.

\subsubsection{Implementation}

In the previous section, we determined the conditions under which it is possible to transform $x$ to $y$ by a noisy operation, but the proof was not constructive.  In this section, we describe a practical implementation of the appropriate noisy operation. We begin with states of equal dimensions.

First of all, we recall the definition of majorization in terms of a sequence of \emph{T-transforms}.   A T-transform is a doubly-stochastic matrix that is nontrivial on a single $2\times 2$ block.  For the block corresponding to levels $i$ and $j$, we denote the T-transform by $T_{ij}$.  The only permutations that act only on levels $i$ and $j$ are the identity, denoted $I$, and the permutation that swaps $i$ and $j$, which we denote by $\Pi_{ij}$. Therefore, by Birkhoff's Theorem \cite{Birkhoff}, the most general form of $T_{ij}$ is 
$$ \label{eq:Ttransform}
T_{ij} = w I +(1-w) \Pi_{ij},
$$
 where $0\le w \le 1$.
It follows that if $x$ and $y$ are $d$-dimensional vectors and $y=T_{ij} x$, then
\begin{align}
y_i =  w x_i + (1-w) x_j, \\
y_j =w x_j + (1-w)  x_i.
\end{align}

\begin{figure}[h!]
\centering
\includegraphics[width=.43\textwidth, clip=true]{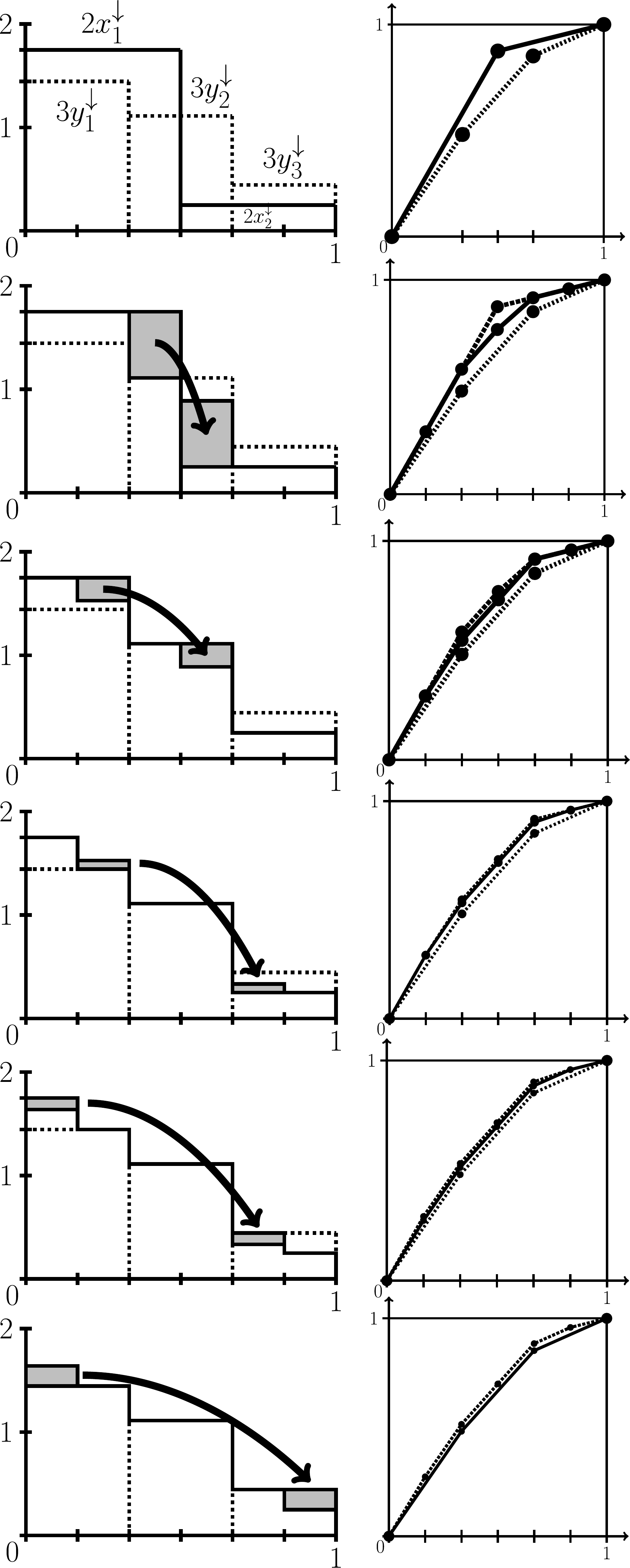}
\captionsetup{singlelinecheck=off,justification=raggedright}
\caption{A depiction of the evolution of the state during a minimal sequence of T-transforms achieving the state conversion $x \cconv y$.  
The left column denotes the uniform-rescaled histograms of the states, while the right depicts their Lorenz curves.}
\label{fig:T-transforms}
\end{figure}

Equivalently, defining $\bar{x}_{ij} \assign \tfrac{1}{2} (x_i + x_j)$ and $q \assign 2\left|w-\tfrac{1}{2}\right|$, we have
\begin{align}
y_i =  q x_i + (1-q)  \bar{x}_{ij}, \\
y_j = q x_j + (1-q)  \bar{x}_{ij}.
\end{align}
We see that if $w=1/2$ ($q=0$), then the weights of the pair of levels $i$ and $j$ become equal, and if $w \ne 0,1$ ($q\ne1$), then these weights become closer to equal (assuming they were unequal to begin with).  In the context of income inequality, a T-transform is called a ``Robin Hood transfer'' \cite{Arnold87}.

\begin{lemma}[Muirhead, Hardy, Littlewood, Polya] \label{lemma:HardyLittlewoodPolyaT}
For states of equal dimensions, $x \succ y$ if and only if there is a finite sequence of T-transforms taking $x$ to $y$.  The number of steps required is at most $d-1$, where $d$ is the dimension.
\end{lemma}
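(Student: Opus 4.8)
The plan is to prove the two implications separately. The ``if'' direction is immediate: each T-transform $T_{ij}=wI+(1-w)\Pi_{ij}$ is doubly-stochastic, and a product of doubly-stochastic matrices is again doubly-stochastic, so a finite sequence of T-transforms realizes $y=Dx$ for some doubly-stochastic $D$; Lemma~\ref{lemma:HardyLittlewoodPolya} then yields $x\succ y$. All the content lies in the ``only if'' direction, which I would establish constructively, simultaneously producing the sequence and bounding its length. Throughout I would work with $x^{\downarrow}$ and $y^{\downarrow}$: since a permutation is a product of transpositions $\Pi_{ij}$, each a T-transform, and since majorization depends only on the sorted vectors, it costs nothing to assume both $x$ and $y$ are in descending order, and the essential count of $d-1$ refers to this sorted case.

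The strategy for the ``only if'' direction is to decrease, by a single T-transform at a time, the number of coordinates in which $x$ disagrees with $y$. Assuming $x\neq y$, I would choose the pivot pair as follows: let $j$ be the \emph{largest} index with $x_j>y_j$, and let $k$ be the \emph{smallest} index exceeding $j$ with $x_k<y_k$. Majorization together with equality of the total sums guarantees both indices exist, and the extremal choices force $x_i=y_i$ for every $j<i<k$ (indices beyond $j$ have $x_i\le y_i$, while indices below $k$ have $x_i\ge y_i$). I would then apply the Robin Hood transfer moving mass $\delta\assign\min(x_j-y_j,\,y_k-x_k)>0$ from level $j$ to level $k$, i.e.\ the T-transform sending $x_j\mapsto x_j-\delta$ and $x_k\mapsto x_k+\delta$ and fixing all other coordinates.

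Three verifications carry the induction, and the majorization check is where I expect the real work to lie. First, validity: because $y$ is in descending order and $k>j$, one has $x_j>y_j\ge y_k>x_k$, so $x_j>x_k$ and the transfer genuinely equalizes a pair, hence is a legitimate T-transform with $0\le w\le1$; a short case analysis also shows the descending order is preserved. Second, progress: by the choice of $\delta$ at least one of positions $j,k$ becomes an exact agreement while no previously-agreeing coordinate is disturbed, so the disagreement count strictly drops. Third, and most delicately, the new vector $x'$ must still majorize $y$: the only partial sums $S_m$ that change are those with $j\le m<k$, each dropping by $\delta$, and using $x_i=y_i$ on the plateau $j<i<k$ one computes $S_m(x)-S_m(y)=S_j(x)-S_j(y)\ge x_j-y_j\ge\delta$, so $S_m(x')\ge S_m(y)$ survives. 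This inequality is the crux, and the extremal choice of $j$ and $k$ is precisely what is engineered to secure it.

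Finally, for the step count I would observe that the number of disagreeing coordinates can never equal $1$: normalization forces the last coordinate to match once the other $d-1$ agree. Hence, starting from an initial disagreement count of at most $d$, the process reduces by at least one per step until a configuration with exactly two disagreements is reached, at which point normalization makes $x_j-y_j=y_k-x_k$, so a single T-transform clears both simultaneously. Counting the steps then gives the bound of at most $d-1$ transforms, completing the proof.
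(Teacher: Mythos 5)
Your proof is correct and follows essentially the same route as the paper's (which adapts MOA, Lemma B.1): a greedy sequence of Robin Hood transfers, each moving the maximal admissible mass $\delta$ from an excess level to a deficient one so that the disagreement count drops by at least one per step, with the $d-1$ bound coming from the observation that exactly one disagreement is impossible under equal total sums. You are in fact more explicit than the paper on the one point it glosses over --- verifying via $S_m(x)-S_m(y)=S_j(x)-S_j(y)\ge x_j-y_j\ge\delta$ that each intermediate state still majorizes $y$ --- so no changes are needed.
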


\begin{proof}
We follow the proof in (MOA, Lemma B.1, p. 32).  Let $x^{(n)}$ denote the state after the $n$th step of the sequence of T-transforms, so that $x^{(0)}=x$ and $x^{(n_{\max})}=y$.  Consider the step that takes $x^{(n)}$ to $x^{(n+1)}$.  Let $j_{\textnormal{ex}}$ denote the largest index such that the weight for that index is strictly larger for $x^{(n)\downarrow}$ than it is for $x^{(n+1)\downarrow}$, that is, $x^{(n)\downarrow}_{j_{\textnormal{ex}}} > x^{(n+1)\downarrow}_{j_{\textnormal{ex}}}$  (``ex'' denotes ``excess'').   Let $j_{\textnormal{df}}$ denote the smallest index such that the weight for that index is strictly smaller for $x^{(n)\downarrow}$ than it is for $x^{(n+1)\downarrow}$, that is, $x^{(n)\downarrow}_{j_{\textnormal{df}}} < x^{(n+1)\downarrow}_{j_{\textnormal{df}}}$  (``df'' denotes ``deficient'').  By definition, we must have $j_{\textnormal{ex}} < j_{\textnormal{df}}$. 
This is illustrated in Fig. \ref{fig:T-transforms}.

We consider a protocol where, in the $n^\textnormal{th}$ step in the sequence, one transfers the maximum weight possible from level $j_{\textnormal{ex}}$ to level $j_{\textnormal{df}}$ while ensuring that one still has
$x^{(n)\downarrow}_{j} \ge x^{(n+1)\downarrow}_{j} $  for $j=j_{\textnormal{ex}}$ and for $j=j_{\textnormal{df}}$
(so that, in particular, $x^{(n)}$ still majorizes $x^{(n+1)}$).   This maximum weight, denoted $\delta$, cannot exceed the difference of weights in level $j_{\textnormal{ex}}$ nor the difference of weights in level $j_{\textnormal{df}}$, so
$$
\delta = \min \{ x^{(n)\downarrow}_{j_{\textnormal{ex}}} - x^{(n+1)\downarrow}_{j_{\textnormal{ex}}}, 
x^{(n+1)\downarrow}_{j_{\textnormal{df}}} - x^{(n)\downarrow}_{j_{\textnormal{df}}} \}.
$$
The resulting transformation is
\begin{align}
x^{(n+1)\downarrow}_{j_{\textnormal{ex}}} = x^{(n)\downarrow}_{j_{\textnormal{ex}}} -\delta, \\
x^{(n+1)\downarrow}_{j_{\textnormal{df}}} = x^{(n)\downarrow}_{j_{\textnormal{df}}} +\delta.
\end{align}
Because the transfer tends to make the weights of the two levels closer to equal, it is clearly a T-transform.  Specifically, it is the transform $T_{j_{\textnormal{ex}} j_{\textnormal{df}}} = w I + (1-w) \Pi_{j_{\textnormal{ex}} j_{\textnormal{df}}}$, where $w = 1 - \delta/(x^{(n)\downarrow}_{j_\textnormal{ex}} - x^{(n)\downarrow}_{j_\textnormal{df}})$.  After the $n^\textnormal{th}$ step of the sequence, either the index $j_{\textnormal{ex}}$ is reduced by 1 or the index $j_{\textnormal{df}}$ is increased by 1 (or both), so that at the next step, one is transferring weight between a different pair of levels.  Clearly, if $x$ majorizes $y$, then a sequence of such transformations can take $x$ to $y$ in a finite number of steps. 

Finally, we show that at most $d-1$ steps are required.  Suppose $d(x,y)$ denotes the number of levels wherein $x^{\downarrow}$ and $y^{\downarrow}$ differ in value.  Because the number of differences is reduced by 1 at every step of the sequence, and the last step takes two differences to no differences, it follows that one requires $d(x,y)-1$ steps.  $d(x,y)$, however, is at most $d$.
\end{proof}

This covers the case of states of equal dimension.  Next, we consider how to implement a noisy operation that achieves deterministic conversion of states of unequal dimensions.
It follows from Lemma~\ref{lemma:HardyLittlewoodPolyaT} that we can achieve the transformation by a sequence of T-transforms, with the number of steps in the sequence being at most $d-1$ where $d= \textnormal{LCM}(d_x,d_y)$.  Fig.~\ref{fig:T-transforms} depicts the set of T-transforms that maps $x \otimes m_1$ to $y \otimes m_2$ (where $d_{m_1} \assign d/d_x$ and $d_{m_2} \assign d/d_y)$ by the protocol described in the proof of Lemma~\ref{lemma:HardyLittlewoodPolyaT}.  Note that we must divide the $x$-axis into bins of size $1/d$ to depict how the sequence of T-transforms acts on the uniform-rescaled histograms.  Note also that this figure provides an intuitive proof of Proposition~\ref{prop:detconv}
(i).  Finally, note how the T-transforms act on the Lorenz curves.  Each T-transform acts on the pair of bins for which the difference between the slope differential of the Lorenz curves in the first bin and the slope differential of the Lorenz curves in the second bin is largest.  After a particular T-transform is complete, the slope differential in one of the two bins becomes zero, and consequently, the Lorenz curves have the same slope in that bin.  By this process, the Lorenz curve of the initial state approaches that of the target state.

\section{Nonuniformity monotones}\label{sec:NUmonotones}

In this section, we discuss the properties and method of construction of functions which serve as monotones under classical noisy operations. A reader interested in getting a quick introduction to nonuniformity monotones (without too much technical detail) will find the tables in this section useful: Table~\ref{table:NUmonotones} lists some monotones derived from convex functions of one real variable, outlining the steps of the derivation; Table~\ref{table:LCmonotones} lists some monotones which can be derived from the geometry of the Lorenz curve; and Table~\ref{table:OptlIntnsMonotones} lists the operational significance of some of the monotones.

Given that noisy operations may change the dimension of the space, every nonuniformity monotone is really a family of functions, $\{ G_d : d \in \bbZ_+\}$, where $G_d:\bbR_+^d \to \bbR$ is defined on a $d$-dimensional space.  We will denote by $G:\rR\to\bbR$ the function which reduces to $G_d$ on $\bbR_+^d$. Thus, $\forall x\in\rR$,
\begin{equation}\label{eq:Mf}
G(x)=G_{d_x}(x).
\end{equation}
Where the dimensionality is clear from the context, we will omit the subscript $x$ in $d_x$.

A monotone is said to be \emph{strict} if $M(x)=M(y)$ only on level sets, that is, only when $x$ and $y$ are either noisy-equivalent (i.e., $x\cconv y$ and $y\cconv x$) or not ordered relative to one another (i.e., it is neither the case that $x \cconv y$ nor that $y \cconv x$).

Since appending ancillary systems in the uniform state is a reversible process under noisy operations, every nonuniformity monotone $M$ must satisfy
$M\left(x\otimes m^{(d)}\right)=M(x)\;\;\forall\;d\in\bbZ_+$,
where $m^{(d)}$ is the uniform state of dimension $d$.

Within a space of fixed dimension, the noisy quasi-order reduces to the majorization quasi-order, so any nonuniformity monotone $M$, when restricted to states of a fixed dimension, must be a nonincreasing monotone with respect to the majorizaton quasi-order. The latter
are known as \emph{Schur-convex} functions and have been extensively studied. See, e.g., Chapter 3 of MOA.
\begin{definition}[Schur-convexity]
\label{DefSchurConvexity}
A function $G_d$ mapping $d$-dimensional probability distributions to the reals is Schur-convex iff for all $d$-dimensional distributions $x$ and $y$,
\begin{equation}\label{eq:Schurconvexity}
x\succ y\;\Rightarrow \;G_d(x)\ge G_d(y).
\end{equation}
\label{def:Schur}
\end{definition}

As it turns out, for a function $M: \rR \to \bbR$, the conditions that the restriction of $M$ to every $d$-dimensional vector space be Schur-convex and that $M$ be invariant under the addition of a uniform state are necessary and sufficient for $M$ to be a nonuniformity monotone. We formalize this in the following.
\begin{proposition}
\label{prop:NUMon}
A function $M:\rR\to\bbR$ is a nonuniformity monotone if and only if both of the following conditions hold:
\begin{enumerate}
\item For each $d\in\bbZ_+$, the restriction of $M$ to $\bbR_+^d$ is Schur-convex.
\item For all $d\in\bbZ_+$, for all distributions $x\in\rR$,
\begin{equation}
\label{eq:NUmonotoneAddm}
M\left(x\otimes m^{(d)}\right)=M(x),
\end{equation}
where $m^{(d)}$ is the $d$-dimensional uniform distribution.
\end{enumerate}
\end{proposition}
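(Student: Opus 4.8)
The plan is to prove both directions of the biconditional, showing that the two stated conditions are jointly equivalent to $M$ being a nonuniformity monotone.

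For the \emph{necessity} direction, I would assume that $M$ is a nonuniformity monotone and derive the two conditions. Condition~2 follows almost immediately: since appending a uniform ancilla is reversible (we have $x \cconv x \otimes m^{(d)}$ and $x \otimes m^{(d)} \cconv x$, both being noisy operations), the monotonicity of $M$ applied in both directions forces $M(x \otimes m^{(d)}) \ge M(x)$ and $M(x) \ge M(x \otimes m^{(d)})$, hence equality. Condition~1 is equally direct: if $x$ and $y$ are $d$-dimensional with $x \succ y$, then by Lemma~\ref{lemma:HardyLittlewoodPolya} there is a doubly-stochastic matrix taking $x$ to $y$, and by Lemma~\ref{lemma:NCO} this is a noisy classical operation, so $x \cconv y$, whence $M(x) \ge M(y)$. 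This is exactly Schur-convexity of the restriction of $M$ to $\bbR_+^d$.

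For the \emph{sufficiency} direction, I would assume both conditions hold and show that $x \cconv y$ implies $M(x) \ge M(y)$ for arbitrary $x, y \in \rR$, possibly of different dimensions. The key idea is to reduce the unequal-dimension case to the equal-dimension case using the same tensoring trick deployed in the proof of Proposition~\ref{prop:detconv}. Suppose $x \cconv y$. Choosing uniform states $m$ and $m'$ with $d_x d_m = d_y d_{m'}$ (for instance $d = \textnormal{LCM}(d_x, d_y)$ with $d_m = d/d_x$ and $d_{m'} = d/d_y$), the states $x \otimes m$ and $y \otimes m'$ both have dimension $d$. By Proposition~\ref{prop:detconv}, since $x \cconv y$ is equivalent to the Lorenz curve of $x$ lying nowhere below that of $y$, and since tensoring with a uniform state leaves the Lorenz curve unchanged, we have $x \otimes m \cconv y \otimes m'$; at equal dimension this is just $x \otimes m \succ y \otimes m'$. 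Condition~1 then gives $M(x \otimes m) \ge M(y \otimes m')$, and Condition~2 gives $M(x \otimes m) = M(x)$ and $M(y \otimes m') = M(y)$, so combining these yields $M(x) \ge M(y)$, as required.

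I expect the main subtlety to lie in the sufficiency direction, specifically in correctly invoking the equivalence between $x \cconv y$ and the equal-dimension majorization $x \otimes m \succ y \otimes m'$; one must be careful that Condition~1 is only assumed fixed-dimension Schur-convexity, so the bridge from the general noisy-majorization relation down to a genuine majorization between equal-dimension vectors must pass explicitly through Proposition~\ref{prop:detconv} and the Lorenz-curve invariance under uniform tensoring. Everything else is routine, since the reversibility of appending ancillas handles Condition~2 and the Hardy--Littlewood--P\'olya characterization together with Lemma~\ref{lemma:NCO} handles the equal-dimension case of Condition~1.
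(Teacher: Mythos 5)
Your proposal is correct and follows essentially the same route as the paper: the forward direction via reversibility of adjoining uniform ancillas and the Hardy--Littlewood--P\'olya/doubly-stochastic characterization (which the paper simply declares trivial), and the converse by tensoring with uniform states to equalize dimensions, reducing noisy-majorization to ordinary majorization, and then applying Schur-convexity together with the invariance condition. The only cosmetic difference is that you pad to dimension $\textnormal{LCM}(d_x,d_y)$ whereas the paper uses $d_x d_y$; this changes nothing.
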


\begin{proof}
The forward implication is trivial, so our task is to show that the pair of conditions imply that $M$ is a nonuniformity monotone. 
 
We begin by noting that for each $d$, if the restriction of $M$ to $\bbR_+^d$ is Schur-convex, then it follows by definition that if $x$ and $y$ are states of equal dimension, and $x$ majorizes $y$, then $M(x) \ge M(y)$. It remains to show that if $x$ and $y$ are of unequal dimension, and $x$ noisy-majorizes $y$, then $M(x) \ge M(y)$.

By assumption, $M\left(x \otimes m^{(d)}\right) = M(x )$, 
for any $d$.  Then, one can reason as follows. First, we can define states $x \otimes m^{(d_y)}$ and $y \otimes m^{(d_x)}$ that are of equal dimension, namely $d_xd_y$. Given that $x$ noisy-majorizes $y$, and given that adding uniform states does not change the noisy-equivalence class of a state, it follows that $x \otimes m^{({d_y})}$ noisy-majorizes $y \otimes m^{(d_x)}$. But since noisy-majorization between states of equal dimension is just majorization, it follows that $x \otimes m^{({d_y})}$ majorizes $y \otimes m^{(d_x)}$. Then, by the Schur-convexity of the restriction of $M$ to $\bbR^{d_x d_y}$, we conclude that $M\left(x \otimes m^{({d_y})} \right)\ge M\left(y \otimes m^{(d_x)}\right)$. It follows that $M(x)\ge M(y)$.
\end{proof}

Note that if one takes an \emph{arbitrary} family of Schur-convex functions, $\{ G_d : d\in  \bbZ_+\}$  where $G_d:\bbR_+^d \to \bbR$, then the function $G:\rR\to\bbR$ which reduces to $G_d$ on $\bbR_+^d$ need not be a nonuniformity monotone because it need not satisfy the requirement of invariance under adjoining a uniform state, Eq.~\eqref{eq:NUmonotoneAddm}.  So the problem of defining nonuniformity monotones is not solved by merely finding families of Schur-convex functions.
Nonetheless, the powerful characterization theorems for Schur-convex functions can be exploited to construct and characterize
nonuniformity monotones. We will discuss these theorems in the following.

\subsection{Nonuniformity monotones built from convex functions on the reals}
\label{subs:ConMon}

\subsubsection{Schur-convexity}

The following result is well-known; cf.\ (MOA, Proposition C.1).
\begin{lemma}
\label{lemma:SchCon}
For every convex function $g:\bbR_+\rightarrow \bbR$, the function $G_d:\bbR_+^d \to \bbR$ defined by
\[ \label{eq:fd}
G_d(x)\assign\sum\limits_{i=1}^{d}g\left(x_i\right)
\]
is Schur-convex for each $d\in\bbZ_+$. That is, if $x$ and $y$ are of equal dimension $d$, and if $x\succ y$, then $G_d(x) \ge G_d(y)$.
\end{lemma}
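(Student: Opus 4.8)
The plan is to reduce the claim to the behavior of $G_d$ under a single T-transform, and then settle that single-step case by the elementary two-point convexity inequality for $g$. The reduction is exactly what the Muirhead--Hardy--Littlewood--Polya characterization is designed to supply.

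First I would invoke Lemma~\ref{lemma:HardyLittlewoodPolyaT}: for states of equal dimension $d$, $x\succ y$ holds if and only if $y$ can be obtained from $x$ by a finite sequence of T-transforms, $x=x^{(0)},x^{(1)},\dots,x^{(N)}=y$ with $x^{(n+1)}=T_{i_nj_n}x^{(n)}$. Since $G_d$ is additive over coordinates and a T-transform $T_{ij}$ alters only the entries in positions $i$ and $j$, it suffices to prove the single-step inequality $G_d(T_{ij}x)\le G_d(x)$; the full inequality $G_d(y)\le G_d(x)$ then follows by telescoping along the sequence.

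For the single step, recall $T_{ij}=wI+(1-w)\Pi_{ij}$ with $w\in[0,1]$ sends the pair $(x_i,x_j)$ to $(wx_i+(1-w)x_j,\,wx_j+(1-w)x_i)$ and fixes all other entries. Hence $G_d(T_{ij}x)-G_d(x)$ receives contributions only from positions $i$ and $j$, namely
\begin{equation*}
\left[g\big(wx_i+(1-w)x_j\big)+g\big(wx_j+(1-w)x_i\big)\right]-\left[g(x_i)+g(x_j)\right].
\end{equation*}
Applying convexity of $g$ to each summand,
\begin{align*}
g\big(wx_i+(1-w)x_j\big)&\le w\,g(x_i)+(1-w)\,g(x_j),\\
g\big(wx_j+(1-w)x_i\big)&\le w\,g(x_j)+(1-w)\,g(x_i),
\end{align*}
and adding the two lines gives $g(wx_i+(1-w)x_j)+g(wx_j+(1-w)x_i)\le g(x_i)+g(x_j)$, so the displayed difference is $\le 0$. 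This yields $G_d(T_{ij}x)\le G_d(x)$ and completes the argument.

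The hard part is essentially nonexistent here: the only substantive ingredient is the two-point convexity inequality, and the T-transform decomposition does all the structural work. The one place calling for minor care is the bookkeeping --- confirming that the T-transform leaves every coordinate other than $i,j$ untouched so that the remaining terms cancel in the difference, and noting that the inequality is insensitive to the direction in which weight is moved (convexity gives the same bound whether $x_i>x_j$ or $x_i<x_j$). I would also remark that the same conclusion could be reached directly from Lemma~\ref{lemma:HardyLittlewoodPolya} together with the symmetry and convexity of $G_d$, but the T-transform route is the most transparent.
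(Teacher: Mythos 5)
Your proof is correct, but it takes a different route from the paper's. The paper argues via Lemma~\ref{lemma:HardyLittlewoodPolya} and Birkhoff's theorem: $x\succ y$ gives $y=Dx$ for a doubly stochastic $D$, which decomposes as a convex combination of permutations, and then one uses that $G_d$ is itself a convex, permutation-symmetric function on $\bbR_+^d$ to get $G_d(y)\le\sum_i w_i G_d(\Pi_i x)=G_d(x)$ in one line. You instead invoke the T-transform characterization (Lemma~\ref{lemma:HardyLittlewoodPolyaT}), reduce to a single two-coordinate step, and settle that step with the two-point convexity inequality. Both are standard and sound; your route is more elementary in that it only ever uses convexity of $g$ on two points rather than Jensen's inequality for the $d$-variable function $G_d$ over a convex combination of up to $d!$ permutations, and it makes visible exactly where the inequality is generated (each Robin Hood transfer can only decrease $G_d$). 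The paper's route is shorter and avoids the telescoping bookkeeping. One small presentational caveat: Lemma~\ref{lemma:HardyLittlewoodPolyaT} is stated and proved later in the paper than the present lemma, so if you wanted your argument to slot into the text as written you would either reorder or cite the classical Muirhead--Hardy--Littlewood--P\'olya result directly; there is no logical circularity, since that characterization does not depend on Schur-convexity of sums of convex functions.
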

\begin{proof}
Recall that a function $g:\bbR \rightarrow \bbR$ is convex if, for any pair of points $a,a' \in \bbR$, 
\[
g(w a + (1-w) a') \le w g(a) + (1-w) g(a')
\]
for all $w \in [0,1]$. 
If $g$ is twice differentiable, an equivalent definition of convexity is that $g''(a)\ge 0$ for all $a \in \bbR$.

Note also that the function $G_d(x)$ is invariant under permutation of the components of $x$. Hence it is sometimes described as a symmetric function. 

The proof is then straightforward.  Given that $x \succ y$, it follows that $y= Dx$ for some doubly-stochastic matrix $D$.  By Birkhoff's Theorem \cite{Birkhoff}, there exists a probability distribution $(w_i)$ and a set of permutations $(\Pi_i)$ such that  $y = \sum_i w_i \Pi_i x$. Because $G_d$ is a sum of convex functions, it is also convex, and therefore, $G_d(y) \le \sum_i w_i G_d(\Pi_i x)$.  But we noted above that $G_d$ is invariant under a reordering of the components of its argument, so that $G_d(\Pi_i x)= G_d(x)$.  It follows that $G_d(y) \le \sum_i w_i G_d(x)=G_d(x)$.
\end{proof}

Now that we have seen how to generate families of Schur-convex functions, the question arises of which of these families can yield a function that is invariant under adjoining uniform states.

\subsubsection{Schur-convexity relative to a distribution $q$}

Towards this end, we introduce the notion of majorization \emph{relative to a distribution $q$}; see (MOA, Ch.\ 14, Sec.\ B, p.\ 585).

Within a space of a given dimension $d$, consider those stochastic matrices that preserve a particular distribution $q$. Call these the $q$-preserving stochastic matrices.
\begin{definition}[Majorization relative to $q$]
A distribution $x$ is said to majorize another distribution $y$ \emph{relative to the distribution} $q$, denoted $x \succ_q y$, if there exists a $q$-preserving stochastic matrix $D$ such that $y =Dx$. 
\end{definition}
\begin{definition}[Schur-convexity relative to $q$]
We say that a function $f: \bbR_+^d \to \bbR$ is \emph{Schur-convex relative to $q$} if $x \succ_q y$ implies $f(x) \ge f(y)$.
\end{definition}

Intuitively, whereas a Schur-convex function quantifies the distance of some distribution $x$ to the uniform distribution $m$, a Schur-convex function relative to $q$ quantifies the distance of $x$ to $q$.

The following is a useful method of constructing Schur-convex functions relative to a distribution $q$, from convex functions of one real variable. 

\begin{lemma}\label{thm:Schurconvexrelativetoq}
 Given a distribution 
 $q \in \bbR_+^{d}$ with all $q_i\neq 0$,
 for every convex function $g:\bbR_+\rightarrow\bbR$, the function $G(\cdot\|q):\bbR_+^{d} \to \bbR$ defined by
 \begin{equation}\label{eq:Schurconcaverelativetom}
G(x\|q)\assign \sum\limits_{i=1}^{d} q_i g\left(\tfrac{x_i}{q_i} \right)
 \end{equation}
is Schur-convex relative to $q$. That is, if $x$ and $y$ are of equal dimension $d$, and if $x\succ_q y$, then $G(x\|q) \ge G(y\|q)$.
\end{lemma}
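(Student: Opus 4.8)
The plan is to reduce this statement to the ordinary (unrelativized) Schur-convexity result already established in Lemma~\ref{lemma:SchCon}, by exploiting the structure of $q$-preserving stochastic matrices together with Birkhoff's theorem in a weighted form. The key observation is that $x \succ_q y$ means $y = Dx$ for some stochastic matrix $D$ that fixes $q$, i.e.\ $Dq = q$. So I would start by unpacking the definition: assume $x \succ_q y$, fix a $q$-preserving stochastic $D$ with $y = Dx$, and aim to show $G(y\|q) \le G(x\|q)$ directly from convexity of $g$, without needing a full decomposition theorem for $q$-preserving matrices.

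First I would write out the target quantity componentwise. Using $Dq=q$, for each output index $i$ we have $q_i = \sum_j D_{ij} q_j$, so the coefficients $D_{ij} q_j / q_i$ are nonnegative and sum over $j$ to $1$; that is, for each fixed $i$ they form a probability distribution. The crucial algebraic identity is
\begin{equation}
\frac{y_i}{q_i} = \frac{(Dx)_i}{q_i} = \frac{\sum_j D_{ij} x_j}{q_i} = \sum_j \frac{D_{ij} q_j}{q_i}\cdot \frac{x_j}{q_j},
\end{equation}
which displays $y_i/q_i$ as a convex combination of the values $x_j/q_j$ with weights $w^{(i)}_j \assign D_{ij} q_j / q_i$. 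Applying Jensen's inequality (convexity of $g$) to each such convex combination gives
\begin{equation}
g\!\left(\frac{y_i}{q_i}\right) \le \sum_j \frac{D_{ij} q_j}{q_i}\, g\!\left(\frac{x_j}{q_j}\right).
\end{equation}

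The final step is to multiply by $q_i$ and sum over $i$, then interchange the order of summation and use stochasticity of $D$ (the columns sum to one, $\sum_i D_{ij} = 1$) to collapse the double sum:
\begin{equation}
G(y\|q) = \sum_i q_i\, g\!\left(\frac{y_i}{q_i}\right) \le \sum_j \left(\sum_i D_{ij}\right) q_j\, g\!\left(\frac{x_j}{q_j}\right) = \sum_j q_j\, g\!\left(\frac{x_j}{q_j}\right) = G(x\|q).
\end{equation}
This chain establishes the claim. The one place I would be careful about is the column-normalization of $D$: a stochastic matrix as defined in the paper has \emph{columns} summing to one ($\sum_i D_{ij}=1$), which is exactly what the last sum-interchange needs, so the argument goes through cleanly. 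I do not expect a genuine obstacle here; the only subtlety is bookkeeping with the weights $D_{ij}q_j/q_i$ and confirming they are a legitimate probability distribution in the index $j$ (which follows from $Dq=q$ and $q_i \neq 0$), so that Jensen applies. This direct computation is preferable to invoking a Birkhoff-type decomposition for $q$-preserving matrices, which is more delicate than the doubly-stochastic case and unnecessary here.
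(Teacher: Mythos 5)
Your proof is correct. Note that the paper itself does not supply an argument for this lemma: it simply defers to Proposition B.3 in Chapter 3 of MOA, attributed to Veinott (1971). Your direct computation is essentially the standard proof of that proposition, and every step checks out against the paper's conventions. In particular, the paper defines a stochastic matrix by $\sum_j D_{jk}=1$ (columns summing to one over the output index), which is exactly the normalization you invoke in the final sum interchange; the $q$-preservation condition $Dq=q$ together with $q_i\neq 0$ guarantees that the weights $w^{(i)}_j = D_{ij}q_j/q_i$ form a genuine probability distribution in $j$; and the points $x_j/q_j$ and their convex combinations all lie in $\bbR_+$, so Jensen's inequality for the convex function $g$ applies on its stated domain. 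You are also right that no Birkhoff-type decomposition of $q$-preserving stochastic matrices is needed (and none is available in the same clean form as for doubly stochastic matrices), so the direct Jensen argument is the appropriate route. The only value the citation-based approach buys the paper is brevity; your version has the advantage of being self-contained.
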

This is Proposition B.3 in Ch.\ 3 of MOA, proven in \cite{veinott1971}. 
Functions of the form of Eq.~\eqref{eq:Schurconcaverelativetom} have also been proposed as a generalization of the notion of relative entropy in~\cite{csisz1967information} (where it is called the $g$-divergence of $x$ from $q$) and in~\cite{morimoto1963markov}, as discussed in~\cite{gorban2010entropy}.

Note that if we multiplied each $q_i$ by a factor $c>0$  in Eq.~\eqref{eq:Schurconcaverelativetom}, we would still have a Schur-convex function relative to $q$ (because the function $g'(a) \assign c g(a/c)$ is also convex).  Nonetheless, the case of $c=1$ has a special status.  Suppose we define a fiducial distribution $q$ for every system and suppose that the fiducial distribution on a composite system is the product of the fiducial distributions on the components, that is, $q^{AB} =  q^A \otimes q^B$.  In this case, we can prove the following.
\begin{lemma}\label{thm:Schurrelativeadjoining}
 For every convex function $g:\bbR_+\rightarrow\bbR$, the function $M_g:\rR\to\bbR$ defined by $M_g(x) \equiv G(x\| q)$
satisfies $M_g\left(x \otimes q^S\right) = M_g(x)$ for any system $S$.
\end{lemma}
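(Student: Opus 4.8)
The plan is to evaluate $M_g(x\otimes q^S)$ directly from the definition and reduce it to $M_g(x)$ using exactly two ingredients: the product structure of the fiducial distribution, $q^{AS}=q^A\otimes q^S$, and the normalization of $q^S$. Suppose $x$ lives on a system $A$ whose fiducial distribution is $q^A\in\bbR_+^{d_A}$, and let $q^S\in\bbR_+^{d_S}$ be the fiducial distribution of $S$. By construction $M_g(x\otimes q^S)=G(x\otimes q^S\|q^{AS})$, where $q^{AS}=q^A\otimes q^S$ is the fiducial distribution on the composite $A\times S$. So the whole statement is really a claim about how the function $G(\cdot\|\cdot)$ of Lemma~\ref{thm:Schurconvexrelativetoq} behaves under this particular tensoring, and I would prove it by brute substitution.

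First I would index the composite system by pairs $(i,j)$ with $i\in\{1,\dots,d_A\}$ and $j\in\{1,\dots,d_S\}$, so that $(x\otimes q^S)_{ij}=x_i q^S_j$ and $q^{AS}_{ij}=q^A_i q^S_j$. Substituting into Eq.~\eqref{eq:Schurconcaverelativetom} gives
\[
G(x\otimes q^S\|q^{AS})=\sum_{i,j} q^A_i q^S_j\, g\!\left(\frac{x_i q^S_j}{q^A_i q^S_j}\right).
\]
The key observation, which is essentially the entire content of the lemma, is that the factor $q^S_j$ cancels inside the argument of $g$, so the summand becomes $q^A_i q^S_j\, g(x_i/q^A_i)$, whose $g$-factor no longer depends on $j$. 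I would then carry out the sum over $j$ on the weights alone:
\[
\sum_{i,j} q^A_i q^S_j\, g\!\left(\frac{x_i}{q^A_i}\right)=\sum_i q^A_i\, g\!\left(\frac{x_i}{q^A_i}\right)\left(\sum_j q^S_j\right)=\sum_i q^A_i\, g\!\left(\frac{x_i}{q^A_i}\right),
\]
where the last equality uses $\sum_j q^S_j=1$. The right-hand side is precisely $G(x\|q^A)=M_g(x)$, which closes the argument.

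There is no genuine obstacle here beyond careful bookkeeping: the result follows from the multiplicativity $q^{AS}=q^A\otimes q^S$, which forces the ratio $(x\otimes q^S)_{ij}/q^{AS}_{ij}$ to depend only on $i$, together with the normalization $\sum_j q^S_j=1$. The one point worth flagging is that this normalization is indispensable, and it is exactly why the lemma concerns adjoining the \emph{fiducial} distribution $q^S$ (a genuine probability distribution) rather than an arbitrary weight vector: adjoining an unnormalized vector would rescale $M_g$ by its total weight, which is the sense in which only the choice $c=1$ enjoys the special status noted in the discussion following Lemma~\ref{thm:Schurconvexrelativetoq}.
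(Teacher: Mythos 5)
Your proof is correct and is essentially identical to the paper's: both substitute into Eq.~\eqref{eq:Schurconcaverelativetom}, cancel the $q^S_j$ factor inside the argument of $g$, and sum it out using $\sum_j q^S_j=1$. Your closing remark about why normalization (the $c=1$ case) is indispensable is a nice addition the paper only makes in the surrounding discussion.
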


\begin{proof}
It suffices to note that 
\begin{align}
G\left(x^A \otimes q^B \big\|q^{AB}\right) &= G\left(x^A \otimes q^B \big\|q^A \otimes q^B\right) \\
&= \sum_{i,j} q^A_i q^B_j g\left(\tfrac{x^A_i q^B_j }{ q^A_i  q^B_j} \right) \\
&= \sum_{i} q^A_i   g\left(\tfrac{x^A_i }{ q^A_i}\right)  \\
&= G\left(x^A \big\|q^A\right),
\end{align}
where we have used the fact that $\sum_j q^B_j =1$.
\end{proof}

It follows that any family of functions (one for every type of system), each of which is Schur-convex relative to the fiducial distribution on that system, can be used to construct resource monotones in a classical resource theory where those fiducial distributions are the free states.

In particular, in the context of the resource theory of athermality \cite{AthermalityTheory,1ShotAtherm2}, where the free states are thermal states, we can construct athermality monotones from families of functions that are Schur-convex relative to the thermal state (note that the thermal state of a system depends not only on the dimension of that system, but also on its Hamiltonian).  

\subsubsection{Schur-convexity relative to the uniform state}
\label{SubsubSchurConvexity}

By specializing to the case where the fiducial distribution for a system of dimension $d$ is just the uniform distribution of that dimension, Lemma~\ref{thm:Schurrelativeadjoining} provides a method of constructing a nonuniformity monotone from any convex function.
\begin{theorem}\label{thm:nonuniformitymonotones}
 For every convex function $g:\bbR_+\rightarrow\bbR$, the function $M_g:\rR\to\bbR$ defined by
\[
M_g(x)\assign G\left(x \Big\| m^{(d_x)}\right)
= \frac{1}{d_x}\sum\limits_{i=1}^{d_x} g\left(d_x x_i\right)
 \]
 is a nonuniformity monotone.
\end{theorem}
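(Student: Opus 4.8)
The plan is to apply the preceding two lemmas in sequence. I would invoke Lemma~\ref{thm:nonuniformitymonotones}'s two immediate prerequisites: first Lemma~\ref{thm:Schurconvexrelativetoq}, which guarantees that for any convex $g$ the function $G(x\|q)$ is Schur-convex relative to $q$, and then Lemma~\ref{thm:Schurrelativeadjoining}, which guarantees the invariance under adjoining the fiducial distribution. The strategy is to specialize the fiducial distribution $q$ on every system of dimension $d$ to be the uniform distribution $m^{(d)}$, and then to verify that the two conditions of Proposition~\ref{prop:NUMon} are met so that $M_g$ qualifies as a genuine nonuniformity monotone.

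First I would set $q = m^{(d_x)}$ in Eq.~\eqref{eq:Schurconcaverelativetom} and carry out the elementary computation establishing the explicit form of $M_g$: since $m^{(d_x)}_i = 1/d_x$ for each $i$, we get
\begin{equation}
M_g(x) = G\!\left(x \,\big\|\, m^{(d_x)}\right) = \sum_{i=1}^{d_x} \tfrac{1}{d_x}\, g\!\left(\tfrac{x_i}{1/d_x}\right) = \frac{1}{d_x}\sum_{i=1}^{d_x} g\!\left(d_x x_i\right),
\end{equation}
which matches the claimed expression. This confirms that the stated formula is simply the specialization of $G(\cdot\|q)$ to the uniform fiducial distribution.

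Next I would check the two conditions of Proposition~\ref{prop:NUMon}. For Condition~1 (Schur-convexity of the restriction to each $\bbR_+^d$): when $q = m^{(d)}$, the $q$-preserving stochastic matrices are precisely the doubly-stochastic matrices, so majorization relative to $m^{(d)}$ coincides with ordinary majorization. Hence Lemma~\ref{thm:Schurconvexrelativetoq} gives that $M_g$ restricted to $\bbR_+^d$ is Schur-convex in the ordinary sense. For Condition~2 (invariance under adjoining a uniform ancilla): the family of uniform distributions satisfies the product property $m^{(d_A d_B)} = m^{(d_A)} \otimes m^{(d_B)}$ required by Lemma~\ref{thm:Schurrelativeadjoining}, so that lemma applied with $q^S = m^{(d)}$ directly yields $M_g\!\left(x \otimes m^{(d)}\right) = M_g(x)$ for every $d$.

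With both conditions of Proposition~\ref{prop:NUMon} verified, the conclusion that $M_g$ is a nonuniformity monotone follows immediately. I do not anticipate a serious obstacle here, since the theorem is essentially a corollary obtained by feeding the uniform fiducial distribution into machinery already assembled; the only point requiring a moment's care is the observation that the product structure of the uniform distributions is exactly what licenses the use of Lemma~\ref{thm:Schurrelativeadjoining}, and that $q$-preservation reduces to double stochasticity when $q$ is uniform. The routine algebra establishing the explicit formula is the only computation, and it is trivial.
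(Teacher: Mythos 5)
Your proposal is correct and follows essentially the same route as the paper: specialize $q$ to the uniform distribution, then verify the two conditions of Proposition~\ref{prop:NUMon} via Lemma~\ref{thm:Schurconvexrelativetoq} (Schur-convexity relative to $m^{(d)}$, hence ordinary Schur-convexity) and Lemma~\ref{thm:Schurrelativeadjoining} (invariance under adjoining uniform states). Your added remark that $q$-preservation reduces to double stochasticity for uniform $q$ is a slightly more explicit justification of the ``hence Schur-convex'' step, but the argument is otherwise identical to the paper's.
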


\begin{proof}
We must show that the two conditions of Proposition~\ref{prop:NUMon} hold. 
It is given that $M_g(x) = G\left(x \Big\| m^{(d_x)}\right)$, therefore it follows from Theorem~\ref{thm:Schurrelativeadjoining} that for any $d'\in \bbZ_+$,  $M_g\left(x \otimes m^{(d')}\right)=M_g(x)$, and it follows from Theorem~\ref{thm:Schurconvexrelativetoq} that for any $d \in \bbZ_+$, the restriction of $M_g$ to a $d$-dimensional space is Schur-convex relative to $m^{(d)}$, hence Schur-convex.  
\end{proof}

Note that if a Schur-convex function $G$ can be expressed in the form $G(x) \assign \sum\limits_{i=1}^{d_x} g(x_i)$ for some function $g:\bbR\rightarrow\bbR$, then $g$ is convex [Proposition C.1.c on p.\ 95 of MOA]. Therefore, the theorem above defines a nonuniformity monotone for every Schur-convex function that is expressible in this form.

Recall that a set of nonuniformity monotones that determines the noisy equivalence class of a state is called a \emph{complete} set of monotones.  The set of monotones $M_g$ that is defined by the set of \emph{all} convex functions $g$ is a complete set~\cite{hardy1952inequalities}. We have also seen that the monotones describing the height of the Lorenz curve at every point are a complete set.  For other examples, see Ch.\ 4, Sec.\ B of MOA, and \cite{RuchSchrannerSeligman,joe1990majorization}.

\subsubsection{Schur-convexity-preserving functions}

It is easy to see directly from Definition~\ref{DefSchurConvexity} that
the composition of any Schur-convex function $G_d:\bbR_+^d \to \bbR$ with a function $f:\mathbb{R}\to\mathbb{R}$ that is {\em increasing} on its domain
gives another Schur-convex function $f\circ G_d:\bbR_+^d \to \bbR$.  Similarly, the composition of any Schur-convex function relative to $q$, $G(\cdot\|q):\bbR_+^{d_q} \to \bbR$, with an increasing function $f$ gives another function that is Schur-convex relative to $q$, $f\circ G(\cdot\|q):\bbR_+^{d_q} \to \bbR$.


Schur-convex functions of $x$ can be understood as generalizations of the negative entropy, or ``negentropy'',  of $x$ and are therefore denoted by $-H_{(\cdot)}(x)$, with a subscript labelling the particular such function.
Similarly, functions of $x$ that are Schur-convex relative to $q$ can be naturally understood as generalizations of the relative entropy of $x$ to $q$ and are therefore denoted by $H_{(\cdot)}(x\|q)$. Finally, nonuniformity monotones of $x$ will be denoted $I_{(\cdot)}(x)$.  


\subsubsection{Examples, including order-$p$ R\'{e}nyi nonuniformities}
\label{SubsecRenyip}

Table~\ref{table:NUmonotones} provides examples of nonuniformity monotones, constructed from various convex functions using Theorem~\ref{thm:nonuniformitymonotones}. These are derived as monotones for the classical resource theory, but, by Corollary~\ref{corollary:classicalmonotonetoquantummonotone}, we can evaluate these monotones for the spectrum of a quantum state to obtain a monotone for the quantum resource theory. 

\begin{table*}[htdp]{A sampling of nonuniformity monotones derived from convex functions on the reals}
\begin{center}
\begin{tabular}{|c|c|c|c|}
\hline
Function pair & Schur-convex function on $\bbR^d_+$ & Schur-convex function relative to $q$ & Classical nonuniformity monotone \\
\hline
\hline
$g(a)$ (convex);& $-H_{(\cdot)}(x)\assign f \circ G_d (x)$ & $H_{(\cdot)}(x\|q) \assign f \circ G(x\|q)$  &  $I_{(\cdot)}(x) \assign  H_{(\cdot)} \left( x\Big\|m^{(d_x)}\right)$ \\
$f(t)$ (increasing)& $\displaystyle = f\left( \sum\limits_{i=1}^d g(x_i) \right)$ & $\displaystyle = f\left(\sum\limits_{i=1}^{d_q} q_i\, g\left(\frac{x_i}{q_i}\right) \right)$  & $= f \left(\sum\limits_{i=1}^{d_x} \frac{1}{d_x}\, g\left(d_x x_i\right) \right)$
 \\
\hline
\hline
$\begin{array}{c}g(a)=a^2 \\ f(t)=t\end{array}$ & $\displaystyle\sum_{i=1}^d x_i^2$ & $\displaystyle\sum_{i=1}^{d_q} \frac{x_i^2}{q_i}$ & $\displaystyle d_x\sum_{i=1}^{d_x} x_i^2$ \\
\hline

$\sqrt{1+a^2}$ &&&Amato index \\
$t$ & $ \displaystyle\sum\limits_{i=1}^d\sqrt{1+x_i^2}$ &$\displaystyle\sum\limits_{i=1}^{d_q}q_i \sqrt{1+ \frac{x_i^2}{q_i^2}}$ &$\displaystyle I_{\textnormal{Amato}}(x)\assign\sum\limits_{i=1}^{d_x}\sqrt{\frac1{d_x^2}+x_i^2}$ \\
\hline

$a \log a$ & Shannon negentropy & Relative Shannon entropy to $q$ & Shannon nonuniformity \\
$ t$ & $\displaystyle -H(x)\assign\sum\limits_{i=1}^d x_i \log x_i$ & $\displaystyle H(x\|q) := \sum\limits_{i=1}^{d_q} x_i \log\left(\fr{x_i}{q_i}\right)$& $\begin{array}{rl}I(x)\assign&H\left(x\Big\|m^{(d_x)}\right)\\=&\log d_x - H (x)\end{array} $
 \\
\hline

$- \log a$  & Burg negentropy & Relative Burg entropy to $q$ & Burg nonuniformity \\
$t$ 
& $\displaystyle -H_{\textnormal{Burg}} (x) := - \sum\limits_{i=1}^d \log x_i  $ 
& $ \begin{array}{rl}H_{\textnormal{Burg}} (x\| q)&:= \displaystyle \sum\limits_{i=1}^{d_q} q_i \log \left(\fr{q_i}{x_i}\right)\\&\left(=H(q\|x)\right)\end{array}$
& $\begin{array}{rl}I_{\textnormal{Burg}} (x)&:= H_{\textnormal{Burg}}\left(x\Big\|m^{(d_x)}\right)\\&=\log d_x-H_{\textnormal{Burg}}(x)\\\Big(&\left.=H\left(m^{(d_x)}\Big\|x\right)\right)\end{array}$\\
\hline

$\pm a^p$ & Order-$p$ Tsallis negentropy & Order-$p$ Relative Tsallis entropy to $q$ & Order-$p$ Tsallis nonuniformity \\
$\begin{array}{c}\displaystyle\frac{\sgn(p)}{p-1}\left(\strut\pm t-1\right) \\ {\scriptstyle p\in\mathbb{R}\setminus\{0,1\}}\end{array}$ & $\displaystyle -H^{\textnormal{Ts}}_p(x)\assign\frac{\sgn(p)}{p-1}\left(\sum_{i=1}^{d} x_i^p-1\right)$&$\displaystyle\begin{array}{rl}H^{\textnormal{Ts}}_p(x\|&q)\assign\\ \displaystyle\frac{\sgn(p)}{p-1}&\left(\sum\limits_{i=1}^{d_q}x_i^p q_i^{1-p}-1\right)\end{array}$&$\begin{array}{l}I^{\textnormal{Ts}}_p(x)\assign H^{\textnormal{Ts}}_p\left(x\Big\|m^{(d_x)}\right)\\= d_x^{p-1} \left( H^{\textnormal{Ts}}_p\left(m^{(d_x)}\right) - H^{\textnormal{Ts}}_p(x)\right)\end{array}$\\
\hline
\hline
 
$\pm a^p$ & Order-$p$ R\'enyi negentropy & Order-$p$ Relative R\'enyi entropy to $q$ & Order-$p$ R\'enyi nonuniformity \\

$\begin{array}{c}\displaystyle\frac{\sgn(p)}{p-1}\log\left(\strut \pm t\right) \\ {\scriptstyle p\in\mathbb{R}\setminus\{0,1\}}\end{array}$
& $\displaystyle -H_p(x)\assign\frac{\sgn(p)}{p-1}\log \sum_{i=1}^d x_i^p $ & $\displaystyle H_p(x\|q)\assign\frac{\sgn(p)}{p-1} \log \left(\sum\limits_{i=1}^{d_q} x_i^p q_i^{1-p}\right)$ & $\begin{array}{rl}I_p(x)\assign&H_p\left(x\Big\|m^{(d_x)}\right)\\=&\sgn(p)\log d_x - H_p(x)\end{array}$\\
\hline

$\delta_{a,0}-1\equiv -a^0$& Order-0 R\'enyi negentropy & Order-0 Relative R\'enyi entropy & Order-0 R\'enyi nonuniformity \\ 
$-\log(-t)$&$\begin{array}{c}\textnormal{(neg-max-entropy)}\\-H_0 (x)\assign -\log\left|\supp (x)\right|\end{array}$ &$\begin{array}{c}\textnormal{(relative max-entropy) to $q$}\\ \displaystyle H_0(x\|q)\assign - \log \left( \sum\limits_{i\in \supp(x)} q_i \right)\end{array}$ &$ \begin{array}{rl}I_0(x)\assign& H_0\left(x\Big\|m^{(d_x)}\right)\\=&\log d_x - H_0 (x)\end{array}$\\
\hline

& Order-$\infty$ R\'enyi negentropy & Order-$\infty$ Relative R\'enyi entropy & Order-$\infty$ R\'enyi nonuniformity \\ 
&$\begin{array}{c}\textnormal{(neg-min-entropy)}\\-H_\infty(x)\assign\log x_1^{\downarrow}\end{array}$ &$\begin{array}{c}\textnormal{(relative min-entropy) to $q$}\\ \displaystyle H_\infty(x\|q)\assign\log\max\limits_{i=1,\dots,d_q}\left(\fr{x_i}{q_i}\right)\end{array}$ &$ \begin{array}{rl}I_\infty(x)\assign& H_\infty\left(x\Big\|m^{(d_x)}\right)\\=&\log d_x - H_\infty (x)\end{array}$\\
\hline

& Order-($-\infty$) R\'enyi negentropy & Order-($-\infty$) Relative R\'enyi entropy & Order-($-\infty$) R\'enyi nonuniformity \\ 
 & $\begin{array}{rl}-H_{-\infty} (x)\assign&-\log\min\limits_{i=1,\dots,d} x_i\\=&-\log x_d^{\downarrow}\end{array}$ & $\begin{array}{rl}H_{-\infty}(x\|q)\assign& \displaystyle -\log\min\limits_{i=1,\dots,d_q}\left(\fr{x_i}{q_i}\right)\\\big(=& H_{\infty}(q\|x)\big)\end{array}$& $\begin{array}{rl}I_{-\infty}(x)\assign&H_{-\infty}\left(x\Big\|m^{(d_x)}\right)\\=&-\log d_x - H_{-\infty} (x) \\\Big(=&H_{\infty}\left(m^{(d_x)} \Big\| x\right)\Big)\end{array}$\\

\hline
\hline
\end{tabular}
\caption{The first column specifies a function $g$ which is convex and another function $f$ which is increasing (on the relevant domains of definition), and the second specifies the Schur-convex family that they define. In the third column, we define the associated relative Schur-convex function to a distribution $q$. Finally, in the fourth column, we define a nonuniformity monotone as the relative Schur-convex function to the uniform state. The last four rows concern the nonuniformity monotones based on R\'enyi entropies. The symbol $\pm$ denotes a sign that depends on $p$. It is $+$ for $p<0$ and $p>1$, and $-$ for $0<p<1$ (formally, it equals $\sgn(p(p-1))$).}
\label{table:NUmonotones}
\end{center}
\end{table*}

Suppose we take our convex function $g:\bbR_+ \to \bbR$ to be
\[
g(a)\assign a^2.
\]
We first construct from this function a family of Schur-convex functions $\{ G_d : d\in \bbZ_+\}$ where $G_d:\bbR^d_+\to\bbR$, using the method of Lemma~\ref{lemma:SchCon}:
\[
G_d(x)\assign\sum\limits_{i=1}^d x_i^2.
\]
We then derive, for each $G_d$, the associated Schur-convex function relative to some distribution $q\in\rR$, denoted $G(\cdot\|q)$:
\[
   G(x\|q)\assign\sum\limits_{i=1}^{d_q}\fr{x_i^2}{q_i}.
\]
Finally, we construct a nonuniformity monotone $M_g:\rR\to\bbR$ from this function by taking $q$ to be the uniform distribution of dimension equal to that of the argument:
\[
M_g(x)\assign G(x\|m^{(d_x)})
= d_x\sum\limits_{i=1}^{d_x}x_i^2.
\]

The reader can easily follow this process of definition for each of the examples provided in Table~\ref{table:NUmonotones}.  We here provide some comments on these.

The Amato index is the nonuniformity monotone defined from the convex function $g(a)\assign\sqrt{1+a^2}$.  It is a simple example of a monotone which is also easily justified by considering the geometry of the Lorenz curve, as we will demonstrate in Section~\ref{subs:Geo}.

Perhaps the most paradigmatic example of a Schur-convex function is the negative of the Shannon entropy, equivalently, the Shannon \emph{negentropy}, which is generated by Lemma~\ref{lemma:SchCon} from the convex function $g(a)=a \log a$ (in this article, all logarithms are base 2). Using the standard notational convention of $H(x)$ for the Shannon entropy, the Shannon negentropy is simply
$$-H(x)\assign\sum\limits_{i=1}^{d_x} x_i \log x_i.$$
The relative entropy that $g(a)=a \log a$ defines is 
\[
H(x\|q) \assign \sum\limits_{i=1}^{d_x} x_i \log \left(\frac{x_i}{q_i}\right),
\]
which we term the \emph{relative Shannon entropy} of $x$ to $q$.  It is also known as the Kullback--Leibler divergence of $x$ from $q$ \cite{kullback1951information}. 
This yields the nonuniformity monotone
\begin{eqnarray*}
I(x) &\assign& H(x\|m)= \sum\limits_{i=1}^{d_x} x_i \log \left(d_x x_i \right)\\
&=& \log d_x - H(x),
\end{eqnarray*}
which we term the \emph{Shannon nonuniformity} of $x$.

Starting from the convex function $g(a) = -\log a$, we obtain as the corresponding Schur-convex function the negative of the \emph{Burg entropy} \cite{Burg},
\[
-H_{\textnormal{Burg}} (x) := - \sum\limits_{i=1}^{d_x} \log x_i .
\]
 We find the relative Schur-convex function that this defines to be
\[
H_{\textnormal{Burg}} (x\| q) := \sum\limits_{i=1}^{d_x} q_i \log\left(\fr{q_i}{x_i}\right).
\]
It is straightforward to verify that 
\[
H_{\textnormal{Burg}}(x\|q) = H(q\|x).
\]
So the  relative Burg entropy of $x$ to $q$ is just the relative Shannon entropy of $q$ to $x$.  

In the context of athermality theory, where $q$ is the thermal state, the analogous quantity has been discussed in \cite{janzing2000thermodynamic}.

As the next example, consider the one-parameter family of functions
\begin{equation}
   g_p(a):=\pm a^p\qquad (p\in\mathbb{R}\setminus\{0,1\}),
   \label{eqGP}
\end{equation}
where the sign is chosen such that $g_p$ is convex (that is, $+$ for $p<0$ and $p>1$, and $-$ for $0<p<1$).
We first obtain that $G^{(p)}(x):=\pm\sum_{i=1}^{d_x} x_i^p$ is Schur-convex. 
Setting $f(t)\assign \sgn(p)(\pm t -1)/(p-1)$ yields the \emph{order-$p$ Tsallis negentropy}~\cite{Tsallis} $-H_p^{\rm Ts}:=f\circ G^{(p)}$; that is,
\[
   -H_p^{\rm Ts}(x)\assign \frac{\sgn(p)}{p-1}\left(\sum_{i=1}^{d_x} x_i^p - 1\right)
\]
which is therefore Schur-convex. Note that the order-$p$ Tsallis entropy is usually defined as $\frac1{p-1}\left(1-\sum_i x_i^p\right)$ \cite{Tsallis} so that our terminology only coincides with the standard one for $p\ge0$, and differs by a negative sign for $p<0$.  Nonetheless, we are here adopting the convention that the term `entropy'  (respectively, `negentropy') should be reserved for functions that are Schur-concave (respectively, Schur-convex).

Applying Theorem~\ref{thm:nonuniformitymonotones} (which remains valid under composition with $f$), we
obtain the order-$p$  relative Tsallis entropy to $q$~\cite{TsallisRel}:
\[
H^{\textnormal{Ts}}_p(x\|q) 
\assign\frac{\sgn(p)}{p-1}\left(\sum\limits_{i=1}^{d_x}x_i^p q_i^{1-p}-1\right).
\]
(Again, we deviate from the standard definition by a negative sign for $p<0$.)
Finally, taking $q$ to be the uniform distribution, we obtain what we call the order-$p$ Tsallis nonuniformity: 
\begin{align}
I^{\textnormal{Ts}}_p(x)
\assign& H^{\textnormal{Ts}}_p\left(x\Big\|m^{(d_x)}\right) \nonumber\\
=& d_x^{p-1} \left( H^{\textnormal{Ts}}_p\left(m^{(d_x)}\right) - H^{\textnormal{Ts}}_p(x)\right). \nonumber
\end{align}

An interesting fact about 
$g_p(a)$ is that in the limit $p\to1$, it converges to $a\ln a$, so that the order-$p$ Tsallis nonuniformity converges to the Shannon nonuniformity (up to the multiplicative factor $\ln2$) in the limit $p\to 1$.
The $p\to0^+$ and $p\to\infty$ limits are also important because they are parent quantities of nonuniformity monotones based on the max- and min-entropies, which will be discussed shortly.

Repeating the construction above, with $g_p$ as in~(\ref{eqGP}), but alternative choice of function
$f(t):=\sgn(p)\log(\pm t)/(p-1)$, we obtain the order-$p$ R\'enyi neg\-entropy
\[
   -H_p(x)\assign \frac{\sgn(p)}{p-1}\log\sum_{i=1}^{d_x} x_i^p,
\]
which is again Schur-convex. Moving on, we construct
the order-$p$ relative R\'enyi entropy of $x$ to $q$,
\[
   H_p(x\|q)\assign \frac{\sgn(p)}{p-1}\log\sum_{i=1}^{d_x} x_i^p q_i^{1-p},
\]
and the nonuniformity monotone
\[
   I_p(x)\assign H_p\left(x\left\|m^{(d_x)}\right.\right) = \sgn(p)\log d_x-H_p(x)
\]
which we call the order-$p$ R\'enyi nonuniformity of $x$.

Although the functions $-H_p$, $H_p(\cdot\|q)$ and $I_p$ are not defined at $p\in\{0,1\}$, they do converge in these limits and in the limits $p\to\pm\infty$. 

The $p\to1$ limit is the least interesting: it just yields the functions $H$, $H(\cdot\|q)$ and $I$ (the Shannon negentropy, Shannon relative entropy and Shannon nonuniformity).

The limits $p\to0^+$ and $p\to\infty$ are of special significance. They will reappear below in different contexts, and are related to the min- and max-entropies. Since there are different versions of these entropies in the literature, we spell out the details in formal definitions. At this point, we are using the definitions by Renner \cite{RennerThesis}, but we will later depart from his conventions in the case of smoothed entropies.

The limit $p\to0^+$ yields the Schur-convex function
\[
-H_0(x)\assign -\log\left|\supp (x)\right|,
\]
which is the negative of what is traditionally called the \emph{max-entropy of $x$} (because $H_0(x)$ attains the maximum value among the $H_p(x)$)~\cite{gorban2010entropy}.  Hence we call this function the \emph{neg-max-entropy}. The corresponding Schur-convex function relative to $q$ is  called the \emph{relative max-entropy of $x$ to $q$}\footnote{It has also been called the min-relative entropy in \cite{datta2009min}  because among relative entropies, it has the minimum value.}:
\[
H_0(x\|q)\assign-\log\left( \sum\limits_{i\in \supp(x)} q_i \right),
\]
whence the resulting nonuniformity monotone is
\begin{align}\label{eq:defnI0}
I_0(x)&\assign H_0\left(x\Big\|m^{(d_x)}\right)\nonumber \\
&=\log d_x-\log\left|\supp(x)\right| \nonumber \\
&=\log d_x- H_0(x). 
\end{align}

The limit $p\to\infty$, on the other hand, yields the Schur-convex function
\begin{align}\label{Hinfty}
-H_\infty(x)\assign&\log\max\limits_{i\in\left\{1,\dots,d_x\right\}}x_i\nonumber\\
=&\log\left(x^\downarrow_1\right),
\end{align}
which is the negative of what is traditionally known as the \emph{min-entropy of $x$}, and which we term the \emph{neg-min-entropy}. The corresponding Schur-convex function relative to $q$ is what is called the \emph{relative max-entropy of $x$ to $q$} \cite{datta2009min}:
\[
H_\infty(x\|q)\assign\log\max\limits_{i \in \left\{1, \dots,d_x\right\}}\left(\fr{x_i}{q_i}\right),
\]
whence the resulting nonuniformity monotone is
\begin{align}\label{Iinfty}
I_\infty(x)&\assign H_\infty\left(x\Big\|m^{(d_x)}\right)\nonumber\\
&=\log d_x+\log\left(x^\downarrow_1\right) \nonumber\\
&=\log d_x - H_{\infty}(x).
\end{align}

Finally, consider the limit $p\to - \infty$.  Here we find that

\begin{align}
-H_{-\infty}(x) &\assign-\log\min\limits_{i\in\left\{1,\dots,d_x\right\}} x_i, \nonumber \\
H_{-\infty}(x\|q) &\assign-\log\min\limits_{i\in\left\{1,\dots,d_x\right\}}\left(\tfrac{x_i}{q_i}\right), \nonumber \\
I_{-\infty}(x) &\assign H_{-\infty}\left(x\Big\| m^{(d_x)}\right) \nonumber \\
 &=-\log d_x - H_{-\infty} (x).\nonumber
\end{align}

It is straightforward to verify that 
\[
H_{-\infty}(x\|q) = H_{\infty}(q\|x).
\]

\begin{remark}\label{rem:MaxWgtMix}
$H_{-\infty}(x\|q)$ admits of a simple operational interpretation: $2^{-H_{-\infty}(x\|q)}$ is the maximum probability with which the state $q$ can appear in a mixture of states that yields $x$\footnote{The analogue of this property for the quantum version of this quantity has been noted in~\cite{DattaPItalk}.} .   The proof is straightforward:  if the state $q$ appears with weight $w$ in a mixture of states that yields $x$, then $x = w q +(1-w) z$ for some state $z$.  In this case, all the components of $x-wq$ must be positive.  Consequently,
$$w \le \min\limits_{i\in\left\{1,\dots,d_x\right\}} \left(\tfrac{x_i}{q_i}\right) = 2^{-H_{-\infty}(x\|q)}.$$
 
Because the order-($-\infty$) Renyi nonuniformity, $I_{-\infty}$, is obtained by setting $q=m$ in $H_{-\infty}(x\|q)$, it follows that this monotone quantifies the largest probability with which the uniform state $m$ can appear in a convex decomposition of $x$.\footnote{As such, it is analogous to the one from entanglement theory called ``the separability of a state'' and which is defined as the maximum probability with which a separable state appears in a convex decomposition of the state~\cite{Lewenstein1998}.}
\end{remark}
 
The R\'{e}nyi $p$-nonuniformities will have an important role to play in the discussion of state conversion with a catalyst.  In anticipation of those results, we note two facts about these monotones.  First, they are all additive, that is, 
\begin{equation}\label{eq:Renyiadditivity}
I_p(x \otimes y) = I_p(x) + I_p(y)
\end{equation}
for all $p \in \bbR$. This follows from the simple identity
\[
   \sum_{i,j}(x_i y_j)^p =\left(\sum_i x_i^p\right)\left(\sum_j y_j^p\right).
\]
Secondly, the different R\'{e}nyi entropies are ordered as follows:
\begin{align}
I_{-\infty}(x) \ge  I_{p'}(x) \ge I_{p}(x),\;\;&p'<p<0,\label{eq:order1}\\
I_0(x) \le I_{p}(x) \le I_{p'}(x) \le I_{\infty}(x),\;\;&0<p<p'. \label{eq:order2}
\end{align}
Inequalities~\eqref{eq:order2} follow from a similar order relation over the $p$-norms $\|x\|_p\equiv \left(\sum_i x_i^p\right)^{1/p}$,
while \eqref{eq:order1} follow from Lemma 17 of \cite{1ShotAtherm2}. There, it is proven that for any distribution $x$ with full support, $H_p(x)$, considered as a function of $p$, is nondecreasing over the range $p<0$, whereas for any $x$ whose support is not the full sample space, $H_p(x)\to-\infty$ for all $p<0$. Consider first the case of an $x$ with full support. We find that
\begin{align}
\lim\limits_{p\to0^-}H_p(x)&=-\log d_x;\nonumber\\
\lim\limits_{p\to0^+}H_p(x)=:&H_0(x)=\log d_x.\nonumber
\end{align}
Therefore,
\begin{align}
\lim\limits_{p\to0^-}I_p(x)=-\log d_x&-\lim\limits_{p\to0^-}H_p(x)=0;\nonumber\\
I_0(x)=\log d_x-&H_0(x)=0.\nonumber
\end{align}

On the other hand, for some $x$ whose support is a proper subset of the sample space,
\begin{align}
\lim\limits_{p\to0^-}I_p(x)=&-\log d_x-\lim\limits_{p\to0^-}H_p(x)=\infty;\nonumber\\
I_0(x)=\log d_x&-H_0(x)=\log\left(\frac{d_x}{|\supp(x)|}\right).\nonumber
\end{align}

It follows from the inequalities in \eqref{eq:order1}, \eqref{eq:order2} that for any $x$ (fully-supported or otherwise),
\begin{align}\label{eq:infsupRenyi}
I_0(x) &= \inf_{p\in \bbR} I_p(x)\nonumber \\
I_{\infty}(x) &= \sup_{p \in \bbR_+} I_p(x).
\end{align}

See also \cite{vEH2010} for a treatment of the connection of the R\'enyi relative entropies with majorization and Lorenz curves.

\subsection{Nonuniformity monotones arising from the geometry of the Lorenz curve}
\label{subs:Geo}
It is also interesting to consider nonuniformity monotones that are inspired by features of the Lorenz curve. Lorenz curves were originally introduced in economics to characterize income inequality (MOA). There are a number of measures of income inequality that were defined in terms of the Lorenz curve. These immediately yield interesting nonuniformity monotones. 

The \emph{Gini index} of $x$ is the area of the Lorenz curve of $x$ over the diagonal \cite{gini1912variability}\footnote{In fact, the Gini index is usually defined as twice this area.} (see
also F.4.a on p.\ 563 of MOA). This is clearly a monotone because if the Lorenz curve of $x$ is nowhere below the Lorenz curve of $y$, then the area over the diagonal of the Lorenz curve of $x$ is not less than that of $y$. A straightforward calculation yields the value of the Gini index of $x$ in terms of its components:
$$M_{\textnormal{Gini}}(x)=\frac{d_x-1}{2d_x}-\frac1{d_x}\sum\limits_{i=2}^{d_x}(i-1)x^\downarrow_i.$$

The \emph{Schutz index} of $x$ is the maximum vertical deviation between the Lorenz curve of $x$ and the line joining (0,0) to (1,1)\cite{Schutz1951} (see also F.4.f on p.\ 565 of MOA). If the Lorenz curve of $x$ is everywhere above the Lorenz curve of $y$, the vertical deviation for $y$ at any abscissa is no greater than that of $x$ at the same abscissa. Consequently, the same holds for the maximum vertical deviation. By virtue of the monotonicity of the vertical deviation over any of the linear segments of the Lorenz curve of $x$, the Schutz index of $x$ is given in terms of its components as
$$M_{\textnormal{Schutz}}(x)=\max\limits_{k\in 1,\dots,d_x}\left(\sum\limits_{i=1}^kx^\downarrow_i-\frac k{d_x}\right).$$

The length of the Lorenz curve of $x$ is also a nonuniformity monotone.  In the context of income inequality, it was proposed in  \cite{amato1968metodologia} and later in ~\cite{kakwani1980income} (see also F.4.h on p.\ 565 of MOA).  We will call it the \emph{Amato index}.  It is seen to be a monotone by virtue of the fact that the Lorenz curves of $x$ and $y$ have the same boundary points and are convex, so that if the Lorenz curve of $x$ is nowhere below the Lorenz curve of $y$, it cannot be shorter. The Amato index of $x$ is given by
$$M_{\textnormal{Amato}}(x)=\sum\limits_{i=1}^{d_x}\sqrt{\frac1{d_x^2}+x_i^2}.$$
It was noted in Section~\ref{subs:ConMon} that this monotone can be constructed from the convex function $g(a)=\sqrt{1+a^2}$.

\begin{table*}[htdp]{A sampling of nonuniformity monotones derived from the geometry of the Lorenz curve}
\begin{center}
\begin{tabular}{|c|c|c|}
\hline 
Geometric feature of Lorenz curve& Nonuniformity monotone & Comments \\
\hline \hline
Height at point $u$ & $M_{\textnormal{Lorenz},u}(x):=L_x(u)=S_{\lfloor d_x u \rfloor}(x)
$   &   \\
$u\in [ 0,1]$ & $+ \left[ S_{\lfloor d_x u \rfloor+1}(x)
- S_{\lfloor d_x u \rfloor}(x) \right](d_x u- \lfloor d_x u \rfloor)$ & \\
& where $S_{k}\left(  x\right): =\sum_{i=1}^{k}x_{i}^{\downarrow}$ &\\
\hline
Area above diagonal & $M_{\textnormal{Gini}}(x)=\frac{d_x-1}{2d_x}-\frac1{d_x}\sum\limits_{i=2}^{d_x}(i-1)x^\downarrow_i$ & Gini index \\
\hline
Maximum vertical deviation & $M_{\textnormal{Schutz}}(x)=\max\limits_{k\in \{1,\dots,d_x\}}\left(\sum\limits_{i=1}^kx^\downarrow_i-\frac k{d_x}\right)$ & Schutz index \\
from diagonal & & \\ 
\hline
Length of Lorenz curve &$M_{\textnormal{Amato}}(x)=\sum\limits_{i=1}^{d_x}\sqrt{\frac1{d_x^2}+x_i^2}$ & Amato index \\
\hline
Slope of on-ramp & $m^{\textnormal{on}}(x)=2^{I_\infty(x)}$ & $I_\infty(x)$ is the order-($\infty$) Renyi nonuniformity \\
\hline
Length of tail & $\ell(x) = 1-2^{-I_0(x)}$  & $I_0(x)$ is the order-($0$) Renyi nonuniformity \\
\hline 
Negative of slope of off-ramp & $-m^{\textnormal{off}}(x)= -2^{-I_{-\infty}\left(x\right)}$ & $I_{-\infty}(x)$ is the order-($-\infty$) Renyi nonuniformity \\
\hline
\end{tabular}

\caption{
The first column specifies a geometric property of the Lorenz curve of the classical distribution $x$. The second specifies the classical nonuniformity monotone that this property defines. }

\label{table:LCmonotones}
\end{center}
\end{table*}

The nonuniformity monotones corresponding to various limits of the family of Renyi nonuniformities, specifically $I_{\infty},I_0$ and $I_{-\infty}$, have very simple interpretations in terms of the geometry of the Lorenz curve.  Indeed, this connection allows us to deduce that they are nonuniformity monotones merely from the characterization of noisy-majorization in terms of one Lorenz curve being everywhere not less than another.

For any Lorenz curve $L_x$, define the \emph{on-ramp} to be the first segment of the curve.  Its slope is clearly determined by the largest eigenvalue of $x$, $x_1^{\downarrow}$. Specifically, the slope of the on-ramp is $m^{\textnormal{on}}(x):=x^{\downarrow}_1 / (1/d_x)$.  Recalling the definition of the order-($\infty$) Renyi nonuniformity, Eq.~\eqref{Iinfty}, we conclude that
\begin{equation}
 m^{\textnormal{on}}(x)
 =2^{I_\infty(x)}.
 \label{eqSlope}
\end{equation}
If $x$ is mapped to $y$ by noisy operations, the Lorenz curve $L_y$ must lie on or below $L_x$, and consequently, the slope of the on-ramp of $L_y$ must be less than or equal to that of $L_x$. This is an intuitive way to see why $I_\infty$ is a monotone.

Similarly, the \emph{tail} of a Lorenz curve $L_x$, i.e.\ the right-most part where the curve is flat and attains the value one,
is related to the number of zero eigenvalues in $x$. The length $\ell(x)$ of this tail is $\ell(x) = \frac 1 {d_x}\left( d_x - | \textnormal{supp}(x)| \right)$, which by Eq.~\eqref{eq:defnI0}, implies that
\begin{equation}
\ell(x) = 1-2^{-I_0(x)}.
 \label{eqTailLength}
\end{equation}
Again, if $x$ is mapped to $y$ by noisy operations, the Lorenz curve $L_y$ must lie on or below $L_x$, and this
is possible only if the tail lengths satisfy $\ell(y)\leq \ell(x)$. This is a simple way of seeing why $I_0$ is a nonuniformity monotone.

Finally, we mention another monotone that is given by the geometry of the Lorenz curve. Defining the \emph{off-ramp} of the curve to be the final segment --- the one that touches the (1,1) point --- the monotone is the slope of the off-ramp. This is zero if the Lorenz curve has a nontrivial tail and is nonzero otherwise. 

 This slope is given by $m^{\textnormal{off}}(x)=2^{\log d_x + \log x^{\downarrow}_{d_x}}$
where $x^{\downarrow}_{d_x}$ is its \emph{smallest} component of $x$
(recall that the min-entropy is defined in terms of the \emph{largest} component of $x$).
Recalling the definition of $I_{-\infty}\left(x\right)$, the order-($-\infty$) Renyi nonuniformity, Eq.~\eqref{Iinfty}, we conclude that
\begin{align}
m^{\textnormal{off}}(x)&=2^{-I_{-\infty}\left(x\right)}.
\end{align}
Clearly, if the Lorenz curve of $x$ is nowhere below that of $y$, the off-ramp slope of $x$ is less than or equal to that of $y$.  Hence the negative of the off-ramp slope is a nonuniformity monotone, which is a simple way of seeing that $I_{-\infty}(x)$ is a monotone.

\subsection{Nonuniformity monotones from other Schur-convex functions}
\label{subs:OtherSC}
Not all Schur-convex functions lend themselves to a decomposition as the sum of a single convex function evaluated on the components, but we can still construct nonuniformity monotones from some of these.

\begin{proposition}
Given a distribution, 
 $q \in \bbR_+^{d_q}$,
and a non-negative function $g:\bbR_+\rightarrow\bbR_+$, the function $\Gamma(\cdot\|q):\bbR_+^{d_q} \to \bbR$ defined by
 \begin{equation}\label{logmajorization}
 \Gamma(x\|q)=\prod_{j=1}^{d_q}\left[g\left(\frac{x_j}{q_j}\right)\right]^{q_j}
 \end{equation}
is Schur-convex (Schur-concave) relative to $q$ if $\log g$ is convex (concave).
\end{proposition}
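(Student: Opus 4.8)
The plan is to reduce the claim to the already-established additive case of Lemma~\ref{thm:Schurconvexrelativetoq} by passing to logarithms. Since $g$ is non-negative, $\Gamma(x\|q)\ge 0$, and taking the (base-2) logarithm of Eq.~\eqref{logmajorization} gives
\begin{equation}
\log\Gamma(x\|q)=\sum_{j=1}^{d_q}q_j\log g\left(\frac{x_j}{q_j}\right)=\sum_{j=1}^{d_q}q_j\,h\left(\frac{x_j}{q_j}\right),
\end{equation}
where $h\assign\log g$. This is exactly the form of the function treated in Lemma~\ref{thm:Schurconvexrelativetoq}, with the single-variable convex function appearing there now played by $h$. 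So the entire structural content of the proof is the slogan ``take the log, apply Lemma~\ref{thm:Schurconvexrelativetoq}, exponentiate.''

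First I would treat the Schur-convex case. If $\log g=h$ is convex, then Lemma~\ref{thm:Schurconvexrelativetoq}, applied with $h$ in place of its convex function, immediately yields that $\log\Gamma(\cdot\|q)$ is Schur-convex relative to $q$. The map $t\mapsto 2^t$ is strictly increasing on $\bbR$, and—as noted in the discussion of Schur-convexity-preserving functions—post-composing a function that is Schur-convex relative to $q$ with an increasing function preserves Schur-convexity relative to $q$. Hence $\Gamma(\cdot\|q)=2^{\log\Gamma(\cdot\|q)}$ is Schur-convex relative to $q$. For the Schur-concave case, if $h=\log g$ is concave then $-h$ is convex, so the same lemma shows that $-\log\Gamma(\cdot\|q)=\sum_j q_j(-h)(x_j/q_j)$ is Schur-convex relative to $q$; equivalently $\log\Gamma(\cdot\|q)$ is Schur-concave relative to $q$. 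Post-composing once more with the increasing exponential $2^{(\cdot)}$ preserves Schur-concavity relative to $q$ by the same reasoning, so $\Gamma(\cdot\|q)$ is Schur-concave relative to $q$.

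The one point that requires care—and the only genuine obstacle—is the well-definedness of the logarithm when $g$ is merely non-negative rather than strictly positive. If $g$ vanishes at some argument then $\log\Gamma$ takes the value $-\infty$ there, and the composition-with-an-increasing-function step must be read in the extended reals. I would handle this by first proving the result under the hypothesis $g>0$ on the relevant domain, where everything above is literal, and then recovering the general non-negative case by a limiting argument: replace $g$ by $g+\varepsilon$ (note $\log(g+\varepsilon)$ is convex, resp.\ concave, whenever $\log g$ is, since adding a positive constant preserves the relevant one-sided inequalities in the limit), apply the strictly-positive case to get $x\succ_q y\Rightarrow\Gamma_\varepsilon(x\|q)\ge\Gamma_\varepsilon(y\|q)$, and let $\varepsilon\to 0^+$, using that each factor $[g+\varepsilon]^{q_j}$ converges to $[g]^{q_j}$ so the inequality passes to the limit. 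Alternatively one simply observes that the desired inequality is trivial whenever the dominating side is $+\infty$. This boundary bookkeeping aside, the argument is just the three-line reduction through Lemma~\ref{thm:Schurconvexrelativetoq}.
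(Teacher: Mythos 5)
Your proof is correct and follows essentially the same route as the paper's: take the logarithm to reduce to Lemma~\ref{thm:Schurconvexrelativetoq}, then use monotonicity of the exponential to transfer Schur-convexity (or -concavity) relative to $q$ back to $\Gamma$. Your additional care with the case where $g$ vanishes is a genuine refinement the paper omits, but it does not change the substance of the argument.
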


\begin{proof}
By taking the log on both sides of~\eqref{logmajorization} and using Lemma~\ref{thm:Schurconvexrelativetoq} we get that 
$\log(\Gamma(x\|q))$ is Schur-convex. Since the $\log$ function is monotonically increasing we get that $\Gamma(x\|q)$ must also be Schur-convex. The concave case follows the same lines.
\end{proof}

As an example, consider the function $g:\bbR_+\rightarrow\bbR_+$ defined by $g(a)=a$. Since $\log a$ is concave, the proposition above implies that the function $\prod_{j=1}^{d_q}\left(\frac{x_j}{q_j}\right)^{q_j}$ is Schur-concave relative to $q$ and therefore
$$
\Gamma_{WG}(x\|q)=1 - \prod_{j=1}^{d_q}\left(\frac{x_j}{q_j}\right)^{q_j}
$$
is Schur-convex relative to $q$. Note that the product is the \emph{weighted} geometric mean (hence `WG') of the elements $x_j/q_j$ with weights $q_j$.

For $q=m^{(d_x)}$ this example yields the nonuniformity monotone 
$$M_G(x)\assign1-d_x\left( \prod\limits_{i=1}^{d_x}x_i \right)^{\frac1{d_x}}.$$
Because the weighted geometric mean is uniformly weighted in this case, it is just the geometric mean (hence the subscript `$G$').
It is the analogue for nonuniformity of the $G$-concurrence entanglement monotone~\cite{Gourconcurrence}. 

Note that the product of components, $\prod\limits_{i=1}^{d_x}x_i$, and the geometric mean of the components, $\left( \prod\limits_{i=1}^{d_x}x_i \right)^{\frac1{d_x}}$ are both Schur-concave functions, but they generalize to the same nonuniformity monotone, namely $M_G(x)$, because of the requirement of eq.~(\ref{eq:NUmonotoneAddm}).

In fact, the product of all the components is just one among a whole class of Schur-concave functions: the elementary symmetric polynomials in the components. The $k$th such polynomial is defined as $S_k^{(d)}:\mathbb R^d_+\to\mathbb R$,
\begin{align}
S_k^{(d)}(x):=&\sum\limits_{i_1<i_2\ldots<i_k}\prod\limits_{j=1}^kx_{i_j},
\label{eq:ElSym}
\end{align}
where the sum is over all $k$-tuples of distinct components.
We have not found a way to construct nonuniformity monotones from the Schur-concave function $S_{k}^{(d)}$ except for the special cases of $k=d$ and $k=2$ (the $k=1$ case is trivial).

Finally, we summarize in Table~\ref{table:OptlIntnsMonotones} the operational interpretations of some of these measures, interpretations that will be discussed in the rest of this article.

\begin{table*}[htdp]
\begin{center}
\begin{tabular}{|c|c|c|}
\hline 
Nonuniformity monotone & Operational Interpretation & Comments \\
\hline 
\hline
$I_{-\infty}(x)\assign-\log d_x - H_{-\infty}(x)$ & Quantifies maximum weight of uniform state in decomposition& See Remark~\ref{rem:MaxWgtMix} \\
\hline
$I_0(x)\assign \log d_x - H_0(x)$ & Quantifies single-shot distillable nonuniformity & See Lemma~\ref{lemma:distillation} \\
\hline
$I_\infty(x)\assign \log d_x - H_\infty(x)$ & Quantifies single-shot nonuniformity of formation & See Lemma~\ref{lemma:formation} \\
\hline
$I(x)\assign \log d_x - H(x)$& Quantifies asymptotic rate of interconversion & See Lemma~\ref{LemAsymptoticRate} \\
 & Describes total order of states under approximate catalysis & See Theorem~\ref{ThmApproxCatalysis} \\
\hline
\end{tabular}
\caption{Operational interpretation of some nonuniformity monotones}
\label{table:OptlIntnsMonotones}
\end{center}
\end{table*}
 
\section{Exact state conversion}\label{Sec:Exactstateconversion}

\subsection{A standard form of nonuniformity: sharp states}

It is useful to define a nonuniform state that can serve as a \emph{natural unit} for the resource in question.  For the resource of nonuniformity, the pure state of a two-level system, that is, the distribution $(1,0)$, is an obvious choice.  We refer to such a state as a \emph{bit of nonuniformity} or as a \emph{pure bit}.
 
The family of pure states of different dimensions also provides a natural standard relative to which one can quantify other states' nonuniformity.  A pure state of dimension $d$ corresponds to the distribution 
$$(1,\underbrace{0,\dots,0}_{d-1}).$$
Clearly, $k$ pure bits constitute a pure state of dimension $2^k$, however, pure states for dimensions that are not powers of $2$ cannot be represented as an integer number of pure bits.  

Pure states of different dimensions are themselves subsumed as a special case of a third family of nonuniform states, which we will call the \emph{sharp states}.  It is the latter set which is the most versatile and which we will take in this article to be the standard relative to which we judge the nonuniformity of states.  A sharp state is any distribution of the form
 \begin{equation}\label{eq:sharpstate}
(\underbrace{\underbrace{\tfrac{1}{d_u},\dots,\tfrac{1}{d_u}}_{d_u},0,\dots,0}_d),
\end{equation}
where $d,d_u \in \bbZ_+$ and $d_u \le d$, that is, a state of dimension $d$ that is uniformly distributed over $d_u$ physical states and assigns probability zero to the rest.

If we evaluate the Shannon nonuniformity of such a state, we find  $\log (d/d_u).$ Furthermore, one easily verifies that the order-0 and order-$\infty$ R\'{e}nyi nonuniformities are equal to the Shannon nonuniformity for this state.  It then follows from Eq.~\eqref{eq:infsupRenyi} that for all $p\in \bbR_+$ the order-$p$ R\'{e}nyi nonuniformity is equal to the Shannon nonuniformity for this state.
Indeed, the noisy equivalence class of this state is the set of all sharp states with the same ratio of $d$ to $d_u$ and therefore on the set of sharp states \emph{every} nonuniformity monotone can be expressed as a function of the Shannon nonuniformity.  We will therefore adopt the convention of refering to $\log (d/d_u)$ as simply \emph{the nonuniformity} of the sharp state.  Clearly, the possible values for the nonuniformity of a sharp state is just the image of the rationals greater than 1 under the logarithmic function.  

We will adopt as our canonical representative of each equivalence class of sharp states the one of lowest dimension within that class, and we will label it by its nonuniformity.  The canonical sharp state with nonuniformity $I$ will be denoted $s_I$.  Thus, we have
\begin{align}\label{eq:sharpstates}
s_0 &\assign (1) \\ 
s_{\log (3/2)} &\assign (1/2,1/2,0) \\
s_1 &\assign (1,0) \\
s_ {\log (7/3)} &\assign (1/3,1/3,1/3,0,0,0,0) \\
s_{\log 3} &\assign (1,0,0) \\
s_2 &\assign (1,0,0,0) 
\end{align}

Note that a uniform state is a sharp state with nonuniformity zero.  The canonical sharp state in this class, $s_0$, is the uniform state of smallest dimension, namely, dimension $1$ (which we could also denote by $m^{(1)}$) but it is of course noisy-equivalent with the uniform state $m^{(d)}$ of any dimension.  A pure bit is the canonical sharp state with nonuniformity 1, $s_1$.  A pure state of dimension $d$ is the canonical sharp state with nonuniformity $\log d$, $s_{\log d}$.

\begin{figure}[h]
\centering
\includegraphics[width=.48\textwidth,clip=true]{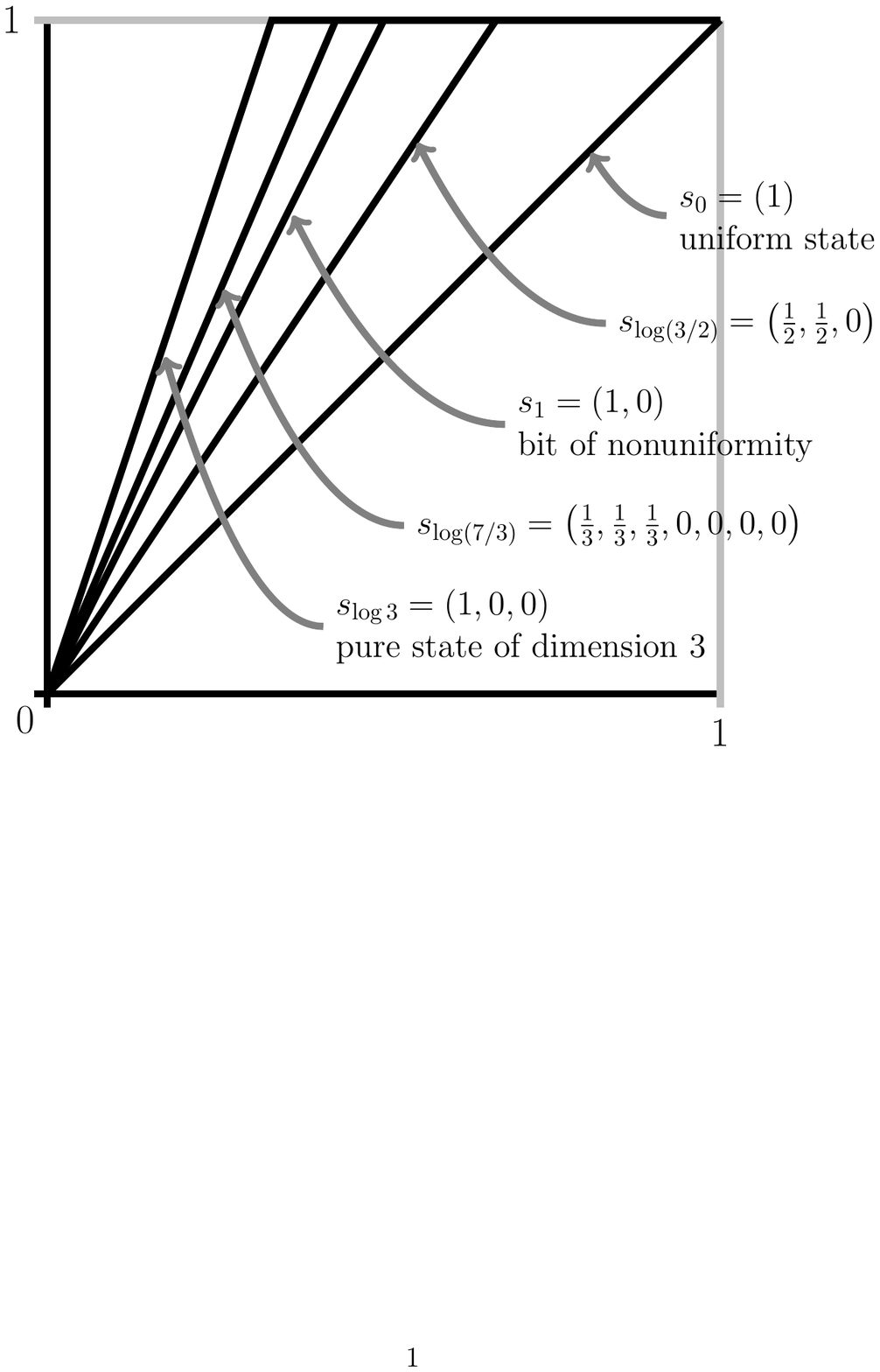}
\captionsetup{singlelinecheck=off,justification=raggedright}
\caption{The Lorenz curves of various sharp states.}
\label{fig:Sharp_states}
\end{figure}

The Lorenz curves representing $s_{I}$ for various values of $I$ are depicted in Fig.~\ref{fig:Sharp_states}.  It is easy to see that the set of sharp states includes all and only those states for which the Lorenz curve has the simple form of an on-ramp and a tail.  Recalling Eqs.~(\ref{eqSlope}) and (\ref{eqTailLength}), the on-ramp slope of a sharp state is $m^{\textnormal{on}}(s_{I}) = 2^I$, while the tail length is $\ell(s_{I})=1-2^{-I}$.  It is clear from the geometry of the Lorenz curves that the set of sharp states are totally ordered under noisy operations.  For any degree of nonuniformity that is equal to the logarithm of a real (possibly irrational) number, one can approximate this arbitrarily closely with a sequence of sharp states whose nonuniformity
converges to that degree of nonuniformity in the limit of arbitrarily large dimension.

Finally, note that a tensor product of two sharp states, one with nonuniformity $I_1$ and one with nonuniformity $I_2$ is another sharp state with nonuniformity $I = I_1 + I_2$, that is, $s_{I_1} \otimes s_{I_2}$ is noisy-equivalent with $s_{I_1 + I_2}$.

\subsection{Nonuniformity of formation and distillable nonuniformity}
\label{section:formationanddistillation}

Given this standard form of nonuniformity, two important questions arise:
\begin{itemize}
\item[(i)] What is the minimum nonuniformity of sharp state required to deterministically create (or ``form'') a single copy of state $x$ by noisy operations?  
\item[(ii)] Given a single copy of state $x$, what is the maximum nonuniformity of sharp state that can be deterministically extracted (or ``distilled'') from it by noisy operations?  
\end{itemize}
We refer to the answer to the first question as the \emph{single-shot nonuniformity of formation} and the answer to the second question as the \emph{single-shot distillable nonuniformity}.  
The term ``single-shot'' refers to our interest in forming or distilling only one copy of $x$.  

If we find that the minimum nonuniformity of any sharp state required to form $x$ is $I$, then 
the minimum number of pure bits required is $\lceil I \rceil$, where $\lceil a \rceil$ denotes the smallest integer not smaller than $a$ (the ceiling function).
Similarly, if the maximum nonuniformity of any sharp state that can be extracted is $I$, then the 
maximum number of pure bits that can be extracted is $ \lfloor I \rfloor$, where $\lfloor a \rfloor$ denotes the largest integer not larger than $a$ (i.e. the floor function).

Note that the possession of a pure bit is equivalent, as a resource, to having access to one implementation of a one-bit erasure operation\footnote{It is similarly evident that a sharp state with nonuniformity $\log d$ is equivalent, as a resource, to having access to an operation that takes an arbitrary state of dimenstion $d$ to a sharp state with nonuniformity $\log d$.}.  This follows from the fact that (a) given one pure bit as a resource, one can implement erasure on a system by swapping the system's state with the resource's state, and (b) given one implementation of a one-bit erasure operation, a uniform state (available for free) can be transformed into one bit of nonuniformity.  It is useful to think of this equivalence class of resources as the ability to do one bit of ``informational work''.  So distillation of nonuniformity is the analogue, within the resource theory of nonuniformity, of work extraction in the resource theory of athermality.  Indeed, work extraction often involves a two-step procedure wherein one first distills pure bits, and then uses them in a Szilard engine to do work \cite{WorkValOfInfo,LawsOfThermo}.  

Questions (i) and (ii) above are easily answered geometrically by means of Lorenz curves, as illustrated in Fig.~\ref{fig:NUDistillableAndCost}. This proof technique was first used in~\cite{FundLimitsNature}.

\begin{figure}[h!]
\centering
\includegraphics[width=.3\textwidth,clip=true]{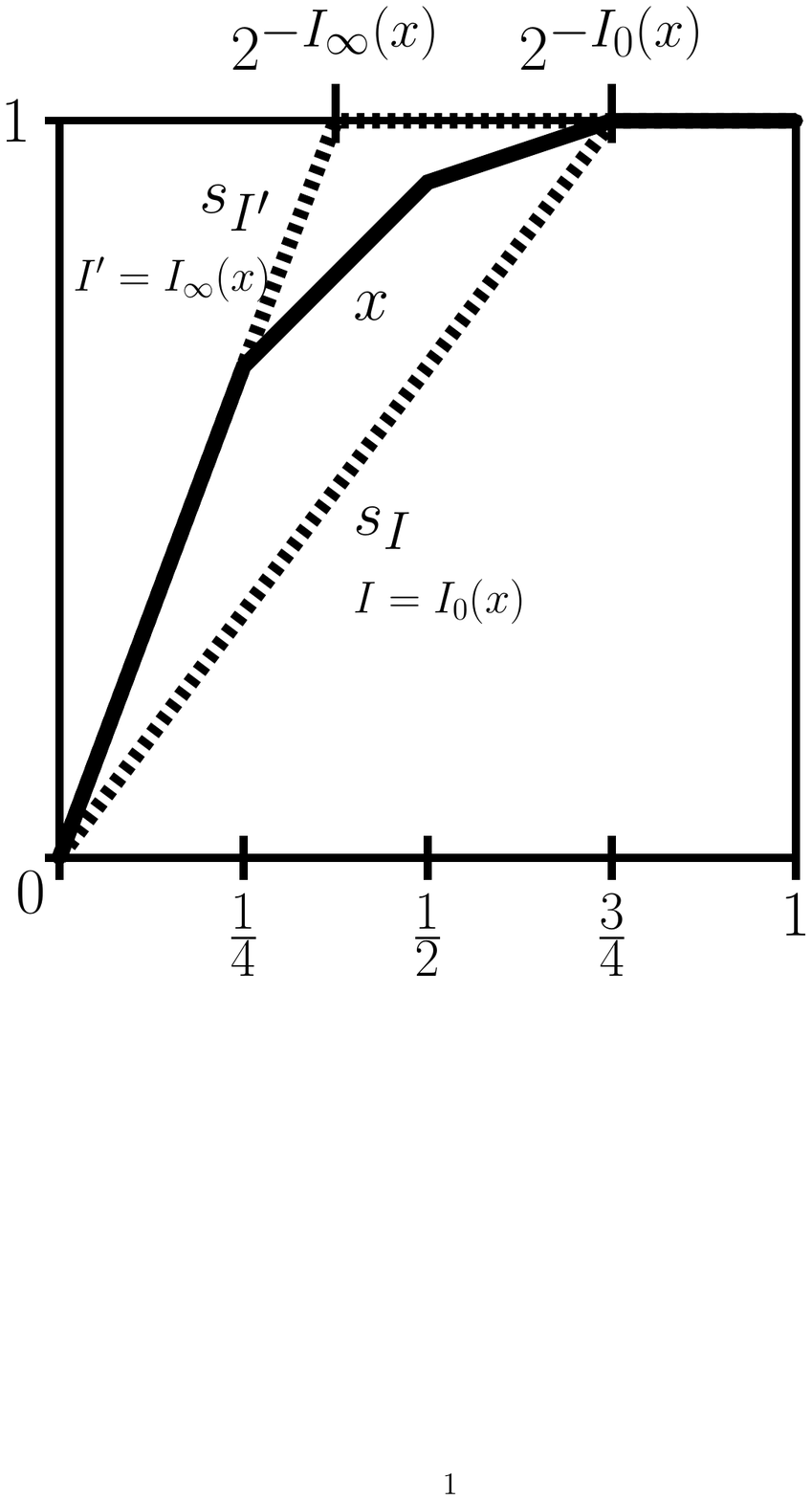}
\captionsetup{singlelinecheck=off,justification=raggedright}
\caption{From the Lorenz curve of $x$, one easily infers that $I_{0}(x)$ is the maximum nonuniformity of sharp state that can be distilled from $x$ and that  $I_{\infty}(x)$ is the minimum nonuniformity of sharp state that is required to form $x$. 
}
\label{fig:NUDistillableAndCost}
\end{figure}

For a given state $x$, its Lorenz curve $L_x$ lies above or on the Lorenz curve $L_{s_I}$ of a sharp state with nonuniformity $I$ if and only if it has at least as long a tail length, i.e., if and only if \ $\ell(x)\geq \ell(s_I)$.  Given the expression for the tail length of $x$, Eq.~(\ref{eqTailLength}), and the fact that the tail length of $s_I$ is simply $\ell(s_I) = 1- 2^{-I}$, the condition becomes $I \le I_0(x)$.  Since $I_0(x)$ is the logarithm of a rational, we can attain the bound using a uniform ancilla of finite dimension.

Similarly, there is a simple geometric condition that is necessary and sufficient to guarantee that the Lorenz curve $L_{s_I}$ of a sharp state with nonuniformity $I$ lies on or above the Lorenz curve $L_x$ of $x$: namely, the on-ramp slope
of $L_{s_I}$ must equal or exceed that of $L_x$, that is, $m^{\textnormal{on}}_{s_I}\geq m^{\textnormal{on}}_x$. According to Eq.~(\ref{eqSlope}) and the fact that the on-ramp slope of $L_{s_I}$ is $2^I$, this is equivalent to $I \geq I_{\infty}(x)$.  Here, we cannot always attain the bound with a finite-dimensional ancilla, since $I_{\infty}(x)$ can be the logarithm of an irrational number. However, we can approach equality arbitrarily closely by using large enough ancillary uniform states.

We can summarize what we have proven as follows.

\begin{proposition}\label{lemma:distillation}
The state conversion $x \cconv s_I$ is possible if and only if $$I \le I_0(x).$$ 
Consequently, the maximum nonuniformity of sharp state that can be distilled deterministically from a state $x$ by noisy operations is $I_0(x)$, the order-0 R\'{e}nyi nonuniformity of $x$.
\end{proposition}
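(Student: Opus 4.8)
The plan is to reduce the conversion question to a comparison of Lorenz curves via Proposition~\ref{prop:detconv}, and then to exploit the especially simple shape of a sharp state's Lorenz curve together with the concavity of $L_x$. By Proposition~\ref{prop:detconv}, $x\cconv s_I$ holds if and only if $L_x(u)\ge L_{s_I}(u)$ for all $u\in[0,1]$, so I would first record $L_{s_I}$ explicitly. Writing $s_I$ in the form~\eqref{eq:sharpstate} with $I=\log(d/d_u)$, its uniform-rescaled histogram is constant on $[0,d_u/d]$ and zero afterwards; hence $L_{s_I}$ is a straight on-ramp of slope $2^I$ running from $(0,0)$ to the point $(1-\ell(s_I),1)$, followed by a flat tail at height $1$, where the tail length is $\ell(s_I)=1-2^{-I}$.

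The substantive step is to show that the pointwise domination $L_x\ge L_{s_I}$ is equivalent to the single scalar inequality $\ell(x)\ge \ell(s_I)$ on tail lengths. The forward direction is immediate: evaluating at any $u$ lying in the tail of $s_I$ forces $L_x(u)=1$, so the tail of $x$ must begin no later than that of $s_I$. For the converse, suppose $\ell(x)\ge\ell(s_I)$, and set $u_x\assign 1-\ell(x)$ and $u_s\assign 1-\ell(s_I)$, so that $u_x\le u_s$. On the common tail $[u_s,1]$ both curves equal $1$. On $[0,u_s]$ the curve $L_x$ passes through $(0,0)$ and through $(u_s,1)$ (since $u_s\ge u_x$), and being concave it therefore lies on or above the chord joining these two points; but that chord is exactly $L_{s_I}$ restricted to $[0,u_s]$. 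This concavity argument is the crux of the proof; everything else is bookkeeping with facts already in hand.

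Having established the geometric equivalence, I would translate tail lengths into the order-$0$ R\'enyi nonuniformity using Eq.~\eqref{eqTailLength}: $\ell(x)=1-2^{-I_0(x)}$ and $\ell(s_I)=1-2^{-I}$, so that $\ell(x)\ge\ell(s_I)$ is equivalent to $2^{-I}\ge 2^{-I_0(x)}$, i.e.\ to $I\le I_0(x)$. Chaining the equivalences yields $x\cconv s_I$ if and only if $I\le I_0(x)$, which is the first assertion.

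For the ``consequently'' clause, I would observe that by Eq.~\eqref{eq:defnI0} we have $I_0(x)=\log\bigl(d_x/|\supp(x)|\bigr)$, the logarithm of a rational number, so the canonical sharp state $s_{I_0(x)}$ is a genuine finite-dimensional state rather than merely a limit. The equivalence just proved shows that $s_{I_0(x)}$ is distillable from $x$ while no sharp state of strictly greater nonuniformity is, so the maximum distillable sharp-state nonuniformity is exactly $I_0(x)$ and the supremum is attained. I expect no obstacle here beyond noting the rationality that guarantees attainment; the only genuinely delicate point in the whole argument is the concavity step in the second paragraph.
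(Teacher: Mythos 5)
Your proof is correct and follows essentially the same route as the paper: reduce to the Lorenz-curve domination criterion, observe that domination of a sharp state's curve is equivalent to the tail-length inequality $\ell(x)\ge\ell(s_I)$, and translate via Eq.~\eqref{eqTailLength} into $I\le I_0(x)$, with rationality of $2^{I_0(x)}$ guaranteeing attainment. The only difference is that you spell out the concavity/chord argument that the paper leaves implicit, which is a welcome addition rather than a deviation.
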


\begin{proposition}\label{lemma:formation}
The state conversion $s_I \cconv x$ is possible if and only if $$I \ge I_{\infty}(x).$$ 
Consequently, the minimum nonuniformity of sharp state that is required to deterministically form a state $x$ by noisy operations is $I_{\infty}(x)$, the order-($\infty$) R\'{e}nyi nonuniformity of $x$.
\end{proposition}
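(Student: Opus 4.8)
The plan is to reduce everything, via the Lorenz-curve characterization of deterministic conversion (Proposition~\ref{prop:detconv}), to a comparison of two explicit curves. By Proposition~\ref{prop:detconv}, $s_I \cconv x$ holds if and only if $L_{s_I}(u) \ge L_x(u)$ for all $u \in [0,1]$, so the whole question becomes: for which $I$ does the Lorenz curve of the sharp state sit on or above that of $x$ everywhere?

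First I would write down the shape of $L_{s_I}$ explicitly. The uniform-rescaled histogram of $s_I$ is constant at height $2^I$ on the interval $[0,2^{-I}]$ and zero afterwards, so integrating gives a straight on-ramp $L_{s_I}(u) = 2^I u$ for $u \in [0,2^{-I}]$, followed by the flat tail $L_{s_I}(u) = 1$ for $u \in [2^{-I},1]$. I would also recall the two structural facts used throughout: every Lorenz curve is concave and runs from $(0,0)$ to $(1,1)$, because the histogram it integrates is non-increasing and normalized.

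The comparison then separates cleanly into the two regions of $L_{s_I}$. On the tail region $u \ge 2^{-I}$ the inequality $L_{s_I}(u) = 1 \ge L_x(u)$ is automatic, since no Lorenz curve exceeds $1$. On the on-ramp region $u \in [0,2^{-I}]$ the requirement is $2^I u \ge L_x(u)$, i.e.\ the origin-line of slope $2^I$ must dominate the concave curve $L_x$. The crux is the elementary fact that, for a concave $L_x$ with $L_x(0)=0$, the secant slope $L_x(u)/u$ is non-increasing in $u$ and therefore attains its supremum as $u \to 0^+$, where it equals the on-ramp slope $m^{\textnormal{on}}(x)$. Hence $2^I u \ge L_x(u)$ holds for every $u$ if and only if $2^I \ge m^{\textnormal{on}}(x)$. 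Substituting $m^{\textnormal{on}}(x) = 2^{I_\infty(x)}$ from Eq.~\eqref{eqSlope}, this is exactly $I \ge I_\infty(x)$, which is the claimed biconditional; the ``minimum nonuniformity of formation'' statement is then just its restatement.

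The only genuinely subtle point is this last secant-slope step---that the constraint is governed entirely by the behaviour of $L_x$ near the origin---and I expect it to be the main (though modest) obstacle; everything else is bookkeeping about the piecewise-linear shape of $L_{s_I}$. A final remark is needed for attainability: the admissible sharp-state nonuniformities are logarithms of rationals greater than $1$ (together with $0$), so while $I_\infty(x)$ given by Eq.~\eqref{Iinfty} is the infimum of feasible $I$, it is attained exactly when it is itself such a logarithm and is otherwise approached arbitrarily closely by enlarging the ancillary uniform state. The companion result, Proposition~\ref{lemma:distillation}, would follow from the mirror-image argument comparing the tail of $L_x$ against that of $s_I$ rather than the on-ramp.
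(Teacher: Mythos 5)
Your proposal is correct and follows essentially the same route as the paper: reduce to the Lorenz-curve comparison of Proposition~\ref{prop:detconv} and observe that $L_{s_I}\ge L_x$ everywhere is equivalent to the on-ramp slope condition $2^I\ge m^{\textnormal{on}}(x)=2^{I_\infty(x)}$, with the same caveat about attainability when $I_\infty(x)$ is the logarithm of an irrational. The only difference is that you spell out, via the non-increasing secant slope of a concave curve through the origin, why the on-ramp slope condition is both necessary and sufficient — a detail the paper asserts geometrically without elaboration.
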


These results yield an operational interpretation of $ I_\infty(x)$ ($ I_0(x)$) in terms of the single-shot nonuniformity cost (yield) of a state $x$\footnote{Note that it is not, strictly speaking, the single-shot cost (yield) of pure bits because the latter is the integer ceiling (floor) of this quantity.  Nonetheless, as long as one remembers this subtlety, pure bit cost (yield) is an accurate description and it is often described this way 
\cite{WorkValOfInfo,QuantLandauer,ThermoMeaningNegEntropy}.}.

Note that if the system is composite and the state is a product state, then by virtue of the additivity of $I_\infty$ and $I_0$, the distillable nonuniformity and the nonuniformity of formation of the composite state are just the sum of those of the components.  In other words, the maximum nonuniformity yield and the minimum nonuniformity cost are both achievable by processing  the components separately.  It also follows that if one cannot distill any nonuniformity from $x$ then one cannot do so from any number of copies of $x$ either.  The analogous fact was pointed out for athermality theory in \cite{janzing2000thermodynamic}.   

Also note that if the system consists of several components, correlations among the systems is a resource of nonuniformity.  For instance, a pair of bits which are perfectly correlated but with uniform marginals on each bit define a probability distribution $x \assign (1/2,0,0,1/2)$, corresponding to a sharp state with nonuniformity $1$.  This is noisy-equivalent to a single pure bit.  The point is that, in the resource theory of nonuniformity, all unitaries are free, including those that couple together distinct systems.  It follows that how nonuniformity is encoded within a composite system (locally or in correlations) is irrelevant.

In the resource theory of athermality, the problem that is the analogue of finding the maximum single-shot distillable nonuniformity is finding the maximum amount of work that can be extracted from a state in a single-shot protocol.  This was first solved 
in \cite{FundLimitsNature} and \cite{AbergWork}. That the single-shot distillable nonuniformity is $I_0(x)$ is simply the specialization to energy-degenerate systems of their result.  \cite{FundLimitsNature} also determined the work \emph{cost} of preparing a state in a single-shot protocol, which in the case of energy-degenerate systems is just the single-shot nonuniformity of formation, and their result reduces to  $I_{\infty}(x)$ in this case.

Finally, the results on single-shot formation and distillation of nonuniformity 
yield a simple sufficient condition for the possibility of a state conversion.

\begin{lemma}\label{lemma:gowitness}
If $x$ and $y$ are states such that
\[
I_0(x) \geq  I_\infty(y),
\]
then $x \cconv y$.

Equivalently, the function
\[
W(x,y)\assign I_0(x) -I_\infty(y)
\]
is a go witness for the state conversion $x \cconv y$, which is to say that if $W(x,y)\ge 0$ then $x \cconv y$.
\end{lemma}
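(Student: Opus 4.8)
The plan is to route the conversion through an intermediate sharp state, combining the two single-shot results already established with the transitivity of noisy convertibility. The key observation is that the hypothesis $I_0(x) \ge I_\infty(y)$ says precisely that the amount of sharp-state nonuniformity one can distill from $x$ is at least the amount one needs to form $y$, so a sharp state of the right nonuniformity can serve as a ``relay'' between the two.

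First I would distill from $x$ the sharp state of maximal nonuniformity. By Proposition~\ref{lemma:distillation}, $x \cconv s_I$ for every $I \le I_0(x)$; taking $I = I_0(x)$ gives $x \cconv s_{I_0(x)}$. At this step I should check that $s_{I_0(x)}$ is a genuine finite-dimensional sharp state and not merely a limit: since $I_0(x) = \log\left(d_x/|\supp(x)|\right)$ is the logarithm of a rational number $\ge 1$, such a canonical sharp state exists (as noted just below Proposition~\ref{lemma:distillation}), so no approximation argument is needed here.

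Next I would form $y$ from this same sharp state. By Proposition~\ref{lemma:formation}, $s_I \cconv y$ whenever $I \ge I_\infty(y)$. Applying this with $I = I_0(x)$ and invoking the hypothesis $I_0(x) \ge I_\infty(y)$ gives $s_{I_0(x)} \cconv y$. Finally I would chain the two conversions: because the composition of two noisy classical operations is again a noisy classical operation (the product of two uniform-preserving stochastic matrices is uniform-preserving, by Lemma~\ref{lemma:NCO}), the relation $\cconv$ is transitive, so $x \cconv s_{I_0(x)} \cconv y$ yields $x \cconv y$. The witness reformulation is then immediate, since $W(x,y) = I_0(x) - I_\infty(y) \ge 0$ is exactly the condition $I_0(x) \ge I_\infty(y)$.

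I expect no genuinely hard step here; the argument is a two-stage distill-then-form composition. The only points that warrant care are the two I flagged above: confirming that the relay state $s_{I_0(x)}$ is an honest finite-dimensional sharp state (guaranteed by the rationality of $I_0(x)$, which is what lets me use $I_0(x)$ rather than $I_\infty(y)$, the latter possibly being an irrational logarithm), and noting transitivity of $\cconv$. An alternative, equivalent route would argue directly from Proposition~\ref{prop:detconv}, observing that the Lorenz curve of $s_{I_0(x)}$ sits on or below that of $x$ and on or above that of $y$; but the distill-then-form phrasing is cleaner and makes the role of the hypothesis transparent.
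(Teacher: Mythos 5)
Your proposal is correct and follows essentially the same route as the paper: distill the sharp state $s_{I_0(x)}$ from $x$ via Proposition~\ref{lemma:distillation} (using the rationality of $2^{I_0(x)}$ to ensure the relay state exists exactly), then form $y$ from it via Proposition~\ref{lemma:formation} using the hypothesis $I_0(x)\geq I_\infty(y)$. The extra remarks on transitivity of $\cconv$ and the Lorenz-curve alternative are fine but add nothing the paper's argument doesn't already implicitly use.
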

\begin{proof}
The result follows from Propositions~\ref{lemma:distillation} and \ref{lemma:formation}.  The latter implies that a sharp state with nonuniformity $I_\infty(y)$ suffices to form $y$, and the former implies that starting from $x$, we can distill a sharp state with nonuniformity $I_0(x)$ (recall that $2^{I_0(x)}$ is always rational).   Consequently, if $I_0(x) \geq  I_\infty(y)$, we can distill from $x$ a sharp state that is
sufficient to form $y$.  
\end{proof}

\subsection{Nonuniformity cost and yield of state conversion}\label{sec:costyieldstateconv}

We have discussed the problem of whether or not $x$ can be mapped to $y$ by noisy operations.  We have also considered how much nonuniformity (in some standard form) may be extracted from a state, and how much nonuniformity is required to create the state.  There is an obvious way in which to combine these questions into another pair of questions:
\begin{itemize}
\item[(i)] If $x \cconv y$, what is the maximum nonuniformity of sharp state that can be distilled in addition to achieving the state conversion?
\item[(ii)] If it is not the case that $x \cconv y$, what is the minimum nonuniformity of sharp state that one requires to make the state conversion possible?
\end{itemize}
We call the former the \emph{nonuniformity yield} of the state conversion
and the latter the \emph{nonuniformity cost} of the state conversion.

The following lemma will be useful for answering these questions.
\begin{lemma}\label{lemma:adjoinsharp}
The operation of adjoining an ancillary system in sharp state $s_I$ to a system in state $x$, that is, $x \mapsto x \otimes s_I$, 
corresponds to the map  $L_x(u) \mapsto L_{x \otimes s_I}(u)$ on Lorenz curves, where
\begin{equation}\label{eq:LorenzxotimessR}
   L_{x \otimes s_I} (u)=\left\{
      \begin{array}{cl}
         &L_x (2^I u), \; \;   u \in [0, 2^{-I}] \\
         &1, \; \; \;\;\;\;\;\;\;\;\;\;\; u \in ( 2^{-I}, 1 ]
      \end{array}
   \right.
\end{equation}
This corresponds to a $2^I$-fold compression of the Lorenz curve of $x$ along the $u$-axis
while lengthening the tail an appropriate amount.
\end{lemma}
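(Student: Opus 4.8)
The plan is to compute the Lorenz curve of $x \otimes s_I$ directly from its definition via the uniform-rescaled histogram, exploiting the block structure that a tensor product with a sharp state induces. First I would write the canonical sharp state explicitly as $s_I = (\tfrac{1}{d_u}, \ldots, \tfrac{1}{d_u}, 0, \ldots, 0)$ of dimension $d$ with $d_u$ nonzero entries, so that $I = \log(d/d_u)$ and hence $2^I = d/d_u$ and $2^{-I} = d_u/d$. The components of $x \otimes s_I$ are then the products $x_i/d_u$, each appearing $d_u$ times (once for each nonzero slot of $s_I$), together with $d_x(d-d_u)$ zeros. Sorting these in descending order yields a sequence consisting of $d_u$ consecutive copies of $x^{\downarrow}_1/d_u$, then $d_u$ copies of $x^{\downarrow}_2/d_u$, and so on through $x^{\downarrow}_{d_x}/d_u$, followed by a block of zeros, with total dimension $d_x d$.

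Next I would evaluate the uniform-rescaled histogram $h_{x \otimes s_I}(v) = d_x d\,(x\otimes s_I)^{\downarrow}_{\lfloor d_x d\, v\rfloor}$. On the $i$th block, which occupies $v \in [(i-1)2^{-I}/d_x,\ i\,2^{-I}/d_x]$, the histogram takes the constant value $d_x d\,(x^\downarrow_i/d_u) = 2^I\,d_x x^\downarrow_i$. Comparing with $h_x$, whose $i$th step has width $1/d_x$ and height $d_x x^\downarrow_i$, I would record the identity $h_{x\otimes s_I}(v) = 2^I h_x(2^I v)$ for $v\in[0,2^{-I}]$, while $h_{x\otimes s_I}(v)=0$ for $v\in(2^{-I},1]$ (the zero block). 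This is the crux of the argument: the horizontal axis is compressed by the factor $2^{-I}$ and the height stretched by $2^I$, which is exactly how a tensor product with a sharp state should act on the density diagram.

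Finally I would integrate. For $u\in[0,2^{-I}]$ the substitution $w=2^I v$ gives $L_{x\otimes s_I}(u)=\int_0^u 2^I h_x(2^I v)\,dv = \int_0^{2^I u} h_x(w)\,dw = L_x(2^I u)$, which in particular evaluates to $L_x(1)=1$ at $u=2^{-I}$; for $u\in(2^{-I},1]$ the integrand vanishes, so $L_{x\otimes s_I}(u)=1$, completing the claimed formula. I do not expect a serious obstacle: the only steps requiring care are the bookkeeping of the block structure under sorting and the verification that $d_u/d=2^{-I}$, both routine. An equally valid alternative would bypass the histogram and argue from the Ky Fan characterization, noting that $S_{jd_u}(x\otimes s_I)=S_j(x)$ so that the vertices $(j/d_x, S_j(x))$ of $L_x$ map to the vertices $(2^{-I}j/d_x, S_j(x))$ of $L_{x\otimes s_I}$, then invoking linearity of the interpolation on each block; but the histogram computation is the more transparent route.
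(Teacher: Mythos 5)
Your proposal is correct and follows essentially the same route as the paper: the paper's argument likewise writes out the sorted components of $x\otimes s_I$ as $d_u$-fold blocks of $x_i^\downarrow/d_u$ followed by $d_x(d-d_u)$ zeros and then asserts that the Lorenz curve has the stated form, with your histogram computation simply filling in that final verification. Your bookkeeping (the identity $h_{x\otimes s_I}(v)=2^I h_x(2^I v)$ on $[0,2^{-I}]$ and the substitution in the integral) is sound, and the Ky Fan alternative you mention is an equally valid shortcut.
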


To see this, note that if $I=\log (d/d_u)$, then 
\begin{align}
&(y \otimes s_I)^{\downarrow}= \nonumber\\
&( \underbrace{y_1^{\downarrow}/d_u, \dots, y_1^{\downarrow}/d_u}_{d_u},\dots, \underbrace{ y_{d_y}^{\downarrow}/d_u, \dots , y_{d_y}^{\downarrow}/d_u}_{d_u}, \underbrace{0, \dots, 0}_{d_y(d-d_u)} ),
\end{align}
where the multiplicity of each nonzero coefficient is $d_u$ and the mulitplicity of the zeros is $d_y(d-d_u)$.  One then verifies that the Lorenz curve has the form described. 

The case of $I=0$ corresponds to adjoining an ancillary system in the uniform state, which leaves the Lorenz curve invariant.

It is also useful to define the following order relation on states.

\begin{definition}[$\lambda$-noisy-majorization]
We say that 
$x$ $\lambda$-noisy-majorizes $y$,
where $\lambda \in \bbR$, 
if
\begin{equation}\label{eq: LorenzmajorizationbyfactorR}
L_x(u) \ge L_y (2^{\lambda} u) \; \; \forall u \in [ 0, u_*],
\end{equation}
where $u_* \assign \min \{ 1, 2^{-\lambda} \}$, or equivalently,
\begin{equation}\label{eq: LorenzmajorizationbyfactorR2}
L_x(2^{-\lambda} u) \ge L_y ( u) \; \; \forall u \in [ 0, u_*^{-1}].
\end{equation}
\end{definition}

We will seek to understand this relation in terms of the geometry of Lorenz curves.  We consider the cases of $\lambda > 0$ and $\lambda <0$ in turn.

For $\lambda >0$, if $x$ $\lambda$-noisy-majorizes $y$ then \emph{not only} is the Lorenz curve of $x$ everywhere greater than or equal to the Lorenz curve of $y$, but it is also the case that one can implement a $2^\lambda$-fold compression of the Lorenz curve of $y$ along the $u$-axis (making it rise to $1$ more rapidly), and \emph{still} find that the Lorenz curve of $x$ is everywhere greater than or equal to it.  Equivalently, it asserts that one can implement a $2^\lambda$-fold \emph{stretching} of the Lorenz curve of $x$ along the $u$-axis (making it rise to $1$ less rapidly), and still find that the Lorenz curve of $x$ is everywhere greater than or equal to that of $y$.

From Lemma~\ref{lemma:adjoinsharp} one can also deduce the physical significance of the fact that  $x$ $\lambda$-noisy-majorizes $y$ when $\lambda > 0$: it implies that $x \cconv y\otimes s_{\lambda}$, that is, one can distill a sharp state of nonuniformity $\lambda$ in addition to achieving the conversion of $x$ to $y$.   To see this, note that Eq.~\eqref{eq: LorenzmajorizationbyfactorR} implies, via Eq.~\eqref{eq:LorenzxotimessR}, that $L_{x } (u) \ge L_{y\otimes s_{\lambda}}(u)$ in the region $u \in [ 0, u_*]$ where $u_*=2^{-\lambda}$.  The only subtlety then, is to prove that $L_x(u) \ge L_{y \otimes s_\lambda}(u)$ also in the region $u \in (u_*,1]$.   This follows from noting that at $u=u_*$, we have $L_{y\otimes s_{\lambda}}(u_*)= 1$,  but we also have $L_{x } (u_*) \ge L_{y\otimes s_I}(u_*)$, and therefore $L_{x } (u_*)=1$.  But because the Lorenz curve of $x$ is concave it follows that $L_{x } (u)=1$ for $u \in (u_*,1]$ and consequently it cannot be smaller than $L_y(u)$ in that region.

We now consider the case of $\lambda < 0$.
In this case, if $x$ $\lambda$-noisy-majorizes $y$ then the Lorenz curve of $x$ must somewhere be lower than that of $y$ (so that $x$ does \emph{not} noisy-majorize $y$), but if we implement a $2^{\lambda}$-fold compression of the Lorenz curve of $y$ along the $u$-axis, which, because $2^{\lambda}<1$ is a net \emph{stretching} of the Lorenz curve of $y$ along the $u$-axis (making it rise to $1$ less rapidly), then the Lorenz curve of $x$ becomes everywhere greater than or equal to that of $y$.  Equivalently, it asserts that a $2^{|\lambda|}$-fold net compression of the Lorenz curve of $x$ along the $u$-axis (making it rise to $1$ more rapidly) can make it everywhere greater than or equal to that of $y$.

Physically, if $x$ $\lambda$-noisy-majorizes $y$ for $\lambda < 0$, then $x \otimes s_{|\lambda|} \cconv y$, that is, a sharp state of nonuniformity $|\lambda|$ can make the conversion of $x$ to $y$ possible.  

This time, the proof begins with Eq.~\eqref{eq: LorenzmajorizationbyfactorR2}, which implies, via Eq.~\eqref{eq:LorenzxotimessR}, that $L_{x \otimes s_{|\lambda|}} (u) \ge L_{y }(u)$ in the region $u \in [ 0, u_*^{-1}]$ where $u_*^{-1}=2^{\lambda}$. Again, one can easily infer that this inequality also holds in the region $u \in (u_*^{-1},1]$, and so holds for all $u \in [0,1]$.

Standard noisy-majorization is $\lambda$-noisy-majorization with $\lambda=0$. 

\begin{figure*}
\centering
\subfloat[]{
\centering
\includegraphics[width=.3\textwidth, clip=true]{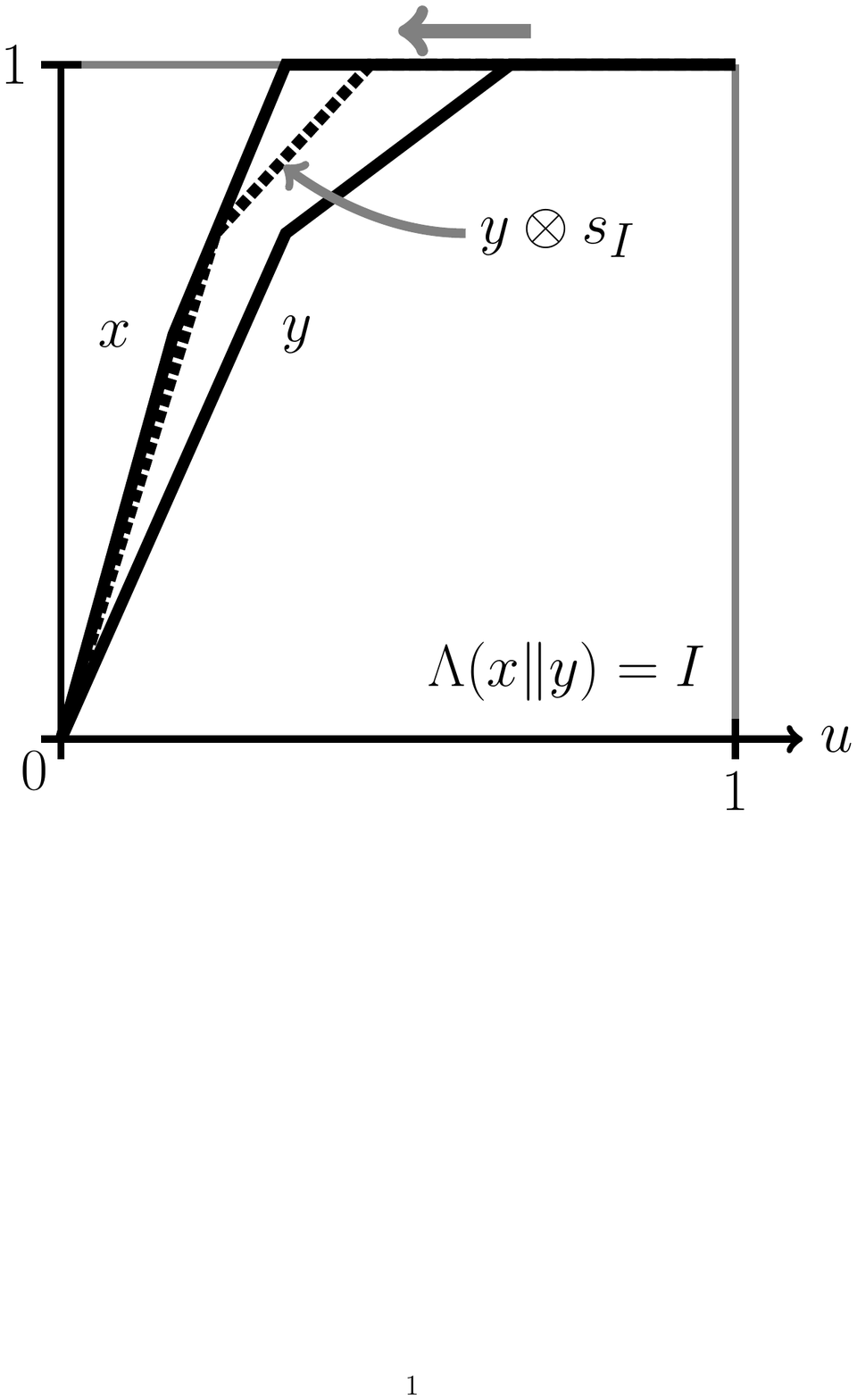}
\label{fig:s_R}
}
\subfloat[]{
\centering
\includegraphics[width=.3\textwidth, clip=true]{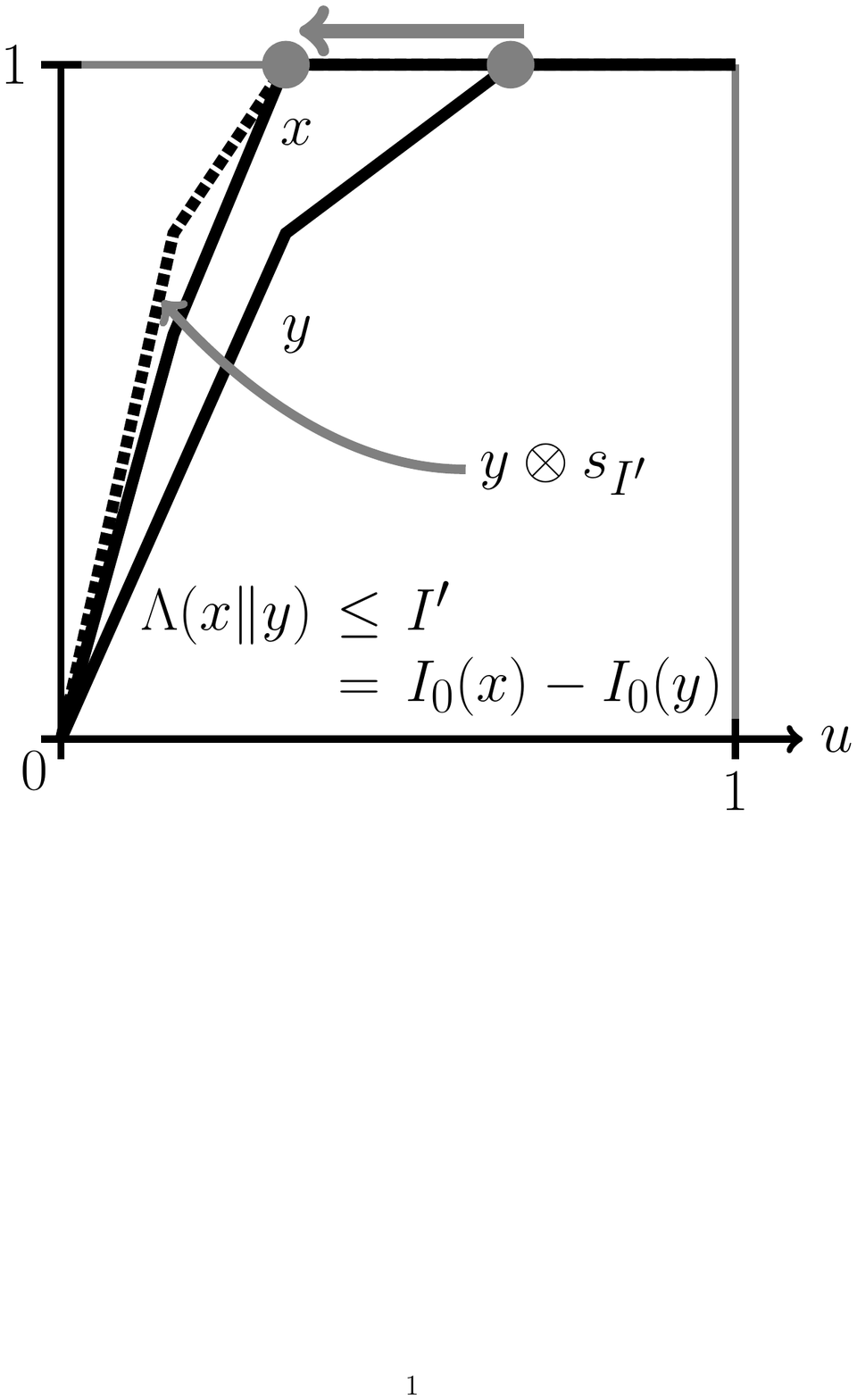}
\label{fig:s_R'}
}
\subfloat[]{
\centering
\includegraphics[width=.3\textwidth, clip=true]{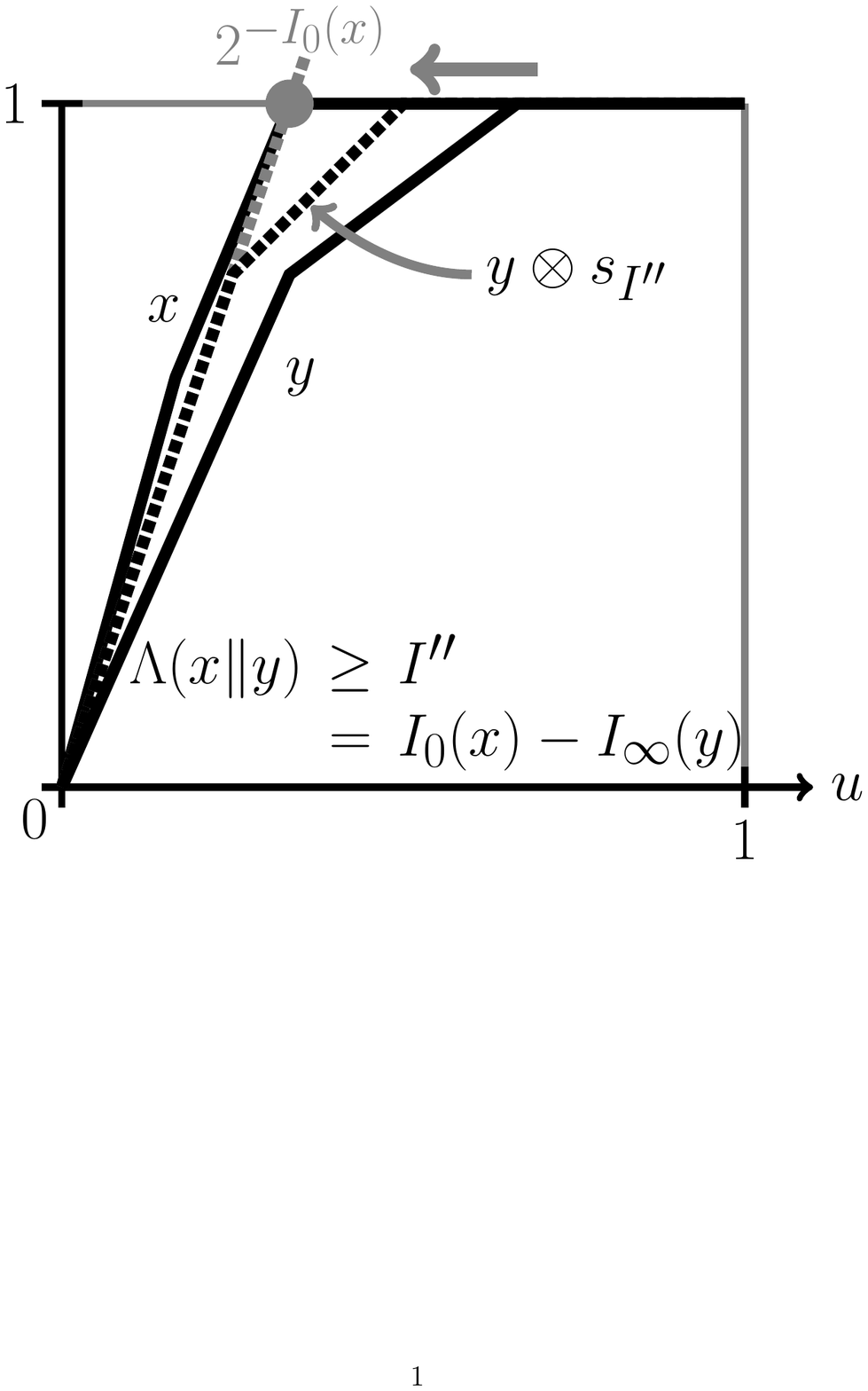}
\label{fig:s_R''}
}
\captionsetup{singlelinecheck=off,justification=raggedright}
\caption{An illustration of the nonuniformity yield of state conversion ($\Lambda(x\|y)$  when $\Lambda(x||y)\ge 0$) and bounds thereon.  The Lorenz curves of $x$ and $y$ are depicted, as well as
the Lorenz curve of $y \otimes s_I$ (dashed) for various values of $I$. According to~(\ref{eq:LorenzxotimessR}), the curve of $y\otimes s_I$ is obtained by linearly compressing
the curve of $y$ along the $u$-axis.
The nonuniformity yield is the largest value of $I$ such that $x \mapsto y \otimes s_I$, i.e.\ such that this linearly compressed curve is still below or on the curve of $x$.
This is illustrated in~(\ref{fig:s_R}): we compress the curve of $y$ until it touches the curve of $x$, and the resulting value of $I$ equals the nonuniformity yield
$I = \Lambda(x \| y)$.
In~(\ref{fig:s_R'}), it is shown how the upper bound $I':=I_0(x)-I_0(y)\geq \Lambda(x\|y)$ from Proposition~\ref{Proposition38}
can be derived graphically: compress the Lorenz curve of $y$ such that the leftmost point on its tail is brought to lie on top of the leftmost point on the tail of the Lorenz curve of $x$.
It is clear that the curve $L_y$ cannot be compressed
any further without crossing $L_x$, which gives on upper bound on $\Lambda(x\|y)$; in our example, the upper bound is strict, since $L_{y\otimes s_{I'}}$ already crosses $L_x$.
As a curve's tail is related to $I_0$ as shown in Fig.~\ref{fig:NUDistillableAndCost}, this graphical observation gives the desired upper bound by simple algebra.
By considering the amount of compression of $L_y$ that keeps the on-ramp slope of the resulting Lorenz curve less than that of $L_x$, we can obtain the 
upper bound $I_\infty(x)-I_\infty(y)$ (not shown).
Fig.~(\ref{fig:s_R''}) provides a graphical proof of the lower bound $I_0(x)-I_\infty(y)\leq\Lambda(x\|y)$.
The grey dotted line connects the origin with the leftmost point on the tail of $L_x$;
 it defines the on-ramp of a sharp state of nonuniformity $I_0(x)$ that can be distilled from $x$. Then $I_\infty(y)$ bits of nonuniformity
are spent to form $y$; graphically, the curve of $y$ is compressed until its on-ramp slope agrees with that of the sharp state.}
\label{fig:YieldOfConversion}
\end{figure*}

\begin{figure*}
\centering
\subfloat[]{
\centering
\includegraphics[width=.3\textwidth,clip=true]{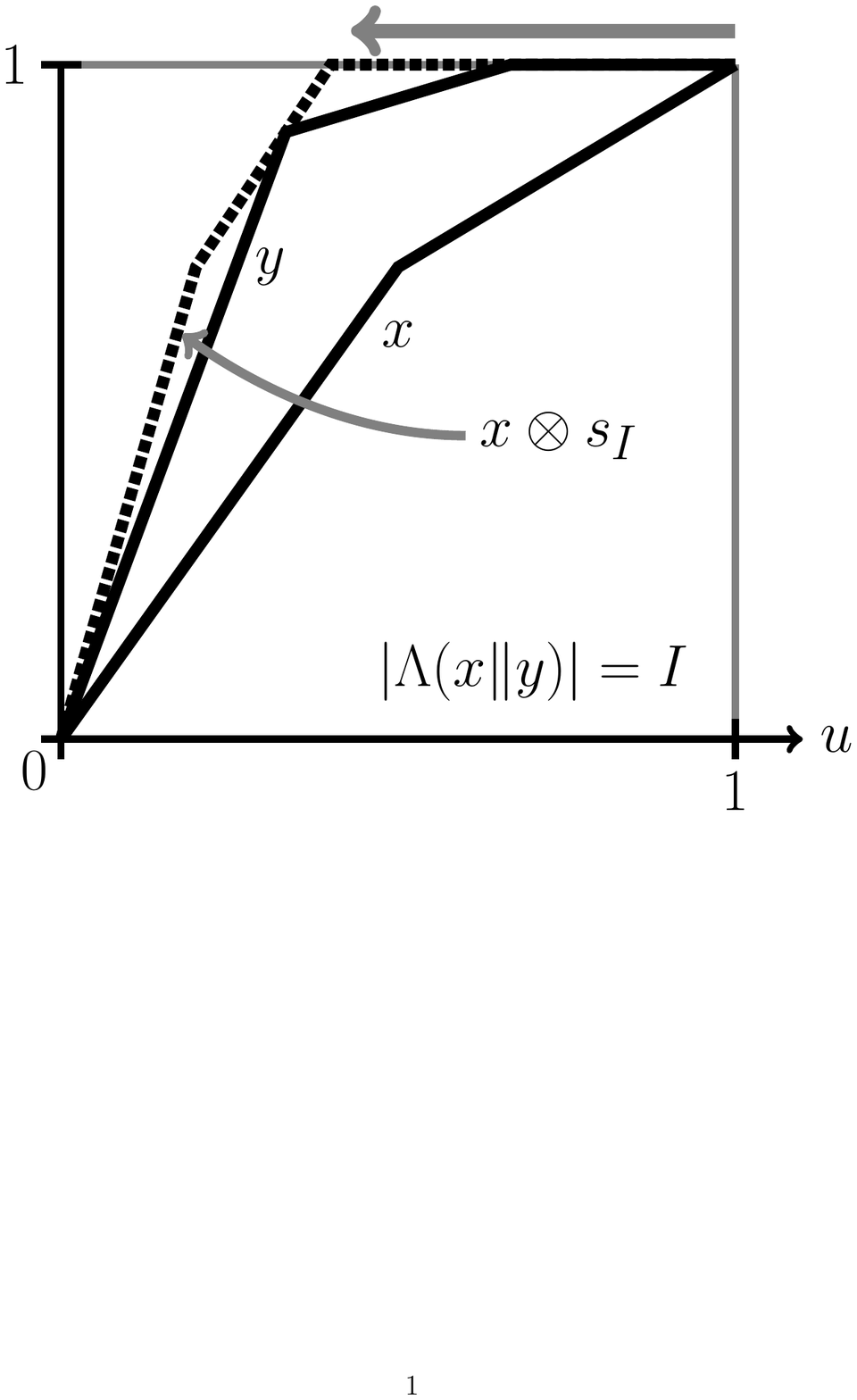}
\label{fig:Cost--s_I}
}
\subfloat[]{
\centering
\includegraphics[width=.3\textwidth,clip=true]{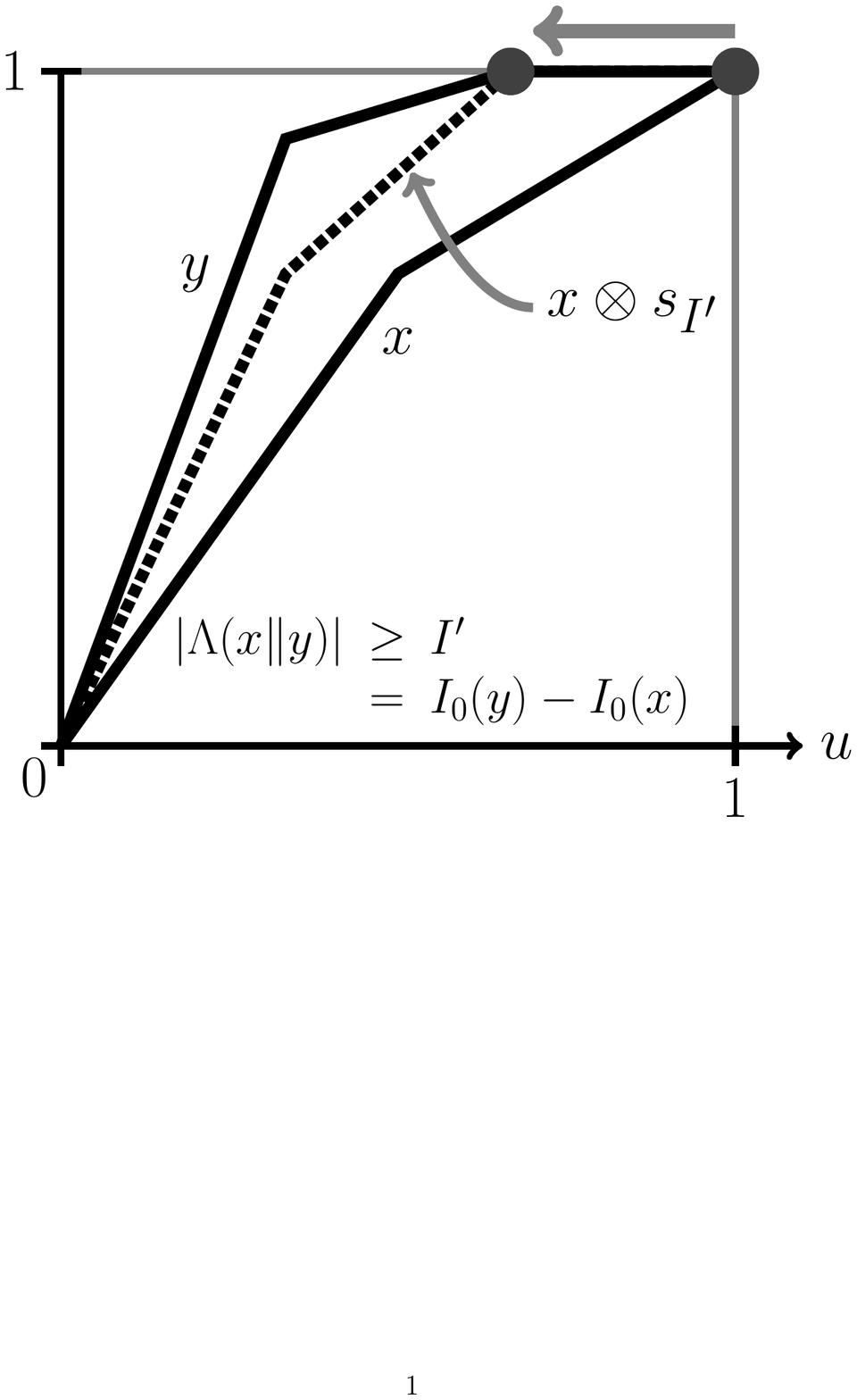}
\label{fig:Cost--s_I'}
}
\subfloat[]{
\centering
\includegraphics[width=.3\textwidth,clip=true]{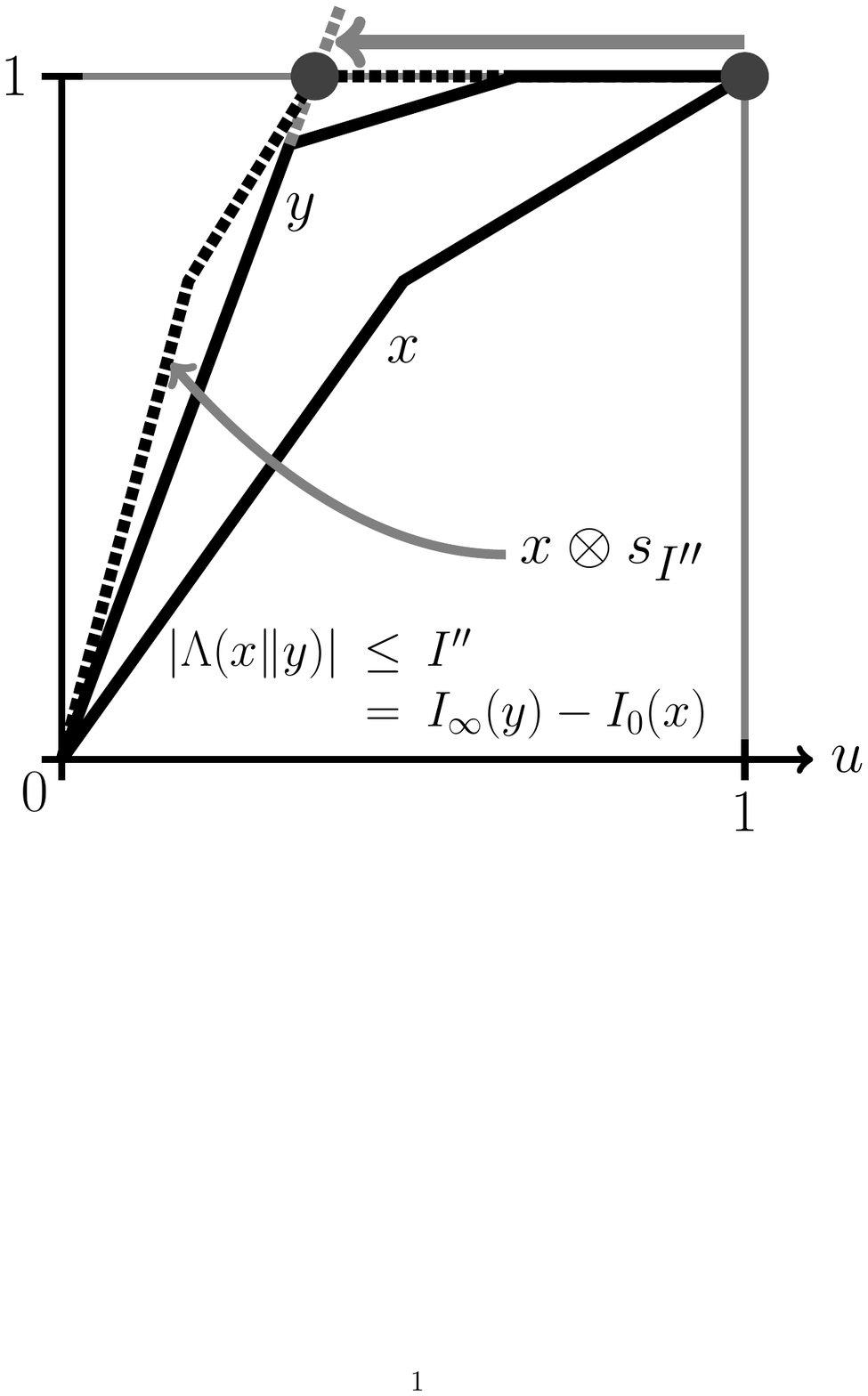}
\label{fig:Cost--s_I'''}
}
\captionsetup{singlelinecheck=off,justification=raggedright}
\caption{An illustration of the nonuniformity cost of state conversion ($|\Lambda(x||y)|$ when $\Lambda(x||y)<0$) and bounds thereon.
The Lorenz curves of $x$ and $y$ are depicted, as well as the Lorenz curve of $x \otimes s_I$ (dashed) for various values of $I$.
The nonuniformity cost is the least value of $I$ such that $x \otimes s_I \mapsto y$, which is graphically related to the least amount of compression in the $u$-direction
which is necessary such that the compressed curve of $x$ is on or above the curve of $y$. This is illustrated in Fig.~(\ref{fig:Cost--s_I}).  The value of $I$ corresponding to this least amount of compression defines the nonuniformity cost of converting $x$ to $y$, and the quantity $\Lambda(x \| y)$ is defined as the negative of this cost, $I = |\Lambda(x \| y)|$. 
Fig.~(\ref{fig:Cost--s_I'}) provides a graphical demonstration of the lower bound on the nonuniformity cost $|\Lambda(x \| y)| \ge I_0(y)- I_0(x)$.
The curve of $x$ has to be compressed by a factor at least equal to the one that makes the tail of the resulting Lorenz curve
as long as the tail of the Lorenz curve of $y$. In other words, the leftmost point of the tail of $L_x$ has to be brought to lie on top of the leftmost point of the tail of $L_y$ (these points are indicated by grey dots).  Considering the compression of $L_x$ required to make the on-ramp slope of the resulting Lorenz curve at least as great as that of $L_y$ yields the lower bound $I_\infty(y)-I_\infty(x)$ (not shown).
Fig.~(\ref{fig:Cost--s_I'''}) provides a graphical demonstration of the upper bound on the nonuniformity cost $|\Lambda(x \| y)| \le I_{\infty}(y)- I_0(x)$. The grey dotted line extends the on-ramp of $L_y$, and its intersection with the horizontal line of height one is indicated by a gray dot.  The leftmost point of the tail of $L_x$ is also indicated by a grey dot.  If the Lorenz curve of $x$ is compressed to such an extent that the second dot comes to lie on top of the first, then the new Lorenz curve $x\otimes s_I$ is clearly everywhere on or above the curve of $y$.}
\label{fig:CostOfConversion}
\end{figure*}

Note that the sign of $\lambda$ determines whether $\lambda$-noisy majorization is a strengthening or a weakening of noisy-majorization: if $x$ $\lambda$-noisy-majorize $y$ for a strictly positive $\lambda$, then it has \emph{more} nonuniformity than is required for it to noisy-majorize $y$, while if $x$ $\lambda$-noisy-majorizes $y$ for negative $\lambda$, then it has \emph{less} nonuniformity than is required for it to noisy-majorize $y$.

Finally, we require a definition before formalizing the answers to the questions that headed this section.

\begin{definition}\label{def:stateconversionwitness}
The \emph{maximum} factor $\lambda$ by which $x$ $\lambda$-noisy-majorizes $y$ is denoted $\Lambda(x\|y)$,
\begin{equation}
\Lambda(x\|y) \assign \max \left\{\lambda:  L_x(u) \ge L_y (2^\lambda u) \; \; \forall u \in ( 0, u_* ] \right\},
\label{eq:dualwitness2}
\end{equation}
where $u_* \assign \min \{1, 2^{-\lambda} \}$. 
\end{definition}

Given that $x \cconv y \otimes s_{I}$ if and only if $x$ ${\lambda}$-noisy-majorizes $y$ by a factor ${\lambda}=I$, it is clear that if the maximum such factor, $\Lambda(x\|y)$, is positive, then it quantifies the maximum nonuniformity of sharp state that can be distilled while achieving the state conversion $x \mapsto y$. 

Similarly, given that $x \otimes s_{I} \cconv y $ if and only if $x$ ${\lambda}$-noisy-majorizes $y$ by a factor ${\lambda}=-I$, it is clear that if the maximum such factor, $\Lambda(x\|y)$, is negative, then it quantifies the minimum nonuniformity of sharp state that is required to achieve the state conversion $x \mapsto y$. 

To summarize:

\begin{proposition}\label{lemma:distillationstateconversion}
If $\Lambda(x\| y) > 0$, then the state conversion $x \cconv y \otimes s_I$ is possible if and only if
$$I \le \Lambda(x\|y).$$
That is, one can distill a sharp state of nonuniformity at most $\Lambda(x\|y)$ in addition to achieving the conversion of $x$ to $y$.
\end{proposition}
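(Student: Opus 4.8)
The plan is to reduce the entire claim to the biconditional, for $I>0$, that $x\cconv y\otimes s_I$ holds if and only if $x$ $I$-noisy-majorizes $y$. One direction of this biconditional is already the content of the discussion preceding Definition~\ref{def:stateconversionwitness}, which shows (using concavity of $L_x$ to handle the tail region) that $\lambda$-noisy-majorization implies $x\cconv y\otimes s_\lambda$ for $\lambda>0$. The converse is immediate from Proposition~\ref{prop:detconv} together with Lemma~\ref{lemma:adjoinsharp}: if $x\cconv y\otimes s_I$, then $L_x(u)\ge L_{y\otimes s_I}(u)$ for all $u$, and restricting to $u\in[0,2^{-I}]$, where Eq.~\eqref{eq:LorenzxotimessR} gives $L_{y\otimes s_I}(u)=L_y(2^I u)$, yields exactly the defining inequality of $I$-noisy-majorization (Eq.~\eqref{eq: LorenzmajorizationbyfactorR}). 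So I would first record this biconditional as the workhorse of the proof.

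Given the biconditional, the $(\Rightarrow)$ direction of the proposition is essentially definitional. If $x\cconv y\otimes s_I$ with $I>0$, then $x$ $I$-noisy-majorizes $y$, so $I$ lies in the set $\{\lambda: L_x(u)\ge L_y(2^\lambda u)\ \forall u\in(0,u_*]\}$ over which $\Lambda(x\|y)$ is defined as the maximum in Eq.~\eqref{eq:dualwitness2}; hence $I\le\Lambda(x\|y)$. The case $I=0$ is trivial, since $\Lambda(x\|y)>0$ by hypothesis.

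The real work is the $(\Leftarrow)$ direction, whose key ingredient is a monotonicity property: if $x$ $\lambda$-noisy-majorizes $y$ and $0\le I\le\lambda$, then $x$ $I$-noisy-majorizes $y$. This is cleanest in the second, ``stretched'' form of the definition, Eq.~\eqref{eq: LorenzmajorizationbyfactorR2}, namely $L_x(2^{-\lambda}u)\ge L_y(u)$. Since $I\le\lambda$ gives $2^{-I}u\ge 2^{-\lambda}u$, and since every Lorenz curve is nondecreasing, one gets $L_x(2^{-I}u)\ge L_x(2^{-\lambda}u)\ge L_y(u)$, which is precisely $I$-noisy-majorization (the extension past $u=1$ being handled, as before, by the tail of $L_x$ and concavity). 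With monotonicity in hand, $(\Leftarrow)$ follows directly: by the definition of $\Lambda(x\|y)$ as the maximal admissible factor, $x$ $\Lambda(x\|y)$-noisy-majorizes $y$, so any $0<I\le\Lambda(x\|y)$ inherits $I$-noisy-majorization by monotonicity, whence $x\cconv y\otimes s_I$ by the biconditional of the first step.

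The main obstacle I anticipate is not any single inequality but two bookkeeping points that must be flagged cleanly. First, one must know that the maximum in Definition~\ref{def:stateconversionwitness} is genuinely attained, so that ``$x$ $\Lambda(x\|y)$-noisy-majorizes $y$'' is meaningful; this follows because the condition $L_x(u)\ge L_y(2^\lambda u)$ is a closed constraint in $\lambda$ (the Lorenz curves being continuous), so that the admissible set is closed, and by the monotonicity just proved it is moreover downward-closed, hence of the form $(-\infty,\Lambda(x\|y)]$. Second, there is the behaviour at the seam $u=2^{-I}$ where $L_{y\otimes s_I}$ switches to its constant tail: exactly as in the $\lambda>0$ argument preceding the definition, one uses $L_x(2^{-I})=1$ together with concavity of $L_x$ to conclude $L_x(u)=1$ on $(2^{-I},1]$, so that the conversion inequality holds on the whole interval $[0,1]$ rather than merely on $[0,2^{-I}]$. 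Both points are routine once identified, so the proof is short; its substance is the monotonicity argument of the preceding paragraph.
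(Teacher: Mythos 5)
Your proof is correct and follows essentially the same route as the paper's: both rest on the equivalence between $x\cconv y\otimes s_I$ and $I$-noisy-majorization (via Proposition~\ref{prop:detconv}, Lemma~\ref{lemma:adjoinsharp}, and the concavity/tail argument), combined with the definition of $\Lambda(x\|y)$ as the maximal admissible factor. The only difference is that you spell out the monotonicity of $\lambda$-noisy-majorization in $\lambda$ and the attainment of the maximum in Definition~\ref{def:stateconversionwitness}, steps the paper treats as geometrically evident.
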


\begin{proposition}\label{lemma:formationstateconversion}
If $\Lambda(x\| y) < 0$, then the state conversion $x \otimes s_I \cconv y $ is possible if and only if
$$I \ge -\Lambda(x\|y).$$
That is, in order to achieve the conversion of $x$ to $y$ it costs a sharp state of nonuniformity at least $|\Lambda( x\| y)|$.
\end{proposition}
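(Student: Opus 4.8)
The plan is to reduce the claim to a statement about the order relation $\Lambda(x\|y)$, exploiting the dictionary between adjoining a sharp state and compressing a Lorenz curve. Recall from the discussion preceding Definition~\ref{def:stateconversionwitness} that, for $I \ge 0$, the conversion $x \otimes s_I \cconv y$ holds if and only if $x$ $(-I)$-noisy-majorizes $y$; this was obtained by feeding the compression formula of Lemma~\ref{lemma:adjoinsharp} into the deterministic-conversion criterion of Proposition~\ref{prop:detconv}, the tail region being handled by concavity of $L_x$. Thus the proposition is equivalent to showing that the set $\{\lambda : x \text{ } \lambda\text{-noisy-majorizes } y\}$ is exactly the half-line $(-\infty,\Lambda(x\|y)]$, so that $x$ $(-I)$-noisy-majorizes $y$ iff $-I \le \Lambda(x\|y)$, i.e.\ iff $I \ge -\Lambda(x\|y)$.

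The key step I would carry out is to establish that $\lambda$-noisy-majorization is \emph{monotone} in $\lambda$ (downward closed): if $x$ $\lambda$-noisy-majorizes $y$ and $\lambda' \le \lambda$, then $x$ $\lambda'$-noisy-majorizes $y$. Using the form \eqref{eq: LorenzmajorizationbyfactorR}, the hypothesis reads $L_x(u) \ge L_y(2^\lambda u)$ on $[0,u_*]$ with $u_* = \min\{1,2^{-\lambda}\}$, and the conclusion is required on the possibly larger interval $[0,u_*']$, $u_*' = \min\{1,2^{-\lambda'}\} \ge u_*$. On the old domain $[0,u_*]$ the claim is immediate: $2^{\lambda'}u \le 2^{\lambda}u$ and $L_y$ is nondecreasing, so $L_y(2^{\lambda'}u) \le L_y(2^{\lambda}u) \le L_x(u)$. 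On the new part $(u_*,u_*']$ (which occurs only when $\lambda>0$, so $u_* = 2^{-\lambda}$), I would observe that the hypothesis at $u=u_*$ forces $L_x(u_*) \ge L_y(2^{\lambda}u_*) = L_y(1) = 1$, hence $L_x(u_*)=1$; concavity of $L_x$ together with $L_x(1)=1$ then gives $L_x \equiv 1$ on $[u_*,1]$, so $L_x(u) = 1 \ge L_y(2^{\lambda'}u)$ there automatically. This proves downward closure and confirms that $\Lambda(x\|y)$ is well defined as the supremum of the set.

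It then remains to check that this supremum is \emph{attained}, so that $x$ itself $\Lambda(x\|y)$-noisy-majorizes $y$. Taking $\lambda_n \uparrow \Lambda(x\|y)$ from within the set, for each fixed $u \le u_*\big(\Lambda(x\|y)\big)$ one has $u \le u_*(\lambda_n)$ (since $u_*$ is nonincreasing in $\lambda$), so $L_x(u) \ge L_y(2^{\lambda_n}u)$ holds for all $n$; passing to the limit and using continuity of the piecewise-linear curve $L_y$ yields $L_x(u) \ge L_y\big(2^{\Lambda(x\|y)}u\big)$. Hence $\lambda \le \Lambda(x\|y)$ is not merely necessary but also sufficient for $x$ to $\lambda$-noisy-majorize $y$.

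Assembling the pieces with $\lambda = -I$ gives the biconditional: $x \otimes s_I \cconv y$ iff $x$ $(-I)$-noisy-majorizes $y$ iff $-I \le \Lambda(x\|y)$ iff $I \ge -\Lambda(x\|y) = |\Lambda(x\|y)|$, the final equality using the standing hypothesis $\Lambda(x\|y)<0$; the least admissible $I$ is therefore $-\Lambda(x\|y)$, which is precisely the claimed minimum nonuniformity cost. I expect the main obstacle to be the bookkeeping in the monotonicity step, specifically the case where $\lambda'$ crosses from positive to negative values and the cutoff $u_*$ jumps to $1$: there the concavity-plus-saturation argument for $L_x$ on the tail is the essential ingredient, exactly mirroring the reasoning already given for the $\lambda<0$ case before Definition~\ref{def:stateconversionwitness}.
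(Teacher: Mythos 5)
Your argument is correct and follows essentially the same route as the paper: the equivalence $x\otimes s_I\cconv y \Leftrightarrow x$ $(-I)$-noisy-majorizes $y$ via the Lorenz-curve compression of Lemma~\ref{lemma:adjoinsharp}, combined with the definition of $\Lambda(x\|y)$ as the maximal admissible $\lambda$. The only difference is that you explicitly verify the downward-closure in $\lambda$ (via monotonicity of $L_y$ plus the concavity/saturation argument on the tail) and the attainment of the supremum, two facts the paper treats as self-evident; this is a welcome tightening rather than a departure.
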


In general, there is not a simple expression for $\Lambda(x\|y)$ because it depends on the details of the shape of the Lorenz curves of $x$ and $y$.  

Nonetheless, one can compute $\Lambda(x\|y)$ by finite means. For every height $h$ in the plot of the Lorenz curves of $x$ and $y$, one obtains a bound on $\Lambda( x\| y)$ in terms of the ratio $L_y^{-1}(h)/L_x^{-1}(h)$.  However, because $L_y(u)$ and $L_x(u)$ are concave, it suffices to compute these ratios only at the heights $h$ corresponding to the elbows of $L_x(u)$ and $L_y(u)$.  Therefore, there are at most $d_x+d_y-1$ such comparisons that need to be made. 
It follows that one can define $\Lambda(x\|y)$ by
 \begin{equation}\label{eq:Lambdafinitemeans}
\Lambda(x\|y) \assign \max \left\{\lambda:  L_x(u) \ge L_y (2^\lambda u) \; \; \forall u \in \mathcal{U} \right\},
\end{equation}
where $\mathcal{U} \assign \left( \{ \tfrac{1}{d_x},\tfrac{2}{d_x},\dots, 1\} \cup \{ \tfrac{1}{d_y},\tfrac{2}{d_y},\dots, 1\}\right)  \cap( 0, u_* ]$ with $u_* \assign \min \{1, 2^{-\lambda} \}$. 

In addition, even without solving the optimization problem, one can determine some nontrivial bounds on $\Lambda(x \| y)$.

\begin{proposition}
\label{Proposition38}
$\Lambda(x\|y)$ is bounded above and below as follows:
\begin{align}
I_0(x) - I_\infty(y) &\le \Lambda(x \| y), \\ 
\Lambda(x \| y) &\le \min \{ I_0(x)-I_0(y), I_\infty(x)-I_\infty(y) \} .
\end{align}
If $x$ is a sharp state then
$$\Lambda(x \| y)= I(x) -I_\infty(y).$$ 
If $y$ is a sharp state then
$$\Lambda(x\|y)=I_0(x) - I(y).$$
If both $x$ and $y$ are sharp states then 
$$\Lambda(x \| y)=I(x) - I(y).$$
\end{proposition}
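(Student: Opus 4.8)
The plan is to derive the two general bounds first and then obtain all four special cases as immediate substitutions. Throughout I would work directly with the geometric definition~\eqref{eq:dualwitness2}, reading off the on-ramp slope and the tail position of the \emph{compressed} curve $u \mapsto L_y(2^\lambda u)$, which by Lemma~\ref{lemma:adjoinsharp} is precisely the Lorenz curve of $y$ with its $u$-axis rescaled by $2^\lambda$. By Eqs.~\eqref{eqSlope} and~\eqref{eqTailLength}, this compressed curve has initial slope $2^{I_\infty(y)+\lambda}$ and first reaches height $1$ at $u = 2^{-(I_0(y)+\lambda)}$.

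For the upper bound I would argue that any feasible $\lambda$ (one for which $L_x(u) \ge L_y(2^\lambda u)$ on the whole domain) must respect two consequences of this domination. First, comparing initial slopes as $u \to 0^+$: since $L_x(u)/u \to 2^{I_\infty(x)}$ while $L_y(2^\lambda u)/u \to 2^{I_\infty(y)+\lambda}$, domination forces $2^{I_\infty(x)} \ge 2^{I_\infty(y)+\lambda}$, i.e.\ $\lambda \le I_\infty(x) - I_\infty(y)$. Second, at the abscissa $u_1 = 2^{-(I_0(y)+\lambda)}$ where the compressed curve attains height $1$, domination forces $L_x(u_1) = 1$, so the tail of $x$ must already have begun, i.e.\ $2^{-I_0(x)} \le u_1$, which rearranges to $\lambda \le I_0(x) - I_0(y)$. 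Taking the maximum over feasible $\lambda$ yields $\Lambda(x\|y) \le \min\{I_0(x)-I_0(y),\, I_\infty(x)-I_\infty(y)\}$. I would then need to check that $u_1$ lies in the admissible domain $(0,u_*]$; this holds whenever $I_0(y)+\lambda \ge 0$, and in the complementary regime the desired inequality $\lambda \le I_0(x)-I_0(y)$ is automatic, since its left side is then negative while $I_0(x) \ge 0$.

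For the lower bound I would sandwich the compressed curve between $L_x$ and the Lorenz curve of the sharp state $s_{I_0(x)}$ (note $2^{I_0(x)}$ is rational, so this sharp state genuinely exists). Set $\lambda_0 := I_0(x) - I_\infty(y)$. The key observation is that the compressed curve $L_y(2^{\lambda_0}u)$ then has initial slope exactly $2^{I_\infty(y)+\lambda_0} = 2^{I_0(x)}$, coinciding with the constant on-ramp slope of $s_{I_0(x)}$. Since the compressed curve is concave and starts at the origin, it lies on or below its own initial-slope line $u \mapsto 2^{I_0(x)}u$, and trivially below height $1$; hence it lies on or below $L_{s_{I_0(x)}}(u) = \min\{2^{I_0(x)}u,\, 1\}$ everywhere. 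On the other hand, $x \cconv s_{I_0(x)}$ by Proposition~\ref{lemma:distillation}, so $L_x \ge L_{s_{I_0(x)}}$ by Proposition~\ref{prop:detconv}. Chaining the two inequalities gives $L_x(u) \ge L_y(2^{\lambda_0}u)$ on the admissible domain, so $\lambda_0$ is feasible and $\Lambda(x\|y) \ge I_0(x) - I_\infty(y)$.

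The four special cases then follow by substitution together with the ordering $I_0 \le I_\infty$ recorded in~\eqref{eq:infsupRenyi}. When $x$ is sharp, $I_0(x) = I_\infty(x) = I(x)$, and since $I_0(y) \le I_\infty(y)$ the minimum in the upper bound is attained by its $I_\infty$ term, so the two bounds coincide at $I(x) - I_\infty(y)$; the case where $y$ is sharp is symmetric, with the minimum attained by the $I_0$ term; and if both are sharp, both bounds collapse to $I(x) - I(y)$. I expect the main obstacle to be bookkeeping rather than anything conceptual: tracking the admissible domain $(0,u_*]$ with $u_* = \min\{1,2^{-\lambda}\}$ across the two sign regimes of $\lambda$, and confirming that the max in~\eqref{eq:dualwitness2} is controlled from above by \emph{every} feasible $\lambda$ and from below by the single explicit witness $\lambda_0$. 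The geometric sandwich in the lower bound is what lets me sidestep any worry about irrational values of $I_\infty(y)$, since it never requires the sharp state $s_{I_\infty(y)}$ to exist.
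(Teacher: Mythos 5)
Your proposal is correct and takes essentially the same route as the paper: your sandwich $L_x \ge L_{s_{I_0(x)}} \ge L_y(2^{\lambda_0}\,\cdot)$ is the geometric form of the paper's distill-then-form argument for the lower bound (cf.\ Fig.~\ref{fig:YieldOfConversion}), and your on-ramp-slope and tail-position constraints are the geometric form of its appeal to the monotonicity and additivity of $I_\infty$ and $I_0$ for the upper bounds, with the special cases handled identically via $I_0(s)=I_\infty(s)=I(s)$. Your explicit bookkeeping of the admissible domain and of the possible irrationality of $I_\infty(y)$ tightens details the paper leaves implicit.
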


To see the proof of these bounds, we consider the cases of $\Lambda(x\|y) \ge 0$ and $\Lambda(x\|y)<0$ separately, and use intuitions concerning nonuniformity cost and yield of state conversions.  See Figs.~\ref{fig:YieldOfConversion} and \ref{fig:CostOfConversion} for the associated Lorenz curves.

Consider the case $\Lambda(x\|y) \ge 0$ for instance.  The lower bound $\Lambda(x\|y)\geq I_0(x)-I_\infty(y)$ is evident from the fact that one can always achieve a yield which is at least the excess of the single-shot distillable nonuniformity of $x$, $I_0(x)$,  over the single-shot nonuniformity of formation of $y$, $I_{\infty}(y)$.   
We obtain the \emph{upper bounds} on the nonuniformity yield from the monotonicity and additivity of $I_\infty$ and of $I_0$. For instance, we have $I_0(x) \ge I_0(y \otimes s_I) = I_0(y)+ I$, which implies that $I\le I_0(x)-I_0(y)$.  Similar arguments hold for the case of $\Lambda(x\|y) < 0$.

Finally, recalling that for a sharp state $s$, $I(s) = I_0(s)=I_\infty(s)$, it follows that if either $x$ or $y$ or both are sharp states, the upper and lower bounds coincide and consequently we obtain an exact expression for $\Lambda$.

Note that if $y$ is a uniform state $m$ (which is a sharp state of nonuniformity 0), then  
$\Lambda(x\|y)=\Lambda(x\|m)=  I_0(x), $ which is positive, so the result predicts a nonuniformity yield of $I_0(x)$.
Given that one can prepare $m$ for free, the nonuniformity yield of this state conversion problem is just the nonuniformity that can be distilled from $x$.  This is indeed $I_0(x)$, as demonstrated in Lemma~\ref{lemma:distillation}. 

Similarly, if $x$ is a uniform state $m$, then $\Lambda(x\|y)=\Lambda(m\|y)=  -I_\infty(y)$, which is negative, so the result predicts a nonuniformity cost of  $I_\infty(y)$.  Given that $m$ is free, this state conversion problem is just the problem of forming the state $y$.
This reproduces the result of Lemma~\ref{lemma:formation} that the nonuniformity of formation is $I_\infty(y)$.

A state conversion process that has been of particular interest in the literature is that of erasure, which for a system of dimension $d$ is defined as the process which takes an arbitrary state $x$ to a pure state of dimension $d$.  Because such a pure state is a sharp state with nonuniformity $\log d$, we can immediately infer from our result that the nonuniformity cost of erasure of a state $x$ of a $d$-dimensional system is 
$$\log d - I_0(x) = H_0(x),$$
the R\'{e}nyi 0-entropy of $x$.   
Note that in the athermality theory, where the free states are thermal states at temperature $T$, the work cost of erasure is simply the nonuniformity cost multiplied by $kT \ln 2$.  And hence, the work cost of erasure of a state $x$ is $kT \ln 2 H_0(x)$. In particular, if $x$ is the uniform state of a binary variable, $m^{(2)}$, then the work cost is $kT \ln 2$. This is a form of Landauer's principle \cite{Landauer}.
Similarly, pure states can be used to do work in the athermality theory.   Bennett was perhaps the first to suggest that an initialized memory tape was capable of storing work \cite{BennettComp}.

If we imagine having a reservoir in a sharp state of arbitrarily large nonuniformity, then for any pair of states $x$ and $y$, one can ask: what is the minimum value of $\lambda_2 - \lambda_1$ such that $ x \otimes s_{\lambda_1} \cconv y \otimes s_{\lambda_2}$?  If this value is positive, there is a yield of nonuniformity, while if it is negative, there is a cost.  As it turns out, if this conversion is possible, then either it is possible with $\lambda_1=0$ or it is possible with $\lambda_2 = 0$.  The presence of the sharp state reservoir does not increase the yield or reduce the cost of any state conversion. This is a consequence of the uselessness of sharp states as catalysts, Proposition~\ref{prop:nocatwsharpstate} below.

For any given state $y$, if we consider $\Lambda(x\|y)$ as a function over states $x$, it is a nonuniformity monotone.
Similarly, for any given state $x$, if we consider $-\Lambda(x\|y)$ as a function over states $y$, it is a nonuniformity monotone.

The idea of quantifying ``how much'' one state majorizes another is from \cite{LawsOfThermo}. The maximum $\lambda$ for which $x$ $\lambda$-noisy-majorizes $y$, which we have denoted $\Lambda(x\|y)$, was there called the ``relative mixedness''.   The order relation that we have called $\lambda$-noisy-majorization was first introduced in \cite{QuantLandauer} as simply ``$\lambda$-majorization''. 
We use the term ``noisy-majorization'' because we are reserving the term ``majorization'' for the case where the states are of the same dimension, to accord with standard usage.  The identification of the nonuniformity yield of the state conversion $x \mapsto y$ as $\Lambda(x\|y)$ was also made in~\cite{QuantLandauer}.
This yield translates, in the athermality theory, to work extraction, featured in \cite{LawsOfThermo,SSP1,SSP2}, and to cooling, featured in \cite{janzing2000thermodynamic}.
That $\Lambda(x\|y)$ simplifies to the nonuniformity of formation when $x$ is a uniform state was noted in~\cite{LawsOfThermo}. That paper also presents the athermality cost of a particular erasure. The upper and lower bounds on the nonuniformity cost of state conversion, together with the exact expressions in the case where one of the states is sharp, have not previously been noted.

Finally, it is worth noting that these results 
provide necessary and sufficient conditions for a state conversion to be possible.
\begin{proposition}
The maximum factor $\lambda$ such that $x$ $\lambda$-noisy-majorizes $y$, denoted $\Lambda(x\|y)$, is a complete witness for the state conversion $x \cconv y$, that is,  $\Lambda(x\|y)\ge 0$ if and only if $x \cconv y$.
\end{proposition}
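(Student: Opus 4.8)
The plan is to reduce the claim to the Lorenz-curve characterization of noisy-majorization (Proposition~\ref{prop:detconv}) together with a monotonicity property of the family of $\lambda$-noisy-majorization relations in the parameter $\lambda$. The key starting observation is that the $\lambda=0$ instance of Definition~\ref{def:stateconversionwitness} is exactly ordinary noisy-majorization: for $\lambda=0$ we have $u_*=\min\{1,2^0\}=1$, so the defining condition $L_x(u)\ge L_y(2^0 u)$ for all $u\in(0,u_*]$ becomes $L_x(u)\ge L_y(u)$ for all $u\in(0,1]$, which is precisely Eq.~\eqref{eq:Lorenzorder}. Hence $x$ $0$-noisy-majorizes $y$ if and only if $x\cconv y$. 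It therefore suffices to prove that $\Lambda(x\|y)\ge 0$ holds if and only if $0$ lies in the set $S\assign\{\lambda:\,x\ \lambda\text{-noisy-majorizes }y\}$.

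The main step I would carry out is to show that $S$ is a \emph{down-set}: if $\lambda\in S$ and $\lambda'\le\lambda$, then $\lambda'\in S$. For $\lambda\le 0$ this is immediate, since then $u_*=u_*'=1$ and, using that Lorenz curves are non-decreasing together with $2^{\lambda'}u\le 2^{\lambda}u$, one has $L_x(u)\ge L_y(2^{\lambda}u)\ge L_y(2^{\lambda'}u)$ for all $u\in(0,1]$. The delicate case is $\lambda>0$, where the admissible domain $u_*=2^{-\lambda}$ shrinks with $\lambda$ and one must control the extra interval $(u_*,u_*']$, with $u_*'=\min\{1,2^{-\lambda'}\}\ge u_*$. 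On $(0,u_*]$ the same monotonicity of $L_y$ gives $L_x(u)\ge L_y(2^{\lambda}u)\ge L_y(2^{\lambda'}u)$. For the remaining interval I would reuse the saturation argument employed in the discussion following Lemma~\ref{lemma:adjoinsharp}: at $u=u_*=2^{-\lambda}$ the hypothesis yields $L_x(u_*)\ge L_y(2^{\lambda}u_*)=L_y(1)=1$, whence $L_x(u_*)=1$, and since $L_x$ is non-decreasing and bounded by $1$ it equals $1$ on all of $[u_*,1]$; thus $L_x(u)=1\ge L_y(2^{\lambda'}u)$ throughout $(u_*,u_*']$. This establishes $\lambda'\in S$ in every case.

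It remains to verify that $\Lambda(x\|y)=\max S$ is genuinely attained, so that ``$\Lambda(x\|y)\ge 0$'' is equivalent to ``$0\in S$'' even at the boundary. Here I would invoke the finite reformulation Eq.~\eqref{eq:Lambdafinitemeans}: the relevant abscissae may be restricted to the finitely many elbows of $L_x$ and $L_y$, so $S$ is cut out by finitely many conditions of the form $L_x(u_i)\ge L_y(2^{\lambda}u_i)$, each continuous and non-increasing in the feasibility direction; hence $S$ is closed, and it is nonempty and bounded above by Proposition~\ref{Proposition38}, so its supremum is attained. Combining the pieces: if $x\cconv y$ then $0\in S$ by the opening remark, so $\Lambda(x\|y)=\max S\ge 0$; conversely, if $\Lambda(x\|y)\ge 0$ then $\max S\ge 0$, and since $S$ is a down-set containing its maximum we get $0\in S$, i.e.\ $x$ $0$-noisy-majorizes $y$, which by Proposition~\ref{prop:detconv} gives $x\cconv y$.

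I expect the main obstacle to be the bookkeeping in the $\lambda>0$ case of the down-set property, where the shrinking domain $u_*$ forces the saturation argument on $(u_*,u_*']$; the attainment of the maximum is a secondary technical point, and everything else is a direct application of results already in hand.
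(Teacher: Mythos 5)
Your proof is correct, but it takes a genuinely different route from the paper's. The paper disposes of this proposition in two lines by citing Propositions~\ref{lemma:distillationstateconversion} and~\ref{lemma:formationstateconversion}: if $\Lambda(x\|y)\ge 0$ then $x\cconv y\otimes s_I$ for some $I\ge 0$ and one marginalizes over $s_I$; if $\Lambda(x\|y)<0$ then the cost proposition says a sharp state of strictly positive nonuniformity is needed, so $x\not\mapsto y$. That argument leans on the sharp-state interpretation of $\lambda$-noisy-majorization developed in Section~\ref{sec:costyieldstateconv}. You instead work directly from Definition~\ref{def:stateconversionwitness}, observing that the $\lambda=0$ relation is exactly condition (ii) of Proposition~\ref{prop:detconv} and then proving that $S=\{\lambda: x\ \lambda\text{-noisy-majorizes }y\}$ is a closed down-set, so $\max S\ge 0$ iff $0\in S$. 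Your saturation argument for $\lambda>0$ (forcing $L_x\equiv 1$ on $[2^{-\lambda},1]$) is the same device the paper uses when interpreting $\lambda$-noisy-majorization physically, though you get it from monotonicity and boundedness of $L_x$ rather than concavity, which is if anything cleaner. What your route buys is self-containedness and the explicit isolation of the down-set property, which is precisely the content underlying the ``if and only if $I\le\Lambda(x\|y)$'' claims of Propositions~\ref{lemma:distillationstateconversion} and~\ref{lemma:formationstateconversion} but is never stated as a lemma in the paper; what the paper's route buys is brevity given the machinery already in place. One minor wrinkle: your closedness argument via Eq.~\eqref{eq:Lambdafinitemeans} should acknowledge that the finite set $\mathcal{U}$ there itself depends on $\lambda$ through $u_*$; this is harmless (the conditions that drop out as $\lambda$ grows are exactly those made vacuous by the saturation $L_x\equiv 1$), but worth a sentence. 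Likewise, nonemptiness and boundedness of $S$ are better established directly (e.g.\ $\lambda\le -I_\infty(y)$ always lies in $S$ since $L_x(u)\ge u$, and $\lambda\le I_\infty(x)-I_\infty(y)$ is forced by comparing on-ramp slopes) than by citing Proposition~\ref{Proposition38}, whose statement presupposes that the maximum defining $\Lambda$ is attained.
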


\begin{proof}
By Proposition~\ref{lemma:distillationstateconversion}, $\Lambda(x\| y)\ge 0$ implies that $x \cconv y \otimes s_I$ with $I\ge 0$ and because marginalizing over $s_I$ is a free operation, this implies that $x \cconv y$.  Similarly, by Proposition~\ref{lemma:formationstateconversion}, $\Lambda(x\| y) <  0$ implies that $x \not \mapsto y$ by noisy operations. 
\end{proof}
 
This criterion was first proposed as a necessary and sufficient condition for state conversion in \cite{LawsOfThermo}.  In fact, they provided the generalization of this condition for the resource theory of athermality. The criterion was rederived in \cite{QuantLandauer}.

\subsection{Catalysis}\label{Sec:Catalysis}
 
\begin{definition}
In any given resource theory, if $x \not \mapsto y$ under the free operations, but there exists a state $z$ such that $x \otimes z \mapsto y \otimes z$, then $z$ is said to be a \emph{catalyst} for the conversion of $x$ to $y$.
\end{definition}
  In the resource theory of nonuniformity, there are nontrivial examples of catalysis.   This is true both for state conversions between states of equal dimension and between states of unequal dimension.  In the latter case, we have the following example, adapted from \cite{JonathanPlenio}.
The two states are
\begin{align}
x &= (0.5,0.25,0.25,0), \\
y &= (0.8, 0.2),
\end{align}
and the catalyst is
\begin{equation}
z = (0.6, 0.4).
\end{equation}
Here, $x \not \mapsto y$ by noisy operations, but $x \otimes z \cconv y \otimes z$.  
Specifically, there is a permutation taking $x \otimes z$ to $y \otimes m^{(2)} \otimes z$ where $m^{(2)}$ is the uniform state on a bit.  One then marginalizes over $m^{(2)}$ to obtain $y \otimes z$.

Although we have seen that Lorenz curves are extremely useful tools for many notions of state conversion under noisy operations, they are of limited usefulness in understanding catalysis.  Nonetheless, they do make evident one simple result:
\begin{proposition}\label{prop:nocatwsharpstate}
A sharp state is useless as a catalyst, that is, for all finite $I\in \bbR_+$, $x \otimes s_I \cconv y \otimes s_I$ if and only if $x \cconv y$.
\footnote{This result is the analogue of the fact that maximally entangled states are useless as catalysts in entanglement theory~\cite{JonathanPlenio}.}
\end{proposition}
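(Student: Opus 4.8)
The plan is to reduce both conversion problems to pointwise comparisons of Lorenz curves via Proposition~\ref{prop:detconv}, and then to exploit the explicit form of $L_{x\otimes s_I}$ recorded in Lemma~\ref{lemma:adjoinsharp}. The ``if'' direction is immediate and uses nothing about sharp states: if $x\cconv y$ by some noisy operation $D$, then applying $D$ on the first factor and the identity permutation on the ancilla is again a noisy operation, so $x\otimes s_I \cconv y\otimes s_I$. The content is therefore entirely in the ``only if'' direction, and here I would argue as follows.

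By Proposition~\ref{prop:detconv}, the conversion $x\otimes s_I \cconv y\otimes s_I$ holds if and only if $L_{x\otimes s_I}(u)\ge L_{y\otimes s_I}(u)$ for all $u\in[0,1]$. The key observation is that, by Eq.~\eqref{eq:LorenzxotimessR}, adjoining $s_I$ applies the \emph{same} $2^{I}$-fold horizontal compression to both curves: for $u\in[0,2^{-I}]$ one has $L_{x\otimes s_I}(u)=L_x(2^I u)$ and $L_{y\otimes s_I}(u)=L_y(2^I u)$, while both curves equal $1$ on the tail $(2^{-I},1]$. Substituting $v=2^I u$ then shows that the inequality on $[0,2^{-I}]$ is equivalent to $L_x(v)\ge L_y(v)$ for all $v\in[0,1]$, whereas on the tail the inequality reads $1\ge 1$ and is automatic. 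Hence the augmented-curve inequality is equivalent to $L_x\ge L_y$ pointwise, which by Proposition~\ref{prop:detconv} is precisely $x\cconv y$.

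The main point to verify carefully is that the compression factor is identical for the two curves --- which it is, since the same sharp state $s_I$ is adjoined on both sides --- and that the tail region destroys no information: because both augmented curves saturate at the value $1$ for $u>2^{-I}$, the tail imposes no constraint beyond what already holds on the compressed region. This is the conceptual reason sharp states cannot catalyze: a sharp state acts on a Lorenz curve purely by a rescaling of the $u$-axis (together with a trivial tail extension), and applying the same invertible rescaling to both the input and the target neither creates nor removes any crossing of the curves. I do not expect any serious obstacle here; the only thing to watch is the bookkeeping at the boundary point $u=2^{-I}$, where one should note that continuity of the Lorenz curves makes the two pieces of the comparison match up consistently.
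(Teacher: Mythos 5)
Your proof is correct and follows essentially the same route as the paper: both reduce the claim to Lorenz curves via Proposition~\ref{prop:detconv} and use Lemma~\ref{lemma:adjoinsharp} to observe that adjoining $s_I$ applies the identical $2^I$-fold horizontal compression (plus tail) to both curves, so no crossing is created or removed. Your version merely spells out the substitution $v=2^I u$ and the trivial tail comparison more explicitly than the paper does.
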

To see that this is the case, recall from Lemma~\ref{lemma:adjoinsharp} that the operation of joining a sharp state with nonuniformity $I$ to a state $x$, $x \mapsto x \otimes s_I$, is represented in terms of Lorenz curves as a $2^I$-fold compression along the $u$-axis towards the origin, and appending a tail of the appropriate length. As such, if the Lorenz curve of $x$ does not lie everywhere above or on the Lorenz curve of $y$, then the same relationship will hold between the Lorenz curve of $x \otimes s_I$ and that of $y \otimes s_I$. 

As an aside, this result has an interesting consequence for the role of ideal measurements in the resource theory of nonuniformity.  It is clear that the capacity to implement an ideal measurement requires a resource of nonuniformity because it requires the pointer to be initialized in a sharp state.  Nonetheless, one might wonder whether such an ideal measurement device might sometimes be used in order to make possible a state conversion that would otherwise be impossible, \emph{without} consuming any of the nonuniformity in the device.  In short, one might wonder whether an ideal measurement device might sometimes act as a catalyst.  However, insofar as such a device is modelled by a sharp state, Proposition~\ref{prop:nocatwsharpstate} implies that it is useless as a catalyst.  Whatever state conversion can be done with a measurement device that is used catalytically can also be achieved without it.

Proposition~\ref{prop:nocatwsharpstate} also implies that removing zero components from a state (or from another state that is noisy-equivalent to it) cannot change its effectiveness as a catalyst.  To express this claim compactly, it is useful to introduce a notation for removing zero components. 
\begin{definition}
\label{DefClipping}
For any state $x$ and $I \le I_0(x)$, define $x^{\langle I \rangle}$ to be the smallest-dimensional state such that  $x^{\langle I \rangle} \otimes s_I$ is noisy-equivalent to $x$.
\end{definition}
Clearly, the Lorenz curve of $x^{\langle I \rangle}$ is that of $x$ stretched by a factor of $2^I$ only the $u$-axis, thereby shortening its tail length.  If $I=I_0(x)$, then the Lorenz curve of $x^{\langle I \rangle}$ is that of $x$ stretched to the point where its tail length goes to zero.  
The map $x \mapsto x^{\langle I \rangle}$ will be referred to as a \emph{truncation} of $x$.
\begin{corollary}\label{truncatecatalyst}
For any $I \le I_0(z)$, $z$ catalyzes the conversion of $x$ to $y$ under noisy operations if and only if $z^{\langle I \rangle}$ does as well, that is, $x \otimes z \cconv y \otimes z$ if and only if $x \otimes z^{\langle I \rangle} \cconv y \otimes z^{\langle I \rangle}$.
\end{corollary}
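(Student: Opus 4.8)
The plan is to reduce the claim to Proposition~\ref{prop:nocatwsharpstate} (the uselessness of sharp states as catalysts) by exploiting the defining property of the truncation. By Definition~\ref{DefClipping}, the hypothesis $I \le I_0(z)$ guarantees that $z^{\langle I \rangle}$ exists and that $z$ is noisy-equivalent to $z^{\langle I \rangle} \otimes s_I$. First I would observe that tensoring both states in this equivalence with a fixed ancilla preserves noisy-equivalence, so that $x \otimes z$ is noisy-equivalent to $x \otimes z^{\langle I \rangle} \otimes s_I$, and likewise $y \otimes z$ is noisy-equivalent to $y \otimes z^{\langle I \rangle} \otimes s_I$.

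Next, since deterministic noisy conversion is a transitive relation (compositions of noisy operations are noisy) and noisy-equivalent states are freely interconvertible in both directions, one may replace the initial and final states of any conversion by noisy-equivalent ones without affecting convertibility. Hence $x \otimes z \cconv y \otimes z$ holds if and only if $x \otimes z^{\langle I \rangle} \otimes s_I \cconv y \otimes z^{\langle I \rangle} \otimes s_I$. At this point the statement follows immediately from Proposition~\ref{prop:nocatwsharpstate}, applied with the roles of the two generic states played by $x \otimes z^{\langle I \rangle}$ and $y \otimes z^{\langle I \rangle}$: appending the common sharp state $s_I$ to both does not affect convertibility, so the preceding equivalence holds if and only if $x \otimes z^{\langle I \rangle} \cconv y \otimes z^{\langle I \rangle}$. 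Chaining the two equivalences gives the corollary.

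The main point requiring care is the first step — that noisy-equivalence is stable under tensoring with a fixed state. This is not entirely automatic, but it follows from the fact that extending a noisy operation by the identity on an untouched subsystem is again a noisy operation: one adjoins the same uniform ancilla, applies the original permutation tensored with the identity (still a permutation on the enlarged physical state space), and marginalizes over the same complementary space. Applying the forward and reverse noisy operations witnessing $z \cconv z^{\langle I \rangle} \otimes s_I$ and $z^{\langle I \rangle} \otimes s_I \cconv z$ in this extended form on the $z$-factor alone then establishes the required equivalences. Everything else is a formal manipulation of the conversion relation, with no optimization or Lorenz-curve geometry needed, since Proposition~\ref{prop:nocatwsharpstate} already encapsulates the geometric content.
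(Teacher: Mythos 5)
Your proof is correct and follows essentially the same route as the paper's: use the noisy-equivalence of $z$ with $z^{\langle I \rangle}\otimes s_I$ to rewrite both sides of the catalytic conversion, then invoke Proposition~\ref{prop:nocatwsharpstate} to strip off the common sharp state $s_I$. The extra care you take to justify that tensoring with a fixed system preserves noisy-equivalence is a detail the paper leaves implicit, but it does not change the argument.
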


To see that this is the case, it suffices to note that $z$ is noisy-equivalent to $z^{\langle I \rangle} \otimes s_I$, so that $x \otimes z \cconv y \otimes z$ if and only if $x \otimes z^{\langle I \rangle} \otimes s_I \cconv y \otimes z^{\langle I \rangle} \otimes s_I$, and then to apply Proposition~\ref{prop:nocatwsharpstate}. It follows that states which act as nontrivial catalysts do not need to contain any zeros.

\subsubsection{Quasi-order of states under noisy operations assisted by a catalyst}

A quasi-order over states that is induced by a set of free operations assisted by a catalyst has been called a \emph{trumping} quasi-order~\cite{NielsenIntroduction}.  We will refer to the trumping quasi-order induced by noisy operations as the \emph{noisy-trumping} quasi-order.

At first glance, one might expect that the definition of noisy-trumping ought to be that $x$ noisy-trumps $y$ if there exists a catalyst $z$ such that $x \otimes z \cconv y \otimes z$.
However, we will see that this definition leads to some 
mathematical complications that are not physically significant, and we will consequently be led to consider modifications of this definition wherein either the input or the output state is allowed to be arbitrarily well approximated by $x$ or $y$ respectively. 
This move is analogous to how we chose to define noisy operations (Definition \ref{DefNoisyQOperations}) as those that can be arbitrarily well approximated by maps of the form~(\ref{eqNoisyQuantum}) rather than those that are precisely of that form.

We start by recapitulating the part of the results of~\cite{klimesh} and~\cite{Turgut} which is the basis of all that follows.
\begin{lemma}[\textcite{klimesh}]
\label{LemKlimesh}
Let $x$ and $y$ be states of the same dimensionality (i.e. $d_x=d_y=:d$) that do not both contain zero components
and that satisfy $x^\downarrow\neq y^\downarrow$. Then there is a catalyst $z$ such that $x\otimes z \cconv y\otimes z$ if and only if
$f_r(x)>f_r(y)$ for all $r\in\mathbb{R}$, where
\[
   f_r(x)=\left\{
      \begin{array}{cl}
         \ln \sum_{i=1}^d x_i^r & \mbox{if }r>1,\\
         \sum_{i=1}^d x_i \ln x_i & \mbox{if }r=1,\\
         -\ln \sum_{i=1}^d x_i^r & \mbox{if }0<r<1,\\
         -\sum_{i=1}^d \ln x_i & \mbox{if }r=0,\\
         \ln \sum_{i=1}^d x_i^r & \mbox{if }r<0.
      \end{array}
   \right.
\]
\end{lemma}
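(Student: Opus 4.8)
The plan is to recognise the functions $f_r$ as positive multiples of the R\'enyi negentropies $-H_r$ from Section~\ref{SubsecRenyip}, and then to split the proof into an easy necessity direction — powered by the additivity and Schur-convexity already in hand — and a hard sufficiency direction, which requires an explicit construction of the catalyst.

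I would begin by recording the identification. Comparing the definition of $f_r$ with $-H_p(x)=\tfrac{\sgn(p)}{p-1}\log\sum_i x_i^p$, a case check shows that for each $r$ there is a constant $c_r>0$ with $f_r=c_r(-H_r)$, the values $r=1$ and $r=0$ giving the Shannon and Burg negentropies (up to a factor $\ln 2$). Hence every $f_r$ is Schur-convex on a fixed-dimensional space, and in fact strictly so, since each is built from a strictly convex $g$ composed with a strictly increasing outer function. The second ingredient is additivity: for $r\neq 0$ the identity $\sum_{i,j}(x_i z_j)^r=\bigl(\sum_i x_i^r\bigr)\bigl(\sum_j z_j^r\bigr)$ gives $f_r(x\otimes z)=f_r(x)+f_r(z)$ (cf.\ Eq.~\eqref{eq:Renyiadditivity}), while for $r=0$ a direct computation yields $f_0(x\otimes z)=d_z f_0(x)+d_x f_0(z)$. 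Because $d_x=d_y$, in every case the catalyst contribution cancels from the difference $f_r(x\otimes z)-f_r(y\otimes z)$, up to a positive factor.

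For \textbf{necessity}, suppose a catalyst $z$ exists, so that $x\otimes z\cconv y\otimes z$; since these states have equal dimension, Proposition~\ref{prop:detconv} gives $x\otimes z\succ y\otimes z$. Schur-convexity of $f_r$ then yields $f_r(x\otimes z)\ge f_r(y\otimes z)$, and the cancellation above gives $f_r(x)\ge f_r(y)$ for every $r$. To upgrade to strict inequalities I would argue by contradiction: if equality held for even one $r$, strict Schur-convexity would force $(x\otimes z)^\downarrow=(y\otimes z)^\downarrow$, and since the R\'enyi entropies of all orders determine the sorted spectrum this would give $x^\downarrow=y^\downarrow$, contradicting the hypothesis. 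The assumption that $x$ and $y$ do not both contain zeros is what keeps the $r\le 0$ functions finite, so that this comparison is meaningful.

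The \textbf{sufficiency} direction is the main obstacle. Assuming $f_r(x)>f_r(y)$ for all $r$, one must manufacture a $z$ with $x\otimes z\succ y\otimes z$, i.e.\ with the Lorenz curve of $x\otimes z$ lying strictly above that of $y\otimes z$ on $(0,1)$. I would take $z$ to be (close to) a large tensor power of a fixed auxiliary distribution and study the partial sums $S_k$ of the sorted product vectors through a large-deviation/cumulant estimate, the point being that the exponential growth rates of these partial sums are governed precisely by the $f_r$; the strict ordering of all $f_r$ then forces strict Lorenz dominance in the interior. The delicate steps — and where the real difficulty lies — are controlling the endpoints of the Lorenz curves (the on-ramp slope and the tail, tied to the limits $I_\infty$, $I_{-\infty}$ and to $I_0$, which is where the support hypotheses enter) and passing from this asymptotic heuristic to an exact finite catalyst. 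For these I would defer to the constructions of \textcite{klimesh} and \textcite{Turgut} rather than reproduce them.
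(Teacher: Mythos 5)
This lemma is stated in the paper as an imported result --- it is attributed to \textcite{klimesh} in its header, and the surrounding text says only that it ``recapitulates'' the results of \textcite{klimesh} and \textcite{Turgut}; the paper offers no proof of its own. Your proposal is therefore consistent with the paper's treatment, and it goes somewhat further by actually establishing the necessity direction. That argument is sound: the identification of each $f_r$ with a positive multiple of the corresponding negentropy (R\'enyi for $r\notin\{0,1\}$, Shannon for $r=1$, Burg for $r=0$), additivity, strict Schur-convexity, and the moment argument showing that equality of all $f_r$ forces $x^\downarrow=y^\downarrow$ all check out. One small repair is needed: when the catalyst $z$ contains zero components and $r\le 0$, both $f_r(x\otimes z)$ and $f_r(y\otimes z)$ are $+\infty$, so the ``cancellation of the catalyst contribution'' is an $\infty-\infty$ step; you should first replace $z$ by its full-support truncation, which remains a valid catalyst by Corollary~\ref{truncatecatalyst}. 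For sufficiency, your large-deviation heuristic is a reasonable intuition but is not how either original proof proceeds (both rely on delicate explicit constructions of finite catalysts rather than tensor-power asymptotics); since you explicitly defer to \textcite{klimesh} and \textcite{Turgut} for this direction rather than claiming the sketch as a proof, the proposal stands, and deferring here is exactly what the paper itself does.
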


This result resembles our notion of a \emph{complete set of monotones}, introduced in Subsection~\ref{SubsubSchurConvexity} as describing a set
of monotones that completely characterize state transformations: the lemma states that the functions $f_r$ \emph{almost} have this property -- but not
exactly. Namely, while $f_r(x)>f_r(y)$ for all $r$ implies that $x$ can be catalytically converted to $y$, we cannot infer anything in the case where
$f_r(x)\geq f_r(y)$ (and equality, for example, is attained at only one single value of $r$).

Not only is this at odds with the definition of a complete set of monotones; it also indicates a certain unphysical kind of discontinuity.
Imagine some fixed state $x$, and a sequence of states $y_n$, all of the same dimensionality, which converge to $y$ in the limit of large $n$, i.e.\ $\lim_{n\to\infty} y_n=y$.
It may well be the case that $f_r(x)>f_r(y_n)$ for all $r\in\mathbb{R}$ and $n\in\mathbb{N}$, but that only the non-strict inequality survives the limit,
i.e.\ $f_r(x)\geq f_r(y)$ with actual equality for some $r$. In this case, we could convert $x$ catalytically into \emph{any} of the states $y_n$, but \emph{not}
into the state $y$.

The physical interpretation would be odd: it would tell us that we cannot produce $y$ perfectly (catalytically from $x$), but we can produce
it to arbitrary accuracy. However, in actual physical situations, we can \emph{never} expect to produce any state perfectly; conversion to arbitrary
accuracy is the best we can hope for. Therefore, we would like to have a definition of trumping that takes this physical limitation into account, and that
includes situations where one can obtain $y$ from $x$ catalytically to arbitrary accuracy, even if one cannot obtain it perfectly.

The most straightforward way to obtain a definition like this would be to say that
 $x$ noisy-trumps $y$ if and only if $x$ can be catalytically converted
to an arbitrarily good approximation of $y$ by a suitable catalyst. In Appendix~\ref{Sec:appnoisytrumping}, we will analyze this definition, recovering and elaborating some results from~\cite{1ShotAtherm2}. However, we will see that this definition still contains a discontinuity that has no physical counterpart, namely, that input states which contain zero components behave differently from input states that have arbitrarily small components, even though the two could never be physically distinguished.

To achieve a physically sensible definition of noisy trumping, we must instead allow that the input state required to form $y$ catalytically be merely arbitrarily close to the noisy equivalence class of $x$, rather than a perfect copy of $x$.  
In order to avoid the mathematical subtleties of defining a metric over a set of states of differing dimensions, we will adopt a specific type of approximation to the noisy equivalence class of $x$, namely, the state that is the composite of $x$ and a sharp state that is arbitrarily close to the uniform state.  This sort of definition was also suggested in~\cite{1ShotAtherm2} where it was described as allowing some nonuniformity to be consumed in the process, as long as the amount of nonuniformity can be made arbitrarily small.

\begin{definition}[Noisy trumping]
\label{DefNoisyTrumping}
We say that $x$ noisy-trumps $y$ if for any $\delta>0$, no matter how small,
 there is a catalyst $z$ such that
\[
   x\otimes s_\delta \otimes z \cconv y\otimes z.
\]
\end{definition}
It turns out that the ability to consume an arbitrarily small amount of nonuniformity unlocks potential state transformations that are otherwise impossible.

Clearly, if $x\otimes z\cconv y\otimes z$ for some catalyst $z$, then $x$ noisy-trumps $y$ 
 (because for the input state $x\otimes s_\delta \otimes z$, one can simply marginalize over $s_\delta$ and then implement $x\otimes z\cconv y\otimes z$). 
However, the converse is not true: it is impossible to obtain $x\otimes z\cconv y\otimes z$ as a formal ``limit $\delta\to 0$'' of the definition of noisy trumping.
This will be established below, by showing that the conditions for noisy-trumping are different from the conditions in Lemma~\ref{LemKlimesh}.

It is no restriction to demand that the target state $y$ be produced perfectly: as we show in Corollary~\ref{CorTrumpingClosedness} below, it is an automatic consequence of Definition~\ref{DefNoisyTrumping} that the ability to produce arbitrarily good approximations of $y$ implies the ability to produce $y$ perfectly.

Note also that Definition~\ref{DefNoisyTrumping} requires that the final state of the catalyst be \emph{precisely} equal to its initial state.  This implies that the catalyst can be reused arbitrarily many times and consequently that any nonuniformity required to form $y$ must have been drawn from $x$, or a state arbitrarily close to it, and not from the catalyst.\footnote{Note that this verdict on the reusability of the catalyst would not change even if we interpreted the sharp state $s_{\delta}$ as being a modification of the initial state of the catalyst rather than a modification of the initial state of $x$.  For any given number $N$ of reuses of the catalyst, one could still take $N \delta$ to be arbitrarily small. }

Finally, note that in this article, we refer to state conversions between arbitrarily good approximations of states as \emph{exact}.  A state conversion is only described as \emph{approximate} if the degree of approximation is finite. Section~\ref{Sec:Approximatestateconversion} considers approximate state conversion.

The results by Klimesh and Turgut, Lemma~\ref{LemKlimesh}, imply the following characterization of noisy-trumping:

\begin{lemma}[Conditions for noisy-trumping]\label{lemma:trumping}
Let $x$ and $y$ be any pair of states. Then $x$ noisy-trumps $y$ if and only if
\begin{equation}
   I_p(x)\geq I_p(y)\mbox{ for all }p\geq 0.
   \label{eqTrumpingIp}
\end{equation}
In other words, the set of R\'enyi $p$-nonuniformities for nonnegative $p$, $\{I_p: p \geq 0 \}$ defined in Table~\ref{table:NUmonotones} (the $p=1$ case corresponds to the Shannon nonuniformity), is a complete set of monotones for the noisy trumping relation.
\end{lemma}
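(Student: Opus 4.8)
The plan is to prove both directions by reducing to the catalytic-majorization criterion of Klimesh and Turgut (Lemma~\ref{LemKlimesh}), and the crux will be that the factor $s_\delta$ in Definition~\ref{DefNoisyTrumping} does two jobs at once: it turns the non-strict inequalities $I_p(x)\ge I_p(y)$ into strict ones, and it injects zero components into the input so that the conditions of Lemma~\ref{LemKlimesh} at $r\le 0$ (which have no analogue in the list $p\ge0$) become vacuous.

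\emph{Necessity} is immediate. If $x$ noisy-trumps $y$, then for each $\delta>0$ there is a catalyst $z$ with $x\otimes s_\delta\otimes z\cconv y\otimes z$. Each $I_p$ with $p\ge0$ is a nonuniformity monotone (Theorem~\ref{thm:nonuniformitymonotones}) and additive (Eq.~\eqref{eq:Renyiadditivity}), and $I_p(s_\delta)=\delta$ for all $p\ge0$ because $s_\delta$ is sharp. Applying $I_p$ and cancelling the finite term $I_p(z)$ gives $I_p(x)+\delta\ge I_p(y)$, and letting $\delta\to0$ yields $I_p(x)\ge I_p(y)$.

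\emph{Sufficiency} is the substantial direction. Assume $I_p(x)\ge I_p(y)$ for all $p\ge0$ and fix $\delta>0$; it suffices to treat $\delta$ with $2^\delta$ rational, since $s_\delta\cconv s_{\delta'}$ for $\delta>\delta'$ lets a small admissible $\delta'$ cover every $\delta$. Using the truncation of Definition~\ref{DefClipping}, write $x\sim \hat x\otimes s_{I_0(x)}$ and $y\sim \hat y\otimes s_{I_0(y)}$, where $\hat x:=x^{\langle I_0(x)\rangle}$ and $\hat y:=y^{\langle I_0(y)\rangle}$ have full support. Since $I_0(x)\ge I_0(y)$, set $\eta:=I_0(x)+\delta-I_0(y)>0$, so that $x\otimes s_\delta\sim \hat x\otimes s_\eta\otimes s_{I_0(y)}$ and $y\sim \hat y\otimes s_{I_0(y)}$. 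The common factor $s_{I_0(y)}$ can be carried through any conversion (tensoring a noisy operation with the identity), so it suffices to find $z$ with $\hat x\otimes s_\eta\otimes z\cconv \hat y\otimes z$. Here the target $\hat y$ has full support, while the input $\hat x\otimes s_\eta$ contains zeros because $\eta>0$. Now pad with uniform states so that $a:=\hat x\otimes s_\eta\otimes m^{(\cdot)}$ and $b:=\hat y\otimes m^{(\cdot)}$ have equal dimension; padding leaves every $I_p$ unchanged, keeps $a$ with zeros and $b$ full-support, so the hypotheses of Lemma~\ref{LemKlimesh} (not both containing zeros, and $a^\downarrow\neq b^\downarrow$) hold. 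For $r\le0$ one has $f_r(a)=+\infty>f_r(b)$ automatically, since $a$ has a zero component and $b$ does not. For $r>0$, a routine computation (in which the $\log d$ terms cancel because the dimensions are equal) shows $f_r(a)>f_r(b)$ iff $I_r(a)>I_r(b)$, and $I_r(a)-I_r(b)=I_r(\hat x)+\eta-I_r(\hat y)=I_r(x)+\delta-I_r(y)>0$ by hypothesis. Thus $f_r(a)>f_r(b)$ for all $r\in\mathbb{R}$, and Lemma~\ref{LemKlimesh} supplies a catalyst $z$ with $a\otimes z\cconv b\otimes z$. Adding the padding before and removing it after (both free), with $z$ returned intact, gives $\hat x\otimes s_\eta\otimes z\cconv\hat y\otimes z$; tensoring through $s_{I_0(y)}$ then yields $x\otimes s_\delta\otimes z\cconv y\otimes z$, as required.

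The main obstacle is precisely reconciling Klimesh's criterion---strict inequalities of $f_r$ over all $r\in\mathbb{R}$, including the Burg value $r=0$ and the range $r<0$, together with the ``not both zero'' restriction---with the weaker one-sided list $\{I_p:p\ge0\}$. The resolution hinges on recognizing that $s_\delta$ simultaneously strictifies the $p>0$ comparisons and, by contributing zeros to the input, renders all $r\le0$ comparisons vacuous; the truncation bookkeeping that pushes each state's zero-content into an explicit sharp factor is what guarantees the target $\hat y$ is full-support, so that these infinities fall in the favorable direction. A secondary but essential point is that the uniform padding used for dimension-matching must be absorbed into the (free) conversion rather than into the catalyst, so that $z$ alone is returned unchanged.
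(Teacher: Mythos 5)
Your proposal is correct and follows essentially the same route as the paper's proof: necessity via monotonicity and additivity of the $I_p$ with the $\delta\to 0$ limit, and sufficiency by reducing to the Klimesh--Turgut criterion, using the sharp factor both to strictify the $p>0$ inequalities and to inject zeros into the input so that the $r\le 0$ conditions hold automatically against a full-support target. Your bookkeeping (truncating $x$ by $I_0(x)$ and $y$ by $I_0(y)$, then regrouping the sharp factors) is equivalent to the paper's choice of truncating both states by $I_0(y)$.
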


\begin{proof}
Necessity of these conditions for noisy-trumping is straightforward to prove. Applying the fact that every $I_p$ is a nonuniformity monotone (cf.\ Subsection~\ref{SubsecRenyip}) to
Definition~\ref{DefNoisyTrumping}, we obtain
\[
   I_p(x\otimes s_\delta\otimes z)\geq I_p(y\otimes z)
\]
for every $\delta>0$, and using additivity of $I_p$ as well as $I_p(s_\delta)=\delta$, this becomes
\[
   I_p(x)+\delta\geq I_p(y)\qquad\mbox{for all }\delta>0,
\]
which implies that $I_p(x)\geq I_p(y)$.

Sufficiency of the conditions for noisy-trumping are much more difficult to prove. We will do so by appealing to the results of~\cite{klimesh},
summarized in Lemma~\ref{LemKlimesh}.

We begin by noting that if the states under consideration are of unequal dimensions, $d_x \ne d_y$, then we can simply replace $x$ by $x\otimes m^{(d_y)}$ and replace $y$ by $y\otimes m^{(d_x)}$, because the replacement does not alter the values of any of the $I_p$'s, the property of containing zeros or not, or the noisy-trumping property.
Thus, we may assume that $d_x=d_y$.

So suppose that $I_p(x)\geq I_p(y)$ for all $p\geq 0$. Fix any $\delta>0$ which is the logarithm of a rational number. We start by removing common zeros from $x$ and $y$ in the sense
of Definition~\ref{DefClipping}. Set $y':=y^{\langle I_0(y)\rangle}$. 
Since $I_0(x)\geq I_0(y)$, the state $x':=x^{\langle I_0(y)\rangle}$ is well-defined.
 We will be considering a state conversion with initial state $x'\otimes s_\delta$ and a final state in the noisy equivalence class of $y'$.  To apply Lemma~\ref{LemKlimesh}, we need to ensure that the dimensions of the initial and final states are equal, so we take the final state to be $y'\otimes m^{(d_{s_\delta})}$ where $d_{s_\delta}$ is the dimension of the space on which $s_\delta$ is defined.
 We now show that $ I_p(x'\otimes s_\delta) > I_p(y'\otimes m^{(d_{s_\delta})})$ for all $p \geq 0$.
 
By definition of $x'$, $x'\otimes s_{I_0(y)}$ is noisy-equivalent to $x$. Recalling the additivity of the $I_p$'s and the fact that $I_p(s_I)=I$, we have that $I_p(x')+I_0(y)=I_p(x).$  For similar reasons, $I_p(y') + I_0(y)= I_p(y)$.  This implies that
\begin{eqnarray*}
   I_p(x'\otimes s_\delta)&=&I_p(x)-I_0(y)+\delta \\
   &>& I_p(y)-I_0(y) = I_p(y'\otimes m^{(d_{s_\delta})})
\end{eqnarray*}
for all $p>0$. We note that the monotonicity of the family of functions $\{ f_p \,\,|\,\, p>0 \}$ is equivalent to the monotonicity of the family $\{ I_p \,\,|\,\, p>0 \}$ of
R\'{e}nyi $p$-nonuniformities. Therefore, we obtain $f_p(x'\otimes s_\delta)>f_p(y'\otimes m^{(d_{s_\delta})})$ for all $p>0$.
Since $x'\otimes s_\delta$ contains zeros, but $y'\otimes m^{(d_{s_\delta})}$ does not,
it follows that for all $p \leq 0$ we have $\infty=f_p(x'\otimes s_\delta)>f_p(y'\otimes m^{(s_\delta)})$. Thus, Lemma~\ref{LemKlimesh} shows that there is a catalyst $z$ such that
\[
   x'\otimes s_\delta\otimes z \cconv y'\otimes m^{(d_{s_\delta})}\otimes z.
\]
By adding on a sharp state $s_{I_0(y)}$ on both sides (which is not touched during the state conversion), and using again the noisy-equivalence
of $x'\otimes s_{I_0(y)}$ and $x$ (and similarly for $y$), and the fact that one can marginalize over the uniform state,
we see that $x\otimes s_\delta\otimes z \cconv y\otimes z$. Since $\delta>0$ was chosen arbitrarily, this proves that $x$ noisy-trumps $y$.
\end{proof}

An immediate consequence of Lemma~\ref{lemma:trumping} is a certain continuity (or closedness) property which is desirable from the point of view
of physical interpretation.
\begin{corollary}
\label{CorTrumpingClosedness}
Suppose that $x$ is any state, and $(y_n)_{n\in\mathbb{N}}$ is any sequence of states of common dimensionality which converges to some state $y$, i.e.\ $\lim_{n\to\infty} y_n=y$.
If $x$ noisy-trumps every $y_n$ then $x$ noisy-trumps $y$.
\end{corollary}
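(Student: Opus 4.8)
The plan is to reduce the statement to the complete-monotone characterization of noisy-trumping furnished by Lemma~\ref{lemma:trumping}, and then exploit the continuity of the R\'enyi nonuniformities. By that lemma, the hypothesis that $x$ noisy-trumps every $y_n$ is equivalent to the family of inequalities $I_p(x)\ge I_p(y_n)$ holding for all $p\ge 0$ and all $n$, while the desired conclusion, that $x$ noisy-trumps $y$, is equivalent to $I_p(x)\ge I_p(y)$ for all $p\ge 0$. So it suffices to pass these inequalities to the limit $n\to\infty$. Since all the $y_n$ and $y$ live in a common fixed-dimensional simplex, $\lim_{n\to\infty}y_n=y$ is ordinary convergence of vectors there.

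First I would treat the case $p>0$. For every fixed $p>0$ with $p\neq 1$ one has $I_p(w)=\log d-\tfrac{1}{p-1}\log\sum_i w_i^p$ on the relevant $d$-dimensional simplex, and the map $w\mapsto\sum_i w_i^p$ is continuous (a finite sum of the continuous functions $w_i\mapsto w_i^p$, continuous at $0$ because $p>0$) and strictly positive on normalized distributions; hence $I_p$ is continuous there. The case $p=1$ is likewise continuous, using $I(w)=\log d+\sum_i w_i\log w_i$ with the convention $0\log 0=0$. Taking $n\to\infty$ in $I_p(x)\ge I_p(y_n)$ and using $\lim_n I_p(y_n)=I_p(y)$ then gives $I_p(x)\ge I_p(y)$ for every $p>0$.

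The case $p=0$ is the main obstacle, because $I_0$ is \emph{not} continuous: it depends on its argument only through $|\supp(\cdot)|$, and the support can drop discontinuously in a limit (for example, if every $y_n$ has full support but $y$ does not, then $I_0(y_n)=0$ while $I_0(y)>0$). A direct limit at $p=0$ therefore fails, and fails in the wrong direction. I would circumvent this by never taking a limit at $p=0$, instead deriving the $p=0$ inequality from the $p>0$ inequalities already in hand. By the ordering~\eqref{eq:order2} together with the limit $\lim_{p\to 0^+}I_p(w)=I_0(w)$ recorded in Subsection~\ref{SubsecRenyip}, one has $I_0(w)=\inf_{p>0}I_p(w)$ for every state $w$ (this is also consistent with Eq.~\eqref{eq:infsupRenyi}). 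Applying this to $w=x$ and $w=y$ and using the inequalities $I_p(x)\ge I_p(y)$ for all $p>0$, I take the infimum over $p>0$ on both sides:
\[
I_0(x)=\inf_{p>0}I_p(x)\ \ge\ \inf_{p>0}I_p(y)=I_0(y).
\]
Combining this with the inequalities for $p>0$ yields $I_p(x)\ge I_p(y)$ for all $p\ge 0$, which by Lemma~\ref{lemma:trumping} is precisely the statement that $x$ noisy-trumps $y$. Thus the only delicate point is the discontinuity of $I_0$, and it is handled entirely by replacing a limit argument at $p=0$ with the infimum characterization $I_0=\inf_{p>0}I_p$.
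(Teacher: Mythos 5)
Your proof is correct and follows essentially the same route as the paper's: reduce to the complete set of monotones $\{I_p\}_{p\ge 0}$ via Lemma~\ref{lemma:trumping}, pass to the limit in $n$ using continuity for $p>0$, and recover the $p=0$ inequality from $I_0=\lim_{p\searrow 0}I_p$ (your infimum formulation $I_0=\inf_{p>0}I_p$ is the same maneuver, given the monotonicity of $I_p$ in $p$).
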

\begin{proof}
According to Lemma~\ref{lemma:trumping}, if $x$ noisy-trumps every $y_n$ then $I_p(x)\geq I_p(y_n)$ for every $n\in\mathbb{N}$ and $p\geq 0$.
Since every $I_p$ is continuous if $p>0$, it follows that $I_p(x)\geq I_p(y)$ for $p>0$. Even though $I_0$ is not continuous, we can express it as $I_0=\lim_{p\searrow 0} I_p$,
and so we also obtain $I_0(x)\geq I_0(y)$ by taking the limit $p\searrow 0$. Invoking Lemma~\ref{lemma:trumping} again shows that $x$ noisy-trumps $y$.
\end{proof}

Lemma \ref{lemma:trumping} can also be expressed in terms of the existence of a complete witness for noisy-trumping, as follows.
\begin{definition}
\label{DefLambda}
For an arbitrary pair of states $x$ and $y$, define
\[
   \Lambda_{\rm cat}(x\|y) := \inf_{p\geq 0}\left(\strut I_p(x)-I_p(y) \right).
\]
\end{definition}
\begin{proposition}
\label{PropTrumpingLambda}
For an arbitrary pair of states $x$ and $y$, $x$ noisy-trumps $y$ if and only if
\begin{equation}\label{eq:LamCat}
  \Lambda_{\textnormal{cat}}(x\| y)\ge0.
\end{equation}
Equivalently, the function $\Lambda_{\textnormal{cat}}(x\| y)$ is a complete witness for noisy-trumping.
\end{proposition}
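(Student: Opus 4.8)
The plan is to deduce this proposition directly from Lemma~\ref{lemma:trumping}, which already establishes that $x$ noisy-trumps $y$ if and only if $I_p(x)\geq I_p(y)$ for all $p\geq 0$. The only work that remains is to verify that the single scalar condition $\Lambda_{\textnormal{cat}}(x\|y)\geq 0$ encodes precisely this entire family of inequalities, and this is a purely elementary fact about infima. In this sense all the substantive content has already been extracted from the Klimesh--Turgut results via Lemma~\ref{lemma:trumping}, and the present statement is essentially a cosmetic repackaging of it in witness language.

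First I would set $h(p)\assign I_p(x)-I_p(y)$, so that by Definition~\ref{DefLambda} we have $\Lambda_{\textnormal{cat}}(x\|y)=\inf_{p\geq 0}h(p)$, and then argue the equivalence $\Lambda_{\textnormal{cat}}(x\|y)\geq 0 \Leftrightarrow h(p)\geq 0\ \forall p\geq 0$ in both directions. For the forward direction, note that for every fixed $p\geq 0$ the value $h(p)$ is bounded below by the infimum, so $h(p)\geq \inf_{p'\geq 0}h(p')=\Lambda_{\textnormal{cat}}(x\|y)\geq 0$. For the reverse direction, if $h(p)\geq 0$ for all $p\geq 0$, then $0$ is a lower bound for the set $\{h(p):p\geq 0\}$, and since the infimum is the greatest lower bound it follows that $\Lambda_{\textnormal{cat}}(x\|y)=\inf_{p\geq 0}h(p)\geq 0$. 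This establishes the equivalence regardless of whether the infimum is attained. Chaining it with Lemma~\ref{lemma:trumping} then yields the biconditional: $\Lambda_{\textnormal{cat}}(x\|y)\geq 0$ iff $I_p(x)\geq I_p(y)$ for all $p\geq 0$ iff $x$ noisy-trumps $y$. The assertion that $\Lambda_{\textnormal{cat}}$ is a \emph{complete witness} is immediate from the definitions in Section~\ref{Freeopsmonotoneswitnesses}: the go direction is that $\Lambda_{\textnormal{cat}}(x\|y)\geq 0$ implies trumping, and the no-go direction is the contrapositive that $\Lambda_{\textnormal{cat}}(x\|y)<0$ implies $x$ does not noisy-trump $y$, both of which are packaged in the biconditional just proved.

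Since the heavy lifting is done entirely inside Lemma~\ref{lemma:trumping}, I expect no genuine obstacle here. The only point warranting a sentence of care is the passage from the infimum condition to the pointwise family of inequalities, and in particular the observation that this passage remains valid even when the infimum is not attained---for instance when the binding constraint sits at $p=0$, where $I_0$ is only the limit $\lim_{p\searrow 0} I_p$ of the continuous monotones, exactly the subtlety already exploited in Corollary~\ref{CorTrumpingClosedness}.
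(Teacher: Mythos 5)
Your proposal is correct and matches the paper's treatment: the paper states Proposition~\ref{PropTrumpingLambda} as an immediate reformulation of Lemma~\ref{lemma:trumping} (``Lemma~\ref{lemma:trumping} can also be expressed in terms of the existence of a complete witness\dots''), and your elementary infimum argument is exactly the missing glue, with no hidden issues since $I_p$ is finite for all $p\geq 0$ on finite distributions.
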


The necessary and sufficient conditions for noisy-trumping are a strict subset of those for noisy-majorization because noisy-majorization implies noisy-trumping but not vice-versa.  It follows that one must add conditions to those of Lemma~\ref{lemma:trumping} in order to characterize noisy-majorization.  In other words, R\'{e}nyi $p$-nonuniformities alone cannot decide whether a given noncatalytic state conversion is possible. This is true even if we take the R\'enyi $p$-nonuniformities of orders $p<0$ into account.  The reason is as follows.
Every R\'{e}nyi $p$-nonuniformity is additive and therefore is nonincreasing under $x \cconv y$ if and only if it is nonincreasing under $x \otimes z \cconv y \otimes z$ for any state $z$.  
To do justice to the distinction between noisy-majorization and noisy-trumping, therefore, one must consider one or more monotones that are \emph{not} additive, hence one must go beyond the R\'{e}nyi nonuniformities. 

\subsubsection{Nonuniformity of formation and distillable nonuniformity in the presence of a catalyst}

The nonuniformity of formation and the distillable nonuniformity (as determined in Proposition~\ref{lemma:distillation}
and~\ref{lemma:formation}) are not changed by having access to a catalyst.  

\begin{corollary}\label{corollary:formationcatalysis}
$x$ noisy-trumps $s_I$ if and only if
$$I \le I_0(x),$$
that is, the maximum nonuniformity of sharp state that can be distilled from $x$ under noisy operations assisted by a catalyst is $I_0(x)$.
\end{corollary}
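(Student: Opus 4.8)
The plan is to obtain this corollary as an immediate specialization of the characterization of noisy-trumping in Lemma~\ref{lemma:trumping} to the case where the target state is a sharp state. Recall that Lemma~\ref{lemma:trumping} states that $x$ noisy-trumps $y$ if and only if $I_p(x)\geq I_p(y)$ for all $p\geq 0$. So the entire task reduces to evaluating the right-hand family $I_p(s_I)$ and then simplifying the resulting infinite collection of inequalities. All the genuinely difficult work---the appeal to the Klimesh--Turgut results and the delicate handling of zero components---has already been carried out in the proof of Lemma~\ref{lemma:trumping}, so what remains here is essentially bookkeeping with the R\'enyi nonuniformities.

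The key computation is that $s_I$ is a sharp state, and as established in the subsection on sharp states (using Eq.~\eqref{eq:infsupRenyi} together with $I_0(s_I)=I_\infty(s_I)=I$), every R\'enyi nonuniformity of a sharp state coincides with its defining nonuniformity: $I_p(s_I)=I$ for all $p\geq 0$. Thus the trumping criterion $I_p(x)\geq I_p(s_I)$ for all $p\geq 0$ collapses to the statement that $I_p(x)\geq I$ for all $p\geq 0$, i.e.\ $\inf_{p\geq 0} I_p(x)\geq I$. The one point requiring (minor) care---and the closest thing to an obstacle---is reducing this infinite family of constraints to a single binding one. This is handled by the ordering of the R\'enyi nonuniformities: by Eq.~\eqref{eq:order2} one has $I_0(x)\leq I_p(x)$ for all $p>0$, and $I_0(x)=\inf_{p\in\bbR}I_p(x)$ by Eq.~\eqref{eq:infsupRenyi}. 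Hence $\inf_{p\geq 0} I_p(x)=I_0(x)$, so the whole family of inequalities is equivalent to the single inequality $I_0(x)\geq I$ (the $p=0$ constraint), every $p>0$ constraint being automatically implied by it.

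Putting these together yields the biconditional: $x$ noisy-trumps $s_I$ if and only if $I_0(x)\geq I$, that is, if and only if $I\leq I_0(x)$. The ``consequently'' clause then follows immediately: the supremum of $I$ for which the catalytic distillation $x$ noisy-trumps $s_I$ succeeds is exactly $I_0(x)$. Finally, I would emphasize the conceptual payoff by comparing with Proposition~\ref{lemma:distillation}, which gives the single-shot (noncatalytic) distillable nonuniformity as $I_0(x)$ as well: since the two thresholds agree, the presence of a catalyst provides no advantage for distillation of a sharp state, mirroring the corresponding phenomenon (noted earlier) that a reservoir in a sharp state cannot increase the yield of any conversion.
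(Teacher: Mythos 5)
Your argument is correct, and it reaches the same conclusion as the paper but by a somewhat different route in the sufficiency direction. For necessity the two proofs coincide: the paper also just observes that $I_0$ is a noisy-trumping monotone by Lemma~\ref{lemma:trumping}, whence $I_0(x)\ge I_0(s_I)=I$. For sufficiency, however, the paper does not invoke the trumping characterization at all: if $I\le I_0(x)$, then Proposition~\ref{lemma:distillation} already gives the \emph{noncatalytic} conversion $x\cconv s_I$, which trivially implies noisy-trumping. You instead specialize the full biconditional of Lemma~\ref{lemma:trumping} to $y=s_I$, compute $I_p(s_I)=I$ for all $p\ge 0$ (correct, via $I_0(s_I)=I_\infty(s_I)=I$ and Eq.~\eqref{eq:infsupRenyi}), and collapse the infinite family of constraints to the single $p=0$ constraint using the ordering~\eqref{eq:order2}; each of these steps is sound. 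The trade-off is that your derivation is a clean, uniform specialization, but it routes the easy direction through the hard (Klimesh--Turgut) half of Lemma~\ref{lemma:trumping}, where the paper gets by with the elementary Lorenz-curve argument of Proposition~\ref{lemma:distillation}. The paper's route also makes explicit, rather than as an afterthought, the observation you close with: when $I\le I_0(x)$ the conversion succeeds with no catalyst at all, so catalysis buys nothing for distillation of sharp states.
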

\begin{proof}
Sufficiency is trivial because if $I \le I_0(x)$, then by the result on distillable nonuniformity (Proposition~\ref{lemma:distillation}) it follows that $x \cconv s_I$, so that one does not even require a catalyst to achieve the conversion. 
It suffices, therefore, to prove necessity. But this follows from Lemma~\ref{lemma:trumping}, which shows that $I_0$ is a noisy-trumping monotone, hence
$I_0(x)\geq I_0(s_I)=I$.
\end{proof}

\begin{corollary}\label{corollary:distilationcatalysis}
$s_I$ noisy-trumps $x$ if and only if
$$I \ge I_{\infty}(x),$$
that is, the minimum nonuniformity of sharp state that is required to form $x$ under noisy operations assisted by a catalyst is $I_{\infty}(x)$.
\end{corollary}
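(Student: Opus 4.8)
The plan is to mirror the proof of Corollary~\ref{corollary:formationcatalysis}, reading off the condition directly from the characterization of noisy-trumping in Lemma~\ref{lemma:trumping}. By that lemma, $s_I$ noisy-trumps $x$ if and only if $I_p(s_I)\geq I_p(x)$ for every $p\geq 0$. The key input is the behaviour of the R\'enyi nonuniformities on a sharp state: as established in the discussion of sharp states, $I_0(s_I)=I_\infty(s_I)=I$, and since the $I_p$ are ordered between $I_0$ and $I_\infty$ (see~\eqref{eq:order2}), a squeeze gives $I_p(s_I)=I$ for all $p\geq 0$. Thus the noisy-trumping condition collapses to $I\geq I_p(x)$ for all $p\geq 0$.

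The remaining step is to recognise that demanding $I\geq I_p(x)$ for every $p\geq 0$ is the same as demanding $I\geq \sup_{p\geq 0} I_p(x)$, and by~\eqref{eq:infsupRenyi} this supremum equals $I_\infty(x)$ (including $p=0$ in the supremum changes nothing, since $I_0(x)\leq I_p(x)$ for $p>0$). Hence $s_I$ noisy-trumps $x$ if and only if $I\geq I_\infty(x)$, which is the claim, and this immediately identifies $I_\infty(x)$ as the minimum nonuniformity of sharp state required to form $x$ catalytically.

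Alternatively, one could separate sufficiency and necessity exactly as in Corollary~\ref{corollary:formationcatalysis}: for sufficiency, $I\geq I_\infty(x)$ gives $s_I\cconv x$ by Proposition~\ref{lemma:formation}, so no catalyst is needed and $s_I$ trivially noisy-trumps $x$; for necessity, noisy-trumping forces $I=I_p(s_I)\geq I_p(x)$ for all $p\geq 0$, whence $I\geq I_\infty(x)$ by~\eqref{eq:infsupRenyi}. I expect no genuine obstacle here: the only point requiring a moment's care is that the supremum defining $I_\infty(x)$ is taken over all nonnegative $p$ and is attained only in the limit $p\to\infty$ rather than at a finite value, so the equivalence between the family of pointwise inequalities and the single inequality $I\geq I_\infty(x)$ must be justified through~\eqref{eq:infsupRenyi} rather than by exhibiting a maximiser.
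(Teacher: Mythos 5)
Your proof is correct and takes essentially the same route as the paper's: the necessity direction is identical (the $p\to\infty$ monotone from Lemma~\ref{lemma:trumping} forces $I=I_\infty(s_I)\geq I_\infty(x)$), and your ``alternative'' paragraph is precisely the paper's argument. The only cosmetic difference is that your primary version derives sufficiency from the complete-monotone characterization of Lemma~\ref{lemma:trumping} together with $I_p(s_I)=I$ and Eq.~\eqref{eq:infsupRenyi}, whereas the paper obtains it more directly by observing that $I\geq I_\infty(x)$ already yields the noncatalytic conversion $s_I\cconv x$ via Proposition~\ref{lemma:formation}.
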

\begin{proof}
Again, sufficiency is straightforward to establish because if  $I\geq I_\infty(x)$ then by the result on the nonuniformity of formation (Proposition~\ref{lemma:formation}) we can infer that  $s_I \cconv x$, so that the conversion can be achieved even without a catalyst.  To establish necessity, we use again Lemma~\ref{lemma:trumping}: the limit
$p\to\infty$ shows that $I_\infty$ is a noisy-trumping monotone, hence $I=I_\infty(s_I)\geq I_\infty(x)$.
\end{proof}

\subsubsection{Nonuniformity cost and yield of catalytic state conversion}

In catalytic state conversion, we also have a result akin to Propositions~\ref{lemma:distillationstateconversion} and \ref{lemma:formationstateconversion}.  This was described in the context of the theory of athermality in ~\cite{1ShotAtherm2}.

\begin{proposition}\label{lemma:distillationcatstateconversion}
If $\Lambda_{\rm cat}(x\|y) > 0$, then $x$ noisy-trumps $y \otimes s_I$ if and only if
$$I \le \Lambda_{\textnormal{cat}}(x\|y).$$
That is, if $\Lambda_{\rm cat}(x\|y)$ is positive, then
using a catalyst one can distill a sharp state of nonuniformity at most $\Lambda_{\textnormal{cat}}(x\|y)$ in addition to achieving the conversion of $x$ to $y$.
\end{proposition}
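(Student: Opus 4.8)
The plan is to derive the result directly from the complete-monotone characterization of noisy-trumping in Lemma~\ref{lemma:trumping}, combined with the additivity of the R\'enyi $p$-nonuniformities and their behavior on sharp states. The only work involved is bookkeeping; there is no genuine obstacle, since all the content is already packaged into Lemma~\ref{lemma:trumping}.

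First I would apply Lemma~\ref{lemma:trumping} to the pair $(x,\, y\otimes s_I)$: $x$ noisy-trumps $y\otimes s_I$ if and only if $I_p(x)\geq I_p(y\otimes s_I)$ for all $p\geq0$. Next I would simplify the right-hand side using the additivity relation~\eqref{eq:Renyiadditivity}, namely $I_p(y\otimes s_I)=I_p(y)+I_p(s_I)$, together with the fact established in the discussion of sharp states that every R\'enyi nonuniformity of a sharp state equals its Shannon nonuniformity, so that $I_p(s_I)=I$ for all $p$. The trumping condition thus becomes $I_p(x)-I_p(y)\geq I$ for all $p\geq0$.

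The final step is to recognize that ``$I_p(x)-I_p(y)\geq I$ for all $p\geq0$'' is precisely the statement that the infimum over $p\geq0$ of $I_p(x)-I_p(y)$ is at least $I$, i.e.\ $\Lambda_{\rm cat}(x\|y)\geq I$ by Definition~\ref{DefLambda}. This yields the claimed equivalence $I\leq\Lambda_{\rm cat}(x\|y)$. The hypothesis $\Lambda_{\rm cat}(x\|y)>0$ plays no role in the logical derivation itself; it serves only to guarantee that the statement is nonvacuous, i.e.\ that some strictly positive nonuniformity $I$ can actually be distilled in addition to performing the conversion, so that $\Lambda_{\rm cat}(x\|y)$ genuinely functions as a \emph{yield} rather than a cost (the complementary regime $\Lambda_{\rm cat}(x\|y)<0$ is handled by the analogue proposition for catalytic nonuniformity cost). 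Since each step is either a substitution or a direct restatement of a definition, I expect no difficulty at all; the substance of the proposition resides entirely in the previously proven characterization of Lemma~\ref{lemma:trumping}.
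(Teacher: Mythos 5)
Your proof is correct and is essentially the paper's own argument: the paper likewise computes $\Lambda_{\rm cat}(x\|y\otimes s_I)=\Lambda_{\rm cat}(x\|y)-I$ via additivity of $I_p$ and $I_p(s_I)=I$, and then invokes the complete-witness characterization (Proposition~\ref{PropTrumpingLambda}, which is just Lemma~\ref{lemma:trumping} restated in terms of $\Lambda_{\rm cat}$). Your remark that the hypothesis $\Lambda_{\rm cat}(x\|y)>0$ is only interpretive also matches the paper's treatment.
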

\begin{proof}
We note that
\begin{eqnarray*}
   \Lambda_{\rm cat}(x\|y\otimes s_I) &=&\inf_{p\geq 0}\left(\strut I_p(x)-I_p(y\otimes s_I)\right)\\
   &=&\inf_{p\geq 0}\left(\strut I_p(x)-I_p(y)\right)-I \\
   &=&\Lambda_{\rm cat}(x\|y)-I,
\end{eqnarray*}
where we have made use of the additivity of $I_p$ and the fact that $I_p(s_I) = I$ for $p \geq 0$.  According to Proposition~\ref{PropTrumpingLambda}, $x$ noisy-trumps $y\otimes s_I$ if and only if $\Lambda_{\rm cat}(x\|y\otimes s_I)$ is non-negative, hence
if any only if $\Lambda_{\rm cat}(x\|y)\geq I$.
\end{proof}

\begin{proposition}\label{lemma:formationcatstateconversion}
If $\Lambda_{\rm cat}(x\|y) < 0$, then
$x \otimes s_I$ noisy-trumps $y$ if and only if
$$I \ge -\Lambda_{\textnormal{cat}}(x\|y).$$
That is, if $\Lambda_{\rm cat}(x\|y)$ is negative, then in order to achieve the conversion of $x$ to $y$ with the aid of a catalyst, it costs a sharp state of nonuniformity
at least $|\Lambda_{\textnormal{cat}}( x\| y)|$.
\end{proposition}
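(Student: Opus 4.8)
The plan is to mirror the proof of Proposition~\ref{lemma:distillationcatstateconversion} essentially verbatim, exploiting the symmetry of the complete-witness characterization $\Lambda_{\rm cat}$. The key observation is that appending a sharp state to the \emph{input} shifts $\Lambda_{\rm cat}$ \emph{upward} by exactly $I$, just as appending it to the output shifted $\Lambda_{\rm cat}$ downward by $I$ in the previous proposition. So the whole argument reduces to one algebraic identity fed into Proposition~\ref{PropTrumpingLambda}.

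First I would compute $\Lambda_{\rm cat}(x \otimes s_I \| y)$. Using the additivity of each R\'enyi nonuniformity (Eq.~\eqref{eq:Renyiadditivity}) together with the fact that $I_p(s_I) = I$ for all $p \geq 0$, one has $I_p(x \otimes s_I) = I_p(x) + I$, so that
\begin{align*}
\Lambda_{\rm cat}(x \otimes s_I \| y) &= \inf_{p \geq 0}\left( I_p(x \otimes s_I) - I_p(y)\right) \\
&= \inf_{p \geq 0}\left( I_p(x) - I_p(y)\right) + I \\
&= \Lambda_{\rm cat}(x\|y) + I,
\end{align*}
where in the second line the constant $I$ is pulled out of the infimum.

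Next I would invoke Proposition~\ref{PropTrumpingLambda}, which asserts that $x \otimes s_I$ noisy-trumps $y$ if and only if $\Lambda_{\rm cat}(x \otimes s_I \| y) \geq 0$. Combining this with the displayed identity gives that noisy-trumping holds if and only if $\Lambda_{\rm cat}(x\|y) + I \geq 0$, that is, if and only if $I \geq -\Lambda_{\rm cat}(x\|y)$. Under the hypothesis $\Lambda_{\rm cat}(x\|y) < 0$ the threshold $-\Lambda_{\rm cat}(x\|y)$ is strictly positive, so this is a genuine (positive) nonuniformity cost, which is exactly the claim.

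There is no real obstacle here: the result is a one-line consequence of additivity and the complete-witness theorem. The only points deserving a moment's care are that additivity and the value $I_p(s_I)=I$ are guaranteed only for $p \geq 0$ --- precisely the range over which the infimum defining $\Lambda_{\rm cat}$ is taken, so no difficulty arises --- and the sign bookkeeping, since adjoining the sharp state to $x$ rather than to $y$ produces a $+I$ shift rather than the $-I$ shift of the previous proposition.
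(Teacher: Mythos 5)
Your proposal is correct and is essentially identical to the paper's own proof: both compute $\Lambda_{\rm cat}(x\otimes s_I\|y)=I+\Lambda_{\rm cat}(x\|y)$ via additivity of the $I_p$ and $I_p(s_I)=I$, then apply Proposition~\ref{PropTrumpingLambda}. No gaps.
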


\begin{proof}
We have
\begin{eqnarray*}
   \Lambda_{\rm cat}(x\otimes s_I\| y) &=& \inf_{p\geq 0}\left(\strut I_p(x\otimes s_I)-I_p(y)\right) \\
   &=& I+\inf_{p\geq 0}\left(\strut I_p(x)-I_p(y)\right) \\
   &=& I+\Lambda_{\rm cat}(x\|y).
\end{eqnarray*}
According to Proposition~\ref{PropTrumpingLambda}, $x\otimes s_I$ noisy-trumps $y$ if and only if $\Lambda_{\rm cat}(x\otimes s_I\| y)$ is non-negative, hence if and only if $I \geq -\Lambda_{\rm cat}(x\|y)$.
\end{proof}

We again recover the results about formation and distilation as special cases:  the distillable nonuniformity  of $x$ under catalyzed noisy operations is simply the nonuniformity yield of the conversion of $x$ to the uniform state $m$, and $\Lambda_{\textnormal{cat}}(x\| m) =  \inf_{p\in \bbR} I_p(x) =I_0(x)$; similarly, the nonuniformity of formation of $x$ in the presence of a catalyst is recovered from $-\Lambda_{\textnormal{cat}}(m\| x) =  \sup_{p\geq 0} I_p(x)=I_\infty(x)$.

\subsection{Inadequacy of the second law as a criterion for state conversion}\label{sec:secondlaw}

In the context of the resource theory of nonuniformity, we might imagine that the role of the second law of thermodynamics is to provide a criterion under which a state conversion $x \mapsto y$ is possible by noisy operations.  The usual statement of the second law is, however, inadequate to this task.

The second law asserts that the Shannon entropy must be nondecreasing in the evolution of a state.  In the context of states of equal dimension, this is equivalent to saying that the Shannon negentropy or the Shannon nonuniformity must be nonincreasing.  

The first inadequacy is that for conversions between states of \emph{unequal} dimension, it is insufficient to look at the Shannon entropy (or the Shannon negentropy).  One must look instead at the Shannon \emph{nonuniformity}.  
For instance, for the pair of states $y= (1/2,1/2)$ and $x=(1/3,1/3,1/3,0)$, the entropies are $H(y)=1$ and $H(x)=\log 3$, so that $H(y) < H(x)$.
And yet the state conversion $x \mapsto y$ is possible by noisy operations, despite involving a \emph{decrease} in entropy.
 This is to be expected, because marginalization is an instance of a noisy operation and it can easily reduce the entropy.  
The Shannon nonuniformity, on the other hand, is nonincreasing under noisy operations.  In particular, in our example $I(x) = \log(4/3)$ while $I(y)=0$. 

The second inadequacy regards sufficiency of the condition.  Even if we consider a conversion between states of equal dimension, so that the Shannon nonuniformity is nonincreasing if and only if the Shannon entropy is nondecreasing, the nondecrease of the Shannon entropy is not a sufficient condition for the state conversion to be possible.  
This is because one can easily construct examples wherein $H(x) \le H(y)$ (equivalently $I(x)\ge I(y)$) but it is not the case that $x \cconv y$.  For instance, for states $x=(2/3,1/6,1/6)$ and $y=(1/2,1/2,0)$ one can easily verify (for instance, via the Lorenz curves) that $x \not\mapsto y$ by noisy operations.  Yet, $H(x)  \simeq 0.811$ and $H(y)= 1$.

Furthermore, one can never rehabilitate the second law by simply replacing entropy with some other nonuniformity monotone because the set of states forms a quasi-order under noisy operations, and therefore no single nonuniformity monotone can decide every state conversion problem.
To provide necessary and sufficient conditions for state conversion, one must look at the values of a \emph{complete set} of nonuniformity monotones.  
To our knowledge, this point was first made in \cite{ruch1978information}.

Note, however, that if a complete set of monotones is of infinite cardinality, then such a criterion will not be of much practical use.  Thus, we would like to also require that our criterion for state conversion be decidable by finite means.  
In many cases this can be accomplished by 
considering \emph{state conversion witnesses}, 
defined in Section~\ref{Freeopsmonotoneswitnesses}.  

In this article, we have seen two examples of complete witnesses for state conversion under noisy operations.  These were defined in Eqs.~\eqref{eq:dualwitness} and~\eqref{eq:dualwitness2}. It follows that a rehabilitated second law of thermodynamics, in the context of the resource theory of nonuniformity, can be stated thus:
$x \cconv y$ if and only if either of the following equivalent conditions hold:
\begin{itemize}
\item $\Delta(x\| y) \ge 0$ where
$$\Delta(x\| y) \assign \min_{k \in \{ 1,\dots,d_y\}} \left( L_x(k/d_y)-L_y(k/d_y)\right)$$
\item
$\Lambda(x\|y) \ge 0$ where 
$$\Lambda(x\|y) \assign  \max \left\{\lambda:  L_x(u) \ge L_y (2^\lambda u) \; \; \forall u \in \mathcal{U} \right\},$$
with 
$$\mathcal{U} \assign \left( \{ \tfrac{1}{d_x},\tfrac{2}{d_x},\dots, 1\} \cup \{ \tfrac{1}{d_y},\tfrac{2}{d_y},\dots, 1\}\right)  \cap( 0, u_* ]$$
and $u_* \assign \min \{1, 2^{-\lambda} \}$. 
\end{itemize}
Evaluating these functions requires an optimization over a finite number of expressions, so that the conversion problem becomes decidable by finite means.

Note that in the resource theory of athermality, the free energy plays the role of the Shannon nonuniformity, but because a single athermality monotone cannot capture the quasi-order of athermal states, the free energy is also inadequate to the task of providing necessary and sufficient conditions for state conversion.

One might think that the role of the second law is to provide a criterion for \emph{catalytic} state conversion rather than the regular variety.
Indeed, many statements of the second law allow for the environment of the system to undergo a change, as long as the process is cyclic, which is to say that in the end, the environment is returned to its initial state.  In the language of resource theories, the environment is allowed to be a catalyst for the state transition~\cite{1ShotAtherm2}.  
However, even under this interpretation, the standard formulation of the second law is inadequate.  For states of equal dimension, the nondecrease of entropy remains only a necessary, but not a sufficient, condition for the state conversion to be possible.  The quasi-order of states under catalytic noisy operations cannot be captured by any single nonuniformity monotone. 

However, a complete witness for state conversion under catalytic noisy operations is implied by Propositions~\ref{lemma:distillationcatstateconversion} and \ref{lemma:formationcatstateconversion}, namely:
$x$ noisy-trumps $y$ if and only if
\begin{itemize}
\item $\Lambda_{\textnormal{cat}}(x\|y) \ge 0$ where
$$\Lambda_{\textnormal{cat}}(x\|y)  \assign  \inf_{p\geq 0} \left(\strut I_p(x) - I_p(y)\right).$$
\end{itemize}
It remains an open problem to provide a criterion for catalytic state conversion that is decidable by finite means.

\section{Approximate state conversion}\label{Sec:Approximatestateconversion}

\subsection{Metrics over state space}

So far, we have discussed the question under what conditions noisy operations can transform a given state  into another state \emph{exactly} or to arbitrary precision.
However, in most physically relevant cases, this is too stringent a requirement, and a different problem becomes important: \emph{Under what
conditions can noisy operations transform a given state into another state which is $\varepsilon$-close to the desired target state for some fixed $\varepsilon$?}

Two immediate questions arise in the study of this problem. First, what kind of distance measure on the set of states should we use? Second, can the
quantum problem be reduced to the classical problem in the same way as in the case of exact state conversion?

The answer to the latter question turns out to be yes, given that we restrict our attention to distance measures---that is, metrics---on the quantum
states and the classical probability distributions that have the crucual property of \emph{contractivity}:

\begin{definition}
\label{DefContractiveMetric}
A \emph{contractive metric} on the quantum states is a map $\mathcal{D}$ that assigns to two quantum states $\rho,\sigma$ on the same underlying
Hilbert space of dimension $d_\rho=d_\sigma$ a non-negative real number $\mathcal{D}(\rho,\sigma)$ satisfying the axioms of a metric
\begin{itemize}
\item $\mathcal{D}(\rho,\sigma)=\mathcal{D}(\sigma,\rho)$,
\item $\mathcal{D}(\rho,\sigma)=0\Leftrightarrow \rho=\sigma$,
\item $\mathcal{D}(\rho,\sigma)\leq \mathcal{D}(\rho,\tau)+\mathcal{D}(\tau,\sigma)$ for all quantum states $\tau$ with $d_\tau=d_\sigma=d_\rho$,
\end{itemize}
such that every completely positive, trace-preserving map $\Phi$ is a contraction, i.e.
\[
   \mathcal{D}\left(\strut\Phi(\rho),\Phi(\sigma)\right)\leq \mathcal{D}(\rho,\sigma).
\]
Analogously, a contractive metric on the classical (discrete) probability distributions is a map $\mathcal{D}$ that assigns to two probability distributions
$x,y$ with the same dimension of the underlying sample space $d_x=d_y$ a non-negative real number $\mathcal{D}(x,y)$ that satisfies the classical
analogs of the axioms above.
\end{definition}
In this definition, the classical analog of a completely positive trace-preserving map $\Phi$ is a channel $\Phi$, i.e.\ a linear map that is described
by a finite stochastic matrix, or equivalently, a linear map that maps probability distributions to probability distributions.

Note that $\mathcal{D}(\rho,\sigma)$, respectively $\mathcal{D}(x,y)$, is undefined if $d_\rho\neq d_\sigma$, respectively $d_x\neq d_y$; however, this does
not mean that the values of $\mathcal{D}$ for different dimensions are completely unrelated. In fact, contractivity implies some relations between
the distances of states of different dimensionalities. For example, if $\tau$ is an arbitrary fixed finite-dimensional quantum state, then the map
$\rho\mapsto\rho\otimes\tau$ is a quantum operation, hence $\mathcal{D}(\rho\otimes\tau,\sigma\otimes\tau)\leq \mathcal{D}(\rho,\sigma)$.
On the other hand, taking the partial trace over the second system is a quantum operation too, implying the converse inequality, such that
we get
\[
   \mathcal{D}(\rho\otimes\tau,\sigma\otimes\tau)=\mathcal{D}(\rho,\sigma).
\]

If we are given a contractive metric $\mathcal{D}$ on the quantum states, then we can construct a contractive metric on the classical probability
distributions by restricting $\mathcal{D}$ to diagonal matrices; we will call this the classical metric that \emph{corresponds to} the respective quantum metric.
For example, the trace distance on quantum states
\[
   \mathcal{D}_\tr(\rho,\sigma):=\frac 1 2 \tr|\rho-\sigma|=\frac 1 2 \|\rho-\sigma\|_1
\]
has the corresponding classical metric
\begin{equation}
   \mathcal{D}_\tr(x,y)=\frac 1 2 \sum_{i=1}^{d_x} |x_i-y_i|
   \label{eqTraceDistance}
\end{equation}
(note that $d_x=d_y$, otherwise the expression $\mathcal{D}(x,y)$ is undefined). Another contractive metric that satisfies the conditions of our
definition is the \emph{purified distance}~\cite{Tomamichel}
\[
   \mathcal{D}_p(\rho,\sigma):=\sqrt{1-F(\rho,\sigma)^2}
\]
with $F(\rho,\sigma):=\|\sqrt{\rho}\sqrt{\sigma}\|_1$ the fidelity, which has the corresponding classical contractive metric
\begin{equation}
   \mathcal{D}_p(x,y)=\sqrt{1-\left(\sum_i \sqrt{x_i y_i}\right)^2}.
   \label{EqPurified}
\end{equation}
A consequence of contractivity is unitary equivalence: if $U$ is unitary, then
\[
   \mathcal{D}(U\rho U^\dagger, U \sigma U^\dagger)=\mathcal{D}(\rho,\sigma).
\]
This is because the unitary map as well as its inverse are quantum operations, and therefore contractivity implies inequalities in both directions. 
It follows that classical contractive metrics are left invariant if we permute the entries of the two probability vectors in its argument in the same way.

In the case of contractive metrics, approximate state transformation in the quantum case can be reduced to the classical problem:
\begin{lemma}
\label{LemApproximateQuantumClassical}
Let $\varepsilon>0$, let $\rho$ and $\sigma$ be two quantum states, and let $\mathcal{D}$ be any contractive metric on the quantum states. Then the following
two statements are equivalent:
\begin{itemize}
\item[(i)] There exists a quantum state $\tilde\sigma$ with $\mathcal{D}(\sigma,\tilde\sigma)\leq\varepsilon$ such that noisy quantum operations
can transform $\rho$ to $\tilde\sigma$.
\item[(ii)] There exists a distribution $\tilde s$ with $\mathcal{D}(\lambda(\sigma),\tilde s)\leq\varepsilon$ such that noisy classical operations can transform $\lambda(\rho)$
to $\tilde s$.
\end{itemize}
\end{lemma}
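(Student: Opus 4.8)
The plan is to prove both implications by reducing them to the exact-conversion correspondence of Lemma~\ref{lemma:quantumNOtoclassicalNO}, using only two structural features of a contractive metric from Definition~\ref{DefContractiveMetric}: its invariance under conjugation of both arguments by a common unitary, and its monotonicity under an arbitrary completely positive trace-preserving map applied to both arguments. The guiding idea is that a contractive metric can never separate a state from the classical distribution of its spectrum more finely than the exact theory already does, so the entire approximation structure should transfer between the quantum and classical pictures.

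For the implication (ii)$\Rightarrow$(i), the routine direction, I would start from a $\tilde s$ with $\mathcal{D}(\lambda(\sigma),\tilde s)\leq\varepsilon$ and $\lambda(\rho)\cconv\tilde s$. Fix a unitary $V$ diagonalizing $\sigma$, so that $\sigma=V\,\mathrm{diag}(\lambda(\sigma))\,V^\dagger$, and define $\tilde\sigma:=V\,\mathrm{diag}(\tilde s)\,V^\dagger$. Unitary invariance of $\mathcal{D}$ then gives $\mathcal{D}(\sigma,\tilde\sigma)=\mathcal{D}(\lambda(\sigma),\tilde s)\leq\varepsilon$. Since the sorted vector $\lambda(\tilde\sigma)$ is a permutation of $\tilde s$ and permutations are noisy classical operations, $\lambda(\rho)\cconv\lambda(\tilde\sigma)$; Lemma~\ref{lemma:quantumNOtoclassicalNO} then upgrades this to a noisy quantum operation $\rho\mapsto\tilde\sigma$, establishing (i).

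For the implication (i)$\Rightarrow$(ii), the hard direction, I would take the candidate obtained by measuring $\tilde\sigma$ in the \emph{eigenbasis of $\sigma$}. Let $\{|v_i\rangle\}$ realize $\sigma=\sum_i\lambda(\sigma)_i|v_i\rangle\langle v_i|$, and let $\mathcal{N}$ be the channel that dephases completely in this basis. Then $\mathcal{N}(\sigma)=\mathrm{diag}(\lambda(\sigma))$ exactly, while $\mathcal{N}(\tilde\sigma)=\mathrm{diag}(p)$, where $p_i=\langle v_i|\tilde\sigma|v_i\rangle$ is the diagonal of $\tilde\sigma$ in this basis. Contractivity applied to $\mathcal{N}$ yields the crucial estimate $\mathcal{D}(\lambda(\sigma),p)=\mathcal{D}(\mathcal{N}(\sigma),\mathcal{N}(\tilde\sigma))\leq\mathcal{D}(\sigma,\tilde\sigma)\leq\varepsilon$. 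It then remains to show $\lambda(\rho)\cconv p$: Lemma~\ref{lemma:quantumNOtoclassicalNO} gives $\lambda(\rho)\cconv\lambda(\tilde\sigma)$ from $\rho\mapsto\tilde\sigma$, and the Schur--Horn theorem says the diagonal $p$ of a Hermitian matrix is majorized by its spectrum $\lambda(\tilde\sigma)$, so by Lemma~\ref{lemma:HardyLittlewoodPolya} a doubly stochastic matrix takes $\lambda(\tilde\sigma)$ to $p$, i.e.\ $\lambda(\tilde\sigma)\cconv p$. Composing gives $\lambda(\rho)\cconv p$, and setting $\tilde s:=p$ satisfies both requirements of (ii).

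I expect the main obstacle to be precisely this construction of the right classical target in the hard direction. The obvious guess $\tilde s=\lambda(\tilde\sigma)$ is reachable from $\lambda(\rho)$ but need not be $\varepsilon$-close to $\lambda(\sigma)$ in a general contractive metric, since $\sigma$ and $\tilde\sigma$ need not commute and no single unitary diagonalizes both. Measuring in $\sigma$'s eigenbasis repairs the distance estimate through contractivity, at the cost of replacing $\lambda(\tilde\sigma)$ by the majorized diagonal $p$; the Schur--Horn/Hardy--Littlewood--Polya step is then exactly what guarantees that this replacement keeps the target reachable from $\lambda(\rho)$. The only remaining routine check is that the dephasing map is a completely positive trace-preserving map, so that contractivity is automatically available for the chosen metric.
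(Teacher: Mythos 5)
Your proof is correct and follows essentially the same route as the paper's: the easy direction lifts $\tilde s$ into $\sigma$'s eigenbasis, and the hard direction dephases $\tilde\sigma$ in $\sigma$'s eigenbasis, uses contractivity for the distance bound, and Schur's theorem plus Hardy--Littlewood--P\'olya to show the diagonal remains reachable. No substantive differences.
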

Here, the metric $\mathcal{D}$ in (ii) is the classical contractive metric corresponding to the quantum contractive metric $\mathcal{D}$.  Note that in general $\tilde s\neq\lambda(\tilde\sigma)$.

\begin{proof}
(ii)$\Rightarrow$(i): with $\sigma=\sum_k\lambda_k(\sigma)|k\rangle\langle k|$, define $\tilde\sigma:=\sum_k \tilde s_k |k\rangle\langle k|$,
then $\mathcal{D}(\sigma,\tilde\sigma)\leq \varepsilon$. Since noisy classical operations can transform $\lambda(\rho)$ into $\lambda(\tilde\sigma)$,
Lemma~\ref{lemma:quantumNOtoclassicalNO} applies and proves that noisy operations can transform $\rho$ into $\tilde\sigma$.

To see (i)$\Rightarrow$(ii), let $\Phi$ be the quantum operation that dephases in the eigenbasis of $\sigma$; i.e.\ $\Phi(\tau)
:=\sum_k \langle k|\tau|k\rangle |k\rangle\langle k|$, and set $\tilde s_k:=\langle k|\tilde\sigma|k\rangle$, then $\Phi(\tilde\sigma)=\sum_k \tilde s_k |k\rangle\langle k|$, and
\[
   \mathcal{D}(\lambda(\sigma),\tilde s)=\mathcal{D}(\sigma,\Phi(\tilde\sigma))=\mathcal{D}(\Phi(\sigma),\Phi(\tilde\sigma))\leq \mathcal{D}(\sigma,\tilde\sigma)\leq\varepsilon.
\]
Again, due to Lemma~\ref{lemma:quantumNOtoclassicalNO}, the assumption that $\rho \cconv \tilde\sigma$ implies that $\lambda(\rho) \cconv \lambda(\tilde\sigma)$. 
Furthermore,
according to Exercise II.1.12, ``Schur's Theorem'' in \cite{Bhatia}, the eigenvalues of $\tilde\sigma$ majorize the diagonal elements of $\tilde\sigma$ in the eigenbasis of $\sigma$,
hence $\lambda(\tilde\sigma)\succ \tilde s$, and consequently $\lambda(\tilde\sigma) \cconv \tilde s$.
It follows that $\lambda(\rho)\cconv \tilde s$.
\end{proof}

This lemma allows us to study approximate state conversion in purely classical terms. Hence, in the following, when we refer to a contractive metric $\mathcal{D}$,
we always assume that it is a contractive metric on the \emph{classical probability distributions}.

For convenience, we will make use of the following notational convention:
\begin{definition}[$\varepsilon$-approximate state conversion]
We write $x \epsiloncconv y$ if there exists a noisy classical operation taking $x$ to a state that is $\varepsilon$-close to $y$ relative to a contractive metric $\mathcal{D}$.
\end{definition}

The following transitivity property will be used several times: if we have both
\[
x \epsiloncconv y \qquad  \mbox{and}\qquad y \deltacconv z
\]
then it follows that
\[
x \epsplusdeltacconv z.
\]

\subsection{Smoothed entropies}

As we have seen in Subsection~\ref{section:formationanddistillation},
the order-$\infty$ and order-0 R\'{e}nyi nonuniformities determine the single-shot nonuniformity of formation and distillable nonuniformity of a state.  When the state preparation is allowed to be approximate, these quantities turn out to be given by smoothed versions of these R\'{e}nyi nonuniformities.  We begin, therefore, with a discussion of smoothing.
\begin{definition}
\label{DefSmoothEntropies}
Fix any contractive metric $\mathcal{D}$.
For probability distributions $x$ and $\varepsilon\geq 0$, define the smooth order-0 and order-$\infty$ R\'{e}nyi entropies as
\begin{eqnarray*}
   H_\infty^\varepsilon(x)&:=& \max_{x':\, \mathcal{D}(x,x')\leq\varepsilon} H_\infty(x'),\\
   H_0^\varepsilon(x)&:=& \min_{x':\, \mathcal{D}(x,x')\leq\varepsilon} H_0(x').
\end{eqnarray*}
Similarly, we define the smooth order-0 and order-$\infty$ R\'{e}nyi nonuniformities as
\begin{eqnarray*}
   I_\infty^\varepsilon(x)&:=& \log d_x - H_\infty^\varepsilon(x)  =   \min_{x':\, \mathcal{D}(x,x')\leq\varepsilon}  I_\infty(x'),\\
   I_0^\varepsilon(x)&:=& \log d_x - H_0^\varepsilon(x) =  \max_{x':\, \mathcal{D}(x,x')\leq\varepsilon} I_0(x').
\end{eqnarray*}
\end{definition}

It is clear that the values of these quantities depend on the choice of metric $\mathcal{D}$. In the following, we will always assume that the reader
is aware of this freedom of choice, and we will not always repeat that the statements are contingent on the choice of $\mathcal{D}$.

Note that the optimizations in Definition~\ref{DefSmoothEntropies} are over \emph{normalized} classical probability distributions $x'$.
This differs from some places in the literature where subnormalized (quantum) states are taken into account, for example in~\cite{Tomamichel}.

In Lemma~\ref{LemApproximateQuantumClassical} above, we have shown that we can without loss of generality restrict our analysis to classical probability distributions.
However, some readers may be interested in translating some results of this section directly into the quantum case, in order to apply
them in other contexts. This turns out to be possible whenever we speak about the smooth \emph{R\'{e}nyi $\infty$-entropy}: defining the quantum version of the smooth R\'{e}nyi $\infty$--entropy
in the usual way, the resulting quantity agrees with the classical version applied to the spectrum:
\begin{equation}
   S_\infty^\varepsilon(\rho):=\max_{\rho':\,\, \mathcal{D}(\rho,\rho')\leq\varepsilon} S_\infty(\rho')=H_\infty^\varepsilon(\lambda(\rho)),
   \label{eqHMinQC}
\end{equation}
where $S_\infty(\sigma):=-\log \|\sigma\|_\infty$.
[This definition resembles that in \cite{RennerThesis} and \cite{Tomamichel} but involves any contractive metric and involves smoothing over only normalized states.]
However, we do not know whether the analogous relation holds for the smooth R\'{e}nyi 0-entropy $H_0^\varepsilon$ in general; only that
\[
   S_0^\varepsilon(\rho):=\min_{\rho':\,\, \mathcal{D}(\rho,\rho')\leq\varepsilon} S_0(\rho')\leq H_0^\varepsilon(\lambda(\rho)).
\]
Note that, unlike eq.~\eqref{eqHMinQC}, we have an \emph{inequality} here.  We do get equality in the special case
where the contractive metric is chosen to be the trace distance $\mathcal{D}_\tr$ from~(\ref{eqTraceDistance}), as the following lemma asserts (proof in Appendix~\ref{app:proofs}):
\begin{lemma}
\label{LemMaxQC}
Let $\mathcal{D}$ be any contractive metric on the quantum states for which there exists a norm $\|\cdot\|$ on the
self-adjoint matrices such that $\mathcal{D}(\rho,\sigma)=\|\rho-\sigma\|$ (for example the trace distance,
$\mathcal{D}=\mathcal{D}_\tr$ with $\|\cdot\|=\frac 1 2 \|\cdot\|_1$).
Then the quantum version of the smooth max-entropy,
\[
   S_0^\varepsilon(\rho):=\min_{\bar\rho:\,\,\mathcal{D}(\rho,\bar\rho)\leq\varepsilon} S_0(\bar\rho),
\]
with $S_0(\bar\rho):=\log \rank(\bar\rho)$, agrees with the classical smooth max-entropy on the spectrum of $\rho$:
\[
   S_0^\varepsilon(\rho)=H_0^\varepsilon(\lambda(\rho)).
\]
\end{lemma}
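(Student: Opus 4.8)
The plan is to establish the nontrivial inequality $H_0^\varepsilon(\lambda(\rho))\le S_0^\varepsilon(\rho)$, since the reverse inequality $S_0^\varepsilon(\rho)\le H_0^\varepsilon(\lambda(\rho))$ is already in hand: given a classical smoothing $x'$ of $\lambda(\rho)$, one embeds it diagonally in the eigenbasis of $\rho$ to obtain a density operator of rank $|\supp(x')|$ at the same distance. For the hard direction I would fix a minimizer $\bar\rho$ of the quantum problem, so that $\rank(\bar\rho)=2^{S_0^\varepsilon(\rho)}=:r$ and $\mathcal{D}(\rho,\bar\rho)=\|\rho-\bar\rho\|\le\varepsilon$ (such a minimizer exists because the feasible set is compact and $\rank$ takes finitely many values). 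The goal is then to manufacture, out of $\bar\rho$, a \emph{normalized} probability distribution supported on at most $r$ points and within $\varepsilon$ of $\lambda(\rho)$; the natural candidate is the spectrum of $\bar\rho$ itself.

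The key reduction is to pass from matrices to their eigenvalues. Because $\mathcal{D}$ is induced by a norm and is contractive, it is unitarily invariant on the traceless self-adjoint matrices (conjugation by a unitary and its inverse are both channels, and differences of density operators span that subspace). Since $\rho-\bar\rho$ is traceless and self-adjoint, $\|\rho-\bar\rho\|$ depends only on the eigenvalue vector $\gamma:=\lambda(\rho-\bar\rho)$. I would package this by defining $\Psi(a):=\|\sum_i a_i\,|e_i\rangle\langle e_i|\|$, where $\{|e_i\rangle\}$ is the eigenbasis of $\rho$; this $\Psi$ is a genuine norm on $\bbR^{d}$, and on the hyperplane $\{a:\sum_i a_i=0\}$ it is invariant under permutations of the $a_i$ (these are conjugations by permutation unitaries, which preserve tracelessness). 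Thus $\Psi$ is, on that hyperplane, both convex (being a norm) and symmetric, hence Schur-convex there, by the standard fact that symmetric convex functions are Schur-convex (MOA, Ch.~3).

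I would then invoke Lidskii's eigenvalue inequality (see~\cite{Bhatia}), which states that the sorted spectra obey $\lambda^{\downarrow}(\rho)-\lambda^{\downarrow}(\bar\rho)\prec\gamma$, a majorization between two zero-sum vectors. Schur-convexity of $\Psi$ on the hyperplane gives
\[
\mathcal{D}\!\left(\lambda(\rho),\lambda(\bar\rho)^{\downarrow}\right)=\Psi\!\left(\lambda^{\downarrow}(\rho)-\lambda^{\downarrow}(\bar\rho)\right)\le\Psi(\gamma)=\|\rho-\bar\rho\|\le\varepsilon .
\]
Since $\bar\rho$ is a normalized density operator of rank $r$, its sorted spectrum $\lambda(\bar\rho)^{\downarrow}$ is a normalized distribution with support of size at most $r$, so it is a feasible competitor in the definition of $H_0^\varepsilon(\lambda(\rho))$. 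Therefore $H_0^\varepsilon(\lambda(\rho))\le\log r=S_0^\varepsilon(\rho)$, and combined with the easy direction this yields the claimed equality.

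The step I expect to be the main obstacle is precisely this eigenvalue-to-matrix bridge. For a general norm one cannot appeal to Eckart--Young--Mirsky-type optimality of spectral truncation, because the norm need not be unitarily invariant on all matrices; it is only so on the traceless subspace. The crux is therefore to recognize that tracelessness of $\rho-\bar\rho$ is exactly what is needed, first to reduce the norm to a symmetric function of eigenvalues, and second to keep Lidskii's majorization inside the zero-sum hyperplane where the Schur-convexity argument is valid. A minor point to verify with care is that the permutation symmetry of $\Psi$ genuinely fails off the hyperplane but is never needed there, so that every vector appearing in the Rado/Birkhoff decomposition underlying Schur-convexity remains within the hyperplane.
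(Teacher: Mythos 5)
Your proof is correct and follows essentially the same route as the paper's: the easy direction by embedding the classical smoothing diagonally in the eigenbasis of $\rho$, and the hard direction by showing that the sorted spectrum of the quantum minimizer is a feasible classical competitor via the inequality $\mathcal{D}\bigl(\lambda(\rho),\lambda(\bar\rho)^{\downarrow}\bigr)\leq\mathcal{D}(\rho,\bar\rho)$. The only difference is that the paper cites this inequality directly (Bhatia, Ineq.\ (IV.62)), whereas you rederive it from Lidskii's majorization together with Schur-convexity of the induced symmetric norm on the zero-sum hyperplane --- a more self-contained treatment that also makes explicit the point, glossed over in the paper, that contractivity only yields unitary invariance of the norm on the traceless self-adjoint subspace, which is exactly where the argument needs it.
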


The fact that this restriction is necessary is a first manifestation of the fact that $H_0^\varepsilon$ behaves in some ways ``less nicely'' than $H_\infty^\varepsilon$,
as we will see several times below, for example in the following lemma:
\begin{lemma}
\label{LemMinMaxMonotonicity}
For every $\varepsilon>0$, the quantity $I_\infty^\varepsilon$ is a nonuniformity monotone, but $I_0^\varepsilon$ is not, in general.
\end{lemma}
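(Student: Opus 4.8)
The plan is to treat the two claims by completely different routes: $I_\infty^\varepsilon$ inherits monotonicity from $I_\infty$ by a standard ``push-forward'' argument that exploits contractivity of the metric, whereas $I_0^\varepsilon$ is defeated by an explicit counterexample built from the failure of invariance under adjoining uniform states.

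First, for $I_\infty^\varepsilon$: suppose $x \cconv y$, realized by a noisy classical operation $D$ (a uniform-preserving stochastic matrix, by Lemma~\ref{lemma:NCO}). Let $x'$ attain the minimum defining $I_\infty^\varepsilon(x)$, so that $\mathcal{D}(x,x')\le\varepsilon$ and $I_\infty(x')=I_\infty^\varepsilon(x)$. Set $y':=Dx'$. Since $D$ is a stochastic matrix, hence a channel, contractivity (Definition~\ref{DefContractiveMetric}) gives $\mathcal{D}(y,y')=\mathcal{D}(Dx,Dx')\le\mathcal{D}(x,x')\le\varepsilon$, and both lie in $V_{\textnormal{out}}$, so the distance is well-defined even when $d_x\ne d_y$. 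Because $x'\cconv y'$ and $I_\infty$ is itself a nonuniformity monotone (Theorem~\ref{thm:nonuniformitymonotones}), we have $I_\infty(x')\ge I_\infty(y')$. Finally $y'$ is feasible in the minimization defining $I_\infty^\varepsilon(y)$, so $I_\infty^\varepsilon(y)\le I_\infty(y')\le I_\infty(x')=I_\infty^\varepsilon(x)$, which is exactly monotonicity. The only things to check carefully are that the optimum for $x$ is attained (the ball is compact and the objective continuous) and that $y'$ genuinely lies in the feasible set for $y$; both are routine.

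The same argument fails for $I_0^\varepsilon$ precisely because its smoothing is a maximization: pushing the optimizer of $x$ forward yields a feasible point for $y$, but this bounds $I_0^\varepsilon(y)$ from below rather than above, so no useful inequality results. To show $I_0^\varepsilon$ is genuinely not a monotone, I would exhibit a violation of the necessary invariance $M(x\otimes m^{(k)})=M(x)$ (Proposition~\ref{prop:NUMon}), working with the trace distance $\mathcal{D}_\tr$, which is a bona fide contractive metric. Take $x=(1,0)$, for which $I_0^\varepsilon(x)=1$ for every $\varepsilon$, since its support is already minimal and the maximization cannot beat $\log 2 - \log 1$. Compare with $x\otimes m^{(k)}$, the uniform distribution on $k$ of its $2k$ entries. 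Truncating to the $k'$ largest entries and renormalizing is the trace-distance-optimal way to reach support $k'$, at cost $\mathcal{D}_\tr=1-k'/k$; hence support $k'=\lceil(1-\varepsilon)k\rceil$ is reachable, giving $I_0^\varepsilon(x\otimes m^{(k)})\ge \log(2k)-\log\lceil(1-\varepsilon)k\rceil$, which tends to $\log\frac{2}{1-\varepsilon}>1$ as $k\to\infty$. Thus for any fixed $\varepsilon\in(0,1)$ one can choose $k$ large enough that $I_0^\varepsilon(x\otimes m^{(k)})>1=I_0^\varepsilon(x)$, even though $x\cconv x\otimes m^{(k)}$ (adjoining a uniform ancilla is free); this contradicts monotonicity. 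The main obstacle, and the part deserving care, is making the counterexample uniform in $\varepsilon$: a single fixed pair of states works only for large $\varepsilon$, so the dimension-scaling parameter $k$ must be tuned to $\varepsilon$, and one must verify that truncate-and-renormalize indeed minimizes the trace distance for a prescribed support size.
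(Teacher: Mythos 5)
Your proof is correct and follows essentially the same route as the paper's: monotonicity of $I_\infty^\varepsilon$ via contractivity pushing the $\varepsilon$-ball around $x$ into the $\varepsilon$-ball around $y$ and then invoking monotonicity of $I_\infty$, and failure of $I_0^\varepsilon$ via a violation of the required invariance under adjoining a uniform state, checked in trace distance. The only difference is cosmetic: the paper's counterexample takes $x=m^{(2)}$ and a small perturbation of $x\otimes m^{(d)}$ with one zero component, whereas yours takes $x=(1,0)$ and truncates $x\otimes m^{(k)}$ — both instances of the same idea, both requiring the ancilla dimension to grow like $1/\varepsilon$.
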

\begin{proof}
To prove that $I_\infty^\varepsilon$ is a nonuniformity monotone, suppose that $x$ and $y$ are distributions such that
$y=N(x)$ for some noisy operation $N$. Then, due to contractivity, the set of states $y'$ with $\mathcal{D}(y,y')\leq\varepsilon$
contains the set of distributions $N(x')$ with $\mathcal{D}(x,x')\leq\varepsilon$. Moreover, we know that $I_\infty$ (without the $\varepsilon$) is
a nonuniformity monotone. Thus
\begin{eqnarray*}
   I_\infty^\varepsilon(N(x))&=& \min_{y':\, \mathcal{D}(y,y')\leq\varepsilon} I_\infty(y')\\
   &\leq& \min_{x'\, \mathcal{D}(x,x')\leq\varepsilon} I_\infty(N(x'))\\
   &\leq&  \min_{x':\, \mathcal{D}(x,x')\leq\varepsilon} I_\infty(x')= I_\infty^\varepsilon(x).
\end{eqnarray*}

Now we consider $I_0^\varepsilon$.  Any nonuniformity monotone $M$ must satisfy $M(x\otimes m^{(d)})=M(x)$ for $m^{(d)}=(1/d,\ldots,1/d)$ the uniform state on
$\mathbb{R}^d$ (as mentioned in Eq.~(\ref{eq:NUmonotoneAddm})).
But this equation is not generally true for $I_0^\varepsilon$.

As a concrete counterexample, let $0<\varepsilon<\frac 1 2$, $d\geq\max\left\{ 1/(2\varepsilon),3\right\}$, and consider the trace distance $\mathcal{D}_{\tr}$.
If $x=m^{(2)}$ is the maximally
mixed state on one bit, then $I_0^\varepsilon(x)=0$. Let $y$ be the distribution
$\left(\frac 1 d, \frac 1 {2d},\frac 1 {2d},\ldots,\frac 1 {2d},0\right)\in\mathbb{R}^{2d}$. Then $\mathcal{D}_{\tr}(y,x\otimes m^{(d)})=1/(2d)\leq\varepsilon$,
so $H_0^\varepsilon(x\otimes m^{(d)})\leq H_0(y)=\log(2d-1)$, and $I_0^\varepsilon(x\otimes m^{(d)})\geq \log(2d)-\log(2d-1)>0.$
\end{proof}

In the case that the reference metric is chosen to be the trace distance $\mathcal{D}_\tr$, the smooth R\'enyi $0$-entropy $H_0^\varepsilon$ and
the corresponding nonuniformity $I_0^\varepsilon$ can be evaluated explicitly.
\begin{lemma}
\label{LemH0eps}
If the trace distance $\mathcal{D}_\tr$ is chosen as reference metric, then any distribution $x$ has smooth R\'enyi $0$-entropy
\[
   H_0^\varepsilon(x)=\log k,
\]
where $k\geq 1$ is chosen as the smallest integer such that $\sum_{i=1}^k x_i^\downarrow\geq 1-\varepsilon$, with $x_1^\downarrow\geq x_2^\downarrow\geq\ldots$
denoting the components of $x$ in non-increasing order.
\end{lemma}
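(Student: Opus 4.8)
The plan is to reduce the lemma to a purely combinatorial statement about support sizes. Since $H_0(x')=\log|\supp(x')|$ and $\log$ is increasing, the smoothed quantity $H_0^\varepsilon(x)=\min_{x':\,\mathcal{D}_\tr(x,x')\le\varepsilon}H_0(x')$ equals $\log k^*$, where $k^*$ is the smallest support size attainable by a normalized distribution $x'$ lying within trace distance $\varepsilon$ of $x$. Writing $S_k:=\sum_{i=1}^k x_i^\downarrow$, it therefore suffices to prove $k^*=k$, where $k$ is the least integer with $S_k\ge 1-\varepsilon$. I would establish this by the two matching inequalities $k^*\le k$ (achievability) and $k^*\ge k$ (optimality).

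For achievability I would exhibit an explicit nearby distribution of support $k$. Since trace distance is invariant under a common reordering of its two arguments, I may assume $x=x^\downarrow$. Let $x'$ agree with $x$ on the components $2,\dots,k$, set $x'_1:=x_1+(1-S_k)$, and set $x'_i:=0$ for $i>k$. This $x'$ is manifestly normalized with $|\supp(x')|\le k$, and a direct computation gives
\[
   \sum_i |x_i - x'_i| = (1-S_k) + \sum_{i>k} x_i^\downarrow = 2(1-S_k),
\]
so that $\mathcal{D}_\tr(x,x')=1-S_k\le\varepsilon$ by the defining property of $k$. Hence $H_0^\varepsilon(x)\le\log k$.

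For optimality I would use the dual (event) formula for total variation, namely $\mathcal{D}_\tr(x,x')\ge x'(A)-x(A)$ for every event $A$, which follows from $\mathcal{D}_\tr(x,x')=\max_A\bigl(x'(A)-x(A)\bigr)$. Suppose for contradiction that some admissible $x'$ has $|\supp(x')|\le k-1$. Choosing $A=\supp(x')$ gives $x'(A)=1$, while $x(A)\le\sum_{i=1}^{k-1}x_i^\downarrow=S_{k-1}$ because $A$ contains at most $k-1$ indices. Minimality of $k$ yields $S_{k-1}<1-\varepsilon$, whence $\mathcal{D}_\tr(x,x')\ge 1-S_{k-1}>\varepsilon$, contradicting admissibility. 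Therefore every admissible $x'$ satisfies $|\supp(x')|\ge k$, giving $H_0^\varepsilon(x)\ge\log k$ and completing the proof.

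The routine parts are the two computations above; the only points requiring care are (i) confirming that the redistribution of the tail mass keeps $x'$ a genuine normalized probability vector while shrinking its support to size $k$, and (ii) invoking the event characterization of trace distance in the correct direction for the lower bound, together with the strict inequality $S_{k-1}<1-\varepsilon$ supplied by the minimality of $k$ (with the degenerate case $k=1$, where the lower bound is simply $|\supp(x')|\ge 1$, handled separately). I expect the optimality direction to be the conceptual crux, since it is where the specific value $k$---rather than merely some support size---is forced.
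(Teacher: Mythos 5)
Your proof is correct and follows essentially the same route as the paper's: an explicit support-$k$ distribution at trace distance $1-S_k$ for achievability (the paper renormalizes the truncated vector rather than dumping the tail mass onto the top component, but both constructions give exactly the same distance), and the dual event characterization of the trace distance applied to $A=\supp(x')$ together with the minimality of $k$ for optimality. No gaps.
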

The proof is given in Appendix~\ref{app:proofs}.

This lemma shows how $I_0^\varepsilon(x)$ can be determined graphically in terms of the Lorenz curve of $x$, if the reference
metric is the trace distance: denoting the two coordinates for the plot of the Lorenz curve by $u$ and $v$, draw the line $v=1-\varepsilon$
parallel to the $u$-axis, and determine the left-most elbow $u_0$ of the Lorenz curve that lies on or above that line.
Then $I_0^\varepsilon(x)=-\log u_0$. This procedure is depicted in Fig.~\ref{fig:I_0_eps_Lorenz}.

\begin{figure}[h!]
\centering
\includegraphics[width=.3\textwidth, clip=true]{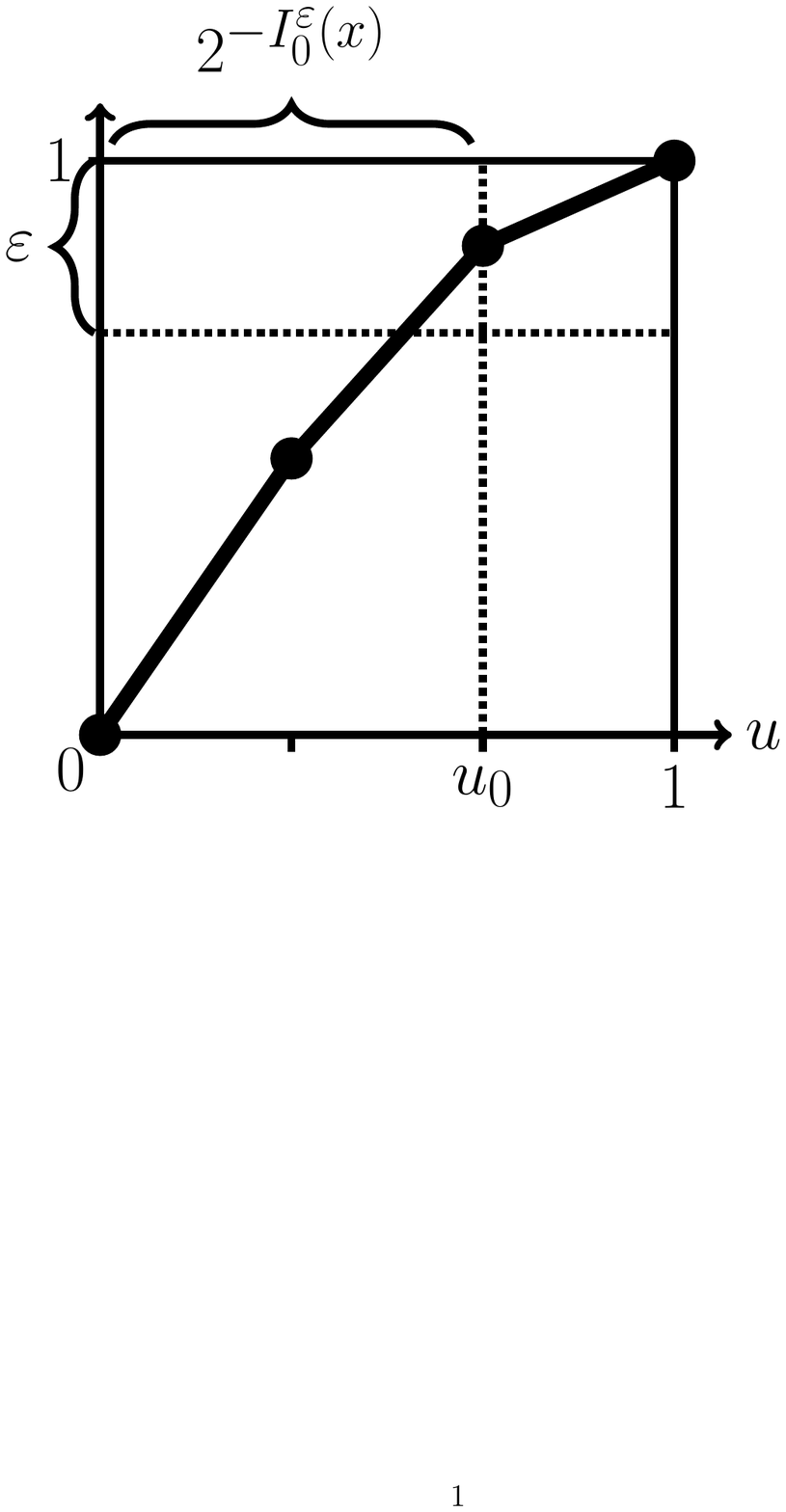}
\captionsetup{singlelinecheck=off,justification=raggedright}
\caption{If the trace distance is used as the reference metric, the smooth order-$0$ R\'enyi nonuniformity $I_0^{\varepsilon}(x)$ can be determined graphically
in terms of the Lorenz curve of $x$: draw the horizontal line $v=1-\varepsilon$, and mark the left-most elbow that is on or above that line.
Denote its $u$-coordinate by $u_0$, then $I_0^\varepsilon(x)=-\log u_0$. Since this quantity does not only depend on the Lorenz curve, but
also on the number and location of its elbows, it is clear that $I_0^\varepsilon$ cannot be a nonuniformity monotone.}
\label{fig:I_0_eps_Lorenz}
\end{figure}

This gives a simple geometric explanation why $I_0^\varepsilon$ is not a nonuniformity monotone, as demonstrated in Lemma~\ref{LemMinMaxMonotonicity}: this quantity
does not only depend on the Lorenz curve of the state, but also on its elbows. The map $x\mapsto x\otimes m^{(d)}$
changes the set of elbows (but not the Lorenz curve), and thus changes the value of $I_0^\varepsilon$ (but not that of any nonuniformity monotone).

\subsection{Nonuniformity of formation and distillable nonuniformity for approximate state preparation}\label{SubsecApproxFormationDistillation}

The quantity $I_\infty^\varepsilon$ turns out to quantify the \emph{single-shot nonuniformity of formation} if the formation is allowed to be achieved only approximately.

\begin{lemma}\label{lemma:approxformation}
The minimum $I$ such that $s_I \epsiloncconv x$, is given by the $\varepsilon$-smoothed order-($\infty$) R\'{e}nyi nonuniformity,
\[
I^{\varepsilon}_\infty(x).
\]
\end{lemma}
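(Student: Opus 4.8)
The plan is to peel off the $\varepsilon$-smoothing by unfolding the definition of $\epsiloncconv$ and then appealing to the exact-formation result, Proposition~\ref{lemma:formation}. The key observation is that $s_I\epsiloncconv x$ holds if and only if there is a distribution $\tilde x$ of dimension $d_x$ with $\mathcal{D}(x,\tilde x)\le\varepsilon$ such that the \emph{exact} conversion $s_I\cconv\tilde x$ is possible. Thus the question reduces to asking, over all $\tilde x$ in the closed $\varepsilon$-ball about $x$, for the cheapest sharp state that forms $\tilde x$ exactly.

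First I would prove the lower bound (necessity). Suppose $s_I\epsiloncconv x$, witnessed by some $\tilde x$ with $\mathcal{D}(x,\tilde x)\le\varepsilon$ and $s_I\cconv\tilde x$. Proposition~\ref{lemma:formation} forces $I\ge I_\infty(\tilde x)$, and since $\tilde x$ lies in the $\varepsilon$-ball, $I_\infty(\tilde x)\ge\min_{x':\,\mathcal{D}(x,x')\le\varepsilon}I_\infty(x')=I_\infty^\varepsilon(x)$, the last equality being Definition~\ref{DefSmoothEntropies}. Hence every $I$ for which $s_I\epsiloncconv x$ satisfies $I\ge I_\infty^\varepsilon(x)$.

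Next I would prove achievability (sufficiency). Let $x^\ast$ attain the minimum in the definition of $I_\infty^\varepsilon(x)$, so that $\mathcal{D}(x,x^\ast)\le\varepsilon$ and $I_\infty(x^\ast)=I_\infty^\varepsilon(x)$. Proposition~\ref{lemma:formation} then gives $s_{I_\infty^\varepsilon(x)}\cconv x^\ast$, and because $x^\ast$ itself lies within $\varepsilon$ of $x$, this exact conversion is by definition a witness that $s_{I_\infty^\varepsilon(x)}\epsiloncconv x$. Combined with the lower bound, this identifies the minimal value of $I$ as exactly $I_\infty^\varepsilon(x)$.

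The only real obstacle to nail down is that the minimum defining $I_\infty^\varepsilon(x)$ is genuinely attained, so that both the word ``minimum'' in the statement and the achievability step above are justified; this I would get from compactness. The closed $\varepsilon$-ball $\{x':\,\mathcal{D}(x,x')\le\varepsilon\}$ is a closed subset of the compact probability simplex on $d_x$ outcomes, and $I_\infty(x')=\log d_x+\log (x')_1^\downarrow$ is continuous there (its argument is bounded below by $1$), so a minimizer $x^\ast$ exists. A minor secondary point is that $I_\infty^\varepsilon(x)$ may be the logarithm of an irrational number, so that $s_{I_\infty^\varepsilon(x)}$ is not among the canonical rational-ratio sharp states; but this is already absorbed into Proposition~\ref{lemma:formation}, whose ``if and only if'' is understood in the sense that equality of the on-ramp slopes can be approached arbitrarily closely by enlarging the uniform ancilla, consistent with the convention that arbitrarily accurate conversion counts as exact.
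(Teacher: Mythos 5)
Your proposal is correct and follows essentially the same route as the paper's own proof: unfold $\epsiloncconv$ into an exact conversion to some $\tilde x$ in the $\varepsilon$-ball, apply Proposition~\ref{lemma:formation}, and optimize over the ball. Your added care about attainment of the minimum via compactness and the irrational-nonuniformity subtlety in Proposition~\ref{lemma:formation} fills in details the paper leaves implicit, but does not change the argument.
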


\begin{proof}
According to Lemma~\ref{lemma:formation}, in order to create a state $x'$ with $\mathcal{D}(x,x')\leq\varepsilon$, one needs
a sharp state of nonuniformity at least $I_\infty(x')$.
Optimizing over all possible $x'$ yields the minimal nonuniformity of any sharp state
needed to create \emph{some} state $x'$ which is $\varepsilon$-close to $x$:
\[
   \min_{x':\, \mathcal{D}(x,x')\leq\varepsilon} I_\infty(x') =  I_\infty^\varepsilon(x).
\]
\end{proof}

As it turns out, it is not so simple to prove an analogous statement for the distillable nonuniformity. While one can achieve
distillation of a sharp state of nonuniformity $I_0^\varepsilon(x)$ from $x$, this is not in general the optimal result.
\begin{lemma}\label{LemDistillSufficient}
It is possible to achieve $ x \epsiloncconv s_I$, for $I$ equal to the $\varepsilon$-smoothed order-$0$ R\'{e}nyi nonuniformity,
\[
I^{\varepsilon}_0(x).
\]
\end{lemma}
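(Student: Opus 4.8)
The plan is to reduce this approximate-distillation statement to the exact-distillation result (Proposition~\ref{lemma:distillation}) by exploiting the contractivity of the reference metric under noisy operations. The whole argument hinges on one observation: if a noisy operation exactly distills a sharp state from the \emph{optimally smoothed} distribution, then applying that very same operation to the \emph{actual} distribution must land within $\varepsilon$ of the same sharp state.

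First I would unpack the definition of the smooth nonuniformity. By Definition~\ref{DefSmoothEntropies}, $I_0^\varepsilon(x)=\max_{x':\,\mathcal{D}(x,x')\leq\varepsilon}I_0(x')$, where the maximization runs over normalized distributions $x'$ of the same dimension $d_x$ as $x$ (so that $\mathcal{D}(x,x')$ is even defined). This $\varepsilon$-ball is compact, and $I_0(x')=\log|\supp(x')|$ takes only finitely many values, so the maximum is attained: there is a distribution $x^\star$ with $\mathcal{D}(x,x^\star)\leq\varepsilon$ and $I_0(x^\star)=I_0^\varepsilon(x)=:I$.

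Next I would invoke exact distillation for $x^\star$. By Proposition~\ref{lemma:distillation}, the conversion $x^\star\cconv s_{I_0(x^\star)}=s_I$ is possible; moreover, since $I_0(x^\star)=\log\bigl(d_{x^\star}/|\supp(x^\star)|\bigr)$ is the logarithm of a rational, the bound is attained exactly using a finite-dimensional uniform ancilla (as already noted in the proof of Proposition~\ref{lemma:distillation}). Hence there is an honest noisy operation $N$, mapping $\bbR_+^{d_x}$ into the fixed space on which $s_I$ lives, with $N(x^\star)=s_I$. The final and crucial step is to apply this same $N$ to the actual state $x$. Because $N(x)$ and $N(x^\star)=s_I$ have equal dimension, $\mathcal{D}(N(x),s_I)$ is well-defined, and contractivity of $\mathcal{D}$ under the stochastic map $N$ yields
$$\mathcal{D}\bigl(N(x),s_I\bigr)=\mathcal{D}\bigl(N(x),N(x^\star)\bigr)\leq\mathcal{D}(x,x^\star)\leq\varepsilon,$$
so $N$ carries $x$ to a state $\varepsilon$-close to $s_I$, which is precisely the claim $x\epsiloncconv s_I$ with $I=I_0^\varepsilon(x)$.

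I expect the only genuine subtlety to be dimensional bookkeeping: one must keep the smoothing ball within distributions of dimension $d_x$ and arrange $N$ to output in the fixed dimension of $s_I$, after which contractivity does all the work. No optimality claim is required, since the lemma asserts only \emph{sufficiency}; this is why the proof is short. It is the matching converse that is delicate, reflecting the fact (cf.\ the failure of monotonicity of $I_0^\varepsilon$ in Lemma~\ref{LemMinMaxMonotonicity}) that $I_0^\varepsilon(x)$ need not be the optimal distillable value.
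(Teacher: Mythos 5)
Your proposal is correct and follows essentially the same route as the paper's proof: choose the optimizer $x'$ in the smoothing ball, exactly distill $s_{I_0(x')}$ from it via Proposition~\ref{lemma:distillation} using a noisy operation $N$, and then apply contractivity of the metric to conclude $\mathcal{D}(N(x),s_I)\leq\mathcal{D}(x,x')\leq\varepsilon$. The extra remarks on attainment of the maximum and dimensional bookkeeping are fine but not needed beyond what the paper already records.
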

\begin{proof}
Choose $x'$ such that $I_0^\varepsilon(x)=I_0(x')$ and $\mathcal{D}(x,x')\leq\varepsilon$. According to Lemma~\ref{lemma:distillation},
one can distill a sharp state $s_I$ of nonuniformity $I=I_0(x')$ by noisy operations from $x'$. Denote this noisy operation by $N$,
then $N(x')=s_I$. Set $y:=N(x)$, then
\[
   \mathcal{D}(y,s_I)=\mathcal{D}(N(x),N(x'))\leq \mathcal{D}(x,x')\leq\varepsilon,
\]
proving the claim.
\end{proof}

If the expression in this lemma were optimal -- that is, if $I_0^\varepsilon(x)$ gave the \emph{maximal} nonuniformity of any sharp state
that can be extracted from $x$ -- then $I_0^\varepsilon$ would have to be a nonuniformity monotone. However, we know from Lemma~\ref{LemMinMaxMonotonicity}
that it is not. Therefore, the optimal expression for distillable uniformity must be different. In fact, as a simple corollary of Lemma~\ref{LemDistillSufficient},
we see immediately that nonuniformity
\begin{equation}
   I_0^\varepsilon(x\otimes m^{(d)})
   \label{eqAddMaxMix}
\end{equation}
for the $d$-dimensional uniform state $m^{(d)}$ (with $d$ arbitrary) can be achieved as well: we can add on the uniform state $m^{(d)}$ for free,
and apply Lemma~\ref{LemMinMaxMonotonicity} to $x\otimes m^{(d)}$. As shown in Lemma~\ref{LemMinMaxMonotonicity}, this quantity can be larger than
the corresponding one for $x$ alone.
In fact, we will show in the appendix that expression~(\ref{eqAddMaxMix}) is increasing in $d$, and the \emph{maximal nonuniformity of any sharp state that
can be extracted from $x$} is given by
\begin{equation}
   J_0^\varepsilon(x):=\lim_{d\to\infty} I_0^\varepsilon(x\otimes m^{(d)}).
   \label{eqOptimalMainText}
\end{equation}
The result from Lemma~\ref{LemDistillSufficient}, though not optimal, will be sufficient for all further applications: it will allow
us to obtain conditions for state conversion and compute the optimal asymptotic transition rate. Therefore, we defer the formal proof of~(\ref{eqOptimalMainText})
to the appendix.

However, we note that in the case that the reference metric is the trace distance, the expression~(\ref{eqOptimalMainText})
can be evaluated directly in terms of the Lorenz curve of $x$: adding on the uniform state $m^{(d)}$ to $x$ with increasing $d$ does not change the Lorenz curve,
but adds more and more elbows. In the limit of $d\to\infty$, the set of elbows becomes dense on the Lorenz curve. Applying the prescription
from Fig.~\ref{fig:I_0_eps_Lorenz} to determine $I_0^\varepsilon(x\otimes m^{(d)})$ means  -- in the limit of large $d$ -- to simply determine
the point $u_0$ where the line $v=1-\varepsilon$ intersects the Lorenz curve. Then $J_0^\varepsilon(x)=-\log u_0$.

\begin{figure}[h!]
\centering
\includegraphics[width=.3\textwidth, clip=true]{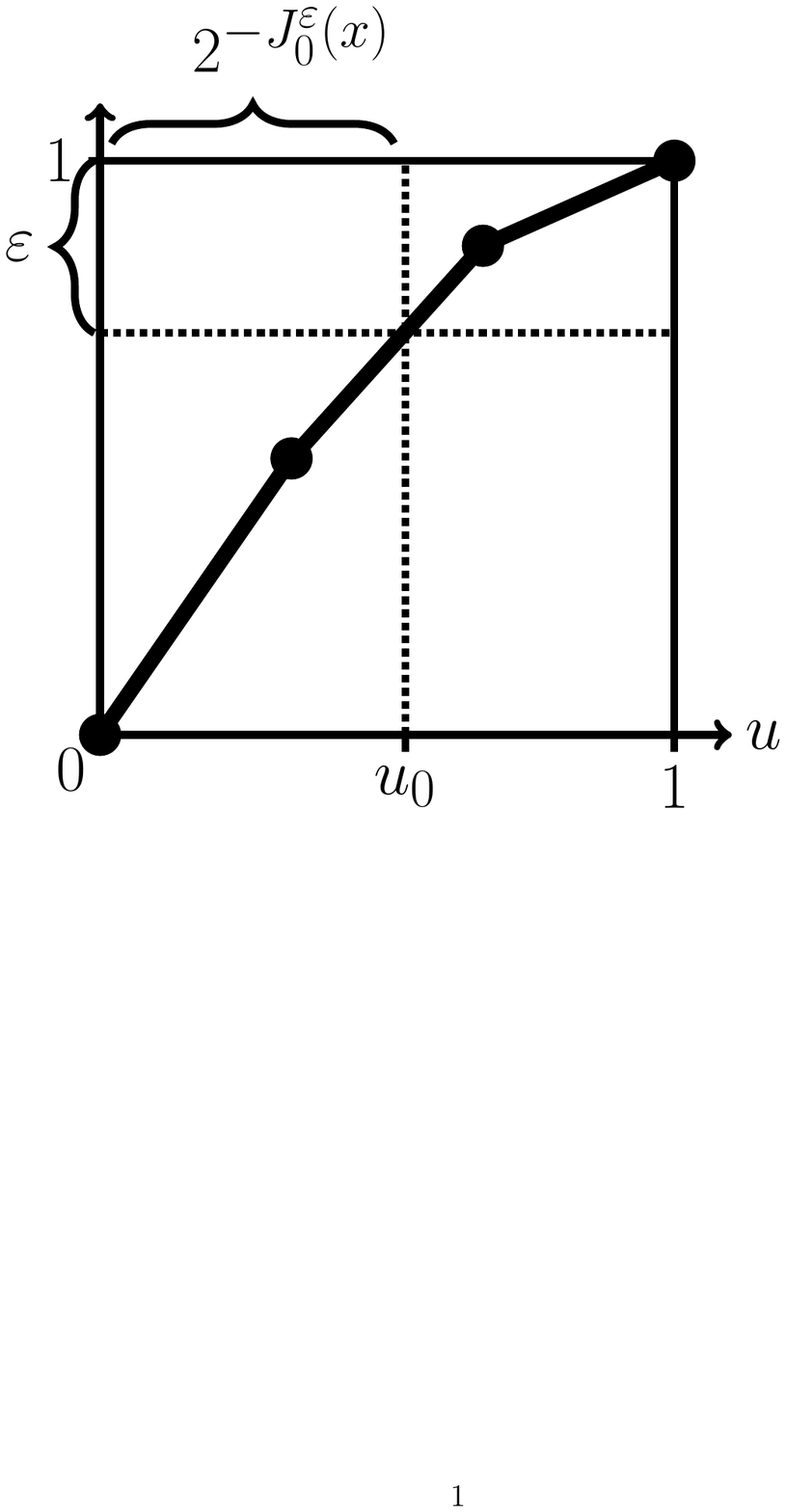}
\captionsetup{singlelinecheck=off,justification=raggedright}
\caption{If the trace distance is used as the reference metric, the distillable nonuniformity $J_0^\varepsilon(x)=\lim_{d\to\infty} I_0^\varepsilon(x\otimes m^{(d)})$ of
any state $x$ can be determined graphically by means of the Lorenz curve of $x$: determine the (smallest) value $u_0$ such that $L_x(u_0)=1-\varepsilon$,
i.e.\ the intersection of the horizontal line $v=1-\varepsilon$ and the Lorenz curve. Then $J_0^\varepsilon(x)=-\log u_0$.}
\label{fig:funnysmoothing}
\end{figure}

This reproduces the smoothing procedure for the ``min-free energy'' in~\cite{FundLimitsNature} in the case of trivial Hamiltonians $H=0$ (cf.\ their
Supplementary Figure S2).
It is depicted in Fig.~\ref{fig:funnysmoothing}. It is intriguing that the Lorenz curve again turns out to be a natural and useful tool also in the case of
approximate distillation of nonuniformity. However, note that the Lorenz curve procedure to determine $J_0^\varepsilon$ is only valid if the trace distance is used as the
reference metric. Expression~(\ref{eqOptimalMainText}) is more general: it is valid for all contractive metrics.

Our results above also give a qualitative confirmation of the statements in \cite{WorkValOfInfo}: $I_0^\varepsilon$ quantifies
the extractable nonuniformity (and thus work) in their Theorem I. Since $I_\infty^\varepsilon$ quantifies the nonuniformity needed to construct a state,
any attempt to obtain more nonuniformity (and thus work) back from a given state will only be possible with very small probability of success, coming from
mere chance by guessing bits correctly. This is reflected in Theorem II in their paper. However, note that the scenario in their paper differs from ours.
For example, while the resource theory of nonuniformity allows one to add on uniform states for free, the number of bits (or Szilard engines)
in \cite{WorkValOfInfo} is fixed at some number $n$.

In Section~\ref{section:formationanddistillation}, a sufficient condition for exact state conversion was derived from the expressions for the nonuniformity of formation and distillable nonuniformity.
In this section, we determine an analogous condition for \emph{approximate} state conversion, based on the results of the previous section. 
As in Subsection~\ref{SubsecApproxFormationDistillation}, we will assume that $\mathcal{D}$ is an arbitrary contractive metric as defined in Definition~\ref{DefContractiveMetric}.
The two quantities $I_\infty^\varepsilon$ and $I_0^\varepsilon$ yield a sufficient condition for approximate state interconvertibility:
\begin{lemma}
\label{LemConvert}
If $x$ and $y$ are states such that
\[
   I_0^{\varepsilon/2}(x)\geq I_\infty^{\varepsilon/2}(y),
\]
then $x\epsiloncconv y$.
\end{lemma}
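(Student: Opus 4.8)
The plan is to route the conversion from $x$ to $y$ through a sharp intermediate, splitting the total error budget $\varepsilon$ into two equal halves of $\varepsilon/2$, which is precisely why the hypothesis is stated in terms of the $(\varepsilon/2)$-smoothed quantities. Write $I:=I_0^{\varepsilon/2}(x)$ and $I':=I_\infty^{\varepsilon/2}(y)$, so that the assumption reads simply $I\geq I'$. The strategy is to distill a sharp state $s_I$ from $x$ up to error $\varepsilon/2$, then (exactly and for free) degrade it to the smaller sharp state $s_{I'}$, and finally use $s_{I'}$ to form $y$ up to a further error $\varepsilon/2$.

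First I would invoke Lemma~\ref{LemDistillSufficient} with smoothing parameter $\varepsilon/2$ in place of $\varepsilon$, which guarantees a noisy operation taking $x$ to a state that is $(\varepsilon/2)$-close to $s_I$. Next, since all R\'enyi nonuniformities of a sharp state coincide with its nonuniformity, we have $I_\infty(s_{I'})=I'$, and because $I\geq I'$ Proposition~\ref{lemma:formation} yields the \emph{exact} conversion $s_I\cconv s_{I'}$ (this is also immediate from the total order of sharp states established in Section~\ref{Sec:Exactstateconversion}, as the on-ramp/tail geometry of their Lorenz curves is nested whenever $I\geq I'$). Crucially, this middle step contributes no error. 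Finally, I would apply Lemma~\ref{lemma:approxformation}, again with parameter $\varepsilon/2$: since $I'=I_\infty^{\varepsilon/2}(y)$ is exactly the minimal nonuniformity for which the approximate formation is possible, there is a noisy operation taking $s_{I'}$ to a state that is $(\varepsilon/2)$-close to $y$.

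To conclude, I would concatenate the three stages and control the accumulated error with the transitivity property stated just above Definition~\ref{DefSmoothEntropies}. Folding the exact middle step into the first gives a single noisy operation bringing $x$ to within $\varepsilon/2$ of $s_{I'}$; composing this with the final stage and applying transitivity (with $\delta=\varepsilon/2$) bounds the total error by $\varepsilon/2+\varepsilon/2=\varepsilon$, so that $x\epsiloncconv y$. I do not expect a genuine obstacle here, since every ingredient is already in hand; the only point requiring care is the error bookkeeping. The reason the budget is additive rather than worse is exactly the transitivity property, which in turn rests on the \emph{contractivity} of $\mathcal{D}$: applying the later noisy operation to an already-approximate state enlarges its distance from the intended target by at most the pre-existing error (via the triangle inequality together with $\mathcal{D}(N(\cdot),N(\cdot))\leq\mathcal{D}(\cdot,\cdot)$). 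This additivity, combined with the zero-error intermediate sharp-to-sharp step, is what makes the symmetric $\varepsilon/2$ split succeed.
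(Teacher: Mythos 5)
Your proof is correct in substance and arrives at the same $\varepsilon/2+\varepsilon/2$ error split as the paper, but it is organized differently. The paper first smooths both endpoints --- choosing $\bar x$ with $\mathcal{D}(x,\bar x)\le\varepsilon/2$ achieving $I_0(\bar x)=I_0^{\varepsilon/2}(x)$ and $\bar y$ with $\mathcal{D}(y,\bar y)\le\varepsilon/2$ achieving $I_\infty(\bar y)=I_\infty^{\varepsilon/2}(y)$ --- then invokes the exact go witness (Lemma~\ref{lemma:gowitness}) to obtain a single noisy operation $N$ with $N(\bar x)=\bar y$, and finally bounds $\mathcal{D}(N(x),y)\le\mathcal{D}\bigl(N(x),N(\bar x)\bigr)+\mathcal{D}(\bar y,y)\le\varepsilon$ by contractivity plus the triangle inequality. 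You instead chain the two approximate primitives (Lemmas~\ref{LemDistillSufficient} and~\ref{lemma:approxformation}) through an explicit sharp intermediate and conclude by transitivity. Since the go witness itself routes through a sharp state, the two arguments are the same mechanism with the error bookkeeping placed differently: yours is more modular, reusing the approximate formation/distillation results; the paper's is more self-contained, needing only the exact results and one contractivity estimate. One small wrinkle in your version: $I':=I_\infty^{\varepsilon/2}(y)$ need not be the logarithm of a rational number, so the sharp state $s_{I'}$ --- and hence your exact middle step $s_I\cconv s_{I'}$ --- may not literally exist. This is harmless and easily patched: $I=I_0^{\varepsilon/2}(x)$ is always the logarithm of a rational (it equals $\log(d_x/k)$ for some integer $k$), so $s_I$ does exist, and since $I\ge I_\infty^{\varepsilon/2}(y)=I_\infty(\bar y)$ you can form $\bar y$ (hence a state $\varepsilon/2$-close to $y$) directly from $s_I$ via Proposition~\ref{lemma:formation}, skipping the degradation to $s_{I'}$ altogether.
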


\begin{proof}
We have already provided the proof for the $\varepsilon=0$ case in Lemma~\ref{lemma:gowitness}.
To extend the result to $\varepsilon>0$, note that
\begin{eqnarray*}
   I_0^{\varepsilon/2}(x)&=&\max_{\bar x:\, \mathcal{D}(\bar x,x)\leq\varepsilon/2} I_0(\bar x),\\
   I_\infty^{\varepsilon/2}(y)&=&\min_{\bar y:\, \mathcal{D}(\bar y, y)\leq \varepsilon/2} I_\infty(\bar y).
\end{eqnarray*}
Let $\bar x$ and $\bar y$ be states that achieve the maximum respectively minimum in these optimizations. Then
\[
   I_0(\bar x)=I_0^{\varepsilon/2}(x)\geq I_\infty^{\varepsilon/2}(y)=I_\infty(\bar y).
\]
Due to Lemma~\ref{lemma:gowitness}, there is a noisy operation $N$ such that $N(\bar x)=\bar y$.
Now let $y':=N(x)$. Then
\begin{eqnarray*}
    \mathcal{D}(y',y)&\leq& \mathcal{D}(y',\bar y)+\mathcal{D}(\bar y,y)\\
    &=& \mathcal{D}(N(x),N(\bar x))+\mathcal{D}(\bar y,y) \\
    &\leq& \mathcal{D}(x,\bar x)+\varepsilon/2\leq \varepsilon.
\end{eqnarray*}
This proves the claim.
\end{proof}

The previous lemma gives a sufficient condition for approximate convertibility. Similarly, we can obtain a necessary condition:
\begin{lemma}
\label{LemNoConvert}
If $x$ and $y$ are states such that
\[
   x\epsiloncconv y,
\]
then $I_\infty^{\varepsilon+\delta}(y)\leq I_\infty^\delta(x)$ for every $\delta\geq 0$.
\end{lemma}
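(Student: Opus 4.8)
The plan is to reduce the smoothed statement to the already-established monotonicity of the \emph{unsmoothed} $I_\infty$ under noisy operations, using the triangle inequality together with contractivity of the metric to keep track of the accumulated approximation error. The only ingredients needed are that every noisy operation $N$ is a contraction for $\mathcal{D}$, that $I_\infty$ is a nonuniformity monotone, and that the minimum defining $I_\infty^\delta$ is attained.

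First I would unpack the hypothesis $x\epsiloncconv y$: by definition there is a noisy classical operation $N$ with $\mathcal{D}(N(x),y)\leq\varepsilon$. Next, since the set of normalized distributions within distance $\delta$ of $x$ is compact and $I_\infty$ is continuous, I would fix a state $\bar x$ attaining the minimum in $I_\infty^\delta(x)$, so that $\mathcal{D}(x,\bar x)\leq\delta$ and $I_\infty(\bar x)=I_\infty^\delta(x)$.

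The heart of the argument is to feed $\bar x$ through the \emph{same} operation $N$ and estimate the distance of $N(\bar x)$ from $y$. By contractivity, $\mathcal{D}(N(\bar x),N(x))\leq\mathcal{D}(\bar x,x)\leq\delta$, and combining this with $\mathcal{D}(N(x),y)\leq\varepsilon$ via the triangle inequality yields $\mathcal{D}(N(\bar x),y)\leq\varepsilon+\delta$. Thus $N(\bar x)$ is an admissible competitor in the minimization defining $I_\infty^{\varepsilon+\delta}(y)$, whence $I_\infty^{\varepsilon+\delta}(y)\leq I_\infty(N(\bar x))$. Finally, because $N$ is a noisy operation and $I_\infty$ is a monotone, $I_\infty(N(\bar x))\leq I_\infty(\bar x)=I_\infty^\delta(x)$, and chaining these inequalities gives the claim.

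I do not expect a genuine obstacle here; the result is essentially a bookkeeping exercise. The points requiring care are dimensional: the metric $\mathcal{D}$ is only defined between states of equal dimension, so I would verify that $\bar x$ (hence $N(\bar x)$) lives in the same space as $x$ (respectively $y$), making every invocation of $\mathcal{D}$ well-defined, and that $N$ indeed maps $d_x$-dimensional to $d_y$-dimensional distributions. It is also worth flagging the contrast with $I_0^\varepsilon$: the argument works precisely because the unsmoothed $I_\infty$ is a genuine monotone, which is exactly the property that fails for $I_0$, consistent with Lemma~\ref{LemMinMaxMonotonicity}; so this method would not directly yield the analogous bound for the order-$0$ quantity.
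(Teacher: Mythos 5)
Your proof is correct and uses the same ingredients as the paper's (contractivity, the triangle inequality, and monotonicity of $I_\infty$); the only difference is that you smooth on the input side — taking the optimizer $\bar x$ for $I_\infty^\delta(x)$ and pushing it through $N$ — whereas the paper smooths on the output side, applying Lemma~\ref{LemMinMaxMonotonicity} to get $I_\infty^\delta(x)\geq I_\infty^\delta(N(x))$ and then choosing the optimizer for $I_\infty^\delta(N(x))$. The two routes are mirror images of one another (your version effectively inlines the proof of Lemma~\ref{LemMinMaxMonotonicity}), and your dimensional caveats and the remark about why the argument fails for $I_0^\varepsilon$ are both apt.
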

\begin{proof}
Denote the state which is $\varepsilon$-close to $y$ and generated from $x$ by $y'$.
Lemma~\ref{LemMinMaxMonotonicity} implies that $I_\infty^\delta(x)\geq I_\infty^\delta(y')$. Let $\bar y$ be any state
such that $I_\infty^\delta(y')=I_\infty(\bar y)$ and $\mathcal{D}(y',\bar y)\leq\delta$. Then
$\mathcal{D}(y,\bar y)\leq \mathcal{D}(y,y')+\mathcal{D}(y',\bar y)\leq \varepsilon+\delta$,
and we obtain $I_\infty^{\varepsilon+\delta}(y)\leq I_\infty(\bar y)=I_\infty^\delta(y')\leq I_\infty^\delta(x)$.
\end{proof}

\subsection{Asymptotic state conversion}
\label{sec:Asym}

Our previous results on single-shot $\varepsilon$-noisy nonuniformity distillation and dilution, Lemmas~\ref{LemConvert} and~\ref{LemNoConvert}, allow us to recover (and slightly
generalize) the result in~\cite{HHOShort,HHOLong} on the asymptotic conversion rate between two nonuniform states.
Due to Lemma~\ref{LemApproximateQuantumClassical}, it is again sufficient to consider conversion between classical probability distributions.

The question we are interested in is the following. Suppose we are given $n$ copies of some state $x$, which we would like to use to create as many
copies of another state $y$ as possible, using only noisy operations. That is, we would like to transform $x^{\otimes n}$ into $y^{\otimes m}$, with $m$ as large as possible.
However, we do not demand that the transformation is perfect---instead, we allow a small error $\varepsilon$. That is, for every $n$, we ask for the maximal integer $m_n$
such that
\[
      x^{\otimes n} \conv\limits^{\varepsilon_n\textnormal{-noisy}}  y^{\otimes m_n},
\]
and $\varepsilon_n\to 0$ as $n\to\infty$. In particular, we are interested in how $m_n$ scales with $n$. We will first answer this question
for the case of constant error, i.e.\ $\varepsilon_n=\varepsilon$ for all $n$; afterwards, we will obtain the answer for asymptotically vanishing error $\varepsilon_n\to 0$
as a simple corollary.

We have previously obtained necessary and sufficient conditions for approximate state conversion in terms of smooth order-0 and order-$\infty$ R\'{e}nyi entropies. It is a well-known fact that these entropies converge to the Shannon entropy in the asymptotic limit of many copies of a state:
\begin{lemma}
\label{LemAEP}
If the reference metric is taken to be either the \emph{purified distance} $\mathcal{D}_p$ from~(\ref{EqPurified})
or the \emph{trace distance} $\mathcal{D}_\tr$ from~(\ref{eqTraceDistance}), then we have for every $0<\varepsilon<1$
\begin{eqnarray}
   \label{HMinAsympt}
   \lim_{n\to\infty}\frac 1 n H_\infty^\varepsilon(x^{\otimes n}) &=& H(x)\\ 
   \lim_{n\to\infty}\frac 1 n H_0^\varepsilon(x^{\otimes n}) &=& H(x) \label{HMaxAsympt}
\end{eqnarray}
for all finite probability distributions $x$, where $H(x)=-\sum_i x_i\log x_i$ is the Shannon entropy.
\end{lemma}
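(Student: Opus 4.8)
The plan is to first reduce the purified-distance case to the trace-distance case, and then to establish the trace-distance asymptotics by the classical asymptotic equipartition property (AEP). For the reduction, I would use the Fuchs--van de Graaf-type inequalities, which in the classical case read $\mathcal{D}_\tr(x,y)\le \mathcal{D}_p(x,y)\le\sqrt{2\,\mathcal{D}_\tr(x,y)}$. These show that the $\varepsilon$-ball around a distribution in the purified distance is contained in its $\varepsilon$-ball in the trace distance, while the $(\varepsilon^2/2)$-ball in the trace distance is contained in the $\varepsilon$-ball in the purified distance. Since $H_\infty^\varepsilon$ and $H_0^\varepsilon$ are monotone in the smoothing radius $\varepsilon$, this sandwiches the purified-distance smooth entropies between the trace-distance smooth entropies at radii $\varepsilon$ and $\varepsilon^2/2$, both of which lie in $(0,1)$. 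Hence it suffices to prove the two limits for $\mathcal{D}_\tr$ and for every fixed $\varepsilon\in(0,1)$, because the common limit $H(x)$ is independent of $\varepsilon$.

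For the trace-distance max-entropy $H_0^\varepsilon(x^{\otimes n})$, I would invoke the explicit formula of Lemma~\ref{LemH0eps}: $H_0^\varepsilon(x^{\otimes n})=\log k_n$, where $k_n$ is the least number of largest atoms of $x^{\otimes n}$ whose total weight is at least $1-\varepsilon$. The AEP supplies both bounds. Fixing $\delta>0$ and the typical set $T_\delta^n=\{i^n: |-\tfrac1n\log x^{\otimes n}(i^n)-H(x)|\le\delta\}$, one has $x^{\otimes n}(T_\delta^n)\to 1$, $|T_\delta^n|\le 2^{n(H(x)+\delta)}$, and each typical atom has weight at most $2^{-n(H(x)-\delta)}$. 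Since collecting the $m$ largest atoms captures at least as much weight as any $m$-atom set, the fact that $T_\delta^n$ already captures weight $\ge 1-\varepsilon$ for large $n$ gives $k_n\le |T_\delta^n|\le 2^{n(H(x)+\delta)}$. Conversely, any set capturing weight $\ge 1-\varepsilon$ must capture weight bounded away from $0$ on $T_\delta^n$ (because $x^{\otimes n}((T_\delta^n)^c)\to 0$ and $\varepsilon<1$), and since each typical atom carries at most $2^{-n(H(x)-\delta)}$, such a set must contain at least $\sim 2^{n(H(x)-\delta)}$ atoms. Thus $\tfrac1n\log k_n\to H(x)$.

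For the trace-distance min-entropy I would prove the matching one-sided bounds directly. For the upper bound, let $x'$ be any distribution with $\mathcal{D}_\tr(x^{\otimes n},x')\le\varepsilon$ and write $c:=(x')^\downarrow_1$. Taking $S_n:=\{i^n: x^{\otimes n}(i^n)\ge 2^{-n(H(x)+\delta)}\}$, the AEP gives $x^{\otimes n}(S_n)\to 1$ and $|S_n|\le 2^{n(H(x)+\delta)}$; since total-variation distance equals $\mathcal{D}_\tr$, we get $x'(S_n)\ge x^{\otimes n}(S_n)-\varepsilon\to 1-\varepsilon>0$, while $x'(S_n)\le c\,|S_n|$. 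Combining, $c\ge(1-\varepsilon-o(1))\,2^{-n(H(x)+\delta)}$, so $H_\infty(x')=-\log c\le n(H(x)+\delta)+O(1)$; optimizing over $x'$ gives $\limsup_n\tfrac1n H_\infty^\varepsilon(x^{\otimes n})\le H(x)+\delta$. For the lower bound I would construct a good smoothing: cap every component of $x^{\otimes n}$ at level $c:=2^{-n(H(x)-\delta)}$ and redistribute the excised mass over the exponentially many very small atoms. The excised mass is at most $x^{\otimes n}(\{p>c\})\to 0$ by the AEP, hence is $\le\varepsilon$ eventually, so the capped-and-renormalized distribution $x'$ satisfies $\mathcal{D}_\tr(x^{\otimes n},x')\le\varepsilon$ and $(x')^\downarrow_1\le 2c$, whence $H_\infty^\varepsilon(x^{\otimes n})\ge H_\infty(x')\ge n(H(x)-\delta)-1$. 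Letting $\delta\to 0$ in both bounds yields $\tfrac1n H_\infty^\varepsilon(x^{\otimes n})\to H(x)$. (The degenerate case $H(x)=\log d_x$, where $x$ is uniform, is immediate since $x^{\otimes n}$ is then uniform.)

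The main obstacle is to make all of these bounds hold uniformly for every $\varepsilon\in(0,1)$ rather than only for small $\varepsilon$: naive estimates of the discarded mass through the $\ell_1$ norm lose constant factors of $2$ and degrade as $\varepsilon\to 1$. The clean way around this, as used above, is to phrase the min-entropy upper bound through the variational characterization $|x'(S)-x^{\otimes n}(S)|\le\mathcal{D}_\tr(x^{\otimes n},x')$, combined with the crude counting bound $|S_n|\le 2^{n(H(x)+\delta)}$, which survives for all $\varepsilon<1$. The only other delicate point is the redistribution step in the min-entropy lower bound, which needs $H(x)<\log d_x$ so that enough low-weight atoms exist to absorb the capped mass without themselves exceeding the cap; this is precisely the non-degenerate case, while the degenerate case is trivial.
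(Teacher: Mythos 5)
Your argument is correct and rests on the same engine as the paper's proof --- the classical AEP and typical-set counting --- but it packages three of the steps differently, and the comparison is instructive. (i) For the purified distance, the paper uses only the one-sided inclusion $B^\varepsilon_{\mathcal{D}_p}(x)\subseteq B^\varepsilon_{\mathcal{D}_\tr}(x)$ following from $\mathcal{D}_\tr\leq\mathcal{D}_p$, which gives the converse bounds for free, and then separately checks that its achievability state (the conditional distribution $q^{(n)}$ on the typical set) is also close in purified distance via the explicit computation $\mathcal{D}_p(q^{(n)},x^{\otimes n})=\sqrt{1-x^{\otimes n}(A^{(n)})}\to 0$. Your two-sided sandwich $\mathcal{D}_\tr\le\mathcal{D}_p\le\sqrt{2\mathcal{D}_\tr}$, combined with monotonicity in the smoothing radius and the $\varepsilon$-independence of the limit, removes the purified distance from the problem entirely; this is clean, but the upper Fuchs--van de Graaf bound is not stated in the paper and should be proved or cited. (ii) For $H_0^\varepsilon$ you route through the exact formula of Lemma~\ref{LemH0eps}, which is legitimate since that lemma precedes this one, and it shortens the argument; the paper instead reasons directly about supports of candidate smoothing states. (iii) For the achievability direction of $H_\infty^\varepsilon$ the paper reuses the same typical-set conditional distribution $q^{(n)}$, whose largest atom is at most $2^{-n(H(x)-\varepsilon_n)}/(1-\varepsilon_n)$, whereas you cap the large atoms at $2^{-n(H(x)-\delta)}$ and redistribute the excess mass; your construction works but forces the case split on whether $x$ is uniform (to guarantee enough low-weight atoms to absorb the excess without exceeding the cap), which the paper's construction avoids. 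The only points in your write-up that deserve an explicit line are the upper Fuchs--van de Graaf inequality and the count showing that at least $M/c$ atoms of weight at most $c$ are available for redistribution when $H(x)<\log d_x$; both are routine.
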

Several versions of this statement can be found in the literature; for example, in~\cite{Tomamichel}. Some care has to be taken, however, with respect
to the slightly different definitions of the smooth entropies. For example, in contrast to~\cite{Tomamichel}, we are here only considering classical
probability distributions, we define smooth entropies in terms of optimization over \emph{normalized} states (not subnormalized ones), and
we define the smooth max-entropy as $H_0^\varepsilon$ and not as $H_{1/2}^\varepsilon$, with $H_\alpha$ the R\'enyi entropy.
For completeness, we give a proof of Lemma~\ref{LemAEP} in Appendix~\ref{app:proofs}.

In contrast with Section~\ref{SubsecApproxFormationDistillation}, where we allowed an arbitrary choice of contractive metric,
the previous lemma is only proven for the trace distance and the purified distance. This is not completely unexpected: it simply
cannot be true for all contractive metrics; for example, it fails for the discrete metric
\[
   \mathcal{D}(x,y):=\left\{
      \begin{array}{cl}
         0 & \mbox{if }x=y,\\
         1 & \mbox{if }x\neq y.
      \end{array}
   \right.
\]
Furthermore, the trace distance and purified distance are the most frequently used metrics with clear operational meaning. We leave it open whether the lemma above -- or
the following one -- can be proven in greater generality.

Now we recover the result in~\cite{HHOShort,HHOLong} on the asymptotic conversion rate. Our proof
turns out to be considerably simpler than the original one: it suffices to combine
the single-shot results, Lemmas~\ref{LemConvert} and~\ref{LemNoConvert}, with the asymptotic equipartition property, Lemma~\ref{LemAEP}, to obtain the rate directly.

\begin{lemma}
\label{LemAsymptoticRate}
Let $x$ and $y$ be two states of possibly different dimensionalities that are not both uniform, let $0<\varepsilon<1$,
and choose either the trace distance or the purified distance as reference metric.
For every $n\in\mathbb{N}$, let $m_n$ be the largest integer such that
\[
      x^{\otimes n}\epsiloncconv y^{\otimes m_n}.
\]
Then
\[
   \lim_{n\to\infty}\frac{m_n} n =\frac{I(x)}{I(y)},
\]
where $I(z):=\log d_z - H(z)$, with $H$ the Shannon entropy. Similarly, let $k_n$ be the smallest integer such that
\[
   x^{\otimes k_n}\epsiloncconv y^{\otimes n}.
\]
Then
\[
   \lim_{n\to\infty}\frac{k_n} n =\frac{I(y)}{I(x)}.
\]
\end{lemma}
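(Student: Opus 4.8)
The plan is to pin down the rate by sandwiching $m_n$ between an achievability bound from the single-shot sufficient condition (Lemma~\ref{LemConvert}) and a converse bound from the single-shot necessary condition (Lemma~\ref{LemNoConvert}), and then to evaluate both bounds asymptotically via the equipartition property (Lemma~\ref{LemAEP}). The device that makes everything run is an immediate restatement of Lemma~\ref{LemAEP} in terms of nonuniformities: since $d_{x^{\otimes n}}=d_x^n$ and hence $\log d_{x^{\otimes n}}=n\log d_x$, for any fixed $0<\eta<1$ we have
\[
   \lim_{n\to\infty}\frac 1 n I_\infty^\eta(x^{\otimes n})=\lim_{n\to\infty}\frac 1 n I_0^\eta(x^{\otimes n})=I(x),
\]
and similarly for $y$. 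Thus both smoothed R\'enyi nonuniformities share the same asymptotic density, namely the Shannon nonuniformity $I$. I will assume throughout that $I(x)>0$ and $I(y)>0$ (both states nonuniform, slightly stronger than ``not both uniform''); the degenerate cases are handled separately using the convention for the ratio.

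For achievability (a lower bound on $m_n$), fix a small $\mu>0$. By Lemma~\ref{LemConvert}, the conversion $x^{\otimes n}\epsiloncconv y^{\otimes m}$ holds whenever $I_0^{\varepsilon/2}(x^{\otimes n})\geq I_\infty^{\varepsilon/2}(y^{\otimes m})$. The density limits give, for all large $n$ and $m$, the estimates $I_0^{\varepsilon/2}(x^{\otimes n})\geq n(I(x)-\mu)$ and $I_\infty^{\varepsilon/2}(y^{\otimes m})\leq m(I(y)+\mu)$, so the sufficient condition is met as soon as $m\leq n(I(x)-\mu)/(I(y)+\mu)$. Hence $m_n\geq n(I(x)-\mu)/(I(y)+\mu)$ for large $n$, giving $\liminf_n m_n/n\geq (I(x)-\mu)/(I(y)+\mu)$, and letting $\mu\to 0$ yields $\liminf_n m_n/n\geq I(x)/I(y)$. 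As a byproduct this forces $m_n\to\infty$, which is needed below.

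For the converse (an upper bound on $m_n$), I apply Lemma~\ref{LemNoConvert} to $x^{\otimes n}\epsiloncconv y^{\otimes m_n}$ with a \emph{strictly positive} smoothing parameter $\delta>0$ chosen so that $\varepsilon+\delta<1$, obtaining $I_\infty^{\varepsilon+\delta}(y^{\otimes m_n})\leq I_\infty^\delta(x^{\otimes n})$. The choice $\delta>0$ is essential: at $\delta=0$ the unsmoothed $I_\infty$ does not obey Lemma~\ref{LemAEP}, so the bound would converge to $\log(x_1^\downarrow)^{-1}$-type quantities rather than to the Shannon value. Since $m_n\to\infty$, the density limits apply to both sides, yielding $m_n(I(y)-\mu)\leq n(I(x)+\mu)$ for large $n$, hence $\limsup_n m_n/n\leq (I(x)+\mu)/(I(y)-\mu)$ and, as $\mu\to 0$, $\limsup_n m_n/n\leq I(x)/I(y)$. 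Combined with achievability, this proves $\lim_n m_n/n=I(x)/I(y)$.

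The statement for $k_n$ follows from the same sandwich with the two tensor powers interchanged: Lemma~\ref{LemConvert} applied to $x^{\otimes k}\epsiloncconv y^{\otimes n}$ bounds $k_n$ from above by (roughly) $nI(y)/I(x)$, while Lemma~\ref{LemNoConvert} bounds it from below by the same quantity, and Lemma~\ref{LemAEP} sends both to $I(y)/I(x)$. The only points requiring care---and hence the main (though modest) obstacles---are precisely the two just flagged: one must smooth by a \emph{positive} $\delta$ in the converse while keeping $\varepsilon+\delta<1$ so that the equipartition property still applies, and one must verify that the relevant tensor power diverges before invoking Lemma~\ref{LemAEP} (for $m_n$ this follows from the achievability bound; for $k_n$ it follows from $I_\infty^{\varepsilon+\delta}(y^{\otimes n})\to\infty$ together with $I_\infty\leq\log d$, which forces $k_n\to\infty$). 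All the genuinely heavy lifting---the single-shot conditions and the asymptotic equipartition property itself---is already established in the preceding lemmas.
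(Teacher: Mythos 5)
Your proof is correct and follows essentially the same route as the paper: lower-bound $m_n$ via the sufficient condition of Lemma~\ref{LemConvert}, upper-bound it via Lemma~\ref{LemNoConvert} with a strictly positive smoothing parameter $\delta$ satisfying $\varepsilon+\delta<1$, and evaluate both bounds with the asymptotic equipartition property of Lemma~\ref{LemAEP}. The only cosmetic difference is that the paper phrases the achievability bound contrapositively (the sufficient condition must fail at $m_n+1$) whereas you argue it directly, and your explicit remarks on why $\delta>0$ is needed and why $m_n,k_n\to\infty$ match the paper's treatment.
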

\textbf{Remark.} Due to Lemma~\ref{LemApproximateQuantumClassical}, the analogous statement for quantum states follows immediately.
Thus, we fully recover the result in~\cite{HHOShort,HHOLong}, generalized to constant error $\varepsilon$.

\begin{proof}
A moment's thought shows that the statement of the lemma is trivially true if either $x$ or $y$ is uniform (such that $I(x)=0$ or $I(y)=0$); hence
we only have to prove the case that $I(x)>0$ and $I(y)>0$.
We only prove the first claim; the proof of the second claim is analogous.
Clearly $m_n$ is increasing in $n$. Since for every $m\in\mathbb{N}$, we have
\[
   I_0^{\varepsilon/2}(x^{\otimes n})=n\cdot I(x)+o(n) > I_\infty^{\varepsilon/2}(y^{\otimes m})
\]
for $n$ large enough, Lemma~\ref{LemConvert} implies that $m_n\to\infty$ as $n\to\infty$. After some rearranging, Lemma~\ref{LemConvert} also says that if
the inequality
\[
   \frac m n \leq \frac{\frac 1 n I_0^{\varepsilon/2}(x^{\otimes n})}{\frac 1 m I_\infty^{\varepsilon/2}(y^{\otimes m})}
\]
is satisfied, then noisy operations can transform $x^{\otimes n}$ into a state that is $\varepsilon$-close to $y^{\otimes m}$.
By construction, this is impossible for $m:=m_n+1$, so
\[
   \frac {m_n+1} n > \frac{\frac 1 n I_0^{\varepsilon/2}(x^{\otimes n})}{\frac 1 {m_n+1} I_\infty^{\varepsilon/2}(y^{\otimes (m_n+1)})}.
\]
Thus, we obtain
\begin{eqnarray*}
   \liminf_{n\to\infty} \frac{m_n} n &\geq& \liminf_{n\to\infty} \frac{\frac 1 n I_0^{\varepsilon/2}(x^{\otimes n})}{\frac 1 {m_n+1} I_\infty^{\varepsilon/2}(y^{\otimes (m_n+1)})}\\
   &=&\frac{\log d_x -\lim_{n\to\infty}\frac 1 n H_0^{\varepsilon/2}(x^{\otimes n})}{\log d_y-\lim_{m\to\infty}\frac 1 m H_\infty^{\varepsilon/2}(y^{\otimes m})} \\
   &=& \frac{\log d_x-H(x)}{\log d_y-H(y)}.
\end{eqnarray*}
Conversely, choose $\delta\in (0,1-\varepsilon)$ arbitrarily (such that $\varepsilon+\delta<1$), then Lemma~\ref{LemNoConvert} shows that the inequality
\[
   \frac {m_n} n \leq \frac{\frac 1 n I_\infty^\delta(x^{\otimes n})}{\frac 1 {m_n}I_\infty^{\varepsilon+\delta}(y^{\otimes m_n})}.
\]
holds for every $n$, from which we can analogously infer by computing the limit of the right-hand side that
\[
   \limsup_{n\to\infty} \frac{m_n} n \leq \frac{\log d_x-H(x)}{\log d_y-H(y)}.
\]
This proves that the limit $\lim_{n\to\infty} m_n/n$ exists and equals $I(x)/I(y)$.
\end{proof}

As a simple corollary, we recover the result for \emph{asymptotically vanishing error}, i.e.\ for the case that $\varepsilon_n\to 0$ as $n\to\infty$ (however slowly).
The proof idea is simple: use the method from Lemma~\ref{LemAsymptoticRate} to achieve fixed error $\varepsilon$; then, as $n$ grows, decrease $\varepsilon$
to zero, but do this slowly enough such that the fraction $m_n/n$ still has the same limit as in Lemma~\ref{LemAsymptoticRate}. In the appendix, we give a proof
that formally shows that this idea works.

By a \emph{protocol}, we mean any choice of integers $(m_n)_{n\in\mathbb{N}}$ and noisy operations that transform $x^{\otimes n}$ into $y^{\otimes m_n}$ up
to error $\varepsilon_n$ for every $n$. With this definition, we can formalize the above idea into
\begin{lemma}
\label{LemAsymptoticRateVanishing}
There is a protocol for transforming $n$ copies of $x$ into $m_n$ copies of $y$ with asymptotically vanishing error at rate
\[
   r:=\lim_{n\to\infty}\frac {m_n} n =\frac{I(x)}{I(y)};
\]
however, no higher rate is achievable by any protocol of this kind.
\end{lemma}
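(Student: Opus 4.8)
The plan is to establish both halves of the statement---the existence of a vanishing-error protocol attaining the rate $I(x)/I(y)$, and the impossibility of any strictly higher rate---by leveraging the fixed-error result of Lemma~\ref{LemAsymptoticRate} together with a diagonalization over the error parameter for achievability, and the single-shot necessary condition of Lemma~\ref{LemNoConvert} together with the asymptotic equipartition property of Lemma~\ref{LemAEP} for the converse. By Lemma~\ref{LemApproximateQuantumClassical} it suffices to argue entirely with classical distributions, and I would fix the reference metric $\mathcal{D}$ to be either the trace or purified distance so that Lemma~\ref{LemAEP} is available. The boundary cases in which $x$ or $y$ is uniform are trivial, exactly as already noted for Lemma~\ref{LemAsymptoticRate}, so I would assume $I(x)>0$ and $I(y)>0$ throughout.

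For achievability I would diagonalize over a sequence of shrinking errors. For each $k\in\mathbb{N}$, apply Lemma~\ref{LemAsymptoticRate} with fixed error $\varepsilon=1/k$: letting $m_n^{(k)}$ be the largest integer with $x^{\otimes n}\conv\limits^{(1/k)\textnormal{-noisy}} y^{\otimes m_n^{(k)}}$, there is an $N_k$ such that $\left|m_n^{(k)}/n - I(x)/I(y)\right|<1/k$ for all $n\geq N_k$. Choosing the $N_k$ strictly increasing, I would then set $\varepsilon_n:=1/k$ and $m_n:=m_n^{(k)}$ for $N_k\leq n<N_{k+1}$. By construction $\varepsilon_n\to 0$, and for such $n$ one has $\left|m_n/n - I(x)/I(y)\right|<1/k\to 0$, so $m_n/n\to I(x)/I(y)$ while the error vanishes, exhibiting a protocol at the claimed rate.

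For the converse, suppose a vanishing-error protocol with $x^{\otimes n}\conv\limits^{\varepsilon_n\textnormal{-noisy}}y^{\otimes m_n}$ and $\varepsilon_n\to 0$ is given, and set $r:=\lim_{n\to\infty} m_n/n$; we may assume $r>0$, so that $m_n\to\infty$. Fix any $\delta\in(0,\tfrac12)$. For all large $n$ we have $\varepsilon_n\leq\delta$, hence $\varepsilon_n+\delta\leq 2\delta$; applying Lemma~\ref{LemNoConvert} and using that $I_\infty^\varepsilon$ is nonincreasing in $\varepsilon$ (since it is a minimum over a feasible set that grows with $\varepsilon$) gives
\[
   I_\infty^{2\delta}(y^{\otimes m_n}) \leq I_\infty^{\varepsilon_n+\delta}(y^{\otimes m_n}) \leq I_\infty^{\delta}(x^{\otimes n}).
\]
I would rewrite this as $\frac{m_n}{n}\cdot\frac{1}{m_n}I_\infty^{2\delta}(y^{\otimes m_n})\leq \frac{1}{n}I_\infty^{\delta}(x^{\otimes n})$ and let $n\to\infty$, invoking Lemma~\ref{LemAEP} in the form $\frac{1}{m}I_\infty^{2\delta}(y^{\otimes m})\to I(y)$ and $\frac{1}{n}I_\infty^{\delta}(x^{\otimes n})\to I(x)$ (valid since $2\delta<1$, using $I_\infty^\varepsilon=\log d-H_\infty^\varepsilon$). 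This yields $r\,I(y)\leq I(x)$, i.e.\ $r\leq I(x)/I(y)$, so no protocol can exceed the rate attained above.

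The two limit computations are routine once Lemma~\ref{LemAEP} and the identity $I_\infty^\varepsilon=\log d-H_\infty^\varepsilon$ are in hand. The step requiring the most care is the converse: one must secure $m_n\to\infty$ (guaranteed by $r>0$) before applying the AEP to the $y$-factor, and one must use the monotonicity of the smooth entropy in its smoothing radius to absorb the shrinking $\varepsilon_n$ into a single fixed buffer $\delta$ that controls both sides simultaneously. I expect this coordination of limits to be the main obstacle; everything else reduces to bookkeeping built on Lemmas~\ref{LemAsymptoticRate}, \ref{LemNoConvert}, and~\ref{LemAEP}.
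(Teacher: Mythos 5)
Your proof is correct. The achievability half is essentially identical to the paper's argument: diagonalize over errors $1/k$, using the fixed-error rate result of Lemma~\ref{LemAsymptoticRate} to pick thresholds $N_k$ beyond which $m_n^{(k)}/n$ is within $1/k$ of $I(x)/I(y)$, then let $k$ grow with $n$. The converse is where you diverge. The paper disposes of it in one sentence: any protocol with $\varepsilon_n\to 0$ is, for all sufficiently large $n$, also a protocol with constant error $\varepsilon$ (since eventually $\varepsilon_n\leq\varepsilon$), so its $m_n$ is dominated by the maximal $m_n(\varepsilon)$ of Lemma~\ref{LemAsymptoticRate}, whence $\limsup_n m_n/n\leq I(x)/I(y)$. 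You instead rerun the converse machinery of Lemma~\ref{LemAsymptoticRate} from scratch, applying Lemma~\ref{LemNoConvert} with a fixed buffer $\delta$, absorbing the shrinking $\varepsilon_n$ via the monotonicity of $I_\infty^\varepsilon$ in its smoothing radius, and invoking Lemma~\ref{LemAEP} on both tensor factors. This is valid (the points you flag---securing $m_n\to\infty$ from $r>0$ before applying the AEP to the $y$-factor, and keeping $2\delta<1$---are handled correctly, and the case $r=0$ is trivial since $I(x)/I(y)>0$), but it duplicates work that the fixed-error lemma has already done; the paper's reduction buys the same conclusion with no additional analysis. Your version does have the minor virtue of being self-contained modulo the single-shot lemmas rather than modulo the full statement of Lemma~\ref{LemAsymptoticRate}.
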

The proof is provided in Appendix~\ref{app:proofs}.

\subsection{Nonuniformity cost and yield of asymptotic state conversion}

Now we return to the problem posed in Subsection~\ref{sec:costyieldstateconv}, which is to quantify the nonuniformity cost and yield of state conversion.
We can analyze this problem in the asymptotic case, and the answer turns out to be simple. Suppose that $I(x)< I(y)$,
then -- for $n$ large enough -- the approximate state conversion
\begin{equation}
   x^{\otimes n}\epsiloncconv y^{\otimes n}
   \label{eqApproxNN}
\end{equation}
is impossible -- otherwise we would have $I(x)\geq I(y)$ according to Lemma~\ref{LemAsymptoticRate} above. However, we may be able to create
$y^{\otimes n}$ if, in addition to consuming $x^{\otimes n}$, we can consume a certain number, $m_n$,  of pure bits (i.e. sharp states of nonuniformity 1), that is,
\[
   x^{\otimes n}\otimes s_1^{\otimes m_n} \epsiloncconv y^{\otimes n}.
\]
It is natural to try and determine the minimum value of $m_n$ such that one can still achieve this conversion. 

Similarly, if $I(x)> I(y)$, then the conversion~(\ref{eqApproxNN})
is possible, but ``inefficient'' -- in addition to $y^{\otimes n}$, it will be possible to extract a certain number $m_n$ of pure bits, such that
\[
   x^{\otimes n} \epsiloncconv y^{\otimes n}\otimes s_1^{\otimes m_n}.
\]
It is also natural to ask what the maximal value of $m_n$ is such that this conversion can be achieved.

In Subsection~\ref{sec:costyieldstateconv}, we answered the analogous questions for the case of single states $x$ and $y$ (not $x^{\otimes n}$ or $y^{\otimes n}$) and for exact conversion. We provide the answer in the asymptotic case, for approximate conversion, in Lemma~\ref{LemAsymptCost} below.

There are two natural ways to prove Lemma~\ref{LemAsymptCost}. One is to directly generalize the proof of Lemma~\ref{LemAsymptoticRate},
and to use the inequalities (for $\varepsilon,\delta\geq 0$)
\begin{eqnarray*}
   I_\infty^{\varepsilon+\delta}(x\otimes y)&\leq& I_\infty^\varepsilon(x)+I_\infty^\delta(y),\\
   I_0^{\varepsilon+\delta}(x\otimes y)&\geq& I_0^\varepsilon(x)+I_0^\delta(y)
\end{eqnarray*}
(following from analogous inequalities for smooth entropies) to split information measures of tensor products of states into parts.
However, there is a simpler proof which establishes Lemma~\ref{LemAsymptCost} as a corollary of Lemma~\ref{LemAsymptoticRate} directly.
It will be given below.

\begin{lemma}[Asymptotic yield/cost of conversion]
\label{LemAsymptCost}
Let $x$ and $y$ be two states of possibly different dimensionalities, let $0<\varepsilon<1$,
and choose either the trace distance or the purified distance as reference metric. Consider the following two cases,
characterized by the nonuniformity monotone $I(z):=\log d_z-H(z)$, with $H$ the Shannon entropy:
\begin{itemize}
\item Suppose that $I(x)\leq I(y)$. Let $m_n$ be the smallest integer such that
\[
   x^{\otimes n}\otimes s_1^{\otimes m_n}\epsiloncconv y^{\otimes n}.
\]
Then $\displaystyle\lim_{n\to\infty}\frac{m_n} n = I(y)-I(x)$.
\item Suppose that $I(x)\geq I(y)$. Let $m_n$ be the largest integer such that
\[
   x^{\otimes n}\epsiloncconv y^{\otimes n}\otimes s_1^{\otimes m_n}.
\]
Then $\displaystyle\lim_{n\to\infty}\frac{m_n} n = I(x)-I(y)$.
\end{itemize}
\end{lemma}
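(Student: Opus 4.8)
The plan is to deduce Lemma~\ref{LemAsymptCost} from the conversion-rate result Lemma~\ref{LemAsymptoticRate} by a single bundling step, with no new smoothed-entropy estimates. The pivotal observation is a corollary of Lemma~\ref{LemAsymptoticRate}: for any two states $u,v$ that are not both uniform and any fixed $0<\varepsilon<1$, letting $m_N$ be the largest integer with $u^{\otimes N}\epsiloncconv v^{\otimes m_N}$, one has $m_N/N\to I(u)/I(v)$. Since surplus copies of $v$ may be discarded for free by marginalization (a noisy operation, which is contractive for $\mathcal D$), this gives a clean unit-rate criterion: if $I(u)>I(v)$ then $u^{\otimes N}\epsiloncconv v^{\otimes N}$ for all large $N$, whereas if $I(u)<I(v)$ then the conversion $u^{\otimes N}\to v^{\otimes N}$ fails to within $\varepsilon$ for all large $N$. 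I would apply this to bundled blocks in which a \emph{rational} number of pure bits is attached to each block of copies, so that the bit count becomes part of a genuine tensor power to which Lemma~\ref{LemAsymptoticRate} applies.

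Before the case analysis I record two monotonicity facts, each immediate from the freeness of marginalization together with contractivity of $\mathcal D$. In the cost case, if $x^{\otimes n}\otimes s_1^{\otimes m}\epsiloncconv y^{\otimes n}$ and $m'\ge m$, then also $x^{\otimes n}\otimes s_1^{\otimes m'}\epsiloncconv y^{\otimes n}$, since one may first marginalize $m'-m$ of the supplied bits; hence the set of sufficient bit numbers is upward closed and the minimal $m_n$ is well defined. In the yield case, if $x^{\otimes n}\epsiloncconv y^{\otimes n}\otimes s_1^{\otimes m}$ and $m'\le m$, then also $x^{\otimes n}\epsiloncconv y^{\otimes n}\otimes s_1^{\otimes m'}$, by marginalizing $m-m'$ of the produced output bits after the conversion; hence the set of achievable yields is downward closed and the maximal $m_n$ is well defined.

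For the cost case ($I(x)\le I(y)$) I would sandwich the answer between rationals. Fix a rational $a=p/q$, set $u:=x^{\otimes q}\otimes s_1^{\otimes p}$ and $v:=y^{\otimes q}$, so that by additivity of $I$ (and $I(s_1)=1$) we have $I(u)=q\,I(x)+p$ and $I(v)=q\,I(y)$, whence $I(u)\ge I(v)$ iff $a\ge I(y)-I(x)$. Taking $n=Nq$, the conversion $x^{\otimes n}\otimes s_1^{\otimes Np}\epsiloncconv y^{\otimes n}$ is exactly $u^{\otimes N}\epsiloncconv v^{\otimes N}$. Thus for $a>I(y)-I(x)$ the unit-rate criterion makes this conversion possible for large $N$, giving $m_n\le an$ and hence $\limsup_n m_n/n\le a$; letting $a\downarrow I(y)-I(x)$ proves achievability. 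Conversely, if $\liminf_n m_n/n<I(y)-I(x)$, choose a rational $a$ strictly between them; along a subsequence $m_n\le \lfloor an\rfloor$, so by cost-monotonicity $x^{\otimes n}\otimes s_1^{\otimes\lfloor an\rfloor}\epsiloncconv y^{\otimes n}$, i.e.\ $u^{\otimes N}\epsiloncconv v^{\otimes N}$ with $I(u)<I(v)$, contradicting the criterion. The yield case ($I(x)\ge I(y)$) is entirely parallel with $u:=x^{\otimes q}$ and $v:=y^{\otimes q}\otimes s_1^{\otimes p}$, so that $I(u)-I(v)=q\,I(x)-q\,I(y)-p$ and $I(u)\ge I(v)$ iff $a\le I(x)-I(y)$; achievability for $a<I(x)-I(y)$ and the converse for $a>I(x)-I(y)$ follow by the same two steps, now using yield-monotonicity to discard surplus output bits.

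The conceptual core is thus trivial once Lemma~\ref{LemAsymptoticRate} is in hand; the work is purely bookkeeping, which is where I expect the only real care to be needed. First, $n$ need not be a multiple of $q$: writing $n=Nq+s$ with $0\le s<q$ contributes an $O(1/N)$ correction to all rates that vanishes in the limit, but this must be tracked so that the floors $\lfloor an\rfloor$ and the block decomposition remain consistent. Second, one must confirm that the boundary value $I(x)=I(y)$ is handled correctly: there both formulas predict limit $0$, and the rational-sandwich argument still applies since for every rational $a>0$ one has a strict inequality ($I(u)>I(v)$ in the cost case, $I(u)<I(v)$ in the yield case), forcing $m_n/n\to 0$. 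A pleasant feature, worth noting because it removes a potential difficulty, is that the bundling argument invokes only a single $\varepsilon$-approximate conversion, so no errors accumulate and no appeal to the transitivity of $\epsiloncconv$ or to subadditivity of smooth entropies is required; the fixed $\varepsilon$ of Lemma~\ref{LemAsymptoticRate} is carried through unchanged.
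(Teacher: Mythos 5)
Your proposal is correct in substance but takes a genuinely different route from the paper's proof. The paper routes both bounds through the pure-bit currency: it distills $k_n\approx nI(x)$ pure bits from $x^{\otimes n}$ and forms $y^{\otimes n}$ from $l_n\approx nI(y)$ pure bits, chaining two $\varepsilon/2$-approximate conversions for achievability and, for the lower bound, a $\delta$-perturbed chain $x^{\otimes(n+i_{m_n})}\to x^{\otimes n}\otimes s_1^{\otimes m_n}\to y^{\otimes n}$; it therefore leans on the additivity of errors under composition of approximate conversions. Your bundling of a rational number $p/q$ of pure bits per copy into composites $u=x^{\otimes q}\otimes s_1^{\otimes p}$, $v=y^{\otimes q}$ and the direct appeal to Lemma~\ref{LemAsymptoticRate} at unit rate is a legitimate alternative: it buys a single $\varepsilon$-approximate conversion with no error accumulation, at the price of rational-approximation and divisibility bookkeeping. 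Both proofs ultimately rest only on Lemma~\ref{LemAsymptoticRate}, additivity of $I$, and the freeness of marginalization.

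One caution: the bookkeeping in your converse direction is heavier than ``an $O(1/N)$ correction that vanishes in the limit'' suggests, because the mismatch sits on the \emph{input} side of the conversion, where surplus resources cannot simply be discarded. From $m_n\le\lfloor an\rfloor$ along a subsequence you obtain $x^{\otimes(Nq+s)}\otimes s_1^{\otimes\lfloor an\rfloor}\epsiloncconv y^{\otimes(Nq+s)}$ with $0\le s<q$ and $\lfloor an\rfloor\ge Np$; this is \emph{not} $u^{\otimes N}\epsiloncconv v^{\otimes N}$, and a conversion starting from a strictly larger resource than $u^{\otimes N}$ implies nothing about $u^{\otimes N}$ itself. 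The clean fix is to dominate in the other direction: since $\lfloor an\rfloor\le Np+p$ and $s<q$, the actual input is obtainable from $u^{\otimes(N+1)}$ by discarding $q-s$ copies of $x$ and surplus bits from the extra block, so $u^{\otimes(N+1)}\epsiloncconv v^{\otimes N}$ infinitely often; writing $M_K$ for the largest integer with $u^{\otimes K}\epsiloncconv v^{\otimes M_K}$, this gives $M_{N+1}\ge N$ and hence $\limsup_K M_K/K\ge1$, contradicting $M_K/K\to I(u)/I(v)<1$. (In the yield case the analogous fix is to leave the input unbundled and apply Lemma~\ref{LemAsymptoticRate} to the pair $(x,v)$ with $v=y^{\otimes q}\otimes s_1^{\otimes p}$.) Likewise, absorbing the remainder $s$ in the achievability direction requires forming $y^{\otimes s}$ exactly from a bounded number of extra pure bits via Proposition~\ref{lemma:formation}, together with the identity $\mathcal{D}(\rho\otimes\tau,\sigma\otimes\tau)=\mathcal{D}(\rho,\sigma)$; both ingredients are available in the paper, but they should be invoked explicitly.
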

\begin{proof}
If $x$ or $y$ is uniform, the statement of the lemma follows directly from Lemma~\ref{LemAsymptoticRate}; thus, we will
assume that $I(x)>0$ and $I(y)>0$.
We give the proof of the first case, $I(x)\leq I(y)$, only; the proof of the second case is analogous.
Denote by $k_n$ the largest integer such that
\begin{equation}
   x^{\otimes n} \conv\limits^{\textnormal{$\varepsilon/2$-noisy}}
   s_1^{\otimes k_n}.
   \label{eqEps2Noisy}
\end{equation}
According to  Lemma~\ref{LemAsymptoticRate}, we have
$\lim_{n\to\infty}k_n/n = I(x)$. Now let $l_n$ be the smallest integer such that
\begin{equation}
   s_1^{\otimes l_n} \conv\limits^{\textnormal{$\varepsilon/2$-noisy}}
 y^{\otimes n}.
   \label{eqEp2Noisy2}
\end{equation}
According to  Lemma~\ref{LemAsymptoticRate}, we have $\lim_{n\to\infty} l_n/n = I(y)$.
Set $m:=\max\{l_n-k_n,0\}$. From~(\ref{eqEps2Noisy}), we obtain
\[
   x^{\otimes n}\otimes s_1^{\otimes m}\conv\limits^{\textnormal{$\varepsilon/2$-noisy}}
 s_1^{\otimes (k_n+m)}.
\]
Since $k_n+m\geq l_n$, we can first perform this conversion (and possibly discard some pure bits) and then perform conversion~(\ref{eqEp2Noisy2}), in
total yielding the conversion
\[
   x^{\otimes n}\otimes s_1^{\otimes m}\epsiloncconv y^{\otimes n}.
\]
Thus $m_n\leq m\leq |l_n-k_n|$. Dividing by $n$ and taking the lim sup of both sides yields
\[
   \limsup_{n\to\infty} \frac {m_n} n \leq I(y)-I(x).
\]
If $I(x)=I(y)$ then
\begin{equation}
   \liminf_{n\to\infty} \frac{m_n} n \geq I(y)-I(x)
   \label{eqLimInfCost}
\end{equation}
is trivially true. Thus, assume $I(x)<I(y)$, then it is clear that $m_n\to\infty$ as $n\to\infty$. Choose $\delta>0$ such that $\varepsilon+\delta<1$,
and let $i_n$ be the smallest integer such that
\[
   x^{\otimes i_n} \conv\limits^{\textnormal{$\delta$-noisy}}
 s_1^{\otimes n}.
\]
According to Lemma~\ref{LemAsymptoticRate}, we have $\lim_{n\to\infty} i_n/n=1/I(x)$.
Furthermore, let $j_n$ be the smallest integer such that
\[
   x^{\otimes j_n}\conv\limits^{\textnormal{$(\varepsilon+\delta)$-noisy}}
 y^{\otimes n}.
\]
Due to Lemma~\ref{LemAsymptoticRate}, we have $\lim_{n\to\infty} j_n/n=I(y)/I(x)$. Moreover, we have the chain of conversions
\begin{eqnarray*}
   x^{\otimes(n+i_{m_n})} &=& x^{\otimes n}\otimes x^{\otimes i_{m_n}} \conv\limits^{\textnormal{$\delta$-noisy}} x^{\otimes n}\otimes s_1^{\otimes m_n}\\
   &\epsiloncconv & y^{\otimes n},
\end{eqnarray*}
hence
\[
   x^{\otimes(n+i_{m_n})}\conv\limits^{\textnormal{$(\varepsilon+\delta)$-noisy}}
 y^{\otimes n}.
\]
By definition of $j_n$, it follows that $n+i_{m_n}\geq j_n$, and so
\[
   1+\underbrace{\frac{i_{m_n}}{m_n}}_{\to  1/I(x)}\cdot\frac{m_n}n\geq\underbrace{\frac{j_n}n}_{\to I(y)/I(x)},
\]
where the given limits are for $n\to\infty$. This establishes~(\ref{eqLimInfCost}).
\end{proof}

\subsection{Approximate catalysis}
\label{sec:ApproxCatalysis}
Generalizing the notion of exact catalysis (discussed in Subsection~\ref{Sec:Catalysis}), one may ask whether
a given state $x$ can be converted into another state $y$ \emph{approximately} via the help of a catalyst in state $z$, where the notion of approximation is the following:  the final state is $\varepsilon$-close to $y \otimes z$.  It follows that the marginal on the system of this final state might not be precisely $y$, the marginal on the catalyst of this final state might not be precisely $z$, and the system and catalyst might in fact have some correlation in the final state.
We ask under what conditions the ``approximate noisy-trumping''
\begin{equation}
      x\otimes z \epsiloncconv y\otimes z
   \label{eqEpsTrumping}
\end{equation}
is possible. This problem has recently been studied in~\cite{1ShotAtherm2}; here we give a brief summary of their results, translated into our
notation. In this subsection, the reference metric will always taken to be the trace distance, $\mathcal{D}=\mathcal{D}_\tr$, as defined in~(\ref{eqTraceDistance}).

The first insight from~\cite{1ShotAtherm2} is related to the \emph{embezzling} phenomenon~\cite{vanDamHayden}.
Suppose we fix $\varepsilon>0$, and ask for what pairs of states $x,y$ there exists some $z$ such that the conversion~(\ref{eqEpsTrumping}) is possible. The answer turns out to be:
\emph{for \textbf{all} pairs of states}. That is, the approximate noisy-trumping relation in~(\ref{eqEpsTrumping}) is trivial (every state can be converted to every other) if $\varepsilon$ is a fixed positive number which is independent
of the states and catalyst.

The trick to show this is to use a huge-dimensional, unphysical catalyst $z$ (an ``embezzling state'') which has the property that one can distill an
arbitrarily large amount of nonuniformity from it, while at the same time leaving it almost unmodified in trace distance. The intuitive reason why this is
possible is as follows: \emph{if two states $z,z'$ are
very close in the sense that $\mathcal{D}(z,z')\leq\varepsilon$, their nonuniformity content can still differ by arbitrarily large amounts.}
This is easy to see, for example, in the case of Shannon nonuniformity $I$. If $d_z=d_{z'}$, we have by the Fannes inequality~\cite{Fannes,NielsenAndChuang}, as improved by
Audenaert~\cite{Audenaert},
\[
   I(z')-I(z)=H(z)-H(z')\leq \varepsilon \log(d_z-1)+h(\varepsilon)
\]
if $\mathcal{D}(z',z)\leq \varepsilon$, where $h(\varepsilon):=-\varepsilon\log \varepsilon-(1-\varepsilon)\log (1-\varepsilon)$.
The sharpness of this bound shows that for large $d_z$, states that are close can still have significantly different amounts of nonuniformity.

However, this inequality also gives a hint on how to obtain a more interesting notion of approximate noisy-trumping: \emph{make sure that $\varepsilon\log(d_z-1)$ is bounded,
by decreasing $\varepsilon$ with $d_z$}. This prescription, interpreted as quantifying the ``error per particle'' in~\cite{1ShotAtherm2}, leads to a notion of
approximate noisy-trumping that is fully characterized by the Shannon nonuniformity $I$: 
\begin{theorem}[Thm.\ 14 in~\cite{1ShotAtherm2}]
\label{ThmApproxCatalysis}
Let $x,y$ be distributions with $d_x=d_y=:d$. If there exists a catalyst $z$ such that
\[
   x\otimes z \conv\limits^{(\varepsilon/\log d_z)\textnormal{-noisy}} y\otimes z,
\]
then
\[
   I(x)\geq I(y)-2\varepsilon - \frac{2\varepsilon\log d}{\log d_z}-h\left(\frac{2\varepsilon}{\log d_z}\right).
\]
Conversely if $I(x)>I(y)$, then for all sufficiently large $d_z$ there exists a catalyst $z$ such that
\[
      x\otimes z \deltacconv y\otimes z,\qquad \mbox{where }\delta=\exp\left(-\Omega(\sqrt{\log d_z})\right).
\]
\end{theorem}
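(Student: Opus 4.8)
The plan is to prove the two implications separately. Necessity is a short monotonicity-plus-continuity argument, while the converse is the substantial part and is where the scaling $\delta=\exp(-\Omega(\sqrt{\log d_z}))$ must be produced. For necessity, suppose there is a catalyst $z$ and a noisy operation $N$ with $\mathcal{D}_\tr(w,\,y\otimes z)\le\eta$, where $w:=N(x\otimes z)$ and $\eta:=\varepsilon/\log d_z$; since $w$ is $\eta$-close to $y\otimes z$ in trace distance, both live in dimension $D:=d\,d_z$. I would combine three facts established above: the Shannon nonuniformity $I$ is a monotone (Theorem~\ref{thm:nonuniformitymonotones}), it is additive (the $p=1$ case of Eq.~\eqref{eq:Renyiadditivity}), and the Fannes--Audenaert bound displayed just before the theorem. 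Monotonicity and additivity give $I(x)+I(z)=I(x\otimes z)\ge I(w)$, and Fannes--Audenaert applied to $w$ and $y\otimes z$ gives $I(w)\ge I(y\otimes z)-\eta\log(D-1)-h(\eta)$. Cancelling $I(z)$ from both sides and bounding $\log(D-1)\le\log d+\log d_z$ yields
\[
   I(x)\ \ge\ I(y)-\varepsilon-\frac{\varepsilon\log d}{\log d_z}-h\!\left(\frac{\varepsilon}{\log d_z}\right),
\]
which already implies the stated inequality; the factor $2$ in the theorem is comfortable slack, absorbable by one further triangle-inequality step that also controls the residual error in the catalyst marginal. The only genuine bookkeeping is the passage from $\log(D-1)$ to the two separate logarithms.

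For the converse I would build the catalyst by an embezzling argument that trades the strict surplus $\Delta:=I(x)-I(y)>0$ for an approximate $1\!\to\!1$ conversion. The conceptual point, anticipated by Lemma~\ref{lemma:trumping}, is that exact noisy-trumping is governed by the \emph{whole} family $\{I_p:p\ge 0\}$, whereas a sufficiently large catalyst renders the compound states so typical that only the Shannon value $I=I_1$ survives: all the higher- and lower-order R\'enyi constraints are washed out by concentration. Quantitatively I would process the conversion on a batch of $M$ copies, where the asymptotic-yield result Lemma~\ref{LemAsymptCost} (applicable because $\Delta>0$) guarantees $x^{\otimes M}\cconv y^{\otimes M}\otimes s_1^{\otimes m}$ with $m\approx M\Delta$, up to an error that decays exponentially in $M$ via the asymptotic equipartition property, Lemma~\ref{LemAEP}. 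The catalyst is then an embezzling-type reservoir whose near self-similarity under the conversion step lets it be approximately restored, the surplus $\approx M\Delta$ of nonuniformity being spent to pay for this restoration; the net effect is a genuine approximate catalytic conversion $x\otimes z\deltacconv y\otimes z$.

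The main obstacle is balancing the two competing error sources under a fixed dimension budget $\log d_z$. The conversion error shrinks as the batch size $M$ grows, but encoding $M$ copies costs dimension exponential in $M$; the embezzling restoration error shrinks as the reservoir dimension grows, but only like $1/\log(\dim)$. Optimizing the two against the constraint that the logarithms of their dimensions sum to $\log d_z$ is precisely what selects the intermediate scale $\sqrt{\log d_z}$ and produces the overall error $\exp(-\Omega(\sqrt{\log d_z}))$. Carrying this out rigorously --- in particular, controlling the \emph{joint} final state's trace distance to $y\otimes z$, not merely its two marginals, so that the residual system--catalyst correlations are also small --- is the delicate step and constitutes the construction of~\cite{1ShotAtherm2}; the monotone identities for $I$ together with Lemmas~\ref{LemAEP} and~\ref{LemAsymptCost} supply all the quantitative inputs one needs.
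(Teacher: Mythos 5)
First, a point of context: the paper does not actually prove this theorem. It is imported as Theorem 14 of~\cite{1ShotAtherm2}, and the surrounding text only summarizes the result and motivates it via the Fannes--Audenaert inequality, so there is no in-paper proof to compare against; your proposal has to stand on its own.

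Your necessity direction does stand on its own. Monotonicity of $I$ under noisy operations, additivity $I(x\otimes z)=I(x)+I(z)$, and Fannes--Audenaert applied to the pair $N(x\otimes z)$, $y\otimes z$ in dimension $d\,d_z$ give $I(x)\ge I(y)-\eta\log(d\,d_z-1)-h(\eta)$ with $\eta=\varepsilon/\log d_z$, and bounding $\log(d\,d_z-1)\le\log d+\log d_z$ yields a bound that is in fact slightly stronger than the one stated. This is the same computation the paper sketches in the paragraph preceding the theorem, just run on the composite system rather than on the catalyst alone; the factor-of-$2$ bookkeeping is indeed harmless.

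The converse is where the genuine gap lies. You have correctly identified the architecture --- block the conversion into $M$-copy chunks, make it catalytic with an embezzling-type reservoir, and balance a conversion error decaying in $M$ against a restoration error decaying in the reservoir size, under the budget $\log d_z$ --- and the $\sqrt{\log d_z}$ balancing heuristic is the right one. But two essential steps are asserted rather than proved. First, you claim the per-block conversion error "decays exponentially in $M$ via Lemma~\ref{LemAEP}"; that lemma only states $\lim_{n\to\infty}\frac 1 n H_0^\varepsilon(x^{\otimes n})=H(x)$ for \emph{fixed} $\varepsilon$, which suffices for rate results like Lemma~\ref{LemAsymptoticRate} but provides no quantitative, let alone exponential, decay of the error with $M$. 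A large-deviation (Chernoff-type) strengthening of typicality is needed to get $e^{-cM}$, and without it the exponent $\sqrt{\log d_z}$ cannot be extracted from the optimization you describe. Second, the catalyst is never constructed: the "near self-similar reservoir that is approximately restored by spending the surplus $\approx M\Delta$" is precisely the object whose existence the theorem asserts, and controlling the trace distance of the \emph{joint} output to $y\otimes z$ --- rather than of the system and catalyst marginals separately --- is, as you yourself flag, the delicate step. As written, the converse direction is a plausible plan that defers its substance to~\cite{1ShotAtherm2}, not a proof.
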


Roughly speaking, the theorem above states that $x$ can be converted to $y$ under approximate catalytic noisy operations if and only if the Shannon nonuniformity of $x$ is bigger than that of $y$.
On the other hand, we know from Lemma~\ref{lemma:trumping} that  $x$ can be converted to $y$ under \emph{exact} catalytic noisy operations if and only if for all $p \geq 0$ the R\'enyi $p$-nonuniformity of $x$ is bigger than that of $y$.
The reason there is no conflict between these two results is that all the R\'enyi $p$-nonuniformities with $p\neq 1$ are \emph{not} asymptotically continuous.

\section{Conclusions}\label{Sec:Conclusions}

We have focussed in this article on problems of state conversion under noisy operations.  In the single-copy regime, we have studied exact and approximate state conversion, catalytic and noncatalytic. 
In each case, we have discussed the necessary and sufficient conditions for the conversion to be possible.  
These results have interesting consequences for the status of the second law of thermodynamics.

The standard formulation of the second law is that entropy does not decrease.  This is clearly inadequate as a criterion for the possibility of exact state conversion, with or without a catalyst, because evaluating the value of a single monotone can only generate a \emph{total order} over states while noisy operations, catalytic or noncatalytic, induce a quasi-order.

Fortunately, one can rehabilitate the second law as a criterion for exact state conversion.  To decide on the possibility of noncatalytic state conversion, one can compare the Lorenz curves of the states.  If one wants to make the decision by finite means, one can use either one of the two state conversion witnesses described in Section~\ref{sec:secondlaw}.  To decide on the possibility of state conversion in the presence of a catalyst, one can compare the set of order-$p$ R\'{e}nyi nonuniformities for the states.

The question of whether the second law is adequate as a criterion for  \emph{approximate} state conversion is more subtle. Theorem~\ref{ThmApproxCatalysis} shows that under a particular notion of approximate catalysis, wherein the final state of the system and the catalyst must be $\varepsilon/\log d_z$-close to the target (where $d_z$ is the dimension of the catalyst), a state $x$ can be converted to $y$ (of equal dimension) if and only if the entropy of $y$ is greater than that of $x$.  Leaving aside the case of unequal dimensions, this appears to be a vindication of the standard formulation of the second law.  This notion of approximate catalytic state conversion, however, (like the one wherein the final state is required to be $\varepsilon$-close to the target state for some fixed $\varepsilon$) 
resembles the phenomenon of embezzlement insofar as a significant amount of nonuniformity is drawn from the catalyst.

A proper notion of approximate catalytic conversion ought to have the feature that the nonuniformity required to form $y$ comes from $x$ and not from the catalyst.  The catalyst might still degrade with use, but one could require that the number of times one can reuse the catalyst is independent of the nature of the state conversions that it has facilitated.  One can, for instance, define a notion of approximate catalytic conversion as follows: the final state must be such that by consuming some additional small amount of nonuniformity, one can convert it to the desired target state of the system together with the catalyst.  Essentially, one is then using the state conversion witness  $\Lambda$ from Definition~\ref{def:stateconversionwitness} to define a metric over the states.  A similar definition can be made with the catalytic state conversion witness $\Lambda_{\rm cat}$ from Definition~\ref{DefLambda}.  This sort of notion was also proposed in~\cite{1ShotAtherm2}.  For any given finite amount of nonuniformity $\varepsilon$ that one uses to define the degree of approximation, one can always find a pair of incomparable states $x$ and $y$ such that $|\Lambda(x\|y)|\geq \varepsilon$ and $|\Lambda(y\|x)|\geq \varepsilon$, and therefore under this notion of approximate catalytic conversion, the order over states remains a quasi-order.  Necessarily then, the criterion for state conversion cannot be expressed in terms of the value of a single measure of nonuniformity, and the standard formulation of the second law is again inadequate.

Another set of results that we have described concerns the cost and yield of nonuniformity for state preparations and state conversions, both single-shot and asymptotic, as well as catalytic and noncatalytic.
We have introduced a 1-parameter family of states, the sharp states, and shown that these can provide a ``gold standard'' of nonuniformity, in terms of which we can measure the costs and yields.
Sharp states can be used to simulate erasure operations, so the nonuniformity yield of some process determines the amount of erasure, hence ``informational work'', that can be extracted from a state or a state conversion process.

Lemma~\ref{LemAsymptoticRate} implies that the rate at which pure bits must be consumed to generate $x$ is equal to the rate at which pure bits can be distilled from $x$, in the asymptotic limit, namely $I(x)$ (because the nonunifority of a pure bit is 1).  Lemmas~\ref{lemma:distillation} and \ref{lemma:formation}, on the other hand, show that in the single-shot case there is a gap:  the nonuniformity of sharp state that is needed to form a single copy of $x$ is $I_{\infty}(x)$, while one can only distill a sharp state of nonuniformity $I_0(x)$, and in general, $I_0(x) < I_{\infty}(x)$. Furthermore, Corollaries~\ref{corollary:formationcatalysis} and \ref{corollary:distilationcatalysis} demonstrate that the amount of nonuniformity one requires to form a state and the amount one can distill from a state do not change in the presence of a catalyst.

Turning to the nonuniformity costs and yields of state conversion,  Lemma~\ref{LemAsymptCost} shows that in the asymptotic limit,
 the cost per copy of taking $x$ to $y$ and the yield per copy of taking $y$ back to $x$ are both equal to $I(y)-I(x)$, hence the process is reversible.  In the single-shot case, however, there is a gap between cost and yield.  Propositions~\ref{lemma:distillationstateconversion} and \ref{lemma:formationstateconversion} imply that the nonuniformity cost of the conversion of $x$ to $y$ is $\Lambda(x\|y)$, while the nonuniformity yield of the conversion of $y$ back to $x$ is $-\Lambda(y\|x)$ (which is positive whenever $\Lambda(x\|y)$ is negative).  In general $|\Lambda(x\|y)| \ne  |\Lambda(y\|x)|$, so the conversion is not reversible.  
 
To summarize, what emerges from this analysis is that attempting state preparations or state conversions one copy at a time is inefficient.  It is only by processing asymptotically many copies at once that one can achieve perfect efficiency.

We now consider some open questions that remain in the resource theory of nonuniformity.

Although we have talked about simulating erasure operations using sharp states, there is a more general question concerning the precise resource requirements for simulating \emph{any} given non-noisy operation, or relatedly, what non-noisy operations can be simulated from a given nonuniform state.  An example of such a problem is to find the set of one-qubit quantum channels that can be implemented using an ancilla that has at most one bit of nonuniformity.   This problem has been considered by several authors~\cite{terhal1999simulating,zalka2002quantum} and a complete characterization of the class of accessible one-qubit channels is now known~\cite{Narang07}.  It remains to solve the problem beyond the case of one-qubit systems and one-qubit ancillas.

Another set of open questions concerns the situation wherein the restriction to noisy operations is combined with a locality restriction, such that an agent only has access to part of a composite system.  In this case, quantum state conversion problems \emph{do not} reduce to classical state conversion problems: entanglement between the accessible and inaccesible parts makes the problem inherently quantum.
There is some work already in this direction~\cite{QuantLandauer}.

For almost every question about nonuniformity discussed in this article, there is a simple analogue within the resource theory of athermality.  For one, it is the case that for quantum states that are block-diagonal across energy eigenspaces, every state conversion problem reduces to the corresponding problem for classical statistical states.  Furthermore, state conversion problems under classical thermal operations are simple generalizations of those same problems under noisy classical operations.  The equivalence classes of states under classical thermal operations are associated with \emph{Gibbs-rescaled} histograms~\cite{LawsOfThermo}, which plot the ratios $x_i/q_i$ in decreasing order, with $q$ the thermal state for the associated system. They can alternatively be associated with the cumulative integral of these functions, which are the athermal analogues of Lorenz curves. The curve of $x$ being everywhere greater than or equal to that of $y$ is the condition for $x \mapsto y$ under thermal operations, as shown in~\cite{FundLimitsNature}.  Functions over states that are nonincreasing under thermal operations, i.e.\ \emph{athermality monotones}, can be defined from geometric features of these curves or from Schur-convex functions relative to the thermal state.  As a consequence, for every cousin of the Shannon entropy in the zoo of entropies, there is an analogous cousin of the thermodynamic free energy.

The fact that the reduction of quantum state conversion problems to classical ones holds only for states that are block-diagonal in the energy eigenspaces is perhaps the most significant manner in which the theory of athermality must go beyond the theory of nonuniformity.  
Indeed, it remains an open problem to identify the thermal quasi-order over \emph{all} quantum states.  Note, however, that because a state that has coherence between energy eigenspaces is asymmetric relative to time translations, these questions are informed by results in the resource theory of asymmetry~\cite{Asymy1,Asymy2,Asymy3}.
 Indeed, if one has access to a reference frame for time (a clock), then by applying energy-conserving unitaries to the composite of system and clock, it is possible to simulate unitaries on the system that can rotate between the energy eigenbasis and other bases.  In this sense, a clock can act as a kind of catalyst for these state conversions~\cite{AthermalityTheory}.

We end with some general comments on the use of the framework of resource theories in the study of thermodynamics.

The standard tradition of enquiry in physics is \emph{dynamicist}: The physicist's job is to describe the natural dynamical behaviour of a system, without reference to human agents or their purposes.  There is, however, a complementary \emph{agent-centric} approach which focuses instead on characterizing limitations on an agent's control
of the behaviour of a system, or questions about what sorts of inferences an agent can make about the system~\cite{wallace2013inferential}. Thermodynamics partakes in both traditions.  One can see this easily by comparing various statements of the second law.  One that is clearly in the dynamicist tradition is Clausius's original statement: ``Heat can never pass from a colder to a warmer body without some other change, connected therewith, occurring at the same time'' \cite{Clausius}. On the other hand, the version of the Kelvin-Planck statement that is found in most textbooks is clearly agent-centric: ``it is impossible to devise a cyclically operating device, the sole effect of which is to absorb energy in the form of heat from a single thermal reservoir and to deliver an equivalent amount of work'' \cite{Rao}.

The resource-theoretic approach to thermodynamics that has been explored in this article is clearly agent-centric. The question of whether some state conversion which can be achieved by the free operations also arises under \emph{natural dynamics} (that is, in the absence of agents bringing them about) is not one that a resource theory seeks to answer. Nonetheless, the resource-theoretic approach might be adaptable to such questions. Specifically, if one wishes to characterize a certain subset of the free operations as more difficult to realize than others (on the grounds that they \emph{do not} arise naturally), then one ought to identify the features that distinguish these operations and redefine the set of free operations to exclude them.

\section{Acknowledgements}
The authors would like to thank Fernando Brand\~ao, Oscar Dahlsten, Nilanjana Datta, L\'idia del Rio, Jonathan Oppenheim and Joe Renes for discussions.   Research at Perimeter Institute is supported in part by the Government of Canada through NSERC and by the Province of Ontario through MRI.  GG and VN are supported by the Government of Canada through NSERC.
This work has been supported by the COST network.

\appendix
 
\section{Proofs}\label{app:proofs}
In the main text, the proofs of some lemmas have been deferred to the appendix. We provide those proofs here.

\vspace{2ex}\noindent\textbf{Lemma \ref{lemma:ClassifyingNO}} \textit{Noisy quantum operations are a strict subset of the unital operations and, in the case of equal dimension of input and output space, a strict superset of the mixtures of unitaries.}

\begin{proof}  We first demonstrate the inclusions, then show that they are strict. Noisy operations are necessarily unital because if the input state is completely mixed, then after adjoining an ancilla in a completely mixed state, implementing a unitary, and taking a partial trace, one is necessarily left with a completely mixed state in the output.  Mixtures of unitaries are necessarily noisy operations because one can implement them as follows. Suppose the ensemble of unitaries is $\{ p_i , U_i\}$.  One prepares an ancilla of arbitrarily large dimension in the completely mixed state, partitions its Hilbert space into subspaces, the relative dimensions of which are described by the distribution $\{ p_i \}$. Next, one implements a controlled unitary with the ancilla as the control and the system as the target, where, if the ancilla is found in the subspace $i$, the unitary $U_i$ is implemented on the system.  Such a controlled unitary is, of course, itself a unitary on the composite of system and ancilla. Finally, one discards the ancilla. 

Showing each inclusion's strictness is much harder.  An example of a unital operation that is not a noisy operation has been provided in~ \cite{HaagerupMusat}.  Meanwhile, the fact that not every noisy operation is a mixture of unitaries has been shown by Shor in~\cite{Shortalk}, making use of a result in~\cite{MendlWolf}.  We refer the reader to \cite{Shortalk} for details.
\end{proof}

\vspace{2ex}\noindent\textbf{Lemma \ref{lemma:NCO}} \textit{The set of noisy classical operations coincides with the set of uniform-preserving stochastic matrices and, in the case of equal dimension of input and output spaces (where the uniform-preserving stochastic matrices are the doubly-stochastic matrices), it coincides with the set of mixtures of permutations.}

\begin{proof} It is straightforward to see that noisy classical operations are stochastic matrices.  The proof that they are uniform-preserving proceeds in precise analogy to the proof that the noisy quantum operations are unital: if the input distribution is uniform, then by adjoining an ancilla in a uniform state and implementing a permutation, one creates a uniform state on the whole space, and every marginal is then also a uniform state.  

The converse direction, that every uniform-preserving stochastic matrix can be realized as a noisy operation, is the nontrivial one. When the input and output vector spaces are of equal dimension, the stochastic uniform-preserving matrices are doubly stochastic, and a famous result due to Birkhoff \cite{Birkhoff} establishes that every doubly stochastic matrix is achievable as a mixture of permutations.
Therefore, it suffices to show that every mixture of permutations can be realized as a noisy classical operation.  To implement permutation $P_i$ with probability $p_i$, prepare an arbitrarily large ancilla in the uniform state, and partition its sample space into subsets of relative size $p_i$. Next, implement a controlled permutation with the ancilla as the control and the system as the target, where, if the ancilla is found in the subset $i$, the permutation $P_i$ is implemented on the system.  Such a controlled permutation is, of course, itself a permutation on the composite of system and ancilla. Finally, discard the ancilla.

Finally, we can show that even if the input and output spaces have different dimensions, every uniform-preserving stochastic matrix can be achieved as a noisy classical operation.

Let $D$ be a uniform-preserving stochastic matrix from a $d_{\textnormal{in}}$-dimensional probability space to a $d_{\textnormal{out}}$-dimensional one.   We now append an ancillary system to the input and one to the output such that the two composites are of equal dimension.  That is, we define ancillary systems of dimensions $d_1$ and $d_2$ such that $d_{\textnormal{in}}d_1 = d_{\textnormal{out}}d_2$. Next, we define $D_0$ to be the $d_2 \times d_1$ matrix all of whose entries are $1/d_2$.  This is clearly just the stochastic matrix that maps every state of dimension $d_1$ to the uniform state of dimension $d_2$ and is consequently uniform-preserving.  It follows that the $(d_{\textnormal{out}}d_2) \times (d_{\textnormal{in}}d_1)$ matrix $D \otimes D_0$ is also a uniform-preserving stochastic matrix.  Given that $d_{\textnormal{out}}d_2 = d_{\textnormal{in}}d_1$, it follows that $D \otimes D_0$ is doubly-stochastic and hence can be implemented by a mixture of permutations. 

It follows that, if $D$ is any uniform-preserving stochastic matrix, then it can be implemented by first adjoining a uniform state of dimension $d_1$ (which is a noisy operation), implementing a mixture of permutations (which, as shown earlier in this proof, is a noisy operation), and finally marginalizing over the ancillary subsystem of dimension $d_2$ (which is also a noisy operation).  Given that every step of the implementation is a noisy operation, the overall operation is, as well.
\end{proof}

\vspace{2ex}\noindent\textbf{Lemma \ref{lemma:quantumNOtoclassicalNO}} \textit{
There exists a noisy quantum operation that achieves the quantum state conversion $\rho \mapsto \sigma$ if and only if there is a noisy classical operation that achieves the classical state conversion $\lambda(\rho) \mapsto \lambda(\sigma)$.
}

\begin{proof}
The possibility of the quantum state interconversion implies the existence of a quantum operation $\mathcal{E}$ such that
\begin{equation}
\sigma = \mathcal{E} (\rho).
\end{equation}
Denoting the $j$th eigenvector of $\sigma$ by $|\phi_j\rangle$ and the $k$th eigenvector of $\rho$ by $|\psi_k\rangle$, we have
\begin{equation}
\sum_k \lambda_k(\sigma) |\psi_k\rangle \langle \psi_k |  
= \mathcal{E} \left( \sum_j \lambda_j(\rho) |\phi_j\rangle \langle \phi_j |\right),
\end{equation}
or equivalently,
\begin{equation}\label{eq:lambda1}
\lambda_k(\sigma)
= \sum_j \langle \psi_k | \mathcal{E} \left( |\phi_j\rangle \langle \phi_j |\right) |\psi_k\rangle  \lambda_j(\rho).
\end{equation}
If we define the matrix $D$ by
\begin{equation}
D_{kj}= \langle \psi_k | \mathcal{E} \left( |\phi_j\rangle \langle \phi_j |\right) |\psi_k\rangle,
\end{equation}
then we can write Eq.~\eqref{eq:lambda1} as
\begin{equation}
\lambda(\sigma) =D \lambda(\rho).
\end{equation}
We seek to show that $\mathcal{E}$ is a noisy quantum operation if and only if $D$ is a noisy classical operation.  The forward implication follows from the fact that
\begin{align*}
\sum_{j=1}^{d_{\textnormal{in}}}D_{kj}  &  =\left\langle \psi_{k}\right\vert \mathcal{E}\left(
I_{\textnormal{in}} \right)  \left\vert \psi_{k}\right\rangle =\left\langle \psi_{k}\right\vert
\frac{d_{\textnormal{in}}}{d_{\textnormal{out}}}  I_{\textnormal{out}} \left\vert \psi_{k}\right\rangle =\frac{d_{\textnormal{in}}}{d_{\textnormal{out}}}\\
\sum_{k=1}^{d_{\textnormal{out}}}D_{kj}  &  =\sum_{k}\mathrm{Tr}\left[  \left\vert \psi_{k}
\right\rangle \left\langle \psi_{k}\right\vert \mathcal{E}\left(  \left\vert
\phi_{j}\right\rangle \left\langle \phi_{j}\right\vert \right)  \right] \\
 & = {\textnormal{Tr}}\left(\mathcal{E}(|\phi_j\rangle\langle\phi_j|)\right) = {\textnormal{Tr}}|\phi_j\rangle\langle\phi_j|=1,
\end{align*}
where we have used the fact that $\mathcal{E}$ is unital and the fact that
$\mathcal{E}$ is trace-preserving.
This implies that $D$ is a uniform-preserving stochastic matrix, which, by Lemma~\ref{lemma:NCO}, implies that it is a noisy classical operation.

The reverse implication follows from the fact that we can define $\mathcal{E}$ by
\begin{equation}
 \mathcal{E}(\cdot)= \sum_{k,j} D_{kj} |\psi_k \rangle \langle \phi_j | (\cdot) |\phi_j \rangle \langle \psi_k |.
\end{equation}
By assumption, $D$ is a noisy classical operation, so it can be implemented by adjoining an ancilla in the uniform state, performing a permutation on the composite, and summing over a subsystem.  Let $i$ be an index for the sample space of the ancilla, and let $l$ be an index for the sample space of the subsystem that is summed over.  Let $[R]_{kl,ji}$ denote the matrix elements of the permutation.  It follows from Eq.~\eqref{eq:NCO} that $D_{kj}= (1/d_a)\sum_{l,i} [R]_{kl,ji}$,
where we have used the fact that $[m_a]_{i}=1/d_a$, with $d_a$ the ancilla's dimension. Letting $\{ |\mu_{i}\rangle \}$ denote an orthogonal set of vectors for the ancilla $a$, and $\{ |\nu_{l}\rangle \}$ for the subsystem $\bar{a}$ that is summed over, we have
\begin{align}
&\mathcal{E}(\rho)\\
&=  \tfrac{1}{d_{a}} \sum_{k,j}  \sum_{l,i}  [R]_{kl,ji} |\psi_k \rangle  \langle \phi_j | (\rho) |\phi_j \rangle   \langle \psi_k |\\
 &= \textnormal{Tr}_{\bar{a}} \left[ \sum_{klji}  [R]_{kl,ji} |\psi_k \rangle |\nu_{l}\rangle \langle \mu_{i} | \langle \phi_j | (\rho \otimes \tfrac{I_{a}}{d_{a}} ) |\phi_j \rangle   | \mu_{i} \rangle \langle \nu_{l}|  \langle \psi_k | \right]\\
 &= \textnormal{Tr}_{\bar{a}} \left[ V (\rho \otimes \tfrac{1}{d_{a}} I_{a}) V^{\dag} \right],
\end{align}
where we have implicitly defined a unitary $V$.  In this form, it is clear that $\mathcal{E}$ is a noisy quantum operation.
\end{proof}

\vspace{2ex}\noindent\textbf{Lemma \ref{lemma:NOconviffUntlconv}} \textit{Let $\rho\in\mathcal{L}(\mathcal{H}_{\textnormal{in}})$ and $\sigma\in\mathcal{L}(\mathcal{H}_{\textnormal{out}})$. Then, the following propositions are equivalent:
\begin{itemize}
\item{(i)}  $\rho \mapsto \sigma$ by a noisy operation
\item{(ii)} $\rho \mapsto \sigma$ by a unital operation.
 \end{itemize}
If $\rho$ and $\sigma$ are of equal dimension, then (i) and (ii) are also equivalent to
\begin{itemize}
\item{(iii)} $\rho \mapsto \sigma$ by a mixture of unitaries.
 \end{itemize}}

\begin{proof}
Clearly, (i) implies (ii) because a noisy operation is unital, and (iii) implies (i) because a mixture of unitaries is a noisy operation.  The nontrivial implications are (ii) to (i) (for unequal dimension) and (ii) to (iii) (for equal dimensions).  The proof that (ii) implies (i) can be inferred from the proof of Lemma~\ref{lemma:quantumNOtoclassicalNO}, as follows.  In the forward direction of that proof, we inferred from the existence of a noisy quantum operation $\mathcal{E}$ taking $\rho$ to $\sigma$ that there was a noisy classical operation taking the spectrum of $\rho$ to the spectrum of $\sigma$.  But that inference only relied on the fact that $\mathcal{E}$ was unital.  
Then, in the reverse direction of that proof, we showed that from a noisy classical operation taking the spectrum of $\rho$ to the spectrum of $\sigma$, one can construct a noisy quantum operation taking $\rho$ to $\sigma$.  Therefore, (ii) implies (i).  In the case of equal dimension, we simply note that the noisy classical operation can be written as a mixture of permutations (by Birkhoff's Theorem \cite{Birkhoff}), and hence the noisy quantum operation that one constructs from it is a mixture of unitaries.  Therefore, (ii) implies (iii).
\end{proof}

\vspace{2ex}\noindent\textbf{Lemma \ref{LemMaxQC}} \textit{Let $\mathcal{D}$ be any contractive metric on the quantum states for which there exists a norm $\|\cdot\|$ on the
self-adjoint matrices such that $\mathcal{D}(\rho,\sigma)=\|\rho-\sigma\|$ (for example the trace distance,
$\mathcal{D}=\mathcal{D}_\tr$ with $\|\cdot\|=\frac 1 2 \|\cdot\|_1$).
Then the quantum version of the smooth max-entropy,
\[
   S_0^\varepsilon(\rho):=\min_{\bar\rho:\,\,\mathcal{D}(\rho,\bar\rho)\leq\varepsilon} S_0(\bar\rho),
\]
with $S_0(\bar\rho):=\log \rank(\bar\rho)$, agrees with the classical smooth max-entropy on the spectrum of $\rho$:
\[
   S_0^\varepsilon(\rho)=H_0^\varepsilon(\lambda(\rho)).
\]}

\begin{proof}
In the following, we will choose a basis such that $\rho$ is diagonal; in particular, $\rho={\rm diag}(r_1,r_2,\ldots,r_n)$ with $r_1\geq r_2\geq \ldots\geq r_n$.
Let $s$ be any distribution such that $\mathcal{D}(\lambda(\rho),s)\leq\varepsilon$ and $H_0^\varepsilon(\lambda(\rho))=H_0(s)$.
Let $\sigma:={\rm diag}(s_1,s_2,\ldots,s_n)$, then $\mathcal{D}(\rho,\sigma)=\mathcal{D}(\lambda(\rho),s)\leq\varepsilon$. Thus
\[
   S_0^\varepsilon(\rho)\leq S_0(\sigma)=H_0(s)=H_0^\varepsilon(\lambda(\rho)).
\]
To prove the converse inequality, let $\sigma$ be any quantum state with $\mathcal{D}(\rho,\sigma)\leq\varepsilon$ and
$S_0^\varepsilon(\rho)=S_0(\sigma)$. Denote by $s=(s_1,s_2,\ldots,s_n)$ the eigenvalues of $\sigma$ in
non-increasing order, i.e.\ $s_1\geq s_2\geq\ldots\geq s_n$. Contractivity of the metric implies unitary invariance of the norm.
Thus, it follows from~\cite[Ineq. (IV.62)]{Bhatia} that $\mathcal{D}(\lambda(\rho),s)\leq\mathcal{D}(\rho,\sigma)\leq\varepsilon$. Hence
\[
   H_0^\varepsilon(\lambda(\rho))\leq H_0(s)=S_0(\sigma)=S_0^\varepsilon(\rho).
\]
This proves the claim.
\end{proof}

Unfortunately, the purified distance does not satisfy the premise of Lemma~\ref{LemMaxQC}: it does not come from a norm.

\vspace{2ex}\noindent\textbf{Lemma \ref{LemH0eps}} \textit{If the trace distance $\mathcal{D}_\tr$ is chosen as reference metric, then any distribution $x$ has smooth R\'enyi $0$-entropy
\[
   H_0^\varepsilon(x)=\log k,
\]
where $k\geq 1$ is chosen as the smallest integer such that $\sum_{i=1}^k x_i^\downarrow\geq 1-\varepsilon$, with $x_1^\downarrow\geq x_2^\downarrow\geq\ldots$
denoting the components of $x$ in non-increasing order.}

\begin{proof}
Since $H_0^\varepsilon(x)$ is invariant with respect to permutations of the components of $x$, we may assume that the components of $x$
are ordered, such that $x_i=x_i^\downarrow$. For a given state $x=(x_1,\ldots,x_n)$, define $k$ as described above, and set $\mathcal{N}:=\sum_{i=1}^k x_i\geq 1-\varepsilon$.
Set $\tilde x:=(x_1,\ldots,x_k,0,\ldots,0)\in\mathbb{R}^n$, then $x':=\tilde x/\mathcal{N}$ is a state. A simple calculation shows that $\mathcal{D}_\tr(x,x')=1-\mathcal{N}\leq\varepsilon$,
thus $H_0^\varepsilon(x)\leq H_0(x')=\log k$.

Conversely, let $y\in\mathbb{R}^n$ be any state with $H_0(y)<\log k$, then consider the support of $y$, $\supp(y):=\{i\in\{1,\ldots,n\}\,\,|\,\, y_i\neq 0\}$.  This has cardinality
$|\supp(y)|<k$, thus $\sum_{i\in \supp(y)} x_i<1-\varepsilon$ by the construction of $k$.  Recalling an alternative definition of the classical trace distance~[\cite{NielsenAndChuang}, eq.~(9.4)] and comparing the values of $x$ and $y$ on the kernel of $y$, $\ker (y)$, we get
\begin{eqnarray*}
   \mathcal{D}_\tr(x,y) &= &  \sup_{\mathcal{S}} \left|\sum_{i \in \mathcal{S}}x_i - \sum_{i \in \mathcal{S}}y_i \right| \\
  &\geq  &  \left|\sum_{i \in \ker (y)}x_i - \sum_{i \in \ker (y)}y_i \right| \\
   &=& 1-\sum_{i\in \supp (y)}x_i >\varepsilon.
\end{eqnarray*}

This shows that $H_0^\varepsilon(x)\geq \log k$, proving the claim.
\end{proof}

In~\cite{Tomamichel}, it is shown that the purifed distance upper-bounds the trace distance:
\begin{equation}
   \mathcal{D}_\tr(x,y)\leq \mathcal{D}_p(x,y).
   \label{eqBoundDistances}
\end{equation}
This is the final ingredient to prove Lemma~\ref{LemAEP} from the main text, the asymptotic equipartition property:

\vspace{2ex}\noindent\textbf{Lemma \ref{LemAEP}} \textit{If the the reference metric is taken to be either the \emph{purified distance} $\mathcal{D}_p$ or the \emph{trace distance} $\mathcal{D}_\tr$, then we have for every $0<\varepsilon<1$
\begin{eqnarray*}
   \lim_{n\to\infty}\frac 1 n H_\infty^\varepsilon(x^{\otimes n}) &=& H(x)\\ 
   \lim_{n\to\infty}\frac 1 n H_0^\varepsilon(x^{\otimes n}) &=& H(x)
\end{eqnarray*}
for all finite probability distributions $x$, where $H(x)=-\sum_i x_i\log x_i$ is the Shannon entropy.}

\begin{proof}
For the sake of this proof, let $H_\infty^\varepsilon$ and $H_0^\varepsilon$ be the smooth entropies defined with
respect to the trace distance $\mathcal{D}_\tr$, and denote by $\bar H_\infty^\varepsilon$ and $\bar H_0^\varepsilon$ the smooth entropies defined with
respect to the purified distance $\mathcal{D}_p$.
Denoting the $\varepsilon$-ball around some state $x$ according to a metric $\mathcal{D}$ by
\[
   B^\varepsilon_{\mathcal{D}}(x):=\{y\,\,|\,\,\mathcal{D}(x,y)\leq\varepsilon\},
\]
inequality~(\ref{eqBoundDistances}) implies that
\begin{equation}
   B^\varepsilon_{\mathcal{D}_p}(x)\subseteq B^\varepsilon_{\mathcal{D}_\tr}(x).
   \label{eqUnitBalls}
\end{equation}
We show our lemma directly via the asymptotic equipartition property as given, for example, in Theorem\ 3.1.2 of \cite{CoverThomas}. Given any distribution $x$, define the \emph{$\varepsilon$-typical set $A_\varepsilon^{(n)}$}
of all sequences $s=(s_1,\ldots,s_n)$ of length $n$ as
\[
   A_\varepsilon^{(n)}:=\{s\,\,|\,\, 2^{-n(H(x)+\varepsilon)}\leq x^{\otimes n}(s)\leq 2^{-n(H(x)-\varepsilon)}\}.
\]
Then we can find a sequence $(\varepsilon_n)_{n\in\mathbb{N}}$ with $\varepsilon_n\stackrel{n\to\infty}\longrightarrow 0$
such that the sequence of sets $A^{(n)}:=A^{(n)}_{\varepsilon_n}$ satisfies
\begin{eqnarray*}
   x^{\otimes n}(A^{(n)})&>& 1-\varepsilon_n,\\
   (1-\varepsilon_n)2^{n(H(x)-\varepsilon_n)}&\leq& |A^{(n)}| \leq 2^{n(H(x)+\varepsilon_n)}.
\end{eqnarray*}
Let $\varepsilon\in(0,1)$ be arbitrary, and fix any $\delta>0$. We will now prove the following claim:
if $n$ is large enough, then any distribution $q^{(n)}$ on $n$ symbols with support on some set $Q^{(n)}$,
where $|Q^{(n)}|\leq 2^{n(H(x)-\delta)}$, has $\mathcal{D}_\tr (q^{(n)},x^{\otimes n})>\varepsilon$.
If we have a distribution $q^{(n)}$ with these properties, then
\[
   q^{(n)}(A^{(n)}\setminus Q^{(n)})=0,
\]
but
\begin{eqnarray*}
   x^{\otimes n}(A^{(n)}\setminus Q^{(n)})&=& x^{\otimes n}(A^{(n)})-x^{\otimes n}(A^{(n)}\cap Q^{(n)})\\
   &>& 1-\varepsilon_n - 2^{-n(H(x)-\varepsilon_n)}\cdot |Q^{(n)}|\\
   &\geq& 1-\varepsilon_n - 2^{-n(\delta-\varepsilon_n)}\\
   &\stackrel{n\to\infty}\longrightarrow& 1.
\end{eqnarray*}
Since the trace distance gives the maximal possible probability with which two distributions can be distinguished,
this implies that $\mathcal{D}_\tr (q^{(n)},x^{\otimes n})>\varepsilon$ for all distributions $q^{(n)}$ of this kind if $n$ is large enough.
Thus, denoting by $\supp\, q^{(n)}$ the set of $s$ with $q^{(n)}(s)>0$, we obtain
\[
   \mathcal{D}_\tr(x^{\otimes n},q^{(n)})\leq \varepsilon \Rightarrow |\supp\, q^{(n)}|>2^{n(H(x)-\delta)}.
\]
if $n$ is large enough, which also implies that $H_0(q^{(n)})>n(H(x)-\delta)$, and so
\[
   H_0^\varepsilon(x^{\otimes n})>n(H(x)-\delta).
\]
Consequently,
\[
   \liminf_{n\to\infty} \frac 1 n H_0^\varepsilon(x^{\otimes n})\geq H(x)-\delta.
\]
Since this is true for every $\delta>0$, we obtain
\begin{equation}
   \liminf_{n\to\infty} \frac 1 n H_0^\varepsilon(x^{\otimes n})\geq H(x).
   \label{eqLimInfHMax}
\end{equation}
Furthermore, the inequality
\begin{eqnarray*}
   \bar H_0^\varepsilon(y)&=&\min_{\bar y\in B^\varepsilon_{\mathcal{D}_p}(y)} H_0(\bar y)\\
   &\geq& \min_{\bar y \in B^\varepsilon_{\mathcal{D}_\tr}(y)} H_0(\bar y)=H_0^\varepsilon(y)
\end{eqnarray*}
implies that
\begin{equation}
   \liminf_{n\to\infty} \frac 1 n \bar H_0^\varepsilon(x^{\otimes n})\geq H(x).
   \label{eqLimInfHbarMax}
\end{equation}
Analogously, suppose that $r^{(n)}$ is a distribution that satisfies one of the two
equivalent conditions
\[
   \max_s r^{(n)}(s)\leq 2^{-n(H(x)+\delta)}\Leftrightarrow H_\infty(r^{(n)})\geq n(H(x)+\delta),
\]
where the maximum is over all sequences $s$ of length $n$. Then it follows
\begin{eqnarray*}
   r^{(n)}(A^{(n)})&\leq& 2^{-n(H(x)+\delta)}\cdot |A^{(n)}|\leq 2^{-n(\delta-\varepsilon_n)}
\end{eqnarray*}
which tends to zero for $n\to\infty$, while $x^{\otimes n}(A^{(n)})$ tends to one. This shows that
$\mathcal{D}_\tr(x^{\otimes n},r^{(n)})>\varepsilon$ for all distributions $r^{(n)}$ of this kind if $n$ is large enough. Thus,
\[
   \mathcal{D}_\tr(x^{\otimes n},r^{(n)})\leq\varepsilon\Rightarrow H_\infty(r^{(n)})<n(H(x)+\delta)
\]
if $n$ is large enough, and so
\[
   H_\infty^\varepsilon(x^{\otimes n})<n(H(x)+\delta).
\]
Consequently,
\[
   \limsup_{n\to\infty}\frac 1 n H_\infty^\varepsilon(x^{\otimes n})\leq H(x)+\delta.
\]
Since this is true for every $\delta>0$, we obtain
\begin{equation}
   \limsup_{n\to\infty}\frac 1 n H_\infty^\varepsilon(x^{\otimes n})\leq H(x).
   \label{eqLimSupHMin}
\end{equation}
Furthermore, the inequality
\begin{eqnarray*}
   \bar H_\infty^\varepsilon(y)&=&\max_{\bar y \in B^\varepsilon_{\mathcal{D}_p}} H_\infty(\bar y)\\
   &\leq& \max_{\bar y \in B^\varepsilon_{\mathcal{D}_\tr}} H_\infty(\bar y)=H_\infty^\varepsilon(y)
\end{eqnarray*}
implies that
\begin{equation}
   \limsup_{n\to\infty} \frac 1 n \bar H_\infty^\varepsilon(x^{\otimes n})\leq H(x).
   \label{eqLimSupHbarMin}
\end{equation}
Now define a distribution $q^{(n)}$ by
\[
   q^{(n)}(s):=\left\{
      \begin{array}{cl}
         x^{\otimes n}(s)/x^{\otimes n}(A^{(n)}) & \mbox{if }s\in A^{(n)} \\
         0 & \mbox{otherwise}.
      \end{array}
   \right.
\]
A simple calculation shows that
\[
   \mathcal{D}_\tr(q^{(n)},x^{\otimes n})=\frac 1 2 \left(\strut 1-x^{\otimes n}(A^{(n)})\right)\stackrel{n\to\infty}\longrightarrow 0,
\]
thus
\[
   H_0^\varepsilon(x^{\otimes n})\leq H_0(q^{(n)})=\log |A^{(n)}|\leq n(H(x)+\varepsilon_n)
\]
if $n$ is large enough. Thus
\begin{equation}
   \limsup_{n\to\infty} \frac 1 n H_0^\varepsilon(x^{\otimes n})\leq H(x).
   \label{eqLimSupHMax}
\end{equation}
Similarly,
\begin{eqnarray*}
   H_\infty^\varepsilon(x^{\otimes n})&\geq& H_\infty(q^{(n)})\\
   &=& \log x^{\otimes n}(A^{(n)})-\log \max_{s\in A^{(n)}} x^{\otimes n}(s)\\
   &>& \log(1-\varepsilon_n)-\log 2^{-n(H(x)-\varepsilon_n)}.
\end{eqnarray*}
This shows that
\begin{equation}
   \liminf_{n\to\infty} \frac 1 n H_\infty^\varepsilon(x^{\otimes n})\geq H(x).
   \label{eqLimInfHMin}
\end{equation}
Furthermore, an elementary calculation shows that
\[
   \mathcal{D}_p(q^{(n)},x^{\otimes n})=\sqrt{1-x^{\otimes n}(A^{(n)})}\stackrel{n\to\infty}\longrightarrow 0,
\]
and repeating the calculations above for the purified distance yields the inequalities
\begin{eqnarray}
   \label{eqLimSupHbarMax}
   \limsup_{n\to\infty}\frac 1 n \bar H_0^\varepsilon(x^{\otimes n})&\leq& H(x),\\
   \liminf_{n\to\infty}\frac 1 n \bar H_\infty^\varepsilon(x^{\otimes n})&\geq& H(x).\label{eqLimInfHbarMin}
\end{eqnarray}
The inequalities~(\ref{eqLimInfHMax}), (\ref{eqLimInfHbarMax}), (\ref{eqLimSupHMin}), (\ref{eqLimSupHbarMin}), 
(\ref{eqLimSupHMax}), (\ref{eqLimInfHMin}), (\ref{eqLimSupHbarMax}), and~(\ref{eqLimInfHbarMin}) prove the lemma.
\end{proof}

\vspace{2ex}\noindent\textbf{Lemma \ref{LemAsymptoticRateVanishing}} \textit{There is a protocol for transforming $n$ copies of $x$ into $m_n$ copies of $y$ with asymptotically vanishing error at rate
\[
   r:=\lim_{n\to\infty}\frac {m_n} n =\frac{I(x)}{I(y)};
\]
however, no higher rate is achievable by any protocol of this kind.}

\begin{proof}
Since every protocol with asymptotically vanishing error is a special case of a protocol with constant error, Lemma~\ref{LemAsymptoticRate} implies the second part
of the statement, i.e.\ that no higher rate than $r:=I(x)/I(y)$ is achievable.

Now we show how to construct a protocol with asymptotically vanishing error from a protocol with constant error. For every constant error $\varepsilon>0$,
denote the corresponding integer $m_n$ from Lemma~\ref{LemAsymptoticRate} by $m_n(\varepsilon)$. For every fixed $\varepsilon>0$, we have
\[
   \lim_{n\to\infty}\frac{m_n(\varepsilon)}n = r.
\]
Thus, for every integer $k\in\mathbb{N}$, there exists an integer $\tilde n_k$ such that
\begin{equation}
   \left| \frac{m_n(1/k)} n - r \right| < \frac 1 k
   \label{eqClose}
\end{equation}
for all $n\geq \tilde n_k$.
Let $n_k:=k+\max_{1\leq\ell\leq k} \tilde n_\ell$, then $n_k$ is strictly increasing in $k$, and~(\ref{eqClose}) holds in particular for all $n\geq n_k$.

The protocol now is as follows. Given any integer $n$, find the $k$ such that $n\in[n_k,n_{k+1}-1]$, which is possible if $n$ is large enough, i.e.\ $n\geq n_{k=1}$.
Then convert $x^{\otimes n}$ into a state $y'$ which is $(1/k)$-close to $y^{\otimes m_n(1/k)}$, which is possible by definition of the sequence
$m_n(\varepsilon)$. That is, we convert $n$ copies of $x$ approximately into $m_n$ copies of $y$, where $m_n=m_n(1/k)$. Since
\[
   \frac {m_n} n = \frac{m_n(1/k)} n \in \left( r-\frac 1 k,r+\frac 1 k\right)
\]
and $k\to\infty$ as $n\to\infty$, we have $\lim_{n\to\infty}m_n/n=r$. Furthermore, the error $\varepsilon_n=1/k$ tends to zero as $n\to\infty$.
\end{proof}

\section{Strong noisy-trumping and $p<0$ R\'enyi nonuniformities}
\label{Sec:appnoisytrumping}

In this section, as announced in Subsection~\ref{Sec:Catalysis}, we study a notion of noisy-trumping which is an alternative to Definition~\ref{DefNoisyTrumping}.
We do this here in the appendix because this alternative definition turns out to have undesirable mathematical properties, rendering its physical interpretation problematic.
However, we still think that the discussion may be interesting within the broader question of how to define a proper notion of trumping, and also because it has
been mentioned in~\cite{1ShotAtherm2}.

In the discussion of Subsection~\ref{Sec:Catalysis}, the first idea for defining a notion of noisy-trumping that removed the discontinuity mentioned there was
to relax the demand of producing the target state perfectly, and instead \emph{allow the target state to be obtained to arbitrary accuracy.}
The following definition formalizes this idea.
In contrast to Definition~\ref{DefNoisyTrumping}, it will require the input state to be prepared perfectly.
\begin{definition}[Strong noisy-trumping]
\label{DefStrongNoisyTrumping}
We say that a state $x$ \emph{strongly noisy-trumps} another state $y$ if and only if there is a sequence of state $(y_n)_{n\in\mathbb{N}}$, all of
the same dimension as $y$, which converges to $y$, i.e.\ $\lim_{n\to\infty} y_n=y$,
and a sequence of catalysts $(z_n)_{n\in\mathbb{N}}$ such that
\[
   x\otimes z_n\cconv y_n\otimes z_n\mbox{ for every } n\in\mathbb{N}.
\]
\end{definition}
Note that the sequence of catalysts does not have to converge; moreover, the catalysts $z_n$ may have different dimensionalities ($d_{z_n}$ may even
grow unboundedly).

This is indeed a stronger notion of trumping: if $x$ strongly noisy-trumps $y$, then $x$ also noisy-trumps $y$ in the sense of Definition~\ref{DefNoisyTrumping}.
This is because if $x$ strongly noisy-trumps $y$, then (according to the definition above) $x$ noisy-trumps every $y_n$, and thus due to Corollary~\ref{CorTrumpingClosedness}
$x$ also noisy-trumps $y$. Below, we will see that noisy-trumping and strong noisy-trumping are strictly different notions of trumping; this will follow from the fact
that the corresponding quasi-orders are characterized by different complete sets of monotones.

We start by giving an analog of Lemma~\ref{lemma:trumping} for strong noisy-trumping. The proof again relies on the results by~\textcite{klimesh}
and~\textcite{Turgut}.

\begin{lemma}[Conditions for strong noisy-trumping]
\label{LemStrongNoisyTrumping}
Let $x$ and $y$ be any pair of states. Then the strong noisy-trumping relation is characterized by two cases:
\begin{itemize}
\item[(i)] \emph{The state $x$ does not contain zeros.} Then $x$ strongly noisy-trumps $y$ if and only if
\[
   I_p(x)\geq I_p(y)\mbox{ for all }p\in\mathbb{R}.
\]
\item[(ii)] \emph{The state $x$ contains zeros.} Then $x$ strongly noisy-trumps $y$ if and only if
\[
   I_p(x)\geq I_p(y)\mbox{ for all }p\geq 0.
\]
\end{itemize}
\end{lemma}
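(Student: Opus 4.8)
The plan is to prove both cases in parallel, separating necessity from sufficiency, with the Klimesh/Turgut characterization (Lemma~\ref{LemKlimesh}) as the workhorse and the approximating sequence $y_n\to y$ as the device that upgrades non-strict inequalities to the strict ones that Lemma~\ref{LemKlimesh} demands. First I would reduce to equal dimensions exactly as in the proof of Lemma~\ref{lemma:trumping}: if $d_x\neq d_y$, replace $x$ by $x\otimes m^{(d_y)}$ and $y$ by $y\otimes m^{(d_x)}$, which alters neither the values $I_p$, nor whether a state contains zeros, nor the strong-noisy-trumping relation, so one may assume $d_x=d_y=:d$; the subcase where $y$ is uniform is trivial (then $x\cconv y$ outright), so assume $y$ nonuniform. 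I would then record the correspondence between the Klimesh functions $f_r$ and the R\'enyi nonuniformities: for states of equal dimension and every $r\neq 0$, $f_r(x)>f_r(y)\Leftrightarrow I_r(x)>I_r(y)$. For $r>0$ this is the equivalence already invoked in Lemma~\ref{lemma:trumping}; for $r<0$ the same check applies, since $I_r=-\log d+\frac{1}{1-r}\log\sum_i x_i^r$ and $f_r=\ln\sum_i x_i^r$ are both increasing in $\sum_i x_i^r$. The single member of the family \emph{without} a counterpart among the $I_p$ is the Burg negentropy $f_0(x)=-\sum_i\ln x_i$; handling $r=0$ is the crux.

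\textbf{Necessity.} Suppose $x$ strongly noisy-trumps $y$. Since strong noisy-trumping implies ordinary noisy-trumping (as noted before the lemma, via Corollary~\ref{CorTrumpingClosedness}), Lemma~\ref{lemma:trumping} gives $I_p(x)\geq I_p(y)$ for all $p\geq 0$; this already settles case (ii). In case (i), where $x$ has no zeros, I must additionally obtain $I_p(x)\geq I_p(y)$ for $p<0$. For each witnessing conversion $x\otimes z_n\cconv y_n\otimes z_n$ I would truncate $z_n$ to a zero-free catalyst by Corollary~\ref{truncatecatalyst}, so that $I_p(z_n)<\infty$; as $I_p$ ($p<0$) is an additive monotone, cancelling $I_p(z_n)$ yields $I_p(x)\geq I_p(y_n)$. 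Letting $n\to\infty$ and using that $I_p$ is lower semicontinuous for $p<0$ (it tends to $+\infty$ at the boundary of the simplex) gives $I_p(y)\leq\liminf_n I_p(y_n)\leq I_p(x)$, as required.

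\textbf{Sufficiency.} In both cases I would exhibit the approximating sequence explicitly by mixing toward the uniform state: set $y_n:=(1-\tfrac1n)y+\tfrac1n m^{(d)}$, which converges to $y$, has full support, and is strictly majorized by $y$ (as $y$ is nonuniform), so that every strictly Schur-convex monotone decreases strictly. I then verify $f_r(x)>f_r(y_n)$ for all $r\in\mathbb{R}$, whereupon Lemma~\ref{LemKlimesh} supplies a catalyst $z_n$ with $x\otimes z_n\cconv y_n\otimes z_n$ (the hypotheses of that lemma hold: $y_n$ has no zeros, and $x^\downarrow\neq y_n^\downarrow$ since in case (ii) $x$ has a zero while $y_n$ does not, and in case (i) equality would give direct conversion). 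This is precisely the data required by Definition~\ref{DefStrongNoisyTrumping}. For $r\neq 0$ the chain $I_r(x)\geq I_r(y)>I_r(y_n)$, with the first inequality from the hypothesis and the second from strict majorization, yields $f_r(x)>f_r(y_n)$. At the Burg point the cases diverge: in case (ii), $x$ contains a zero, so $f_0(x)=+\infty>f_0(y_n)$ automatically (and likewise $f_r(x)=+\infty$ for every $r<0$, so those $r$ need no hypothesis at all); case (ii) is thereby complete.

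\textbf{The Burg point in case (i), the main obstacle.} Here $x$ has no zeros, so $f_0(x)$ is finite and I must extract $f_0(x)\geq f_0(y)$ from the hypothesis at $p<0$. The key identity is that the Burg negentropy is the one-sided derivative of $I_p$ at $p=0$: for a full-support state of dimension $d$, the expansion $\sum_i x_i^p=d+p\sum_i\ln x_i+O(p^2)$ gives $I_p(x)=p\bigl(\log d+\tfrac{1}{d\ln 2}\sum_i\ln x_i\bigr)+O(p^2)$, hence $\lim_{p\to 0^-}I_p(x)/p=\log d+\tfrac{1}{d\ln 2}\sum_i\ln x_i$. Dividing the hypothesis $I_p(x)\geq I_p(y)$ by the negative number $p$ reverses it to $I_p(x)/p\leq I_p(y)/p$, and letting $p\to 0^-$ gives $\sum_i\ln x_i\leq\sum_i\ln y_i$, i.e.\ $f_0(x)\geq f_0(y)$. (The hypothesis already forces $y$ to have full support, since otherwise $I_p(y)=+\infty>I_p(x)$ for $p<0$, so the expansion applies to $y$ as well.) Combined with $f_0(y)>f_0(y_n)$ from strict majorization, this yields $f_0(x)>f_0(y_n)$, completing the verification. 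I expect the bookkeeping around states with zeros---where several $f_r$ and $I_p$ equal $+\infty$---to demand the most care, and this derivative identity for the Burg point to be the one genuinely new ingredient that explains why case (i) requires the extra conditions at $p<0$ while case (ii) does not.
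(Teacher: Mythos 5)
Your proof is correct and follows essentially the same route as the paper's: reduce to equal dimensions, establish necessity via additivity and monotonicity of the $I_p$ (truncating the catalyst to avoid infinite $I_p(z_n)$ for $p<0$) followed by a limit in $n$ and in $p$, and establish sufficiency by mixing $y$ toward the uniform state to get a full-support, strictly majorized sequence $y_n$ to which Lemma~\ref{LemKlimesh} applies. The only cosmetic differences are that you import the $p\geq 0$ necessity from Lemma~\ref{lemma:trumping} rather than rederiving it, and you reach the Burg inequality $f_0(x)\geq f_0(y)$ via the one-sided limit $p\to 0^-$ of $I_p/p$ using the negative-$p$ hypotheses, whereas the paper takes the limit $p\searrow 0$ from the positive side; both are valid.
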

\begin{proof}
Suppose that $x$ strongly noisy-trumps $y$; that is, there is a sequence $(y_n)_{n\in\mathbb{N}}$ with $\lim_{n\to\infty} y_n=y$ and a sequence
of catalysts $(z_n)_{n\in\mathbb{N}}$ such that $x\otimes z_n\cconv y_n\otimes z_n$.
Since every $I_p$ is a nonuniformity monotone, we get
\[
   I_p(x)+I_p(z_n)=I_p(x\otimes z_n)\geq I_p(y_n\otimes z_n)=I_p(y_n)+I_p(z_n).
\]
If $I_p(z_n)$ is finite, then we can cancel it from both sides of the inequality and we are done. The only possibility for $I_p(z_n)$ to \emph{not} be finite is for $p<0$
and for $z_n$ to have components that are zero.  However, by Corollary~\ref{truncatecatalyst}, if $z_n$ catalyzes the conversion of $x$ to $y_n$
then so does $z_n^{\langle I_0(z_n)\rangle}$ and the latter has full support. One can therefore repeat the argument for $z_n^{\langle I_0(z_n)\rangle}$.

Thus $I_p(x)\geq I_p(y_n)$. If $p\neq 0$, then $I_p$ is continuous, so $y_n\to y$ for $n\to\infty$ implies $I_p(y_n)\to I_p(y)$ (including
the possibility that $I_p(y)=\infty$). Thus, we can take the limit of the inequality and conclude
\[
   I_p(x)\geq I_p(y)\mbox{ for all }p\in\mathbb{R}\setminus\{0\}.
\]
Finally, we can take the limit $p\searrow 0$ of the latter inequality, which proves that it also holds for $p=0$.

Now suppose that $x$ does not contain zeros, and that $I_p(x)\geq I_p(y)$ for all $p\in\mathbb{R}$. We may assume that $d_x=d_y$ for the same reasons as were given in the proof of Lemma~\ref{lemma:trumping}.
We may also assume that $y$ is not maximally mixed.
We have to show that $x$ strongly noisy-trumps $y$; we will do so by using
the results of Lemma~\ref{LemKlimesh}. Translating our inequality for $I_p$ into the notation of this lemma, we obtain that $f_p(x)\geq f_p(y)$ for all $p\neq 0$.
If $y$ contained zeros, then we would have $I_p(y)=\infty$ and $I_p(x)<\infty$ for $p<0$ which would contradict $I_p(x)\geq I_p(y)$. Thus $y$ has no zeros either. Thus,
we can use the limit
\[
   \lim_{p\searrow 0} \frac 1 p I_p(z)=-\log d_z - \frac 1 {d_z}\sum_{i=1}^{d_z}\log z_i
   \equiv -\log d_z+\frac 1 {d_z} f_0(z)
\]
to conclude that $f_0(x)\geq f_0(y)$ (again, the same trick as was used in~\cite{1ShotAtherm2}).

For $n\in\mathbb{N}$, set $y_n:=(1-1/n)y+(1/n) m^{(d_y)}$, then $y_n$ does not contain zeros. Since all $f_p$ are strictly Schur-concave, we have
$f_p(y_n)<f_p(y)$, and so
\[
   f_p(x)>f_p(y_n)\mbox{ for all }p\in\mathbb{R}.
\]
Lemma~\ref{LemKlimesh} implies that there exists some catalyst $z_n$ such that $x\otimes z_n \cconv y_n\otimes z_n$.
Since this works for every $n\in\mathbb{N}$ and $\lim_{n\to\infty}y_n=y$, this shows that $x$ strongly noisy-trumps $y$.

Finally, suppose that $x$ contains zeros, and define $y_n$ as above. If $p\leq 0$ then $\infty=f_p(x)>f_p(y_n)$ is automatically satisfied,
and we only need to prove that $f_p(x)>f_p(y_n)$ for $p>0$, which follows from $I_p(x)\geq I_p(y)$ for $p> 0$.
\end{proof}

This result motivates the following definition.
\begin{definition}
\label{DefLambdaMinus}
For an arbitrary pair of states $x$ and $y$, define
\[
   \Lambda_{\rm cat}^{\rm strong}(x\|y) := \inf_{p\in S_x}\left(\strut I_p(x)-I_p(y) \right),
\]
where
\[
   S_x:=\left\{
      \begin{array}{cl}
      \mathbb{R}^+ & \mbox{ if $x$ contains zeros},\\
      \mathbb{R} & \mbox{otherwise}.
      \end{array}
   \right.
\]
\end{definition}

If $x$ contains zeros, then $\Lambda_{\rm cat}^{\rm strong}(x\|y)=\Lambda_{\rm cat}(x\|y)$; in general, we have the inequality
\[
   \Lambda_{\rm cat}^{\rm strong}(x\|y)\leq \Lambda_{\rm cat}(x\|y).
\]
The following proposition is a reformulation of Lemma~\ref{LemStrongNoisyTrumping} in terms of this new quantity:

\begin{proposition}
\label{PropStrongTrumpingLambda}
For an arbitrary pair of states $x$ and $y$, $x$ strongly noisy-trumps $y$ if and only if
\[
  \Lambda_{\textnormal{cat}}^{\rm strong}(x\| y)\ge0.
\]
Equivalently, the function $\Lambda_{\textnormal{cat}}^{\rm strong}(x\| y)$ is a complete witness for strong noisy-trumping.
\end{proposition}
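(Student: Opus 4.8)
The plan is to read the proposition directly off Lemma~\ref{LemStrongNoisyTrumping}, since $\Lambda_{\rm cat}^{\rm strong}$ was defined (Definition~\ref{DefLambdaMinus}) precisely so as to package the two cases of that lemma into a single scalar witness. First I would unfold the definition: the inequality $\Lambda_{\rm cat}^{\rm strong}(x\|y)\geq 0$ is, by definition, the statement
\[
   \inf_{p\in S_x}\left(\strut I_p(x)-I_p(y)\right)\geq 0.
\]
The one elementary observation needed is that an infimum of a (possibly extended-real-valued) family is nonnegative if and only if every member of the family is nonnegative: if the infimum is $\geq 0$ then each term, being $\geq$ the infimum, is $\geq 0$, and conversely a family of nonnegative numbers has nonnegative infimum. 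Hence the displayed inequality is equivalent to $I_p(x)\geq I_p(y)$ for all $p\in S_x$.

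Next I would split into the two cases that define $S_x$. If $x$ contains no zeros, then $S_x=\mathbb{R}$, and the condition becomes $I_p(x)\geq I_p(y)$ for all $p\in\mathbb{R}$, which is exactly case~(i) of Lemma~\ref{LemStrongNoisyTrumping}. If $x$ contains zeros, then $S_x=\mathbb{R}^+$, and the condition becomes $I_p(x)\geq I_p(y)$ for all $p>0$, matching case~(ii). In either case the condition characterizing $\Lambda_{\rm cat}^{\rm strong}(x\|y)\geq 0$ coincides with the necessary and sufficient condition for $x$ to strongly noisy-trump $y$ supplied by the lemma; this proves the equivalence and, since $\Lambda_{\rm cat}^{\rm strong}$ is go and no-go simultaneously, establishes that it is a complete witness.

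The one point requiring a little care is the boundary value $p=0$ in the second case: Lemma~\ref{LemStrongNoisyTrumping}(ii) demands the inequality for all $p\geq 0$, whereas $\mathbb{R}^+$ excludes the endpoint. This gap is harmless because $I_0=\lim_{p\searrow 0}I_p$ (the relation used in Corollary~\ref{CorTrumpingClosedness}), so that $I_0(x)-I_0(y)=\lim_{p\searrow 0}\left(I_p(x)-I_p(y)\right)$ is a limit of terms already appearing in the infimum; thus $\inf_{p>0}$ controls the $p=0$ value, and the two index sets yield the same infimum and the same family of nonnegativity constraints. For completeness I would also note that the extended-real conventions behave correctly: when $x$ has no zeros but $y$ does, the terms with $p<0$ give $I_p(x)-I_p(y)=-\infty$, so $\Lambda_{\rm cat}^{\rm strong}(x\|y)=-\infty<0$, correctly reporting that strong noisy-trumping fails, in agreement with the proof of Lemma~\ref{LemStrongNoisyTrumping}.

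Since all the substantive work --- the appeal to the Klimesh--Turgut results and the explicit construction of catalysts --- is already contained in Lemma~\ref{LemStrongNoisyTrumping}, there is no genuine obstacle here; the proposition is a cosmetic reformulation. The only things to watch are the infimum/nonnegativity equivalence and the $p=0$ endpoint subtlety just described, both of which are routine.
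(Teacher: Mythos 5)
Your proposal is correct and follows essentially the same route as the paper, which presents Proposition~\ref{PropStrongTrumpingLambda} as an immediate reformulation of Lemma~\ref{LemStrongNoisyTrumping} via Definition~\ref{DefLambdaMinus} and offers no separate argument. Your explicit treatment of the infimum/nonnegativity equivalence, the $p=0$ endpoint via $I_0=\lim_{p\searrow 0}I_p$, and the $-\infty$ convention when $y$ contains zeros but $x$ does not, fills in exactly the routine details the paper leaves implicit.
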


The nonuniformity of formation and the distillable nonuniformity (as determined in Proposition~\ref{lemma:distillation}
and~\ref{lemma:formation}) are not changed by catalysis in the sense of strong noisy-trumping.
In other words, Corollaries~\ref{corollary:formationcatalysis} and \ref{corollary:distilationcatalysis} also hold for strong noisy-trumping and the proofs are exactly analogous.  The move from noisy-trumping to strong noisy-trumping \emph{does} make a difference, however, for the nonuniformity cost and yield of state conversion.

\begin{proposition}
If $\Lambda_{\rm cat}^{\rm strong}(x\|y) > 0$, then
$x$ strongly noisy-trumps $y \otimes s_I$ if and only if
$$I \le \Lambda_{\textnormal{cat}}^{\rm strong}(x\|y).$$
\end{proposition}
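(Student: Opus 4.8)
The plan is to follow the proof of Proposition~\ref{lemma:distillationcatstateconversion} almost verbatim, substituting strong noisy-trumping for ordinary noisy-trumping and the witness $\Lambda_{\rm cat}^{\rm strong}$ for $\Lambda_{\rm cat}$. The driving tool is Proposition~\ref{PropStrongTrumpingLambda}, which states that $x$ strongly noisy-trumps a state $w$ if and only if $\Lambda_{\rm cat}^{\rm strong}(x\|w)\geq 0$. Applying this with $w=y\otimes s_I$, the whole statement reduces to computing $\Lambda_{\rm cat}^{\rm strong}(x\|y\otimes s_I)$ and showing that it equals $\Lambda_{\rm cat}^{\rm strong}(x\|y)-I$; the desired equivalence $I\leq\Lambda_{\rm cat}^{\rm strong}(x\|y)$ then follows at once from the non-negativity of that quantity.

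The computation I would carry out rests on the additivity of the R\'enyi nonuniformities, Eq.~\eqref{eq:Renyiadditivity}, together with the fact that $I_p(s_I)=I$ for $p\geq 0$. Granting that $I_p(y\otimes s_I)=I_p(y)+I$ throughout the relevant index set, Definition~\ref{DefLambdaMinus} gives
\begin{eqnarray*}
\Lambda_{\rm cat}^{\rm strong}(x\|y\otimes s_I)
&=&\inf_{p\in S_x}\left(\strut I_p(x)-I_p(y\otimes s_I)\right)\\
&=&\inf_{p\in S_x}\left(\strut I_p(x)-I_p(y)\right)-I\\
&=&\Lambda_{\rm cat}^{\rm strong}(x\|y)-I.
\end{eqnarray*}
Combining this with Proposition~\ref{PropStrongTrumpingLambda} then yields that $x$ strongly noisy-trumps $y\otimes s_I$ if and only if $\Lambda_{\rm cat}^{\rm strong}(x\|y)-I\geq 0$, which is exactly $I\leq\Lambda_{\rm cat}^{\rm strong}(x\|y)$.

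The step I expect to be the main obstacle is the justification of the substitution $I_p(y\otimes s_I)=I_p(y)+I$ over the \emph{entire} index set $S_x$. Here the subtlety that distinguishes strong from ordinary noisy-trumping becomes decisive: $S_x$ equals $\mathbb{R}^+$ when $x$ contains zeros but all of $\mathbb{R}$ otherwise, while the identity $I_p(s_I)=I$ holds only for $p\geq 0$. Since $s_I$ has vanishing components whenever $I>0$, one has $I_p(s_I)=+\infty$ for $p<0$, so that $I_p(y\otimes s_I)=+\infty$ rather than $I_p(y)+I$ on the negative axis. Consequently the clean displayed computation goes through verbatim precisely when $S_x=\mathbb{R}^+$, i.e.\ when $x$ contains zeros (where moreover $\Lambda_{\rm cat}^{\rm strong}=\Lambda_{\rm cat}$ and the argument literally reduces to that of Proposition~\ref{lemma:distillationcatstateconversion}). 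I would therefore dispatch that case first, and then scrutinize the full-support case, where the $p<0$ contributions to the infimum must be handled explicitly and one must check that the behaviour of $I_p(s_I)$ on $p<0$ is compatible with the hypothesis $\Lambda_{\rm cat}^{\rm strong}(x\|y)>0$. Tracking this interplay between the sign of $p$, the zeros of $s_I$, and the definition of $S_x$ is the crux of the argument; once it is settled, the remainder is the routine bookkeeping for additive monotones already used in the non-strong analogue.
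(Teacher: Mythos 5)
Your overall strategy is exactly the paper's: reduce the claim to the sign of $\Lambda_{\rm cat}^{\rm strong}(x\|y\otimes s_I)$ via the completeness of the witness (Proposition~\ref{PropStrongTrumpingLambda}), and evaluate that quantity by additivity of the $I_p$. The displayed computation is the paper's computation for the case in which $x$ contains zeros (so $S_x=\mathbb{R}^+$ and only $p\geq 0$ enters), and there it is correct.

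The gap is the full-support case, which you flag as the crux but leave open, and where you point in a slightly wrong direction: no analysis of "the $p<0$ contributions" or of "the behaviour of $I_p(s_I)$ on $p<0$" is required. The observation that closes the case is simply that if $x$ has full support then $I_0(x)=0$, hence
\[
\Lambda_{\rm cat}^{\rm strong}(x\|y)=\inf_{p\in\mathbb{R}}\bigl(I_p(x)-I_p(y)\bigr)\leq I_0(x)-I_0(y)=-I_0(y)\leq 0,
\]
contradicting the standing hypothesis $\Lambda_{\rm cat}^{\rm strong}(x\|y)>0$. So under the hypothesis the full-support case is vacuous, and your displayed computation (after disposing of $I=0$, which is Proposition~\ref{PropStrongTrumpingLambda} itself) is the entire proof. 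The paper records the full-support case anyway, noting that for $I>0$ both sides of the equivalence are then false: $x$ cannot strongly noisy-trump $y\otimes s_I$ because $I_0$ is a monotone for strong noisy-trumping and $I_0(x)=0<I\leq I_0(y\otimes s_I)$, while at the same time $I>0\geq\Lambda_{\rm cat}^{\rm strong}(x\|y)$. One small caveat on your parenthetical: in the zeros case the argument does not literally reduce to Proposition~\ref{lemma:distillationcatstateconversion}, whose conclusion concerns ordinary noisy-trumping; the computation is identical, but the completeness statement invoked must be Proposition~\ref{PropStrongTrumpingLambda} rather than Proposition~\ref{PropTrumpingLambda}.
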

\begin{proof}
The case $I=0$ is just Proposition~\ref{PropStrongTrumpingLambda}; thus, we may assume $I>0$.
First, suppose that $x$ does not contain any zeros. But then
$0=I_0(x)<I_0(y\otimes s_I)$, and so $x$ cannot strongly noisy-trump $y\otimes s_I$. In fact,
\begin{eqnarray*}
   \Lambda_{\rm cat}^{\rm strong}(x\|y)&=&\inf_{p\in\mathbb{R}}\left(\strut I_p(x)-I_p(y)\right)\leq I_0(x)-I_0(y)\\
   &=&-I_0(y)\leq 0 <I.
\end{eqnarray*}
Now consider the case that $x$ contains zeros. Then
\begin{eqnarray*}
   \Lambda_{\rm cat}^{\rm strong}(x\|y\otimes s_I) &=&\inf_{p\geq 0}\left(\strut I_p(x)-I_p(y\otimes s_I)\right)\\
   &=&\inf_{p\geq 0}\left(\strut I_p(x)-I_p(y)\right)-I \\
   &=&\Lambda_{\rm cat}^{\rm strong}(x\|y)-I.
\end{eqnarray*}
According to Proposition~\ref{PropStrongTrumpingLambda}, $x$ trumps $y\otimes s_I$ if and only if this expression is non-negative, that is,
if any only if $\Lambda_{\rm cat}^{\rm strong}(x\|y)\geq I$.
\end{proof}

Surprisingly, the nonuniformity cost of ``strong'' catalytic state conversion is not determined by $\Lambda_{\rm cat}^{\rm strong}$, but by the
quantity $\Lambda_{\rm cat}$ from Definition~\ref{DefLambda}:

\begin{proposition}\label{lemma:formationstrongcatstateconversion}
If $\Lambda_{\rm cat}(x\|y) < 0$, then
$x \otimes s_I$ strongly noisy-trumps $y$ if and only if
$$I \ge -\Lambda_{\textnormal{cat}}(x\|y).$$
\end{proposition}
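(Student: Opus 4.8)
The plan is to prove both directions by reducing the conversion involving the sharp state $s_I$ to a problem about the quantities $\Lambda_{\rm cat}$ and $\Lambda_{\rm cat}^{\rm strong}$, and then exploiting the fact that the state $x \otimes s_I$ always contains zeros whenever $I > 0$. The case $I = 0$ is immediate from Proposition~\ref{PropStrongTrumpingLambda} together with the hypothesis $\Lambda_{\rm cat}(x\|y) < 0$, which rules out strong noisy-trumping at $I=0$; so I would assume $I > 0$ throughout.

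The crucial structural observation is the following. Since $s_I$ for $I > 0$ contains at least one zero component, the composite $x \otimes s_I$ contains zeros as well. Therefore, by Definition~\ref{DefLambdaMinus}, the relevant index set for $x \otimes s_I$ is $S_{x \otimes s_I} = \mathbb{R}^+$, so that
\[
   \Lambda_{\rm cat}^{\rm strong}(x \otimes s_I \| y) = \inf_{p \geq 0}\left(\strut I_p(x \otimes s_I) - I_p(y)\right).
\]
Now I would invoke additivity of $I_p$ (Eq.~\eqref{eq:Renyiadditivity}) together with the fact that $I_p(s_I) = I$ for all $p \geq 0$, exactly as in the proof of Proposition~\ref{lemma:formationcatstateconversion}, to obtain
\[
   \Lambda_{\rm cat}^{\rm strong}(x \otimes s_I \| y) = I + \inf_{p \geq 0}\left(\strut I_p(x) - I_p(y)\right) = I + \Lambda_{\rm cat}(x\|y).
\]
The key point here is that even though $\Lambda_{\rm cat}^{\rm strong}(x\|y)$ in general uses the larger index set $\mathbb{R}$ (when $x$ has full support), adjoining $s_I$ forces the index set down to $\mathbb{R}^+$, so the strong quantity for the \emph{padded} state collapses onto $\Lambda_{\rm cat}$, not $\Lambda_{\rm cat}^{\rm strong}$. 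This is precisely the source of the ``surprising'' appearance of $\Lambda_{\rm cat}$ rather than $\Lambda_{\rm cat}^{\rm strong}$ in the statement.

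With this identity in hand, the conclusion is a direct application of Proposition~\ref{PropStrongTrumpingLambda}: $x \otimes s_I$ strongly noisy-trumps $y$ if and only if $\Lambda_{\rm cat}^{\rm strong}(x \otimes s_I \| y) \geq 0$, which by the displayed identity is equivalent to $I + \Lambda_{\rm cat}(x\|y) \geq 0$, i.e. $I \geq -\Lambda_{\rm cat}(x\|y)$. Since $\Lambda_{\rm cat}(x\|y) < 0$ by hypothesis, the bound $-\Lambda_{\rm cat}(x\|y)$ is strictly positive, which is consistent with having taken $I > 0$.

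The main obstacle I anticipate is justifying the reduction of the index set cleanly: one must confirm that the infimum defining $\Lambda_{\rm cat}^{\rm strong}(x \otimes s_I \| y)$ is genuinely over $p \geq 0$ and not over all of $\mathbb{R}$. This requires checking that $x \otimes s_I$ contains a zero for \emph{every} $I > 0$ — true because the canonical $s_I$ has dimension strictly exceeding its support size whenever $I > 0$ — and then applying Definition~\ref{DefLambdaMinus} carefully. A secondary subtlety is the additivity step for $p = 0$ and in the limit $p \to \infty$: additivity of $I_p$ holds for all $p \in \mathbb{R}$ by Eq.~\eqref{eq:Renyiadditivity}, and $I_p(s_I) = I$ holds for all $p \geq 0$ since on sharp states every R\'enyi nonuniformity of nonnegative order equals the Shannon nonuniformity (as established in the discussion of sharp states following Eq.~\eqref{eq:sharpstate}); I would cite these facts rather than re-derive them. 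Once these two points are nailed down, the remainder is the same short algebraic manipulation as in Proposition~\ref{lemma:formationcatstateconversion}.
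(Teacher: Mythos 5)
Your proof is correct and follows essentially the same route as the paper's: observe that $x\otimes s_I$ contains zeros so that $\Lambda_{\rm cat}^{\rm strong}(x\otimes s_I\|y)$ is an infimum over $p\ge 0$ only, use additivity of $I_p$ and $I_p(s_I)=I$ to get $\Lambda_{\rm cat}^{\rm strong}(x\otimes s_I\|y)=I+\Lambda_{\rm cat}(x\|y)$, and conclude via Proposition~\ref{PropStrongTrumpingLambda}. Your explicit separate treatment of $I=0$ (where $s_0$ has no zero component) is a small point of care that the paper's proof glosses over, but the argument is otherwise identical.
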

\begin{proof}
Since $x\otimes s_I$ contains zeros, we have
\begin{eqnarray*}
   \Lambda_{\rm cat}^{\rm strong}(x\otimes s_I\| y) &=& \inf_{p\geq 0}\left(\strut I_p(x\otimes s_I)-I_p(y)\right) \\
   &=& I+\inf_{p\geq 0}\left(\strut I_p(x)-I_p(y)\right) \\
   &=& I+\Lambda_{\rm cat}(x\|y).
\end{eqnarray*}
According to Proposition~\ref{PropStrongTrumpingLambda}, $x\otimes s_I$ strongly noisy-trumps $y$ if and only if this expression is non-negative,
that is, if and only if $I+\Lambda_{\rm cat}(x\|y)\geq 0$.
\end{proof}

In the case where both $x$ and $y$ do not contain zeros, such that $\Lambda_{\rm cat}^{\rm strong}(x\|y)< \Lambda_{\rm cat}(x\|y)$ is possible,
Propositions~\ref{PropStrongTrumpingLambda} and \ref{lemma:formationstrongcatstateconversion} show
an undesirable discontinuity in the nonuniformity
cost of catalytic state conversion in the case of strong noisy-trumping: $x\otimes s_I$ strongly noisy-trumps $y$ if and only if
\[
   \left\{
      \begin{array}{cl}
         I\geq  -\Lambda_{\rm cat}^{\rm strong}(x\|y) & \mbox{if }I=0 \\
         I \geq -\Lambda_{\rm cat}(x\|y) & \mbox{if }I>0.
      \end{array}
   \right.
\]
A discontinuity like this was not present in the definition of noisy trumping used in the main text (Definition~\ref{DefNoisyTrumping}).

As a final observation, Proposition~\ref{lemma:formationstrongcatstateconversion} implies that two different possible definitions of noisy trumping are equivalent:
\begin{lemma}
Let $x$ and $y$ be any pair of states. Then
\begin{equation}
   x\otimes s_\delta\mbox{ strongly noisy-trumps }y\mbox{ for all }\delta>0
   \label{eqThirdWay}
\end{equation}
if and only if $x$ noisy-trumps $y$.
\end{lemma}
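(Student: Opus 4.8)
The plan is to reduce both sides of the claimed equivalence to conditions on the R\'enyi nonuniformities $I_p$, and to exploit the fact that adjoining an arbitrarily small sharp state to $x$ always introduces zero components, which forces the negative-$p$ constraints to drop out and leaves exactly the $p\geq 0$ conditions that characterize ordinary noisy-trumping.

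First I would observe that for any $\delta>0$ the sharp state $s_\delta$ has $d>d_u$ and hence contains zero components; consequently $x\otimes s_\delta$ contains zeros as well, regardless of whether $x$ does. This places $x\otimes s_\delta$ squarely in case (ii) of Lemma~\ref{LemStrongNoisyTrumping}, so that for each fixed $\delta>0$ the statement ``$x\otimes s_\delta$ strongly noisy-trumps $y$'' is equivalent to $I_p(x\otimes s_\delta)\geq I_p(y)$ for all $p\geq 0$. This is the key structural observation: the zeros of $s_\delta$ are precisely what suppress the $p<0$ conditions that would otherwise distinguish strong noisy-trumping from ordinary noisy-trumping.

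Next I would simplify $I_p(x\otimes s_\delta)$ using the additivity of the R\'enyi nonuniformities, Eq.~\eqref{eq:Renyiadditivity}, together with the fact that $I_p(s_\delta)=\delta$ for all $p\geq 0$: this gives $I_p(x\otimes s_\delta)=I_p(x)+\delta$, so that the condition becomes $I_p(x)+\delta\geq I_p(y)$ for all $p\geq 0$. Quantifying over all $\delta>0$, the left-hand assertion~\eqref{eqThirdWay} then holds if and only if $I_p(x)+\delta\geq I_p(y)$ for every $p\geq 0$ and every $\delta>0$, which (letting $\delta$ tend to $0$) is equivalent to $I_p(x)\geq I_p(y)$ for all $p\geq 0$. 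By Lemma~\ref{lemma:trumping} this last condition is exactly the criterion for $x$ to noisy-trump $y$, which closes the chain of equivalences in both directions at once.

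I do not expect a serious obstacle; the only point requiring care is the direction of the $\delta\to 0$ limit. Because each $I_p$ with $p\geq 0$ is finite, the family of inequalities $I_p(x)+\delta\geq I_p(y)$ indexed by $\delta>0$ is satisfied for all $\delta$ exactly when the non-strict inequality $I_p(x)\geq I_p(y)$ holds, with no loss at the boundary. This is what makes the ``for all $\delta>0$'' quantifier collapse cleanly onto the noisy-trumping criterion, and it is worth stating explicitly to avoid any apparent gap between the strict $\delta$-perturbed inequalities and the limiting non-strict ones.
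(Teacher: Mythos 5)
Your proof is correct. The mechanism is the same one the paper uses --- the zeros of $s_\delta$ place $x\otimes s_\delta$ in case (ii) of Lemma~\ref{LemStrongNoisyTrumping}, so only the $p\geq 0$ monotones survive, and additivity plus $I_p(s_\delta)=\delta$ reduces everything to $I_p(x)+\delta\geq I_p(y)$ --- but you package it differently. The paper splits on the sign of $\Lambda_{\rm cat}(x\|y)$: when it is negative, it invokes Proposition~\ref{lemma:formationstrongcatstateconversion} (whose own proof is precisely your computation) to produce a single $\delta$ with $0<\delta<-\Lambda_{\rm cat}(x\|y)$ witnessing failure of~(\ref{eqThirdWay}); when it is nonnegative, it argues the sufficiency direction with no machinery at all, simply unfolding Definition~\ref{DefNoisyTrumping} to obtain a catalyst $z$ with $x\otimes s_\delta\otimes z\cconv y\otimes z$ and reading this off as an instance of Definition~\ref{DefStrongNoisyTrumping} with the constant sequences $y_n=y$, $z_n=z$. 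Your version runs both directions through the Klimesh--Turgut characterization (Lemma~\ref{LemStrongNoisyTrumping}), which yields a cleaner single chain of equivalences at the cost of deploying the full classification theorem even for the direction where a purely definitional argument suffices. Your attention to the $\delta\to 0$ limit is well placed and unproblematic, since every $I_p$ with $p\geq 0$ is finite; the only cosmetic caveat is that $\delta$ should implicitly range over logarithms of rationals greater than one so that $s_\delta$ exists, which costs nothing because such values accumulate at $0$.
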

\begin{proof}
First, consider the case that $\Lambda_{\rm cat}(x\|y)<0$. According to Proposition~\ref{PropTrumpingLambda},
this means that $x$ does not noisy-trump $y$. Indeed, if $\delta$ is some logarithm of a rational number such that $0<\delta<-\Lambda_{\rm cat}(x\|y)$,
then Proposition~\ref{lemma:formationstrongcatstateconversion} proves that $x\otimes s_\delta$ does not strongly noisy-trump $y$, so statement~(\ref{eqThirdWay})
is false. This proves the lemma for the case $\Lambda_{\rm cat}(x\|y)<0$.

In the remaining case $\Lambda_{\rm cat}(x\|y)\geq 0$, Proposition~\ref{PropTrumpingLambda} implies that $x$ noisy-trumps $y$.
By definition, for every $\delta>0$, there is a catalyst $z$ such that $x\otimes s_\delta\otimes z \cconv y\otimes z$. According to Definition~\ref{DefStrongNoisyTrumping},
this means that $x\otimes s_\delta$ strongly noisy-trumps $y$. This shows that~(\ref{eqThirdWay}) is true in this case, and completes the proof of the lemma.
\end{proof}

This lemma shows that there is no third alternative definition of noisy trumping. A conceivable alternative definition would involve the possibility to consume
an arbitrarily small amount of nonuniformity (i.e.\ $s_\delta$ for arbitrarily small $\delta>0$) \emph{and} at the same time allowing for non-perfect state production,
i.e.\ conversion to arbitrary non-perfect accuracy. Formally, this would be~(\ref{eqThirdWay}); the lemma shows, however, that this only reproduces the
notion of noisy-trumping from the main text. This result is not surprising due to the ``closedness'' property of noisy-trumping which was proven
in Corollary~\ref{CorTrumpingClosedness}.

\section{Approximate distillation}

The goal of this section is to give an expression for the maximal nonuniformity of any sharp state that can be extracted from some given state $x$,
as announced in Eq.~(\ref{eqOptimalMainText}) in the main text. The result in Lemma~\ref{LemMinMaxMonotonicity} was achievable, but not optimal.
To this end, we need a few lemmas.

\begin{lemma}
\label{LemExtension}
Let $\mathcal{D}$ be any contractive metric.
Consider a bipartite classical system $AB$ of finite dimension $d_{AB}=d_A d_B$. Let $x^{AB}$ be any state on $AB$, with marginal $x^A$ on $A$,
and $y^A$ any other state on $A$. Then there exists an extension $y^{AB}$ of $y^A$ (i.e.\ a state that has $y^A$ as its marginal) with the property
that
\[
   \mathcal{D}(y^{AB},x^{AB})=\mathcal{D}(y^A,x^A).
\]
\end{lemma}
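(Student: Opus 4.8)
The plan is to construct the extension $y^{AB}$ explicitly, by "conditioning $y^A$ through the same channel that produced $x^{AB}$ from $x^A$." The key observation is that any bipartite state $x^{AB}$ can be written in terms of its marginal $x^A$ together with a conditional distribution: for each $i \in \Omega_A$ with $x^A_i > 0$, define the conditional state $c_i$ on $B$ by $(c_i)_j := x^{AB}_{ij}/x^A_i$, so that $x^{AB}_{ij} = x^A_i (c_i)_j$. I would then define the extension of $y^A$ by reusing exactly these same conditionals: $y^{AB}_{ij} := y^A_i (c_i)_j$ (with some fixed but arbitrary choice of $c_i$, say uniform, on indices $i$ where $x^A_i = 0$; these indices contribute nothing to $x^{AB}$ but we must still make $y^{AB}$ a well-defined normalized state). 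One checks immediately that the $A$-marginal of $y^{AB}$ is $\sum_j y^A_i (c_i)_j = y^A_i$, since each $c_i$ is normalized.

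The heart of the proof is showing $\mathcal{D}(y^{AB}, x^{AB}) = \mathcal{D}(y^A, x^A)$, and this is where I would exploit contractivity in \emph{both directions}, exactly as the main text does in establishing $\mathcal{D}(\rho\otimes\tau,\sigma\otimes\tau)=\mathcal{D}(\rho,\sigma)$. The inequality $\mathcal{D}(y^A,x^A) \leq \mathcal{D}(y^{AB},x^{AB})$ is the easy direction: marginalization over $B$ is a classical channel (a stochastic matrix), so contractivity gives it for free. For the reverse inequality $\mathcal{D}(y^{AB},x^{AB}) \leq \mathcal{D}(y^A,x^A)$, I would produce a single channel $\Phi: V(\Omega_A) \to V(\Omega_{AB})$ that maps $x^A \mapsto x^{AB}$ \emph{and} $y^A \mapsto y^{AB}$ simultaneously, and then invoke contractivity once more. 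The natural candidate is the "conditional-attachment" channel defined on basis states by $\Phi(e_i) := e_i \otimes c_i$, extended linearly; this is manifestly a stochastic matrix (each column $e_i \otimes c_i$ is a probability vector on $\Omega_{AB}$). By construction $\Phi(x^A)_{ij} = x^A_i (c_i)_j = x^{AB}_{ij}$ and likewise $\Phi(y^A) = y^{AB}$, so contractivity yields $\mathcal{D}(\Phi(y^A),\Phi(x^A)) \leq \mathcal{D}(y^A,x^A)$, which is precisely the desired bound. Combining the two inequalities gives equality.

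The main obstacle I anticipate is a bookkeeping subtlety rather than a conceptual one: handling the indices $i$ where $x^A_i = 0$, since the conditional $c_i$ is then undefined from $x^{AB}$ alone. Because $x^A_i = 0$ forces $x^{AB}_{ij} = 0$ for all $j$ (as $0 \le x^{AB}_{ij} \le x^A_i$), the value of $c_i$ on such $i$ does not affect $x^{AB}$; but $y^A_i$ may be nonzero there, so $c_i$ \emph{does} affect $y^{AB}$ and must be chosen (any fixed normalized distribution works, and the channel $\Phi$ remains stochastic regardless of the choice). I would simply fix such $c_i$ to be, say, the uniform distribution on $\Omega_B$, verify that $\Phi$ is still a legitimate stochastic matrix, and note that this choice leaves both marginal conditions and both defining identities $\Phi(x^A) = x^{AB}$, $\Phi(y^A) = y^{AB}$ intact. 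With that caveat dispatched, the argument is a clean double application of contractivity, entirely parallel to the tensor-product invariance already established in the paper.
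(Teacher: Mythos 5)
Your proposal is correct and is essentially the paper's own proof: both construct the extension by pushing $y^A$ through the stochastic matrix of conditionals $x^{AB}_{j|i}$ and then apply contractivity in both directions (marginalization one way, the conditional-attachment channel the other). The only cosmetic difference is your choice of the uniform distribution on the $x^A_i=0$ indices where the paper uses $\delta_{i,j}$; either works, and yours is arguably the cleaner convention.
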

\textbf{Remark.} In the quantum case, this statement is still true for the purified distance $\mathcal{D}_p$~\cite{Tomamichel}, but the proof does not carry
over to all contractive metrics on quantum states.

\begin{proof}
We can think of $A$ as describing a random variable $a$ that takes values in the discrete sample space $\{1,\ldots,d_A\}$, and similarly for $B$.
Then we use the short-hand notation
\[
   x^{AB}_{i,j}:={\rm Prob}(a=i,b=j),
\]
and we get $x^A_i=\sum_j x^{AB}_{i,j}$. This allows us to define the conditional probability distribution
\[
   x^{AB}_{j|i}:=\left\{
      \begin{array}{cl}
         x^{AB}_{i,j}/x^A_i & \mbox{if }x^A_i\neq 0,\\
         \delta_{i,j} & \mbox{otherwise},
      \end{array}
   \right.
\]
where $\delta_{i,j}=1$ if $i=j$ and $0$ otherwise. We obtain
\begin{equation}
   x^{AB}_{i,j}=x^{AB}_{j|i}\cdot x^A_i.
   \label{eqDefChannel}
\end{equation}
Moreover, since $x_{j|i}^{AB}\geq 0$ and
\[
   \sum_j x^{AB}_{j|i}=1 \mbox{ for all }i,
\]
the conditional probability $x^{AB}_{j|i}$ defines a stochastic matrix, i.e.\ a classical channel $N$ from the probability distributions on $A$ to those on $AB$,
such that~(\ref{eqDefChannel}) becomes $x^{AB}=N(x^A)$. Define $y^{AB}:=N(y^A)$, then
\[
   \sum_j y^{AB}_{i,j} = \sum_j x^{AB}_{j|i}y^A_i = y^A_i,
\]
such that $y^{AB}$ is indeed an extension of $y^A$. Due to contractivity of $\mathcal{D}$, we obtain
\[
   \mathcal{D}(y^{AB},x^{AB})=\mathcal{D}(N(y^A),N(x^A))\leq\mathcal{D}(y^A,x^A);
\]
on the other hand, marginalization is also a channel, which proves the converse inequality.
\end{proof}

This extension property implies an inequality for the max-entropy:
\begin{lemma}
\label{LemIneqMaxEnt}
If $x^{AB}$ is any bipartite state with marginal $x^A$, then
\[
   H_0^\varepsilon(x^{AB})\leq H_0^\varepsilon(x^A)+\log d_B
\]
for every $\varepsilon\geq 0$.
\end{lemma}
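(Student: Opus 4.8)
The plan is to combine the distance-preserving extension property of Lemma~\ref{LemExtension} with a crude but tight support-counting bound on the unsmoothed max-entropy. The key observation is that smoothing the joint state $x^{AB}$ is at least as good as first smoothing the marginal $x^A$ optimally and then lifting that smoothed marginal back to a joint state at no extra cost in distance.

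First I would fix $\varepsilon\geq 0$ and let $y^A$ be a state on $A$ achieving the smoothing optimum for the marginal, i.e.\ $\mathcal{D}(y^A,x^A)\leq\varepsilon$ and $H_0(y^A)=H_0^\varepsilon(x^A)$ (the minimum in Definition~\ref{DefSmoothEntropies} is attained since the $\varepsilon$-ball is compact and $H_0$ is integer-valued). Applying Lemma~\ref{LemExtension} to $x^{AB}$, whose marginal on $A$ is $x^A$, and to this $y^A$, I obtain an extension $y^{AB}$ of $y^A$ with $\mathcal{D}(y^{AB},x^{AB})=\mathcal{D}(y^A,x^A)\leq\varepsilon$. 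Hence $y^{AB}$ lies in the $\varepsilon$-ball around $x^{AB}$, so by definition of the smoothed quantity, $H_0^\varepsilon(x^{AB})\leq H_0(y^{AB})$.

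The second step bounds $H_0(y^{AB})=\log|\supp(y^{AB})|$. Because $y^{AB}$ is an extension of $y^A$, its components satisfy $y^{AB}_{i,j}=0$ whenever $y^A_i=\sum_k y^{AB}_{i,k}=0$; consequently $\supp(y^{AB})\subseteq \supp(y^A)\times\{1,\dots,d_B\}$ and therefore $|\supp(y^{AB})|\leq |\supp(y^A)|\,d_B$. Taking logarithms gives $H_0(y^{AB})\leq H_0(y^A)+\log d_B=H_0^\varepsilon(x^A)+\log d_B$, and chaining the two inequalities yields the claim $H_0^\varepsilon(x^{AB})\leq H_0^\varepsilon(x^A)+\log d_B$.

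The only nontrivial ingredient is the existence of a distance-preserving extension, which is precisely Lemma~\ref{LemExtension}; everything else is elementary counting. The main point requiring care is that the support bound be applied to the \emph{specific} extension $y^{AB}$ produced by the lemma (rather than to an arbitrary extension), and that the argument holds for \emph{any} contractive metric, since Lemma~\ref{LemExtension} itself is stated for an arbitrary contractive metric and no metric-specific property is used anywhere.
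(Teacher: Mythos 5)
Your proof is correct and follows essentially the same route as the paper's: both reduce the smoothed inequality to the unsmoothed support-counting bound via the distance-preserving extension of Lemma~\ref{LemExtension} applied to an optimizer for $H_0^\varepsilon(x^A)$. The only cosmetic difference is that the paper first states the $\varepsilon=0$ counting bound separately and then invokes it, whereas you apply the counting directly to the specific extension; the substance is identical.
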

\begin{proof}
First we prove the statement for $\varepsilon=0$. Suppose that $i$ is such that $x^A_i=0$, then
\[
   x^A_i=0=\sum_j x^{AB}_{i,j}\quad\Rightarrow\quad x^{AB}_{i,j}=0\mbox{ for all }j.
\]
Counting the zero-probability events, this implies that
\[
   \#\{(i,j):x^{AB}_{i,j}=0\} \geq d_B\cdot \#\{i:x^A_i=0\}.
\]
Thus,
\begin{eqnarray*}
   2^{H_0(x^{AB})}&=& \#\{(i,j):x^{AB}_{i,j}\neq 0\}  \\
   &=& d_{AB} - \#\{(i,j): x^{AB}_{i,j}=0\} \\
   &\leq& d_A d_B - d_B\cdot \#\{i:x^A_i=0\} \\
   &=& d_B\cdot \#\{i:x^A_i \neq 0\}\\
   &=& d_B\cdot 2^{H_0(x^A)}.
\end{eqnarray*}
Now we turn to the case $\varepsilon>0$. Let $\tilde x^A$ be any state that achieves the optimization in the definition of $H_0^\varepsilon(x^A)$,
i.e.\ $\mathcal{D}(\tilde x^A,x^A)\leq\varepsilon$ and $H_0^\varepsilon(x^A)=H_0(\tilde x^A)$. According to Lemma~\ref{LemExtension},
there is an extension $\tilde x^{AB}$ of $\tilde x^A$ such that $\mathcal{D}(\tilde x^{AB},x^{AB})=\mathcal{D}(\tilde x^A,x^A)\leq\varepsilon$. Thus
\begin{eqnarray*}
   H_0^\varepsilon(x^{AB})&\leq& H_0(\tilde x^{AB}) \leq H_0(\tilde x^A)+\log d_B\\
   &=&H_0^\varepsilon(x^A)+\log d_B.
\end{eqnarray*}
This proves the claim.
\end{proof}

Now we have collected enough information to prove our main result on the distillable purity.
\begin{lemma}
Any distribution $x$ can be converted by noisy operations into another distribution which is $\varepsilon$-close to a sharp state of nonuniformity
\[
   I_0^\varepsilon(x\otimes m^{(d)}),
\]
where $m^{(d)}$ is the uniform state in $\mathbb{R}^d$. This expression is increasing in $d$.
Furthermore, the \emph{maximal} nonuniformity of any sharp state that can be extracted from $x$ is given by
\begin{equation}
   \lim_{d\to\infty} I_0^\varepsilon(x\otimes m^{(d)}).
   \label{eqOptimal}
\end{equation}
\end{lemma}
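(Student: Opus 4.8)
The plan is to establish the statement in three pieces: (i) achievability of $I_0^\varepsilon(x\otimes m^{(d)})$ for each fixed $d$, (ii) a monotonicity property guaranteeing that the limit is well defined and equals $\sup_d I_0^\varepsilon(x\otimes m^{(d)})$, and (iii) a matching converse showing that no noisy operation can distill a sharp state of strictly larger nonuniformity. The first two parts follow quickly from the single-shot result of Lemma~\ref{LemDistillSufficient}, but the third is where the real work lies: it is exactly here that Lemma~\ref{LemIneqMaxEnt} (and behind it the extension property, Lemma~\ref{LemExtension}) becomes indispensable, precisely because $I_0^\varepsilon$ fails to be a nonuniformity monotone (Lemma~\ref{LemMinMaxMonotonicity}), so one cannot simply track it through a general noisy operation.

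For achievability I would first record the elementary fact that adjoining a uniform state never decreases $I_0^\varepsilon$. Given a state $w$ and any $k$, choose $w'$ attaining the minimum in $H_0^\varepsilon(w)$, so $\mathcal{D}(w,w')\le\varepsilon$ and $H_0(w')=H_0^\varepsilon(w)$. Since $\mathcal{D}(w\otimes m^{(k)},w'\otimes m^{(k)})=\mathcal{D}(w,w')\le\varepsilon$ and $H_0(w'\otimes m^{(k)})=H_0(w')+\log k$, we get $H_0^\varepsilon(w\otimes m^{(k)})\le H_0^\varepsilon(w)+\log k$, hence $I_0^\varepsilon(w\otimes m^{(k)})\ge I_0^\varepsilon(w)$. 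Applied with $w=x\otimes m^{(d)}$, this shows $d\mapsto I_0^\varepsilon(x\otimes m^{(d)})$ is non-decreasing along the divisibility order, so the subsequence $I_0^\varepsilon(x\otimes m^{(n!)})$ increases to $\sup_d I_0^\varepsilon(x\otimes m^{(d)})=:J$ (every $d$ divides some $n!$). Achievability of each term is then immediate: adjoining $m^{(d)}$ is free, and Lemma~\ref{LemDistillSufficient} applied to $x\otimes m^{(d)}$ gives $x\otimes m^{(d)}\epsiloncconv s_I$ with $I=I_0^\varepsilon(x\otimes m^{(d)})$; precomposing with the free preparation $x\mapsto x\otimes m^{(d)}$ yields $x\epsiloncconv s_I$.

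The converse is the main obstacle. Suppose $x\epsiloncconv s_I$, so some noisy classical operation $N$ satisfies $\mathcal{D}(N(x),s_I)\le\varepsilon$. Taking $w'=s_I$ as a feasible point in the maximization defining $I_0^\varepsilon(N(x))$ gives $I=I_0(s_I)\le I_0^\varepsilon(N(x))$, and it remains to bound $I_0^\varepsilon(N(x))$ by $I_0^\varepsilon(x\otimes m^{(k)})$ for the ancilla dimension $k$ used to implement $N$. By Lemma~\ref{lemma:NCO} (and its proof), $N$ can be realized by adjoining $m^{(k)}$, applying a permutation $\pi$, and marginalizing a subsystem $B$; thus $N(x)=\partialtr_B(w)$ with $w=\pi(x\otimes m^{(k)})$. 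Since contractive metrics and $I_0$ are invariant under a common permutation of both arguments, $I_0^\varepsilon(w)=I_0^\varepsilon(x\otimes m^{(k)})$. Now Lemma~\ref{LemIneqMaxEnt} gives $H_0^\varepsilon(w)\le H_0^\varepsilon(N(x))+\log d_B$, and writing $d_w=d_{N(x)}d_B$ we obtain
\[
I_0^\varepsilon(N(x))=\log d_{N(x)}-H_0^\varepsilon(N(x))\le \log d_{N(x)}+\log d_B-H_0^\varepsilon(w)=I_0^\varepsilon(w)=I_0^\varepsilon(x\otimes m^{(k)})\le J.
\]
Hence $I\le J$ for every extractable sharp state, while part~(i) shows every value below $J$ is extractable, so $J$ is exactly the maximal extractable nonuniformity.

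Finally I would identify $J$ with $\lim_{d\to\infty}I_0^\varepsilon(x\otimes m^{(d)})$. Boundedness of the sequence (so that $J<\infty$) follows from the converse bound together with finiteness of the single-shot distillable nonuniformity, and along the divisibility-cofinal sequence $d=n!$ the terms increase to $J$, justifying the limit notation. I would also note that for the trace distance one gets genuine convergence of the \emph{full} sequence: Lemma~\ref{LemH0eps} evaluates $I_0^\varepsilon(x\otimes m^{(d)})=\log\big(d_x d/\lceil d_x d\,u_0\rceil\big)$, where $u_0$ is determined by $L_x(u_0)=1-\varepsilon$, and since the ceiling's relative effect vanishes as $d\to\infty$ the sequence tends to $-\log u_0=J$ (even though, because of the ceiling, it need not be monotone at every single step). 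For a general contractive metric the operationally meaningful object is the supremum $J$, which coincides with the limit taken along the divisibility order.
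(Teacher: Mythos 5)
Your proof is correct and follows essentially the same route as the paper's: achievability via Lemma~\ref{LemDistillSufficient} applied to $x\otimes m^{(d)}$, and the converse via Lemma~\ref{LemIneqMaxEnt} combined with the adjoin--permute--marginalize decomposition of a noisy operation (the paper packages this as the statement that $J_0^\varepsilon$ is a nonuniformity monotone; you simply unroll that monotonicity argument step by step). Your observation that the key inequality only yields monotonicity of $d\mapsto I_0^\varepsilon(x\otimes m^{(d)})$ along the divisibility order --- so the limit is most safely read as a supremum attained along a cofinal sequence such as $d=n!$ --- is a fair refinement of the paper's blanket claim that this expression ``is increasing in $d$''.
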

\begin{proof}
Clearly, the first statement is a simple consequence of Lemma~\ref{LemDistillSufficient}: before extracting nonuniformity, we can add on a uniform state of any dimension we like. Lemma~\ref{LemIneqMaxEnt} implies that
\begin{eqnarray*}
   I_0^\varepsilon(x\otimes m^{(d)})&=& \log(d_x\cdot d)-H_0^\varepsilon(x\otimes m^{(d)}) \\
   &\geq & \log d_x+\log d -H_0^\varepsilon(x)-\log d \\
   &=& I_0^\varepsilon(x),
\end{eqnarray*}
so that adding on a uniform state does not decrease the value of $I_0^\varepsilon$. Hence $I_0^\varepsilon(x\otimes m^{(d)})$
is increasing in $d$, and the expression in~(\ref{eqOptimal}) exists as a real number or plus infinity.
For the time being, denote this expression by $J_{0}^\varepsilon(x)$. Generalizing the previous calculation, we see that
Lemma~\ref{LemIneqMaxEnt} also implies that
\[
   I_0^\varepsilon(x^{AB})\geq I_0^\varepsilon(x^A),
\]
and so also $J_{0}^\varepsilon(x^{AB})\geq J_{0}^\varepsilon(x^A)$. In other words, $J_{0}^\varepsilon$ is non-increasing under
marginalization. Furthermore, by construction, $J_{0}^\varepsilon(x\otimes m^{(d)})=J_{0}^\varepsilon(x)$; that is, adding on maximally
mixed states leaves $J_{0}^\varepsilon$ invariant.
Finally, note that
\[
   J_{0}^\varepsilon(\Pi(x))=J_{0}^\varepsilon(x)
\]
for any representation $\Pi$ of a permutation; this property is inherited from $I_0^\varepsilon$ and follows from the permutation-invariance of any
contractive metric. These three properties together show that $J_{0}^\varepsilon$ is a nonuniformity monotone.

Now suppose that $x\stackrel{\rm noisy}{\longrightarrow} y$ such that $\mathcal{D}(y,s_I)\leq\varepsilon$ for the sharp state $s_I$ with nonuniformity $I$.
Then
\[
   H_0^\varepsilon(y)\leq H_0(s_I)=\log d_u,
\]
if $I=\log d/d_u$ for $d=d_y$. It follows that $I_0^\varepsilon(y)\geq I$, and since $J_{0}^\varepsilon$ is a nonuniformity monotone,
\[
 J_{0}^\varepsilon(x) \geq J_{0}^\varepsilon(y) \geq I_0^\varepsilon(y) \geq I.
\]
This shows that the expression in~(\ref{eqOptimal}) is optimal.
\end{proof}

\bibliography{Ref_v11}

\begin{thebibliography}{86}
\expandafter\ifx\csname natexlab\endcsname\relax\def\natexlab#1{#1}\fi
\expandafter\ifx\csname bibnamefont\endcsname\relax
  \def\bibnamefont#1{#1}\fi
\expandafter\ifx\csname bibfnamefont\endcsname\relax
  \def\bibfnamefont#1{#1}\fi
\expandafter\ifx\csname citenamefont\endcsname\relax
  \def\citenamefont#1{#1}\fi
\expandafter\ifx\csname url\endcsname\relax
  \def\url#1{\texttt{#1}}\fi
\expandafter\ifx\csname urlprefix\endcsname\relax\def\urlprefix{URL }\fi
\providecommand{\bibinfo}[2]{#2}
\providecommand{\eprint}[2][]{\url{#2}}

\bibitem[{\citenamefont{{\AA}berg}(2013)}]{AbergWork}
\bibinfo{author}{\bibnamefont{{\AA}berg}, \bibfnamefont{J.}},
  \bibinfo{year}{2013}, \bibinfo{journal}{Nat. Commun.}
  \textbf{\bibinfo{volume}{4}}, \bibinfo{pages}{1925}.

\bibitem[{\citenamefont{Allahverdyan}
  \emph{et~al.}(2004)\citenamefont{Allahverdyan, Balian, and
  Nieuwenhuizen}}]{ABN04}
\bibinfo{author}{\bibnamefont{Allahverdyan}, \bibfnamefont{A.~E.}},
  \bibinfo{author}{\bibfnamefont{R.}~\bibnamefont{Balian}}, and
  \bibinfo{author}{\bibfnamefont{T.~M.} \bibnamefont{Nieuwenhuizen}},
  \bibinfo{year}{2004}, \bibinfo{journal}{Journal of Modern Optics}
  \textbf{\bibinfo{volume}{51}}(\bibinfo{number}{16-18}),
  \bibinfo{pages}{2703},
  \urlprefix\url{http://dx.doi.org/10.1080/09500340408231829}.

\bibitem[{\citenamefont{Amato}(1968)}]{amato1968metodologia}
\bibinfo{author}{\bibnamefont{Amato}, \bibfnamefont{V.}}, \bibinfo{year}{1968},
  \emph{\bibinfo{title}{{Metodologia statistica strutturale}}}
  (\bibinfo{publisher}{F. Cacucci}).

\bibitem[{\citenamefont{Arnold}(1987)}]{Arnold87}
\bibinfo{author}{\bibnamefont{Arnold}, \bibfnamefont{B.~C.}},
  \bibinfo{year}{1987}, \emph{\bibinfo{title}{{Majorization and the Lorenz
  Order: A Brief Introduction}}} (\bibinfo{publisher}{Springer-Verlag}).

\bibitem[{\citenamefont{Audenaert}(2007)}]{Audenaert}
\bibinfo{author}{\bibnamefont{Audenaert}, \bibfnamefont{K.~M.~R.}},
  \bibinfo{year}{2007}, \bibinfo{journal}{J. Phys. A: Math. Theor.}
  \textbf{\bibinfo{volume}{40}}, \bibinfo{pages}{8127}.

\bibitem[{\citenamefont{Bartlett} \emph{et~al.}(2007)\citenamefont{Bartlett,
  Rudolph, and Spekkens}}]{Asymy3}
\bibinfo{author}{\bibnamefont{Bartlett}, \bibfnamefont{S.~D.}},
  \bibinfo{author}{\bibfnamefont{T.}~\bibnamefont{Rudolph}}, and
  \bibinfo{author}{\bibfnamefont{R.~W.} \bibnamefont{Spekkens}},
  \bibinfo{year}{2007}, \bibinfo{journal}{Rev. Mod. Phys.}
  \textbf{\bibinfo{volume}{79}}, \bibinfo{pages}{555}.

\bibitem[{\citenamefont{Bennett}(1973)}]{BennettLogical}
\bibinfo{author}{\bibnamefont{Bennett}, \bibfnamefont{C.~H.}},
  \bibinfo{year}{1973}, \bibinfo{journal}{IBM J. Res. Dev.}
  \textbf{\bibinfo{volume}{17}}, \bibinfo{pages}{525}.

\bibitem[{\citenamefont{Bennett}(1982)}]{BennettComp}
\bibinfo{author}{\bibnamefont{Bennett}, \bibfnamefont{C.~H.}},
  \bibinfo{year}{1982}, \bibinfo{journal}{Int. J. Theor. Phys.}
  \textbf{\bibinfo{volume}{21}}(\bibinfo{number}{12}), \bibinfo{pages}{905}.

\bibitem[{\citenamefont{Bhatia}(1997)}]{Bhatia}
\bibinfo{author}{\bibnamefont{Bhatia}, \bibfnamefont{R.}},
  \bibinfo{year}{1997}, \emph{\bibinfo{title}{{Matrix analysis}}}, volume
  \bibinfo{volume}{169} (\bibinfo{publisher}{Springer}).

\bibitem[{\citenamefont{Birkhoff}(1946)}]{Birkhoff}
\bibinfo{author}{\bibnamefont{Birkhoff}, \bibfnamefont{G.}},
  \bibinfo{year}{1946}, \bibinfo{journal}{Revi. Univ. Nac. Tucuman, ser A}
  \textbf{\bibinfo{volume}{5}}, \bibinfo{pages}{147}.

\bibitem[{\citenamefont{Brand{\~a}o}
  \emph{et~al.}(2015)\citenamefont{Brand{\~a}o, Horodecki, Ng, Oppenheim, and
  Wehner}}]{1ShotAtherm2}
\bibinfo{author}{\bibnamefont{Brand{\~a}o}, \bibfnamefont{F.~G.}},
  \bibinfo{author}{\bibfnamefont{M.}~\bibnamefont{Horodecki}},
  \bibinfo{author}{\bibfnamefont{N.~H.~Y.} \bibnamefont{Ng}},
  \bibinfo{author}{\bibfnamefont{J.}~\bibnamefont{Oppenheim}}, and
  \bibinfo{author}{\bibfnamefont{S.}~\bibnamefont{Wehner}},
  \bibinfo{year}{2015}, \bibinfo{journal}{Proc. Natl. Acad. Sci.}
  \textbf{\bibinfo{volume}{112}}, \bibinfo{pages}{3275}.

\bibitem[{\citenamefont{Brand{\~a}o}
  \emph{et~al.}(2013)\citenamefont{Brand{\~a}o, Horodecki, Oppenheim, Renes,
  and Spekkens}}]{AthermalityTheory}
\bibinfo{author}{\bibnamefont{Brand{\~a}o}, \bibfnamefont{F.~G.}},
  \bibinfo{author}{\bibfnamefont{M.}~\bibnamefont{Horodecki}},
  \bibinfo{author}{\bibfnamefont{J.}~\bibnamefont{Oppenheim}},
  \bibinfo{author}{\bibfnamefont{J.}~\bibnamefont{Renes}}, and
  \bibinfo{author}{\bibfnamefont{R.~W.} \bibnamefont{Spekkens}},
  \bibinfo{year}{2013}, \bibinfo{journal}{Phys. Rev. Lett.}
  \textbf{\bibinfo{volume}{111}}, \bibinfo{pages}{250404}.

\bibitem[{\citenamefont{Burg}(1967)}]{Burg}
\bibinfo{author}{\bibnamefont{Burg}, \bibfnamefont{J.~P.}},
  \bibinfo{year}{1967}, in \emph{\bibinfo{booktitle}{{Proceedings 37th Annual
  Meeting of the Society of Exploration Geophysicists}}}.

\bibitem[{\citenamefont{Chefles}(2002)}]{chefles2002quantum}
\bibinfo{author}{\bibnamefont{Chefles}, \bibfnamefont{A.}},
  \bibinfo{year}{2002}, \bibinfo{journal}{Phys. Rev. A}
  \textbf{\bibinfo{volume}{65}}(\bibinfo{number}{5}), \bibinfo{pages}{052314}.

\bibitem[{\citenamefont{Clausius}(1867)}]{Clausius}
\bibinfo{author}{\bibnamefont{Clausius}, \bibfnamefont{R.}},
  \bibinfo{year}{1867}, \emph{\bibinfo{title}{{The mechanical theory of heat:
  With its applications to the steam-engine and to the physical properties of
  bodies}}} (\bibinfo{publisher}{J. van Voorst}).

\bibitem[{\citenamefont{Coecke} \emph{et~al.}(2014)\citenamefont{Coecke, Fritz,
  and Spekkens}}]{Coecke2014}
\bibinfo{author}{\bibnamefont{Coecke}, \bibfnamefont{B.}},
  \bibinfo{author}{\bibfnamefont{T.}~\bibnamefont{Fritz}}, and
  \bibinfo{author}{\bibfnamefont{R.~W.} \bibnamefont{Spekkens}},
  \bibinfo{year}{2014}, \bibinfo{journal}{arXiv:1409.5531} .

\bibitem[{\citenamefont{Cover and Thomas}(2012)}]{CoverThomas}
\bibinfo{author}{\bibnamefont{Cover}, \bibfnamefont{T.~M.}}, and
  \bibinfo{author}{\bibfnamefont{J.~A.} \bibnamefont{Thomas}},
  \bibinfo{year}{2012}, \emph{\bibinfo{title}{{Elements of Information
  Theory}}} (\bibinfo{publisher}{John Wiley \& Sons}).

\bibitem[{\citenamefont{Csisz{\'a}r}(1967)}]{csisz1967information}
\bibinfo{author}{\bibnamefont{Csisz{\'a}r}, \bibfnamefont{I.}},
  \bibinfo{year}{1967}, \bibinfo{journal}{Studia Sci. Math. Hungar.}
  \textbf{\bibinfo{volume}{2}}, \bibinfo{pages}{299}.

\bibitem[{\citenamefont{Dahlsten} \emph{et~al.}(2009)\citenamefont{Dahlsten,
  Renner, Rieper, and Vedral}}]{WorkValOfInfo}
\bibinfo{author}{\bibnamefont{Dahlsten}, \bibfnamefont{O.~C.}},
  \bibinfo{author}{\bibfnamefont{R.}~\bibnamefont{Renner}},
  \bibinfo{author}{\bibfnamefont{E.}~\bibnamefont{Rieper}}, and
  \bibinfo{author}{\bibfnamefont{V.}~\bibnamefont{Vedral}},
  \bibinfo{year}{2009}, \bibinfo{journal}{arXiv:0908.0424} .

\bibitem[{\citenamefont{van Dam and Hayden}(2003)}]{vanDamHayden}
\bibinfo{author}{\bibnamefont{van Dam}, \bibfnamefont{W.}}, and
  \bibinfo{author}{\bibfnamefont{P.}~\bibnamefont{Hayden}},
  \bibinfo{year}{2003}, \bibinfo{journal}{Phys. Rev. A}
  \textbf{\bibinfo{volume}{67}}, \bibinfo{pages}{060302}.

\bibitem[{\citenamefont{Datta}(2009)}]{datta2009min}
\bibinfo{author}{\bibnamefont{Datta}, \bibfnamefont{N.}}, \bibinfo{year}{2009},
  \bibinfo{journal}{IEEE T. Inform. Theory}
  \textbf{\bibinfo{volume}{55}}(\bibinfo{number}{6}), \bibinfo{pages}{2816}.

\bibitem[{\citenamefont{Datta}(2013)}]{DattaPItalk}
\bibinfo{author}{\bibnamefont{Datta}, \bibfnamefont{N.}}, \bibinfo{year}{2013},
  \bibinfo{note}{http://pirsa.org/13060003/}.

\bibitem[{\citenamefont{{Del Rio}} \emph{et~al.}(2011)\citenamefont{{Del Rio},
  {\AA}berg, Renner, Dahlsten, and Vedral}}]{ThermoMeaningNegEntropy}
\bibinfo{author}{\bibnamefont{{Del Rio}}, \bibfnamefont{L.}},
  \bibinfo{author}{\bibfnamefont{J.}~\bibnamefont{{\AA}berg}},
  \bibinfo{author}{\bibfnamefont{R.}~\bibnamefont{Renner}},
  \bibinfo{author}{\bibfnamefont{O.}~\bibnamefont{Dahlsten}}, and
  \bibinfo{author}{\bibfnamefont{V.}~\bibnamefont{Vedral}},
  \bibinfo{year}{2011}, \bibinfo{journal}{Nature}
  \textbf{\bibinfo{volume}{474}}(\bibinfo{number}{7349}), \bibinfo{pages}{61}.

\bibitem[{\citenamefont{Egloff} \emph{et~al.}(2012)\citenamefont{Egloff,
  Dahlsten, Renner, and Vedral}}]{LawsOfThermo}
\bibinfo{author}{\bibnamefont{Egloff}, \bibfnamefont{D.}},
  \bibinfo{author}{\bibfnamefont{O.~C.~O.} \bibnamefont{Dahlsten}},
  \bibinfo{author}{\bibfnamefont{R.}~\bibnamefont{Renner}}, and
  \bibinfo{author}{\bibfnamefont{V.}~\bibnamefont{Vedral}},
  \bibinfo{year}{2012}, \bibinfo{journal}{arXiv:1207.0434} .

\bibitem[{\citenamefont{van Erven and Harremo{\"e}s}(2010)}]{vEH2010}
\bibinfo{author}{\bibnamefont{van Erven}, \bibfnamefont{T.}}, and
  \bibinfo{author}{\bibfnamefont{P.}~\bibnamefont{Harremo{\"e}s}},
  \bibinfo{year}{2010}, in \emph{\bibinfo{booktitle}{{Information Theory
  Proceedings (ISIT), 2010 IEEE International Symposium on}}}
  (\bibinfo{organization}{IEEE}), pp. \bibinfo{pages}{1335--1339}.

\bibitem[{\citenamefont{Faist} \emph{et~al.}(2012)\citenamefont{Faist, Dupuis,
  Oppenheim, and Renner}}]{QuantLandauer}
\bibinfo{author}{\bibnamefont{Faist}, \bibfnamefont{P.}},
  \bibinfo{author}{\bibfnamefont{F.}~\bibnamefont{Dupuis}},
  \bibinfo{author}{\bibfnamefont{J.}~\bibnamefont{Oppenheim}}, and
  \bibinfo{author}{\bibfnamefont{R.}~\bibnamefont{Renner}},
  \bibinfo{year}{2012}, \bibinfo{journal}{arXiv:1211.1037v1} .

\bibitem[{\citenamefont{Fannes}(1973)}]{Fannes}
\bibinfo{author}{\bibnamefont{Fannes}, \bibfnamefont{M.}},
  \bibinfo{year}{1973}, \bibinfo{journal}{Commun. Math. Phys.}
  \textbf{\bibinfo{volume}{31}}, \bibinfo{pages}{291}.

\bibitem[{\citenamefont{Gini}(1912)}]{gini1912variability}
\bibinfo{author}{\bibnamefont{Gini}, \bibfnamefont{C.}}, \bibinfo{year}{1912},
  \bibinfo{journal}{Studi Economico-Giuricici della R} .

\bibitem[{\citenamefont{Gorban} \emph{et~al.}(2010)\citenamefont{Gorban,
  Gorban, and Judge}}]{gorban2010entropy}
\bibinfo{author}{\bibnamefont{Gorban}, \bibfnamefont{A.~N.}},
  \bibinfo{author}{\bibfnamefont{P.~A.} \bibnamefont{Gorban}}, and
  \bibinfo{author}{\bibfnamefont{G.}~\bibnamefont{Judge}},
  \bibinfo{year}{2010}, \bibinfo{journal}{Entropy}
  \textbf{\bibinfo{volume}{12}}(\bibinfo{number}{5}), \bibinfo{pages}{1145}.

\bibitem[{\citenamefont{Gour}(2005)}]{Gourconcurrence}
\bibinfo{author}{\bibnamefont{Gour}, \bibfnamefont{G.}}, \bibinfo{year}{2005},
  \bibinfo{journal}{Phys. Rev. A} \textbf{\bibinfo{volume}{71}},
  \bibinfo{pages}{012318}.

\bibitem[{\citenamefont{Gour and Spekkens}(2008)}]{Asymy1}
\bibinfo{author}{\bibnamefont{Gour}, \bibfnamefont{G.}}, and
  \bibinfo{author}{\bibfnamefont{R.~W.} \bibnamefont{Spekkens}},
  \bibinfo{year}{2008}, \bibinfo{journal}{New Journal of Physics}
  \textbf{\bibinfo{volume}{10}}(\bibinfo{number}{3}), \bibinfo{pages}{033023}.

\bibitem[{\citenamefont{Haagerup and Musat}(2011)}]{HaagerupMusat}
\bibinfo{author}{\bibnamefont{Haagerup}, \bibfnamefont{U.}}, and
  \bibinfo{author}{\bibfnamefont{M.}~\bibnamefont{Musat}},
  \bibinfo{year}{2011}, \bibinfo{journal}{Commun. Math. Phys.}
  \textbf{\bibinfo{volume}{303}}(\bibinfo{number}{2}), \bibinfo{pages}{555}.

\bibitem[{\citenamefont{Hardy} \emph{et~al.}(1952)\citenamefont{Hardy,
  Littlewood, and Polya}}]{hardy1952inequalities}
\bibinfo{author}{\bibnamefont{Hardy}, \bibfnamefont{G.~H.}},
  \bibinfo{author}{\bibfnamefont{J.~E.} \bibnamefont{Littlewood}}, and
  \bibinfo{author}{\bibfnamefont{G.}~\bibnamefont{Polya}},
  \bibinfo{year}{1952}, \emph{\bibinfo{title}{{Inequalities}}}
  (\bibinfo{publisher}{Cambridge university press}).

\bibitem[{\citenamefont{Horn and Johnson}(2012)}]{horn2012matrix}
\bibinfo{author}{\bibnamefont{Horn}, \bibfnamefont{R.~A.}}, and
  \bibinfo{author}{\bibfnamefont{C.~R.} \bibnamefont{Johnson}},
  \bibinfo{year}{2012}, \emph{\bibinfo{title}{{Matrix Analysis}}}
  (\bibinfo{publisher}{Cambridge University Press}).

\bibitem[{\citenamefont{Horodecki}
  \emph{et~al.}(2003{\natexlab{a}})\citenamefont{Horodecki, Horodecki,
  Horodecki, Horodecki, Oppenheim, Sen(De), and Sen}}]{HHOShort}
\bibinfo{author}{\bibnamefont{Horodecki}, \bibfnamefont{M.}},
  \bibinfo{author}{\bibfnamefont{K.}~\bibnamefont{Horodecki}},
  \bibinfo{author}{\bibfnamefont{P.}~\bibnamefont{Horodecki}},
  \bibinfo{author}{\bibfnamefont{R.}~\bibnamefont{Horodecki}},
  \bibinfo{author}{\bibfnamefont{J.}~\bibnamefont{Oppenheim}},
  \bibinfo{author}{\bibfnamefont{A.}~\bibnamefont{Sen(De)}}, and
  \bibinfo{author}{\bibfnamefont{U.}~\bibnamefont{Sen}},
  \bibinfo{year}{2003}{\natexlab{a}}, \bibinfo{journal}{Phys. Rev. Lett.}
  \textbf{\bibinfo{volume}{90}}, \bibinfo{pages}{100402}.

\bibitem[{\citenamefont{Horodecki}
  \emph{et~al.}(2003{\natexlab{b}})\citenamefont{Horodecki, Horodecki, and
  Oppenheim}}]{HHOLong}
\bibinfo{author}{\bibnamefont{Horodecki}, \bibfnamefont{M.}},
  \bibinfo{author}{\bibfnamefont{P.}~\bibnamefont{Horodecki}}, and
  \bibinfo{author}{\bibfnamefont{J.}~\bibnamefont{Oppenheim}},
  \bibinfo{year}{2003}{\natexlab{b}}, \bibinfo{journal}{Phys. Rev. A}
  \textbf{\bibinfo{volume}{67}}, \bibinfo{pages}{062104}.

\bibitem[{\citenamefont{Horodecki and Oppenheim}(2013)}]{FundLimitsNature}
\bibinfo{author}{\bibnamefont{Horodecki}, \bibfnamefont{M.}}, and
  \bibinfo{author}{\bibfnamefont{J.}~\bibnamefont{Oppenheim}},
  \bibinfo{year}{2013}, \bibinfo{journal}{Nat. Commun.}
  \textbf{\bibinfo{volume}{4}}, \bibinfo{pages}{1}.

\bibitem[{\citenamefont{Horodecki} \emph{et~al.}(2009)\citenamefont{Horodecki,
  Horodecki, Horodecki, and Horodecki}}]{horodecki2009quantum}
\bibinfo{author}{\bibnamefont{Horodecki}, \bibfnamefont{R.}},
  \bibinfo{author}{\bibfnamefont{P.}~\bibnamefont{Horodecki}},
  \bibinfo{author}{\bibfnamefont{M.}~\bibnamefont{Horodecki}}, and
  \bibinfo{author}{\bibfnamefont{K.}~\bibnamefont{Horodecki}},
  \bibinfo{year}{2009}, \bibinfo{journal}{Rev. Mod. Phys.}
  \textbf{\bibinfo{volume}{81}}(\bibinfo{number}{2}), \bibinfo{pages}{865}.

\bibitem[{\citenamefont{Janzing} \emph{et~al.}(2000)\citenamefont{Janzing,
  Wocjan, Zeier, Geiss, and Beth}}]{janzing2000thermodynamic}
\bibinfo{author}{\bibnamefont{Janzing}, \bibfnamefont{D.}},
  \bibinfo{author}{\bibfnamefont{P.}~\bibnamefont{Wocjan}},
  \bibinfo{author}{\bibfnamefont{R.}~\bibnamefont{Zeier}},
  \bibinfo{author}{\bibfnamefont{R.}~\bibnamefont{Geiss}}, and
  \bibinfo{author}{\bibfnamefont{T.}~\bibnamefont{Beth}}, \bibinfo{year}{2000},
  \bibinfo{journal}{Int. J. Theor. Phys.}
  \textbf{\bibinfo{volume}{39}}(\bibinfo{number}{12}), \bibinfo{pages}{2717}.

\bibitem[{\citenamefont{Jaynes}(1957{\natexlab{a}})}]{JaynesI}
\bibinfo{author}{\bibnamefont{Jaynes}, \bibfnamefont{E.~T.}},
  \bibinfo{year}{1957}{\natexlab{a}}, \bibinfo{journal}{Phys. Rev.}
  \textbf{\bibinfo{volume}{106}}, \bibinfo{pages}{620}.

\bibitem[{\citenamefont{Jaynes}(1957{\natexlab{b}})}]{JaynesII}
\bibinfo{author}{\bibnamefont{Jaynes}, \bibfnamefont{E.~T.}},
  \bibinfo{year}{1957}{\natexlab{b}}, \bibinfo{journal}{Phys. Rev.}
  \textbf{\bibinfo{volume}{108}}, \bibinfo{pages}{171}.

\bibitem[{\citenamefont{Joe}(1990)}]{joe1990majorization}
\bibinfo{author}{\bibnamefont{Joe}, \bibfnamefont{H.}}, \bibinfo{year}{1990},
  \bibinfo{journal}{J. Math. Anal. Appl.}
  \textbf{\bibinfo{volume}{148}}(\bibinfo{number}{2}), \bibinfo{pages}{287}.

\bibitem[{\citenamefont{Jonathan and Plenio}(1999)}]{JonathanPlenio}
\bibinfo{author}{\bibnamefont{Jonathan}, \bibfnamefont{D.}}, and
  \bibinfo{author}{\bibfnamefont{M.~B.} \bibnamefont{Plenio}},
  \bibinfo{year}{1999}, \bibinfo{journal}{Phys. Rev. Lett.}
  \textbf{\bibinfo{volume}{83}}, \bibinfo{pages}{3566}.

\bibitem[{\citenamefont{Kakwani}(1980)}]{kakwani1980income}
\bibinfo{author}{\bibnamefont{Kakwani}, \bibfnamefont{N.~C.}},
  \bibinfo{year}{1980}, \emph{\bibinfo{title}{{Income Inequality and Poverty:
  Methods of Estimation and Policy Applications}}} (\bibinfo{publisher}{Oxford
  University Press New York}).

\bibitem[{\citenamefont{Klimesh}(2007)}]{klimesh}
\bibinfo{author}{\bibnamefont{Klimesh}, \bibfnamefont{M.}},
  \bibinfo{year}{2007}, \bibinfo{journal}{arXiv:0709.3680} .

\bibitem[{\citenamefont{Kullback and Leibler}(1951)}]{kullback1951information}
\bibinfo{author}{\bibnamefont{Kullback}, \bibfnamefont{S.}}, and
  \bibinfo{author}{\bibfnamefont{R.~A.} \bibnamefont{Leibler}},
  \bibinfo{year}{1951}, \bibinfo{journal}{The Annals of Mathematical
  Statistics} \textbf{\bibinfo{volume}{22}}(\bibinfo{number}{1}),
  \bibinfo{pages}{79}.

\bibitem[{\citenamefont{Landauer}(1961)}]{Landauer}
\bibinfo{author}{\bibnamefont{Landauer}, \bibfnamefont{R.}},
  \bibinfo{year}{1961}, \bibinfo{journal}{IBM J. Res. Dev.}
  \textbf{\bibinfo{volume}{5}}, \bibinfo{pages}{183}.

\bibitem[{\citenamefont{Leff and Rex}(2003)}]{LeffRex}
\bibinfo{editor}{\bibnamefont{Leff}, \bibfnamefont{H.}}, and
  \bibinfo{editor}{\bibfnamefont{A.~F.} \bibnamefont{Rex}} (eds.),
  \bibinfo{year}{2003}, \emph{\bibinfo{title}{{Maxwell's Demons 2}}}
  (\bibinfo{publisher}{IOP Publishing}).

\bibitem[{\citenamefont{Lewenstein and Sanpera}(1998)}]{Lewenstein1998}
\bibinfo{author}{\bibnamefont{Lewenstein}, \bibfnamefont{M.}}, and
  \bibinfo{author}{\bibfnamefont{A.}~\bibnamefont{Sanpera}},
  \bibinfo{year}{1998}, \bibinfo{journal}{Phys. Rev. Lett.}
  \textbf{\bibinfo{volume}{80}}(\bibinfo{number}{11}), \bibinfo{pages}{2261}.

\bibitem[{\citenamefont{Lieb and Yngvason}(1999)}]{Lieb19991}
\bibinfo{author}{\bibnamefont{Lieb}, \bibfnamefont{E.~H.}}, and
  \bibinfo{author}{\bibfnamefont{J.}~\bibnamefont{Yngvason}},
  \bibinfo{year}{1999}, \bibinfo{journal}{Physics Reports}
  \textbf{\bibinfo{volume}{310}}(\bibinfo{number}{1}), \bibinfo{pages}{1}, ISSN
  \bibinfo{issn}{0370-1573},
  \urlprefix\url{http://www.sciencedirect.com/science/article/pii/S0370157398000829}.

\bibitem[{\citenamefont{Linden} \emph{et~al.}(2010)\citenamefont{Linden,
  Popescu, and Skrzypczyk}}]{SanduRefrig}
\bibinfo{author}{\bibnamefont{Linden}, \bibfnamefont{N.}},
  \bibinfo{author}{\bibfnamefont{S.}~\bibnamefont{Popescu}}, and
  \bibinfo{author}{\bibfnamefont{P.}~\bibnamefont{Skrzypczyk}},
  \bibinfo{year}{2010}, \bibinfo{journal}{Phys. Rev. Lett.}
  \textbf{\bibinfo{volume}{105}}(\bibinfo{number}{13}),
  \bibinfo{pages}{130401}.

\bibitem[{\citenamefont{Lorenz}(1905)}]{lorenz1905methods}
\bibinfo{author}{\bibnamefont{Lorenz}, \bibfnamefont{M.~O.}},
  \bibinfo{year}{1905}, \bibinfo{journal}{Publ. Am. Stat. Assoc.}
  \textbf{\bibinfo{volume}{9}}(\bibinfo{number}{70}), \bibinfo{pages}{209}.

\bibitem[{\citenamefont{Marshall} \emph{et~al.}(2010)\citenamefont{Marshall,
  Olkin, and Arnold}}]{MarshallOlkin}
\bibinfo{author}{\bibnamefont{Marshall}, \bibfnamefont{A.~W.}},
  \bibinfo{author}{\bibfnamefont{I.}~\bibnamefont{Olkin}}, and
  \bibinfo{author}{\bibfnamefont{B.~C.} \bibnamefont{Arnold}},
  \bibinfo{year}{2010}, \emph{\bibinfo{title}{{Inequalities: theory of
  majorization and its applications}}} (\bibinfo{publisher}{Springer}).

\bibitem[{\citenamefont{Maruyama} \emph{et~al.}(2009)\citenamefont{Maruyama,
  Nori, and Vedral}}]{MaxDemonRev}
\bibinfo{author}{\bibnamefont{Maruyama}, \bibfnamefont{K.}},
  \bibinfo{author}{\bibfnamefont{F.}~\bibnamefont{Nori}}, and
  \bibinfo{author}{\bibfnamefont{V.}~\bibnamefont{Vedral}},
  \bibinfo{year}{2009}, \bibinfo{journal}{Rev. Mod. Phys.}
  \textbf{\bibinfo{volume}{81}}(\bibinfo{number}{1}), \bibinfo{pages}{1}.

\bibitem[{\citenamefont{Marvian and Spekkens}(2013)}]{Asymy2}
\bibinfo{author}{\bibnamefont{Marvian}, \bibfnamefont{I.}}, and
  \bibinfo{author}{\bibfnamefont{R.~W.} \bibnamefont{Spekkens}},
  \bibinfo{year}{2013}, \bibinfo{journal}{New J. Phys.}
  \textbf{\bibinfo{volume}{15}}, \bibinfo{pages}{033001}.

\bibitem[{\citenamefont{Mendl and Wolf}(2009)}]{MendlWolf}
\bibinfo{author}{\bibnamefont{Mendl}, \bibfnamefont{C.~B.}}, and
  \bibinfo{author}{\bibfnamefont{M.~M.} \bibnamefont{Wolf}},
  \bibinfo{year}{2009}, \bibinfo{journal}{Commun. Math. Phys.}
  \textbf{\bibinfo{volume}{289}}(\bibinfo{number}{3}), \bibinfo{pages}{1057}.

\bibitem[{\citenamefont{Morimoto}(1963)}]{morimoto1963markov}
\bibinfo{author}{\bibnamefont{Morimoto}, \bibfnamefont{T.}},
  \bibinfo{year}{1963}, \bibinfo{journal}{J. Phys. Soc. Jap}
  \textbf{\bibinfo{volume}{12}}, \bibinfo{pages}{328}.

\bibitem[{\citenamefont{Narang and Arvind}(2007)}]{Narang07}
\bibinfo{author}{\bibnamefont{Narang}, \bibfnamefont{G.}}, and
  \bibinfo{author}{\bibnamefont{Arvind}}, \bibinfo{year}{2007},
  \bibinfo{journal}{Phys. Rev. A} \textbf{\bibinfo{volume}{75}},
  \bibinfo{pages}{032305}.

\bibitem[{\citenamefont{Nielsen}(2002)}]{NielsenIntroduction}
\bibinfo{author}{\bibnamefont{Nielsen}, \bibfnamefont{M.}},
  \bibinfo{year}{2002}, \bibinfo{journal}{lecture notes (available online),}
  \urlprefix\url{http://michaelnielsen.org/blog/talks/2002/maj/book.ps}.

\bibitem[{\citenamefont{Nielsen}(1999)}]{nielsen1999conditions}
\bibinfo{author}{\bibnamefont{Nielsen}, \bibfnamefont{M.~A.}},
  \bibinfo{year}{1999}, \bibinfo{journal}{Phys. Rev. Lett.}
  \textbf{\bibinfo{volume}{83}}(\bibinfo{number}{2}), \bibinfo{pages}{436}.

\bibitem[{\citenamefont{Nielsen and Chuang}(2010)}]{NielsenAndChuang}
\bibinfo{author}{\bibnamefont{Nielsen}, \bibfnamefont{M.~A.}}, and
  \bibinfo{author}{\bibfnamefont{I.~L.} \bibnamefont{Chuang}},
  \bibinfo{year}{2010}, \emph{\bibinfo{title}{{Quantum Computation and Quantum
  Information}}} (\bibinfo{publisher}{Cambridge University Press}).

\bibitem[{\citenamefont{Nielsen and Vidal}(2001)}]{nielsen2001majorization}
\bibinfo{author}{\bibnamefont{Nielsen}, \bibfnamefont{M.~A.}}, and
  \bibinfo{author}{\bibfnamefont{G.}~\bibnamefont{Vidal}},
  \bibinfo{year}{2001}, \bibinfo{journal}{Quantum Inf. Comput.}
  \textbf{\bibinfo{volume}{1}}(\bibinfo{number}{1}), \bibinfo{pages}{76}.

\bibitem[{\citenamefont{Rao}(1997)}]{Rao}
\bibinfo{author}{\bibnamefont{Rao}, \bibfnamefont{Y.~V.~C.}},
  \bibinfo{year}{1997}, \emph{\bibinfo{title}{{Chemical engineering
  thermodynamics}}} (\bibinfo{publisher}{Universities Press}).

\bibitem[{\citenamefont{Renner}(2008)}]{RennerThesis}
\bibinfo{author}{\bibnamefont{Renner}, \bibfnamefont{R.}},
  \bibinfo{year}{2008}, \bibinfo{journal}{Int. J. Quantum Inf.}
  \textbf{\bibinfo{volume}{6}}(\bibinfo{number}{01}), \bibinfo{pages}{1}.

\bibitem[{\citenamefont{Renner and Wolf}(2004)}]{SmoothRenyiHAndApplns}
\bibinfo{author}{\bibnamefont{Renner}, \bibfnamefont{R.}}, and
  \bibinfo{author}{\bibfnamefont{S.}~\bibnamefont{Wolf}}, \bibinfo{year}{2004},
  in \emph{\bibinfo{booktitle}{{ISIT Proceedings: International Symposium on
  Information Theory, 2004.}}} (\bibinfo{organization}{IEEE}), p.
  \bibinfo{pages}{233}.

\bibitem[{\citenamefont{Ruch}(1975{\natexlab{a}})}]{ruch1975}
\bibinfo{author}{\bibnamefont{Ruch}, \bibfnamefont{E.}},
  \bibinfo{year}{1975}{\natexlab{a}}, \bibinfo{journal}{Theoretica Chimica
  Acta} \textbf{\bibinfo{volume}{19}}, \bibinfo{pages}{225}.

\bibitem[{\citenamefont{Ruch}(1975{\natexlab{b}})}]{RuchDiagrams}
\bibinfo{author}{\bibnamefont{Ruch}, \bibfnamefont{E.}},
  \bibinfo{year}{1975}{\natexlab{b}}, \bibinfo{journal}{Theoretica Chimica
  Acta} \textbf{\bibinfo{volume}{38}}, \bibinfo{pages}{167}.

\bibitem[{\citenamefont{Ruch and Lesche}(1978)}]{ruch1978information}
\bibinfo{author}{\bibnamefont{Ruch}, \bibfnamefont{E.}}, and
  \bibinfo{author}{\bibfnamefont{B.}~\bibnamefont{Lesche}},
  \bibinfo{year}{1978}, \bibinfo{journal}{J. Chem. Phys.}
  \textbf{\bibinfo{volume}{69}}(\bibinfo{number}{1}), \bibinfo{pages}{393}.

\bibitem[{\citenamefont{Ruch and Mead}(1976)}]{RuchMead}
\bibinfo{author}{\bibnamefont{Ruch}, \bibfnamefont{E.}}, and
  \bibinfo{author}{\bibfnamefont{A.}~\bibnamefont{Mead}}, \bibinfo{year}{1976},
  \bibinfo{journal}{Theoretica Chimica Acta}
  \textbf{\bibinfo{volume}{41}}(\bibinfo{number}{2}), \bibinfo{pages}{95}.

\bibitem[{\citenamefont{Ruch} \emph{et~al.}(1978)\citenamefont{Ruch, Schranner,
  and Seligman}}]{RuchSchrannerSeligman}
\bibinfo{author}{\bibnamefont{Ruch}, \bibfnamefont{E.}},
  \bibinfo{author}{\bibfnamefont{R.}~\bibnamefont{Schranner}}, and
  \bibinfo{author}{\bibfnamefont{T.~H.} \bibnamefont{Seligman}},
  \bibinfo{year}{1978}, \bibinfo{journal}{The Journal of Chemical Physics}
  \textbf{\bibinfo{volume}{69}}, \bibinfo{pages}{386}.

\bibitem[{\citenamefont{Sanders and Gour}(2009)}]{SandersGour}
\bibinfo{author}{\bibnamefont{Sanders}, \bibfnamefont{Y.~R.}}, and
  \bibinfo{author}{\bibfnamefont{G.}~\bibnamefont{Gour}}, \bibinfo{year}{2009},
  \bibinfo{journal}{Phys. Rev. A} \textbf{\bibinfo{volume}{79}},
  \bibinfo{pages}{054302}.

\bibitem[{\citenamefont{Schutz}(1951)}]{Schutz1951}
\bibinfo{author}{\bibnamefont{Schutz}, \bibfnamefont{R.~R.}},
  \bibinfo{year}{1951}, \bibinfo{journal}{Amer. Econ. Rev.}
  \textbf{\bibinfo{volume}{41}}, \bibinfo{pages}{107}.

\bibitem[{\citenamefont{Shor}(2010)}]{Shortalk}
\bibinfo{author}{\bibnamefont{Shor}, \bibfnamefont{P.~W.}},
  \bibinfo{year}{2010}, \bibinfo{title}{{Structure of Unital Maps and the
  Asymptotic Quantum Birkhoff Conjecture}}, \bibinfo{note}{presentation}.

\bibitem[{\citenamefont{Skrzypczyk}
  \emph{et~al.}(2013)\citenamefont{Skrzypczyk, Short, and Popescu}}]{SSP1}
\bibinfo{author}{\bibnamefont{Skrzypczyk}, \bibfnamefont{P.}},
  \bibinfo{author}{\bibfnamefont{A.~J.} \bibnamefont{Short}}, and
  \bibinfo{author}{\bibfnamefont{S.}~\bibnamefont{Popescu}},
  \bibinfo{year}{2013}, \bibinfo{journal}{arXiv:1302.2811} .

\bibitem[{\citenamefont{Skrzypczyk}
  \emph{et~al.}(2014)\citenamefont{Skrzypczyk, Short, and Popescu}}]{SSP2}
\bibinfo{author}{\bibnamefont{Skrzypczyk}, \bibfnamefont{P.}},
  \bibinfo{author}{\bibfnamefont{A.~J.} \bibnamefont{Short}}, and
  \bibinfo{author}{\bibfnamefont{S.}~\bibnamefont{Popescu}},
  \bibinfo{year}{2014}, \bibinfo{journal}{Nat. Comm.}
  \textbf{\bibinfo{volume}{5}}, \bibinfo{pages}{4185}.

\bibitem[{\citenamefont{Szilard}(1929)}]{Szilard}
\bibinfo{author}{\bibnamefont{Szilard}, \bibfnamefont{L.}},
  \bibinfo{year}{1929}, \bibinfo{journal}{Zeitschrift f{\"u}r Physik}
  \textbf{\bibinfo{volume}{53}}(\bibinfo{number}{11-12}), \bibinfo{pages}{840}.

\bibitem[{\citenamefont{Terhal} \emph{et~al.}(1999)\citenamefont{Terhal,
  Chuang, DiVincenzo, Grassl, and Smolin}}]{terhal1999simulating}
\bibinfo{author}{\bibnamefont{Terhal}, \bibfnamefont{B.~M.}},
  \bibinfo{author}{\bibfnamefont{I.~L.} \bibnamefont{Chuang}},
  \bibinfo{author}{\bibfnamefont{D.~P.} \bibnamefont{DiVincenzo}},
  \bibinfo{author}{\bibfnamefont{M.}~\bibnamefont{Grassl}}, and
  \bibinfo{author}{\bibfnamefont{J.~A.} \bibnamefont{Smolin}},
  \bibinfo{year}{1999}, \bibinfo{journal}{Phys. Rev. A}
  \textbf{\bibinfo{volume}{60}}(\bibinfo{number}{2}), \bibinfo{pages}{881}.

\bibitem[{\citenamefont{Tomamichel}(2012)}]{Tomamichel}
\bibinfo{author}{\bibnamefont{Tomamichel}, \bibfnamefont{M.}},
  \bibinfo{year}{2012}, \bibinfo{journal}{arXiv:1203.2142} .

\bibitem[{\citenamefont{Tsallis}(1988)}]{Tsallis}
\bibinfo{author}{\bibnamefont{Tsallis}, \bibfnamefont{C.}},
  \bibinfo{year}{1988}, \bibinfo{journal}{J. Stat. Phys.}
  \textbf{\bibinfo{volume}{52}}, \bibinfo{pages}{479}.

\bibitem[{\citenamefont{Tsallis}(1998)}]{TsallisRel}
\bibinfo{author}{\bibnamefont{Tsallis}, \bibfnamefont{C.}},
  \bibinfo{year}{1998}, \bibinfo{journal}{Phys. Rev. E}
  \textbf{\bibinfo{volume}{58}}, \bibinfo{pages}{1442}.

\bibitem[{\citenamefont{Turgut}(2007)}]{Turgut}
\bibinfo{author}{\bibnamefont{Turgut}, \bibfnamefont{S.}},
  \bibinfo{year}{2007}, \bibinfo{journal}{J. Phys. A: Math. Theor.}
  \textbf{\bibinfo{volume}{40}}(\bibinfo{number}{40}), \bibinfo{pages}{12185}.

\bibitem[{\citenamefont{Uhlmann}(1970)}]{uhlmann1970}
\bibinfo{author}{\bibnamefont{Uhlmann}, \bibfnamefont{A.}},
  \bibinfo{year}{1970}, \bibinfo{journal}{Rep. Math. Phys.}
  \textbf{\bibinfo{volume}{1}}(\bibinfo{number}{2}), \bibinfo{pages}{147}.

\bibitem[{\citenamefont{Veinott}(1971)}]{veinott1971}
\bibinfo{author}{\bibnamefont{Veinott}, \bibfnamefont{A.~F.}},
  \bibinfo{year}{1971}, \bibinfo{journal}{Manage. Sci.}
  \textbf{\bibinfo{volume}{17}}(\bibinfo{number}{9}), \bibinfo{pages}{547}.

\bibitem[{\citenamefont{Veitch} \emph{et~al.}(2014)\citenamefont{Veitch,
  Mousavian, Gottesman, and Emerson}}]{veitch2013resource}
\bibinfo{author}{\bibnamefont{Veitch}, \bibfnamefont{V.}},
  \bibinfo{author}{\bibfnamefont{S.~A.~H.} \bibnamefont{Mousavian}},
  \bibinfo{author}{\bibfnamefont{D.}~\bibnamefont{Gottesman}}, and
  \bibinfo{author}{\bibfnamefont{J.}~\bibnamefont{Emerson}},
  \bibinfo{year}{2014}, \bibinfo{journal}{New J. Phys.}
  \textbf{\bibinfo{volume}{16}}, \bibinfo{pages}{013009}.

\bibitem[{\citenamefont{Wallace}(2013)}]{wallace2013inferential}
\bibinfo{author}{\bibnamefont{Wallace}, \bibfnamefont{D.}},
  \bibinfo{year}{2013}, \bibinfo{journal}{arXiv:1306.4907} .

\bibitem[{\citenamefont{Zalka and Rieffel}(2002)}]{zalka2002quantum}
\bibinfo{author}{\bibnamefont{Zalka}, \bibfnamefont{C.}}, and
  \bibinfo{author}{\bibfnamefont{E.}~\bibnamefont{Rieffel}},
  \bibinfo{year}{2002}, \bibinfo{journal}{J. Mat. Phys.}
  \textbf{\bibinfo{volume}{43}}, \bibinfo{pages}{4376}.

\end{thebibliography}

\end{document}